\documentclass[11pt,twoside]{report}

\pdfoutput=1

\bibliographystyle{modified-hunsrt}

\usepackage{lipsum}
\usepackage[paperwidth=6.25in,paperheight=9.25in,top=0.625in, bottom=0.625in, inner=1.125in, outer=0.625in,includefoot]{geometry}

\usepackage[numbers]{natbib}
\usepackage[affil-it]{authblk}
\usepackage{amsmath}
\usepackage{amsthm}
\usepackage{mathrsfs}
\usepackage[english]{babel}
\usepackage{graphicx}
\usepackage{amsfonts} 
\usepackage{amssymb} 
\usepackage{float}
\usepackage{times}
\usepackage{tensor}
\usepackage{mathtools}
\usepackage{color}
\usepackage[usenames,dvipsnames]{xcolor}
\usepackage{pdflscape}
\usepackage{rotating}
\usepackage{multirow}
\usepackage{array}
\usepackage{adjustbox}

\usepackage{layouts}

\usepackage{bm}
\usepackage{wrapfig}
\usepackage{overpic}

\usepackage[inline]{enumitem}
\usepackage{caption}
\usepackage{subfig}
\usepackage{incgraph}

\usepackage{setspace}
\singlespacing

\usepackage[Conny]{fncychap}

\usepackage{import}

\usepackage{afterpage}

\newcommand\blankpage{%
    \null
    \thispagestyle{empty}%
    \addtocounter{page}{-1}%
    \newpage}

\hypersetup{colorlinks=true,linkcolor=Magenta,linktocpage=true,plainpages=false,allcolors=Magenta,hyperindex,pdftex}

%
%

\def\bra#1{\mathinner{\langle{#1}|}}
\def\ket#1{\mathinner{|{#1}\rangle}}
\newcommand{\braket}[2]{\langle #1|#2\rangle}

\def\rcurs{{\mbox{$\resizebox{.16in}{.08in}{\includegraphics{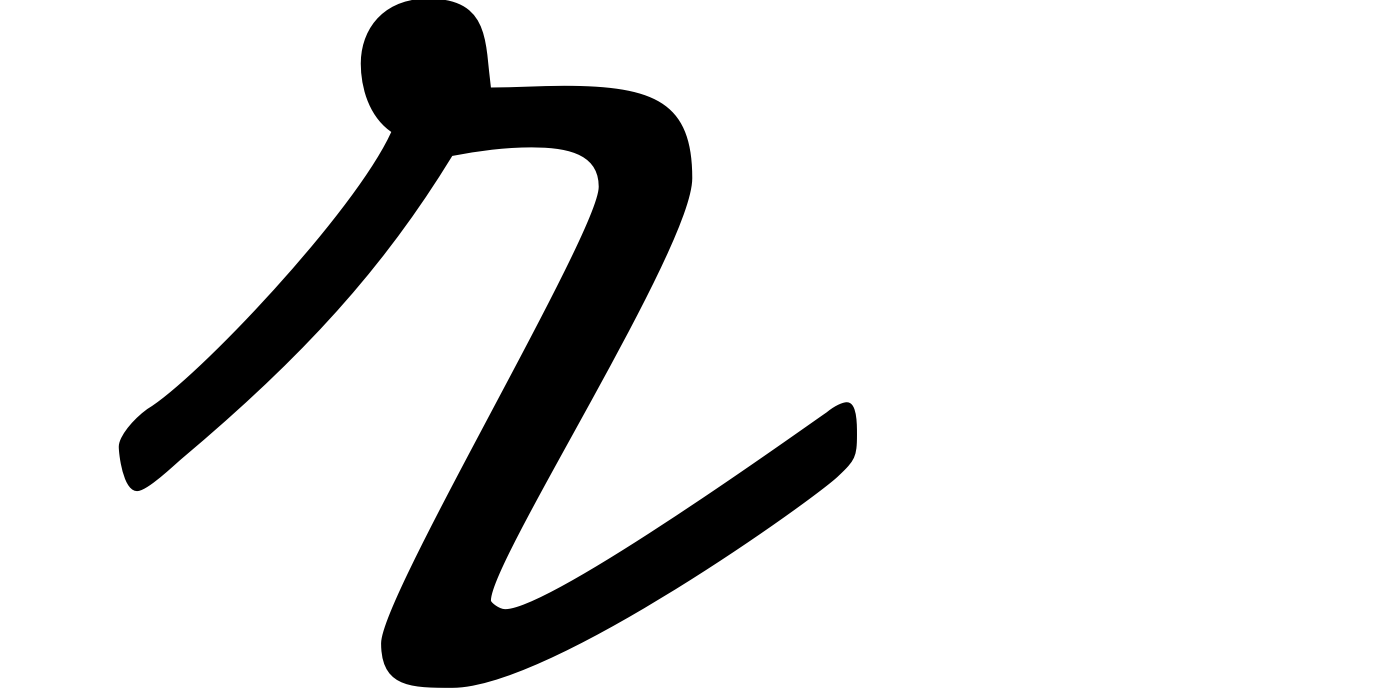}}$}}}

\DeclareMathOperator{\dee}{d}
\DeclareMathOperator{\Imag}{Im}
\DeclareMathOperator{\Real}{Re}
\newcommand{\myparagraph}[1]{\paragraph{#1}\mbox{}\\}

\newcounter{ingredientnumber}

\newenvironment{ingredient}{\refstepcounter{ingredientnumber}\par\medskip\noindent \textcolor{magenta}{\textbf{Ingredient \arabic{ingredientnumber}}}}{}

\newtheorem*{theorem}{Theorem} 
\newtheorem*{conjecture}{Conjecture} 

\pagenumbering{roman}
\begin{document}
\incgraph[paper=graphics]{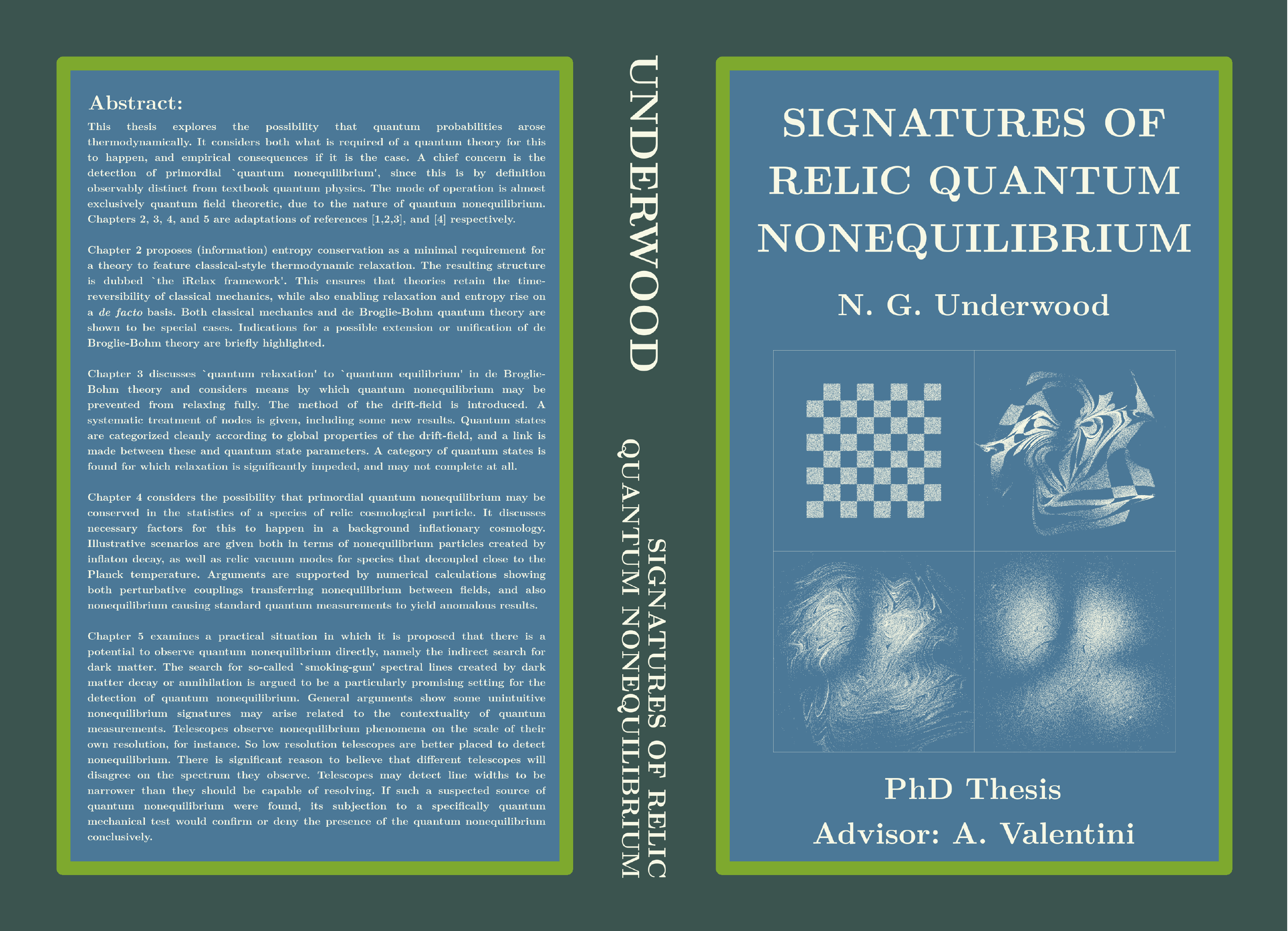}
\setcounter{page}{1}

\singlespacing
\vspace*{5mm}
\begin{center}
{\Huge  SIGNATURES OF RELIC QUANTUM\\ \vspace{5mm} NONEQUILIBRIUM}\\
\vspace{30mm}
{\huge Nicolas Graeme Underwood}\\
\vspace{30mm}
{\huge PhD thesis, Physics}\\
\vspace{30mm}
{\huge Advisor: Antony Valentini}
\end{center}
\thispagestyle{empty}

\clearpage

\chapter*{ABSTRACT}
\addcontentsline{toc}{chapter}{ABSTRACT}
This thesis explores the possibility that quantum probabilities arose thermodynamically.
It considers both what is required of a quantum theory for this to happen, and empirical consequences if it is the case.
A chief concern is the detection of primordial `quantum nonequilibrium', since this is by definition observably distinct from textbook quantum physics.
The mode of operation is almost exclusively quantum field theoretic, due to the nature of quantum nonequilibrium.
Chapters \ref{2}, \ref{3}, \ref{4}, and \ref{5}, are adaptations of references \cite{NU_In_preparation,NU18,UV15}, and \cite{UV16} respectively.

Chapter \ref{2} proposes (information) entropy conservation as a minimal requirement for a theory to feature classical-style thermodynamic relaxation.
The resulting structure is dubbed `the iRelax framework'.
This ensures that theories retain the time-reversibility of classical mechanics, while also enabling relaxation and entropy rise on a \emph{de facto} basis.
Both classical mechanics and de Broglie-Bohm quantum theory are shown to be special cases.
Indications for a possible extension or unification of de Broglie-Bohm theory are briefly highlighted.

Chapter \ref{3} discusses `quantum relaxation' to `quantum equilibrium' in de Broglie-Bohm theory and considers means by which quantum nonequilibrium may be prevented from relaxing fully.
The method of the drift-field is introduced.
A systematic treatment of nodes is given, including some new results.
Quantum states are categorized cleanly according to global properties of the drift-field, and a link is made between these and quantum state parameters.
A category of quantum states is found for which relaxation is significantly impeded, and may not complete at all.

Chapter \ref{4} considers the possibility that primordial quantum nonequilibrium may be conserved in the statistics of a species of relic cosmological particle.
It discusses necessary factors for this to happen in a background inflationary cosmology.
Illustrative scenarios are given both in terms of nonequilibrium particles created by inflaton decay, as well as relic vacuum modes for species that decoupled close to the Planck temperature. 
Arguments are supported by numerical calculations showing both perturbative couplings transferring nonequilibrium between fields, and also nonequilibrium causing standard quantum measurements to yield anomalous results.

Chapter \ref{5} examines a practical situation in which it is proposed that there is a potential to observe quantum nonequilibrium directly, namely the indirect search for dark matter.
The search for so called `smoking-gun' spectral lines created by dark matter decay or annihilation is argued to be a particularly promising setting for the detection of quantum nonequilibrium.
General arguments show some unintuitive nonequilibrium signatures may arise related to the contextuality of quantum measurements.
Telescopes observe nonequilibrium phenomena on the scale of their own resolution, for instance. 
So low resolution telescopes are better placed to detect nonequilibrium.
There is significant reason to believe that different telescopes will disagree on the spectrum they observe.
Telescopes may detect line widths to be narrower than they should be capable of resolving. 
If such a suspected source of quantum nonequilibrium were found, its subjection to a specifically quantum mechanical test would confirm or deny the presence of the quantum nonequilibrium conclusively. 

\chapter*{DEDICATION}
My gratitude is extended first and foremost to Antony Valentini, whom I met a lifetime ago at Imperial College London.
Antony's daring but scrupulous approach to science has been a constant inspiration.
If there are parts of this thesis that approach good scientific English, then a significant portion of the credit should be aimed in his direction.
(The other part goes to BBC Radio 4.)
I understand how stubborn I can be when I think I am in the right.
That Antony managed to deal with this with composure is a credit to his presence of mind.

During my time on the PhD program at Clemson University, apart from Antony, there were only a few people with whom I could meaningfully engage in scientific dialogue on fundamental physics.
These were Philipp Roser, Adithya `P.K.' Kandhadai, and Indrajit Sen.
All three have my gratitude, my best wishes for the future, and a standing invitation to collaborate. 
I extend my thanks also to Lucien Hardy, who hosted me during my stay at Perimeter Institute. 
I very much enjoyed my stay at PI. 
My horizon's were indeed extended by the experiences I had during my stay.
My conversations with Lucien and other residents have helped to provide nuance and context to my view of physics.
This has already proven to have left a lasting effect on the direction my research is taking.
Finally I thank Amanda Ellenberg. 
Amanda kept me afloat when times took a turn for the worse.

\addcontentsline{toc}{chapter}{DEDICATION}

\setcounter{secnumdepth}{1}
\setcounter{tocdepth}{2} 

\newpage
\tableofcontents
\addcontentsline{toc}{chapter}{TABLE OF CONTENTS}
\newpage
\listoffigures
\addcontentsline{toc}{chapter}{LIST OF FIGURES}
\clearpage
\afterpage{\blankpage}

\setcounter{page}{0}
\pagenumbering{arabic}
\chapter{INTRODUCTION}
\subsection{A didactic history of quantum physics}
Somewhere in the mid-1920s quantum physics changed beyond recognition.
In the old quantum theory \cite{H67} of the previous few decades, the word `quantum' had referred to some relatively anodyne proposals to make previously continuous quantities discrete. 
Planck had proposed quanta of electron excitations in a black body \cite{Planck}.
Einstein had proposed quanta of the electromagnetic field \cite{E1905_light_quanta}.
Bohr had proposed quanta of atomic excitations \cite{B1913}.
But by the early 1930s what had emerged was fundamentally different. 
Quantum physics had turned operational.
A theory of experiments.
And it was spectacularly powerful.

What emerged was not any specific theory, but a list of changes that could be made to a classical Hamiltonian theory in order to `make it quantum'.
To `quantize' it. 
This \emph{verb-ification} of the adjective quantum is perhaps most attributable to Paul Dirac, who, upon receiving a proof copy of Heisenberg's \emph{\"{U}ber quantentheoretischer Umdeutung} \cite{H25} from his PhD advisor in August 1925 \cite{Dirac_biography}, set to work searching for method by which the canonical variables and equations of classical mechanics could be translated to fit the emerging quantum mechanics. 
His solution was to instruct the canonical variables of classical mechanics to obey an algebra of non-commutating $q$-numbers. It was published only three months later, in November 1925 \cite{D25}.
Although there would be many more contributions over the next couple of years \cite{BV09,Beller}, Dirac had laid the foundations for what would eventually be called canonical quantization. 
Technical considerations aside, he had made it possible to convert any classical Hamiltonian theory into a quantum theory.
A result that would be popularized in his 1930 textbook \cite{Dirac}.

It is important to bear in mind that even in the early days, quantum mechanics was not a single theory, but a set of changes that could be made to a classical theory.
And in those early days the wave vs.\ particle debate was very much still alive. 
It was considered necessary not just to apply the new quantum mechanics to particles (electrons), but also the electromagnetic field.
After all, Einstein had proposed the existence of light quanta (later called photons) some 20 years earlier \cite{E1905_light_quanta}.
The first attempt to apply the nascent quantum mechanics to the electromagnetic field was due to Born, Heisenberg, and Jordan in 1926 \cite{BHJ26}, who treated the free field.
The first treatment of an interacting field theory was due to Dirac in February 1927 \cite{D27a}, eight months before the fifth Solvay conference. 
Dirac coupled the electromagnetic field to an atom and used this to calculate Einstein's $A$ and $B$ coefficients for the spontaneous emission and absorption of radiation.
Unlike the particle-based quantum mechanics, this field-based approach had the ability to describe the creation and destruction of particles (photons).
This was a fantastic outpost from which to develop further quantum field theories, but it would be left for others to do so. 
Dirac himself resisted the idea that a quantum field theory was necessary for any particle other than the photon \cite{W77}.
Instead he spent the next few years developing earlier suggestions from Walter Gordon \cite{G26} and Oscar Klein \cite{K27}, in order to develop a relativistic Schr\"{o}dinger equation for electrons \cite{D28a,D28b}.
Though this theory had some marked success, particularly in the prediction \cite{D30} of the anti-material positrons that would be discovered in 1932 \cite{A32}, the `sea model' he introduced had some conceptual difficulties that never fully convinced the community \cite{W77}.
The next steps towards finally ending the wave vs.\ particle debate were made in an important series of papers in the late 1920s by Jordan and Wigner \cite{JW28}, Heisenberg and Pauli \cite{HP29,HP30}, and Fermi \cite{F29}.
These introduced the idea that, just like photons, material particles could also be understood to be the quanta of their own individual fields \cite{W77}. 
This claim was further substantiated in 1934 by Furry and Oppenheimer \cite{FO34}, and Pauli and Weisskopf \cite{PW34}, who showed that quantum field theories naturally account for the existence of Dirac's antiparticles \cite{W77} without the interpretive problems of Dirac's sea model.
In a manner of speaking, this was the final realization of de Broglie's 1924 thesis \cite{dB25}.
Matter particles would henceforth be treated on the same footing as photons, as excitations (quanta) of their own respective fields. 

This branch of twentieth century fundamental physics would continue to be developed from much the same viewpoint.
Classical field theories were written down in the hope that, when quantized, they could describe the particle physics developments of the day.
Attempts would be guided both by the need to remove stray infinities through the process of renormalization and the group theoretic analysis of gauge-symmetries. 
Important milestones were the perfection of quantum electrodynamics (QED) in the late 1940s, the discovery of the Higgs mechanism in the mid-1960s, the development of electroweak theory (EW) in the late 1950s and 1960s, and the maturity of quantum chromodynamics (QCD) and the Standard Model (SM) in the 1970s.
The crowning achievement to date is the discovery of the Higgs particle by the ATLAS and CMS experiments at the Large Hadron Collider in 2012. 

To the authors mind at least, the historical development of quantum physics quite naturally evolves into field theory and particle physics. 
De Broglie's proposal to unify light and matter into the same framework had been achieved by the mid-1930s.
It was called quantum field theory.
And our contemporary understanding of elementary particles is rooted in this principle.
Nevertheless, the process of quantization had made it possible to describe non-relativistic point particles, and for a majority of physicists and chemists, this was sufficient.
The moral to this didactically-skewed account of the history of quantum physics is twofold.
First, quantum theories of physical systems are simply classical theories that have had a complicated procedure applied to them. So it seems natural that these classical theories guide research in quantum foundations.
Second, our best classical theories are currently field theories, so the mythical correct answer is more likely to resemble a field theory than the non-relativistic point particle. 
So although the non-relativistic point particle type theories may be useful for the purposes of explanation, it might be unwise to place too much faith in such approaches.

Of course the theories that emerged out of this history--this complicated quantization procedure--were not classical any more, but something radically different.
Although they retained some of the structure the format had changed entirely.
The classical realism that dealt with entities moving around with definite states had transformed to an operational theory of measurement inputs and measurement outputs. The next section will elaborate on this point.

\subsection{On operationalism and realism}
The discipline called quantum mechanics, that emerged out of 1920s, is an operational theory of experimental measurements.
Its inputs, the quantum state and the Hilbert space operators, correspond to laboratory procedures for how to run an experiment. 
Its output is the likelihood of obtaining any particular measurement outcome.
To an empiricist, this is a very functional form for a theory to take.
It certainly found a comfortable home in the particle physics that arose in the 1930s.
The bread and butter calculations of particle physics are quantum field theoretic perturbative scattering cross-sections.
These very naturally fit into the operational framework, possessing as they do very definite inputs (the particles colliding) and outputs (the product particles).   

Quite what quantum mechanics means in the absence of experiments is more problematic however. 
Traditionally, the presumption of an experiment is couched in the language of `observers', `observables', and `observations'.
All of which are loaded terms that may be attacked from multiple directions.
Without an observer, quantum mechanical theories seemingly do not make any predictions at all.
And while this may not cause any particular problem to an empiricist working a particle collider, nature is supposed to be fundamentally and universally quantum mechanical, so the rules of quantum mechanics should be able to be applied even in the absence of observers. 
Much early Universe cosmology, for example, relies upon quantum theoretic particle scattering cross-sections. But the interactions being described took place in the early Universe long before human observers. 
So the question over when the collapse of the quantum state took place becomes problematic.
Indeed to some, the presence of an observer inside a scientific theory goes against the spirit of the Copernican revolution, placing humans once more at center-stage. 
And of course, famously, the concept of an observer is not well defined in any case, as highlighted by Schr\"{o}dinger in his well known cat argument \cite{S35}. 

All these points tend to be grouped together under the heading `the measurement problem'.
The measurement problem is old news however.
As Matt Liefer put it recently, ``Any interpretation of quantum mechanics worth its salt has solved the measurement problem'' \cite{L_quote}.
There are of course many diverse interpretations of quantum mechanics, of varying levels of wackiness. 
And since a solution to the measurement problem is a key goal for these interpretations, it is solved through many diverse and sometimes wacky means.
In principle the solution is simple however.
The measurement problem is introduced as a consequence of the operational phrasing of conventional quantum mechanics. 
The language of experiments and measurements.
And arguably the easiest way to deal with the problem is not to introduce it in the first place.
At least that is what Ockham's razor appears to suggest.
If the predictions of quantum mechanics were reproduced with a classical-style realist theory, then in a sense, the measurement problem would not be introduced in the first place and so would need to be fixed. 

To be clear, the operational methodology does have distinct and powerful advantages when applied to certain problems. 
It has lead to fascinating and useful theorems like no-cloning \cite{WZ82,D82} and quantum teleportation \cite{quantum_teleportation,first_experimental_quantum_teleportation}. 
A major thrust of the field of quantum information science is predicated upon the development of a framework for computation that is independent of the actual implementation, ie.\ whether the qubits in question are trapped ions, quantum dots, NMR, squeezed light etc.
For this reason it is acutely operational.
This has given rise to an industry in quantum computation, with many small demonstration quantum computers to date.
A firm milestone in this effort would be a quantum computer capable of running Shor's algorithm to break private key encryption.
And a realistic application of this would require the construction of a machine that is five orders of magnitude larger and two orders of magnitude less error prone than those currently operational. (See for instance page S-3 of reference \cite{NAS_quantum_computing_progress_report}.)

In the foundations of quantum mechanics, it is very difficult to deny that researchers are guided by their own philosophical predilections. 
Certainly the founders of quantum mechanics made little attempt to hide this.
The struggle of the G\"{o}ttingen-Copenhagen physicists with positivism and the use of the rhetoric of anti-realism are well documented \cite{Beller}.
But in physics one comes to appreciate how the understanding of the same physical phenomena through different means and from different perspectives can be a great aid to insight. 
So any claim that operationalism and positivism are the only terms on which it is possible to understand quantum physics should be accompanied by an abundance of proof. 
Otherwise it is a philosophical assertion, not a scientific one.
There is some history on this point.
For years an argument by von Neumann in his 1932 textbook \cite{vN32} was thought to preclude such a realist account of quantum mechanics.
In actuality, as shown by Bell \cite{B66} and Kochen and Specker \cite{KS67} in the mid-1960s, it precluded only non-contextual realist theories.
It said that sets of measurements performed on a quantum system could not simply yield the values of pre-existing quantities.
The results of quantum measurements are contextual; they depend on the manner in which the measurements are performed.
To this day, the Kochen-Specker theorem (as it became known) is widely thought to describe a very non-classical sort of phenomenon.
From a non-operational perspective, however, it is far less surprising.
After all, in any serious non-operational theory, the operational predictions of conventional quantum mechanics must be reproduced. 
And in order to model a measurement non-operationally, it is necessary to include the measurement apparatus inside the model.
The subject of the measurement and the apparatus are treated as a combined quantum system. 
The measurement itself is a brief period of interaction in which the part of the model labeled apparatus interacts dynamically with the part labeled system, so that the two become entangled. 
Naturally the outcome of any measurement will depend on the finer details of the dynamical interaction between system and apparatus.
How could it not?

John Bell's work in this area was inspired by the apparent violation of von Neumann's no-go theorem by the de Broglie-Bohm theory \cite{B66}, which had been revived roughly a decade earlier in 1952 by David Bohm \cite{B52a,B52b}.
Another property that the de Broglie-Bohm theory had which interested John Bell was its nonlocality. 
And this lead him to ask whether this property should be expected of any realist theory \cite{B66}. 
His eponymous 1964 no-go theorem \cite{B64,CHSH69} said that, yes, realist theories must be nonlocal to reproduce standard quantum mechanical predictions. 
But it went much further than this.
For it showed that \emph{even in conventional quantum mechanics}, through entanglement, distant measurement outcomes may be correlated in a way not possible to explain without a superluminal connection.
It is a funny kind of superluminal connection however, as it disappears on the statistical level, when subject to standard quantum uncertainty. 
So although nonlocal correlations do exist in nature, standard quantum uncertainty serves to obscure them, making them impossible to use for any practical signaling. 
For this reason the nonlocality has been called uncontrollable \cite{S83,W05}.
(There are however other proposals and means by which the nonlocality may be put to use \cite{BCT99,BBT05,JW05}.)
This apparent `peaceful coexistence' \cite{S83}, or perhaps maybe `extraordinary conspiracy' \cite{AV92}, between nonlocal correlations and quantum probabilities is surely the central enigma of Bell's theorem. 

In this thesis, Bell's theorem is not tackled directly. The author does have a paper in preparation \cite{NU_straw_men_and_covert_assumptions} that aims to clarify a realist perspective on this matter, though it has not yet undergone the usual sanity checks and so is not included here.  As explained below, the author has wished to take this work in a different direction. 
Beforehand though, a word of caution on how to interpret Bell's theorem.
A strong prevailing opinion among a majority of physicists is that Bell's theorem rules out a realist account of quantum mechanics as such an account would be required to be nonlocal. 
In fact Bell's theorem shows that \emph{textbook quantum mechanics is nonlocal}, and Bell experiments show that \emph{nature is nonlocal}.
So it should not be a surprise that realist descriptions should be correspondingly nonlocal. 
While it is quite unsettling to deal in theories that appear to violate relativity, the blame for this should be squarely leveled at conventional textbook quantum mechanics. And for that matter nature. These correlations are after all a proven fact \cite{ADR82,BCP14}. 

\subsection{Thermodynamic origin of quantum probabilities}
One irksome feature of textbook quantum mechanics is that it gives no explanation for the existence or character of the Born rule of quantum probabilities. 
The Born rule holds the status of an axiom, and is therefore beyond question or explanation \cite{Dirac,vN32}.
Or at least that is how the argument is presented sometimes. 
It is important to bear in mind that in the entirety of science save quantum physics, there is a general assumption that probability distributions are associated with a lack of knowledge.
Perhaps the probabilities express the consequences of uncertain initial conditions. 
Perhaps they reflect the use of an effective theory that averages over underlying degrees of freedom.
Perhaps these degrees of freedom are too numerous or too chaotic to be practically modeled.
Perhaps the underlying dynamics are not yet known, and the probabilities reflect ignorance of these dynamics.
The late physicist and probabilist Edwin Jaynes, to whom several of the results of chapter \ref{2} owe a debt of gratitude, had a different opinion on the source of quantum mechanical probabilities.
He once wrote in polemic, 
\begin{quote}
``In current quantum theory, probabilities express our own ignorance due to our failure to search for the real causes of physical phenomena; and, worse, our failure even to think seriously about the problem. This ignorance may be unavoidable in practice, but in our present state of knowledge we do not know whether it is unavoidable in principle; the `central dogma' simply asserts this, and draws the conclusion that belief in causes, and searching for them, is philosophically na\"ive. If everybody accepted this and abided by it, no further advances in understanding of physical law would be made...'' \cite{Jaynesbook}
\end{quote}
The question of the origin of quantum probabilities has remained largely ignored by modern physics.
It may be that the question does not seem important from an operational perspective, but for a realist the issue is thrown into sharp relief.
Consider for instance the consequences of taking even a conservative realist approach, through conserving determinism. 
Of course any single quantum measurement can only ever yield a single outcome. 
And clearly realism and determinism mean that this outcome is predetermined by some unknown initial conditions.
But why the initial conditions should be such that the Born distribution is reproduced? 
Why not any initial conditions?
One intuitively imagines dividing up any particular ensemble to create new ensembles with new and arbitrary statistics.  

The first time a realist account of quantum mechanics was brought to a mass audience in the post-war era was in 1952 by the works of David Bohm \cite{B52a,B52b}.
In this formulation, the Born distribution was simply postulated. 
This assumption was subsequently and rightly criticized by Pauli \cite{P53} and others \cite{K53}.
In the following years Bohm and collaborators made attempts to address this concern, to find ways to remove this unnatural postulate from the theory \cite{B53,BV54}. 
The primary means explored was a sort of thermodynamic relaxation.
In such a view the Born distribution would represent a kind of thermodynamical equilibrium.
An ensemble selected to have an arbitrary $\rho(x)$, would then tend to relax towards $\rho(x)=|\psi(x)|^2$ as time progresses. 
In this way, the possibility of having arbitrary $\rho(x)$ could be reconciled with the large body of evidence for $\rho(x)=|\psi(x)|^2$.
Of course there were many such distributions in classical physics from which to take precedent.
If the Born distribution is akin to the Maxwell-Boltzmann distribution say, then quantum mechanics is akin to equilibrium thermodynamics, and realism (hidden variables) is akin to the underlying kinetic theory. 
Or so the idea went. But it was yet to be understood how such a relaxation could occur. 
In order to bring about this relaxation Bohm and collaborators attempted adding both random collisions \cite{B53} and later irregular fluid fluctuations \cite{BV54} to the dynamics.
But although these attempts may have been well motivated, neither is taken seriously today, as they have been shown to be unnecessary. 

So in the latter part of the 20th century, the de Broglie-Bohm theory had what was perceived to be a fairly major naturalness problem. 
And Bohm's attempts to address the concern were not considered compelling by many. 
Even a major textbook disregarded these attempts and instead postulated agreement with the Born rule outright \cite{Holl93}.
The big question still remained over how to remove this unnatural postulate and what to replace it with. 
In 1991 Valentini answered this question in a remarkable way. 
His answer was to remove the postulate, and replace it with \emph{nothing} \cite{AV91a,AV91b}.
Valentini showed that the thermodynamic relaxation to the Born distribution would take place without the need to add extra elements into the theory. 
The mechanism to cause the relaxation had been there all along, unrecognized in plain view.

The principles behind the relaxation are expounded in detail in chapter \ref{2}, which may be considered one half introduction to the subject, and one half love letter to thermodynamic relaxation. 
And the introduction to that chapter specifically warns of the nuances of particular classical thermodynamic analogies. 
Nonetheless, the relaxation is present in the foundations of classical mechanics too. 
So, if a short descriptive classical analogy is what is desired, the author recommends the following.

Suppose a classical gas is confined to some volume and is composed of $N$ non-interacting particles. 
As these are non-interacting, they may be considered to be an $N$-member ensemble of individual systems.
The state of each individual system, the analogy of the underlying realist (hidden variable) state, is represented by a phase space coordinate of the particle in question, $(q,p)$.
As each of these individual states may in principle be anything, the overall ensemble distribution of these phase-space points, $\rho(q,p)$, is similarly unconstrained. 
There is no reason for instance why an ensemble could be prepared in which all particles had very similar phase space coordinates, $\rho(q,p)\sim \delta(q-q_\text{average})\delta(p-p_\text{average})$. 
But as time evolves, experience says that the gas molecules will become spread out within the volume in which they are confined. 
This corresponds to $\rho(q,p)$ becoming uniform in position $q$.
And clearly an entropy rise will result.
However even this example hides the true cause of entropy rise, instead making an appeal to experience. 
And the state of maximum entropy is never truly reached, for in the case of classical mechanics, the equilibrium distribution (corresponding to entropy \eqref{differential_entropy}) is uniform in both $q$ and $p$.
And of course to be uniform in $p$ would be to be unbounded in energy, and so full relaxation is prevented by energetic considerations.
The exact mechanism is explained in depth with the use of a longer classical analogy in section \ref{2.3}.

A major theme in chapter \ref{2} is that even upon an abstract state space, with unknown laws, thermodynamics and entropy rise may be thought to derive from a very basic principle of entropy conservation. 
Relaxation appears to be contingent only upon this factor in both classical mechanics and de Broglie-Bohm quantum mechanics.
That entropy conservation could lead to entropy rising sounds like a contradiction in terms.
But this very mechanism exists in classical mechanics, where the conservation of entropy law is named Liouville's theorem.
And the classical H-theorem that proves entropy rise goes through with only reference to Liouville's theorem.
(As explained in sections \ref{2.3} and \ref{2.4}, strictly speaking it is only a \emph{de facto} practical notion of entropy that rises as a result of demanding entropy conservation.) 
Crucially, the tendency of entropy to rise does not appear contingent on the particular form of Hamilton's equations. 
Liouville's theorem ensures that the dynamics is incompressible, but goes no further than this. 
However it is only this incompressibility that is needed to ensure relaxation.
So it is natural to place Hamiltonian mechanics in a larger category of theories that all share Liouville's theorem, but not the exact form of Hamilton's equations.  
All such theories in this category would be expected to relax. 
In chapter \ref{2}, it is shown how this category of theories may be generalized to make them relax towards any arbitrary equilibrium.
This of course is motivated by the prospect of regarding the Born distribution as an equilibrium distribution.
One mode of reasoning in this direction is shown to lead to de Broglie-Bohm style quantum physics. 
And of course the de Broglie-Bohm theory relaxes to the Born distribution.
The insights that such an approach may provide are discussed, and generalizations of de Broglie-Bohm theory commented upon.

\subsection{Experimental evidence for quantum nonequilibrium}
By definition, quantum nonequilibrium produces distinct predictions to that of textbook quantum mechanics.
Clearly this raises the prospect that realist theories with this sort of relaxation property could be experimentally verified.
And the prospect of experimental evidence for an interpretation of quantum mechanics is so rare that it begs to be investigated\footnote{The other example is collapse theories \cite{G18,GRW86,P96}.}.
Indeed it becomes debatable whether relaxation theories should qualify as interpretations or independent theories in their own right.
The main purpose of this thesis is to contribute to the ongoing effort to detect quantum nonequilibrium.  
As the main thrust of this effort is being carried out using de Broglie Bohm theory, for the sake of consistency in the literature, it is helpful to follow suit. 
After all de Broglie-Bohm theory is the archetypal quantum mechanical relax theory%
\footnote{De Broglie Bohm theory is justified in ways that it has not been possible to cover in any great depth. One that comes to mind is its relation to Hamilton-Jacobi theory that was originally highlighted by Louis de Broglie \cite{deB28}, though is also covered in some depth in references \cite{AV92,Holl93}.}.
So although chapter \ref{2} develops relax theories on a general footing, the de Broglie-Bohm theory is employed in chapters \ref{3} through \ref{5}. 

Of course, if quantum nonequilibrium were found commonly in regular tabletop laboratory experiments, it would be known widely by now. 
Classical nonequilibrium on the other hand is an experience of daily life. 
To account for this discrepancy between the prevalence of quantum and classical nonequilibrium, one need only appeal to the rate at which quantum relaxation has been shown to take place numerically.
Although the theoretical background for quantum relaxation was published by Valentini in 1991 \cite{AV91a}, it took until 2005 in reference \cite{VW05} for it to be demonstrated numerically. 
Since then quantum relaxation has been demonstrated multiple times \cite{EC06,CS10,SC12,TRV12,ACV14}.
Its presence and efficacy are a matter of course for those who deal with numerical simulations of de Broglie-Bohm systems.
Figure \ref{quantum_relaxation} illustrates the process for a simple two-dimensional harmonic oscillator.
To put a ballpark numerical value to the relaxation timescale, in this figure, for a superposition of nine low energy modes, relaxation is achieved to a high degree with only the 19 wave function periods displayed.
This is the blink of an eye for a typical quantum system.

There is still however quite limited knowledge on whether to expect this sort of relaxation timescale generally. 
And there still remain significant outstanding questions over whether equilibrium will invariably be obtained for all systems.
While the formalism developed in section \ref{2} proves that relaxation should be expected to take place for all but the most trivial systems, it does not show whether these systems may always be expected to actually reach equilibrium.
It has already been mentioned that although classical systems relax, they are inevitably prevented from reaching their equilibrium distribution (uniform on phase space) by conservation of energy.
Energy conservation provides a barrier that prevents full relaxation in classical mechanics.
It appears natural to ask whether there exist analogous barriers to quantum relaxation.

Certainly for small systems with limited superpositions, and relatively mild initial nonequilibrium, relaxation to equilibrium has been shown to take place remarkably quickly \cite{AV92,AV01,VW05,EC06,TRV12,CS10,SC12,ACV14}. 
The speed of the relaxation appears to scale with the complexity of the superposition \cite{TRV12}, while not occurring at all for non-degenerate energy states \cite{Holl93}.
However, the study of such systems has historically been motivated by the wish to demonstrate the validity of quantum relaxation. 
And such systems are also the least computationally demanding. 
In light of these two facts, it could be argued that the study of such systems represents something of a selection bias \cite{NU18}.
It would be interesting to test the boundaries of what is known about relaxation.
Would equilibrium be reached for larger systems? Or for systems with only mild excitations or mild deviations from an energy state? Or for systems with more extreme forms of nonequilibrium?  
Is it possible that, even within the parameter space tested, there exist systems for which relaxation is prevented?
Because all of these possibilities might conceivably open the window a little wider for nonequilibrium to persist, the research field has rather changed its modus operandi.
In the first decade of this century, a major focus had been on demonstrating the process of relaxation. 
Now that this has been done, the focus has rather reversed, so that attention is focused upon various means by which full relaxation may be slowed or prevented, thereby illuminating possible ways for nonequilibrium to be experimentally detected. 
In recent years some authors have begun to make inroads on this question.  
For instance one collaboration concluded that for modest superpositions there could remain a small residual nonequilibrium that was unable to relax \cite{ACV14}. 
Another found that for states that were perturbatively close to the ground state, trajectories may be confined to individual subregions of the state space, a seeming barrier to total relaxation \cite{KV16}.
Chapter \ref{3} is a further contribution to this effort.

It should be noted that through the direct simulation of a quantum system, while it is possible to conclude that equilibrium is reached, it is seemingly difficult to conclude that the system will never reach equilibrium.
To clarify, with finite computational resources it will only ever be possible to numerically evolve a system through some finite time $t$.
Either the system reaches equilibrium in this time interval or it does not. 
If it does, then equilibrium has been shown to be reached. 
If it does not, then equilibrium could still be reached for intervals greater than $t$.
At least for the most studied system, the two-dimensional isotropic oscillator\footnote{The two-dimensional HO is also mathematically equivalent to a single uncoupled mode of a real scalar field \cite{AV07,AV08,ACV14,CV15,UV15,KV16}, granting it physical significance in studies concerning cosmological inflation scenarios or relaxation in high energy phenomena (relevant to potential avenues for experimental discovery of quantum nonequilibrium).}, state of the art calculations (for instance \cite{ACV14,KV16}) achieve evolution timescales of approximately $10^2$ to $10^4$ wave function periods and manage this only with considerable computational expense.

To provide a means by which this computational bottleneck may be avoided, the method of the `drift-field' is introduced in chapter \ref{3}.
This allows access to relaxation timescales that are far in excess of those previously available.
In principle these timescales could be unlimited. 
The logic may be summarized as follows. 
For the typical oscillator states that have been studied, the dynamics are complicated and chaotic within the bulk of the Born distribution, but are simple and regular outside of it (`in the tails') \cite{EKC07,ECT17,NU18}.
And while relaxation is generally expected to proceed efficiently in the bulk of the Born distribution, much less is known of what may happen to a nonequilibrium distribution that is concentrated far away in the tails.
Of course, in order for a distribution in the tails to reach equilibrium, it must first relocate to the bulk. 
So one might think to crudely divide the relaxation timescale for such a system into two parts, $t_\text{reach bulk}$ and $t_\text{efficient relaxation}$, so that the total relaxation timescale is $t_\text{total relaxation time}\sim t_\text{reach bulk}+t_\text{efficient relaxation}$.
But in all systems studied $t_\text{reach bulk}\gg t_\text{efficient relaxation}$, so that in rough order of magnitude estimates the latter may be reasonably neglected, thus $t_\text{total relaxation time}\sim t_\text{reach bulk}$.
Hence it becomes relevant to calculate the time taken for systems to reach the bulk. 
And to this end the drift field exploits the regular property of such trajectories to create a chart, as it were, of the slow migration or drift of trajectories through the space. 
As explained in detail in chapter \ref{3}, the drift field may be further used to cleanly categorize quantum states into those that feature a mechanism to bring about relaxation, and those that have a conspicuous absence of this mechanism.
For those quantum states in which the mechanism is absent, the timescale $t_\text{reach bulk}$ may be far larger than would otherwise be the case, and may even be unbounded.
The result is that for such cases quantum relaxation may be severely retarded, or possibly even prevented outright. 
Further work remains to be done to clarify this matter, but chapter \ref{3} may aid in this process by providing means through which the quantum state parameters of such relaxation retarding states may be calculated.

Despite the ongoing work on the mechanics of quantum relaxation, and the many unanswered questions, it seems reasonable to expect that quantum nonequilibrium will have decayed away for all but the most isolated systems.
After all, relaxation speed appears to increase with the complexity of the quantum state. 
And the particles that are available to routine experiment have a long, complex and violent astrophysical history, allowing them ample opportunity to relax fully.
So the question arises of where to look for effects of quantum nonequilibrium, and attention soon turns to cosmological matters.
One main thrust of current research entails nonequilibrium signatures that could have been frozen-in to the cosmic microwave background (CMB).
According to current understanding, the observed one part in $10^5$ anisotropy in the CMB was ultimately seeded by quantum fluctuations during an inflationary era. 
So these anisotropies provide an empirical window onto quantum physics in the early Universe. 
As shown by references \cite{AV07,AV08,AV10,CV13,CV15,AVbook}, on an expanding radiation-dominated background, relaxation may be suppressed at long (super-Hubble) wavelengths.
And a treatment of the Bunch-Davies vacuum indicates that relaxation is completely frozen during inflation itself \cite{AV07,AV10}.
Thus, in a cosmology with a radiation-dominated pre-inflationary phase \cite{VF82,L82,S82,PK07,WN08}, the standard CMB power-law is subject to large-scale, long-wavelength modifications \cite{AV07,AV08,AV10,CV13,CV15,VPV19}.
And this may be consistent with the observed large-scale CMB power deficit observed by the Planck team \cite{PlanckXV,PlanckXI}.
Potentially primordial quantum nonequilibrium also offers a single mechanism by which the CMB power deficit and the CMB statistical anisotropies may be explained \cite{AV15}. 
This scenario is currently being subjected to extensive statistical testing \cite{VPV19}.

There are also windows of opportunity for quantum nonequilibrium to be preserved for species of relic particles.
And this prospect is developed in chapter \ref{4}. 
For instance, as well as imprinting a power deficit onto the CMB, the decay of a nonequilibrium inflaton field would also transfer nonequilibrium to the particles created by the decay. 
And since such particles are thought to constitute almost all the matter content of the Universe, it seems conceivable that nonequilibrium could be preserved for some subset of these particles. 
Even if inflaton decay did create nonequilibrium particles though, in order to avoid subsequent relaxation resulting from interactions with other particles, these particles would have to be created after their corresponding decoupling time $t_\text{dec}$ (when the mean free time between collisions $t_\text{col}$ is larger than the expansion timescale $t_\text{exp}=a/ \dot{a}$).
Nevertheless, simple estimates suggest that it is in principle possible to satisfy these constraints for some models.
This scenario is illustrated for the particular case of the gravitino, the supersymmetric partner of the graviton that arises in theories that combine general relativity and supersymmetry.
In some models, gravitinos are copiously produced \cite{EHT06,KTY06,ETY07}, and make up a significant component of dark matter \cite{T08}. 
(For recent reviews of gravitinos as dark matter candidates see for example references \cite{EO13,C14}.)
As well as considering decay of a nonequilibrium inflaton, chapter \ref{4} also considers the prospect that nonequilibrium could be stored in conformally coupled vacuum modes. 
Particle physics processes taking place upon such a nonequilibrium background would have their statistics altered.
Particles created on such a background for instance would be expected to pick up some of the nonequilibrium.
The act of transferal of nonequilibrium from one quantum field to another is illustrated explicitly through the use of a simple model of two scalar fields taking inspiration from cavity quantum electrodynamics.
Finally, it is shown that the measured spectra of such fields may in general be expected to be altered by quantum nonequilibrium.

This final observation is the foundation upon which chapter \ref{5} is built. 
The purpose of chapter \ref{5} is to consider observable consequences and signatures of quantum nonequilibrium in the context of ongoing experiment. 
If the dark matter (whatever its nature) does indeed possess nonequilibrium statistics, then it may be that this could turn up in experiment.
But at present there is little known about what a nonequilibrium signature could look like, and so such signatures would likely be overlooked or misinterpreted. 
In order to begin the development of the theory of nonequilibrium measurement, a particularly simple example is taken.
In many dark matter models, particles may annihilate or decay into mono-energetic photons \cite{BS88,Rudaz89,B97,B04,Creview16,DW94,Abazajian01}, and a significant part of the indirect search for dark matter concerns the detection of the resulting `smoking-gun' line-spectra with telescopes capable of single photon measurements \cite{fermi3.7,fermi5.8,Pull07,Mack08,W12,SF12-1,Albert14,HESS16,Bulbul14,Boyarsky14,A15,Hitomi16}.
For the present purposes, this represents a particularly appealing scenario for detection of nonequilibrium.
For if the dark matter particles were in a state of quantum nonequilibrium prior to the decay/annihilation, this would likely be passed onto the mono-energetic photons produced in the decay.
And if these photons were sufficiently free-streaming on their journey to the telescope, then they could retain their nonequilibrium until they are ultimately measured. 
Hence the effects of nonequilibrium would be imposed on an otherwise extremely clean line spectrum, and it is hoped that this may aid in the process of detection.
In order to model the effects of quantum nonequilibrium upon such line spectra, one may consider the energy measurement of a nonequilibrium ensemble of mono-energetic photons.
General arguments show that it is not the physical spectrum of photons that is affected by nonequilibrium, but instead the interaction between each photon and the telescope.
It is natural, therefore, to think of the role of nonequilibrium to be to alter the telescope's energy dispersion function, $D(E|E_\gamma)$.
This is of course related to contextuality, and has some counter-intuitive outcomes. 
Quantum nonequilibrium will be more evident in telescopes of lower energy resolution for instance, for if the width of $D(E|E_\gamma)$ is larger than the intrinsic broadening of the line, then the effects would appear on this larger and more conspicuous lengthscale.
It also appears possible to observe a line that is narrower than the resolution of the telescope should allow for. 
And more generally, it is reasonable to expect different telescopes to react to the nonequilibrium in different ways.
Which could lead to situations in which telescopes could be seen to disagree on the shape or existence of a line. 
Chapter \ref{5} develops these ideas and, in order to provide explicit calculations displaying these effects, develops the first model of de Broglie-Bohm measurement on a continuous spectrum.
It concludes by summarizing results with reference to some recent controversies concerning purported discoveries of lines in the $\gamma$ and $X$-ray ranges. 

\subsection{Structure of this thesis}
The substance of this thesis is covered in chapters \ref{2} through \ref{5}, each of which are largely works in their own right.
In order to aid the reader, each chapter features a non-technical foreword that is intended to help place the chapter in the wider context of the thesis, and explain its relevance to quantum nonequilibrium. 
As each chapter is a work in its own right, each features its own internal conclusion. 
For completeness however, chapter \ref{6} concludes the thesis by summarizing the major conclusions and suggesting promising research directions. 
Chapter \ref{2} is intended to be an introduction to relax theories and to de Broglie-Bohm.
It explains the mechanics behind classical and quantum relaxation in depth.
It considers minimal conditions required to make dynamical theory relax in the usual manner expected in classical physics, and then applies these to quantum physics.
Chapter \ref{3} contributes to the effort to find systems for which total relaxation to equilibrium may be prevented. 
In particular it considers the possibility that quantum nonequilibrium may be prevented for what are defined as `extreme' forms of nonequilibrium.
It introduced the drift field, describes a category relaxation retarding quantum states, and provides possible the most systematic treatment of nodes in de Broglie-Bohm. 
Chapter \ref{4} considers the possibility that quantum nonequilibrium may stored in relic particle species.
Simple estimates performed upon illustrative scenarios suggest that quantum nonequilibrium could indeed have had a small chance to survive.
Chapter \ref{5} argues that the indirect search for dark matter, and in particular the search for smoking-gun line spectra, represents favorable conditions for a detection of quantum nonequilibrium to take place. 
It develops a measurement theory of quantum nonequilibrium in this experimental context and suggests possible signatures of nonequilibrium that could inform future line searches.

\chapter*{PREFACE TO CHAPTER 2}\addcontentsline{toc}{chapter}{PREFACE TO CHAPTER 2}

These days, it is known widely that de Broglie-Bohm quantum systems relax towards the Born distribution, $|\psi(x)|^2$.
But it was not always this way.
De Broglie-Bohm theory was not created to relax.
And relaxation was not added \emph{ex post facto}.
The mechanism that produces relaxation is not the devising of some unscrupulous theorist with a vested interest in bringing about relaxation.
Instead quantum relaxation hid its clandestine existence from scientists for some 70 years.  
It was there all along, hidden in plain sight.  
Neither Louis de Broglie nor David Bohm knew of it in their lifetime. 

In 1991 the existence of quantum relaxation in de Broglie-Bohm theory was \emph{discovered} by Valentini, \cite{AV91a}.
Valentini showed that de Broglie-Bohm theory satisfied a modified version of the classical $H$-theorem \cite{Ehrenfests,Daviesbook}, which proves entropy rise in classical Hamiltonian mechanics. 
So by extending the $H$-theorem to de Broglie-Bohm quantum mechanics, Valentini extended classical thermodynamics to the prototypical realist quantum theory.
Since then, quantum relaxation has been demonstrated many times over, and is a matter of course for those who work with de Broglie-Bohm numerically.
Indeed quantum relaxation appears to be so ubiquitous that one is hard pressed to find (non-trivial) situations in which it is prevented from completing. Chapter \ref{3} is a contribution to the effort to find such instances.
To the author's mind, there are two sincerely compelling aspects of the de Broglie-Bohm theory.
The first is the measurement theory created by Bohm. 
This very naturally produces contextuality as, in order to make a measurement, a dynamical interaction must take place between observer and observed.
It serves as the bridge between realism and universalism, and the operationalist theory of experiments. 
The second is quantum relaxation, which is covered at length in this chapter.

Chapter \ref{2} is intended as an introduction to de Broglie-Bohm theory and relaxation.
But it is a rather unusual one.
It proceeds by regarding the relaxation present in classical mechanics as the central guiding principle around which to build a general theory.
In this regard, relax theories are developed on a general footing, and de Broglie-Bohm theory is obtained as a special case. 
Paradoxically perhaps, the largest conceptual jump in chapter \ref{2} concerns not quantum theory, but some long accepted facts regarding relaxation in classical mechanics. 
Indeed, when attempting to explain the concept of quantum relaxation to the uninitiated, one often finds oneself defending relaxation in classical mechanics. 
Once classical relaxation is understood, the conceptual step to quantum relaxation is only relatively minor. 
For this reason the mechanics of classical relaxation are expounded at length in section \ref{2.3}.
One important point to bear in mind is that the classical $H$-theorem, and thus classical relaxation is contingent upon only one factor. 
Namely, Liouville's theorem.
But Liouville's theorem is supposed to represent conservation of (differential) entropy, which appears antithetical. 
How can the Liouville's theorem (conservation of entropy) be used to prove relaxation (entropy rise)?
Nevertheless the logic is sound.
Any doubts of this should be put to rest, first as this is the very mechanism by which classical system relax, and second by the many demonstrations that have been presented to date.

Finally a point on style.
When discussing probability, a choice must be made between two (and arguably more) contrasting viewpoints.
These are commonly called the `Bayesian' and `frequentist' viewpoints.
Although for the present purposes the difference is superficial, the Bayesian viewpoint, dealing as it does with inferences given prior information, lends itself more readily to a discussion on information in physical theories.
For this reason a Bayesian mode-of-speech is adopted for much of the following discussion.
This will almost certainly be off-putting for readers unfamiliar to non-ensemble based statistics.
Notably, it becomes natural to refer to information and entropy (its measure) interchangeably. 
Every instance in which information shall be said to be increasing, decreasing, or conserved shall refer only to changes in the entropy, its measure.
Readers unfamiliar with the Bayesian mode-of-speech are encouraged to translate the arguments presented into their frequentist analogs.
In particular, if the word `information' appears troublesome in some contexts, try replacing it in your mind with the word `entropy'.
It is really very straightforward reasoning.
Reference \cite{Jaynesbook} is a good resource to further explore this topic.

\chapter{RELAX THEORIES FROM AN INFORMATION PRINCIPLE}
\label{2}
Quantum equilibrium ($\rho=|\psi|^2$), quantum nonequilibrium ($\rho\neq|\psi|^2$) and quantum relaxation (${\rho\rightarrow|\psi|^2}$) are the central concepts of this thesis, and so deserve a correspondingly careful exposition.
Chapter \ref{2} is intended as an introduction to these topics, and to de Broglie-Bohm theory, which will form the backbone of later chapters.
It has also been viewed as an opportunity to expound the author's perspective on these topics.
For this reason, much of the material is new. 
This chapter asks the question, `What is required of a theory so that it retains the classical notion of thermodynamic relaxation, but instead relaxes to reproduce quantum probabilities?'
It answers, `Entropy conservation.'
Merely by imposing entropy conservation (Liouville's theorem) upon an abstract state space, it is possible to recover exactly the kind of relaxation that occurs in classical mechanics.
Time-symmetry is preserved, but yet entropy rise occurs regardless, on a \emph{de facto} basis.
The imposition of entropy conservation on an abstract state space results in a structure, that for the present purposes is dubbed `the iRelax framework'\footnote{A name was necessary for the purposes of utility. But it was difficult to find one that felt appropriate. The name changed several times in writing. In the present iteration the small `i' is intended to refer to information. The author invites suggestions for a better name.}.
This chapter proceeds by example, through repeated applications and generalizations of the iRelax framework.
Both classical mechanics and de Broglie-Bohm quantum mechanics are shown to be special cases of iRelax theories.

Before embarking on the substance of chapter \ref{2}, a point on style.  
Chapter \ref{2} makes liberal use of the word `information'.
This is symptomatic of a Bayesian approach to probability, and will almost certainly be off-putting to those readers more comfortable with a frequentist approach.
Information is of course a loaded term, and the reader is advised to avoid carrying over other notions of information from other areas of physics. 
In quantum information theory, for instance, information refers to that which is embodied by qubits. 
This is not what is meant here.
In the vast majority of instances where the word information is used in chapter \ref{2}, it may be replaced with the word `entropy' without changing the intended meaning.
The inclination to use the word information enters in the following way.
Consider a single abstract system with a single abstract state, $x$.
Suppose that this state is not known exactly, however. 
It will still be possible to represent the state of the system with a probability distribution, $\rho(x)$.
But this probability distribution does not represent the spread of some ensemble, as such an ensemble does not exist. 
Instead it represents a state of knowledge, or ignorance, or `information' regarding the state of the system.
If the distribution is tightly packed around some point, then the state of the system is known to reside only within that small region, and so the information is good.
If it is widely spread, then the information regarding the system's state is less good.
It is of course useful to put a number to the quality of the information on the system's state.
And of course the standard scalar measure of information is entropy. 
A low entropy corresponds to low uncertainty of the system's state. 
A high entropy corresponds to high uncertainty of the system's state.
The information principle upon which chapter \ref{2} is built is simply entropy conservation.
For this reason, use of the word information is really quite pedestrian. 
(And so the reader is encouraged not to dismiss this work for fear of a word.)
It is even quite natural to slip into using the phrase information conservation in the place of entropy conservation.
This is the justification for the name iRelax, wherein the `i' refers to information.
In a classical context the statement of information conservation is Liouville's theorem.
In the other contexts considered, the corresponding statements are simple generalizations of Liouville's theorem.

When explaining quantum nonequilibrium and quantum relaxation, it is not uncommon to encounter an air of suspicion. 
The principles at play are amongst the most primitive in statistical physics, but they are commonly omitted from standard statistical physics courses. 
Though some arguments used do appear in discussions on the emergence of the arrow of time \cite{Daviesbook}.
Nonetheless, even amongst researchers in the field, seemingly innocuous questions like `What is the cause of the relaxation?' can prompt diverse and sometimes misguided responses.
To this end, a couple of points should be stressed at the outset. Firstly, there is nothing `quantum' about `quantum relaxation', except its context.
The mechanism behind quantum relaxation is not unique to de Broglie-Bohm theory. It is a simple statistical effect. A guise of the second law of thermodynamics that is apposite to small dynamical systems.
In section \ref{2.3}, this is shown explicitly in a classical context.
Secondly, the same relaxation occurs spontaneously in any dynamics that conserves information.
Indeed, for this reason, information conservation is regarded as the central guiding principle of this chapter.
Unlike other physical quantities, information is not a property of a physical system itself, but rather a property of the ensemble or of the persons attempting to rationalize the system (depending on whether a frequentist or a Bayesian approach to probability theory is favored). 
This means that unlike other quantities used to measure other sorts of relaxation, information and its measure entropy, are universal.
They may be defined in every context.  
Exactly what is meant by information conservation is context dependent, however, and requires some elaboration.
Nevertheless, the definition is rigorous and useful. 
Indeed, the imposition of information conservation can constrain the space of possible laws of motion to such an extent that some physics may be all-but-derived. In section \ref{2.3}, Hamilton's equations are all-but-derived in this manner. A similar process is used to derive de Broglie-Bohm theory in section \ref{2.5}.

As the exact same type of relaxation does occur in classical physics, researchers in the field are often disposed to introduce quantum relaxation via an analogy with `thermal' relaxation.
Whilst the use of this thermal analogy is eminently defensible (especially when attempting a brief explanation), it can have the unfortunate effect of introducing certain misconceptions.
It should be acknowledged, for instance, that there are multiple notions of equilibrium in classical thermodynamics. 
The prefix `thermal' in the phrase thermal equilibrium is usually reserved for the type of equilibrium established by the zeroth thermodynamical law (a simple equality of temperatures). 
The temperature of an individual thermodynamic system, however, is only defined in the thermodynamic (macroscopic) limit or when the system is able to exchange energy with a heat bath (e.g.\ the canonical ensemble).
Hence, temperature is unsuitable for the small, isolated, quantum systems that are of primary concern here.
The equilibrium available to an isolated system is usually designated `thermodynamic' or `statistical' equilibrium (at least by authors wishing to make the distinction). 
Classical thermodynamics however, defines such equilibria in terms of the stability of macroscopic state parameters. A gas, for instance, is usually said to be in a state of internal thermodynamic equilibrium, when its macroscopic state parameters (temperature, pressure, chemical composition) are static with respect to time \cite{BB}. 
Clearly not just temperature, but all macroscopic state parameters are unsuited to the study of the small quantum systems of interest here.

To establish a notion of equilibrium based upon microscopic parameters, recourse is usually taken in the so-called `principle-of-indifference'\footnote{Sometimes referred to as the `equal a prioi probabilities' postulate.}, which stipulates that each possible (micro)state should be equally likely when in equilibrium. 
The principle-of-indifference may only be unequivocally applied to systems with discrete state-spaces, however.
In going from a discrete to a continuous state-space, the natural extension to `equal likelihoods' is of course the `uniform distribution'.
But there is always a choice of variables to describe a continuous system.
So the question soon arises as to which set of variables the probability distribution should be uniform with respect to.
This point is most famously illustrated by Bertrand's Paradox \cite{B1889,J73}.
As shall be discussed, the ambiguity in the application of the indifference principle to continuous spaces is key to the developments described in sections \ref{2.4} and \ref{2.5}.

The notion of a relaxation from nonequilibrium in de Broglie-Bohm theory was first suggested by David Bohm and collaborators \cite{B53,BV54,BH89}, shortly after Bohm's seminal 1952 works \cite{B52a,B52b}.
This was at least partially to address criticisms raised by Pauli \cite{P53} and others \cite{K53}, who argued that Bohm's original $\rho=|\psi|^2$ postulate was inconsistent in a theory aimed at giving a causal interpretation of quantum mechanics.
In order to bring about relaxation, Bohm and collaborators introduced both random collisions \cite{B53} and later irregular fluid fluctuations \cite{BV54}. 
In these early works, it appears the authors felt it necessary to introduce some kind of stochastic element into the theory in order to bring about relaxation in the otherwise deterministic dynamics.

The modern notions of relaxation and nonequilibrium in de Broglie-Bohm theory come from works of Valentini in 1991 \cite{AV91a,AV91b}. In reference \cite{AV91a}, Valentini argued that the stochastic mechanisms introduced by Bohm and collaborators were unnecessary. The theory relaxed even in their absence. He showed that with only a slight modification, the H-theorem of classical mechanics\footnote{The generalized H-theorem, not the earlier Boltzmann H-theorem. (See reference \cite{Daviesbook}.)} could be applied to de Broglie-Bohm theory. 
In essence, the mechanism that causes relaxation in classical mechanics was shown also to be present in standard de Broglie-Bohm theory.
This claim has since had considerable computational substantiation, first in 2005 by Valentini and Westman \cite{VW05}, and then later in other works \cite{EC06,CS10,SC12,TRV12,ACV14}.
Demonstration of the validity of Valentini's quantum relaxation has two important consequences for de Broglie-Bohm theory.
\begin{description}
\item[Firstly,] It removes the need for the $\rho=|\psi|^2$ postulate that Pauli and others found objectionable. (Without the need to invent another postulate with which to replace it.)
\item[Secondly,] It suggests that nonequilibrium distributions ($\rho\neq|\psi|^2$) are a part of de Broglie-Bohm, and hence that de Broglie-Bohm is experimentally distinguishable from canonical quantum theory.
\end{description}

In this chapter, Valentini's original account of relaxation in de Broglie-Bohm (reference \cite{AV91a}) is expanded in order to demonstrate its generality and in order to discuss other types of theories. Various informational aspects of the mechanism are highlighted in order to underline the role they play in the relaxation. 
Ingredients needed for a physical theory to feature this type of relaxation are discussed.
Relaxation is shown to be a feature of a category of physical theories that share the same underlying framework.
For the present purposes, this is dubbed the iRelax framework.
This minimal framework is shared by various de Broglie-Bohm type quantum theories, as well as standard classical mechanics.
Relaxation in classical mechanics is of course of perennial interest.
It is the archetypal example of time-asymmetry arising from a theory in which time-symmetry is hard-coded.
This is exactly the kind of relaxation that is captured by the iRelax framework.

The chapter is structured as follows.
In section \ref{2.1}, six ingredients for the iRelax framework are listed and briefly described.
This helps structure the discussion in the following sections, which shall proceed by example.
Section \ref{2.2} details the properties of discrete state-space iRelax theories, which do not suffer from the complexities involved in extending the principle of indifference to continuous state-spaces.
In section \ref{2.3}, the formalism is extended to continuous state-spaces by means of so-called differential entropy, equation \eqref{differential_entropy}.
Classical mechanics is presented as an iRelax theory, and the classical counterpart to quantum relaxation is described in detail and illustrated explicitly.
This emergence of the second thermodynamical law prompts a quick digression in order to remark on the arrow of time. 
Then, as coordinates become very important in the next section, the iRelax equivalent of a canonical transform is developed. 
The iRelax structure of classical field theories is also briefly treated.
Although it is undoubtedly useful, the differential entropy employed in section \ref{2.3} overlooks an ambiguity in extending the principle of indifference to continuous spaces.
This oversight is corrected for in section \ref{2.4} through the use of the Jaynes entropy, equation \eqref{Jaynes_entropy}. 
Although this prompts a (minor) correction to the formulation of classical mechanics in section \ref{2.3}, it has a more notable consequence. 
In the interest of internal consistency, it suggests the introduction of a density-of-states, $m$, upon the state-space.
The majority of section \ref{2.4} is spent adapting the formalism of section \ref{2.3} to account for this density of states.
Though a brief geometrical interpretation of the density of states is also highlighted.
Section \ref{2.5} considers how quantum probabilities could arise through the iRelax framework.
A category of de Broglie-Bohm type theories are shown to result from viewing equilibrium distribution $\rho_\text{eq}$ (and in doing so also the density of states $m$) as equal to $|\psi|^2$.
The corresponding entropy is the Valentini entropy, equation \eqref{Valentini_entropy}.
Canonical de Broglie-Bohm, a member of this category, results from a particular choice made in finding a law of evolution that corresponds to the Valentini entropy.
To highlight this fact, the effect of making a different choice is also treated.
The fit of de Broglie-Bohm theory into the iRelax framework is good, but not perfect.
It suggests a possible unification of conventional von Neumann entropy-based quantum theory with the Valentini entropy-based de Broglie-Bohm theory into an over-arching structure.
This possibility is briefly speculated upon.
Section \ref{2.6} then suggests two avenues down which to pursue further research.
Finally, the properties of the iRelax theories covered in this chapter are summarized in table \ref{IPRT_summary}.

\section{Ingredients for the iRelax framework}\label{2.1}
The iRelax framework is the result of imposing information (entropy) conservation upon an abstract state space. 
But in practical situations it is often more useful to infer the state-space and the statement of entropy conservation from other factors, like the distribution of maximum entropy\footnote{In section \ref{2.5}, for a particular choice of integration constants, de Broglie-Bohm theory is obtained through viewing $|\psi|^2$ as a equilibrium (maximum entropy) distribution.}.
For this reason, it is convenient to present the framework as a list of self-consistent ingredients that follow from the state space and information conservation.
The extent to which the different ingredients inter-depend and imply one another is discussed throughout the chapter.
\begin{ingredient}\label{Ing1}
\textbf{A state-space:} $\{x\}=\Omega$\\
The space of all possible states the `system' may hold. 
The word system could be considered to come with interpretive baggage of course, but it may at least be defined rigorously with reference to the state-space, \ref{Ing1}.
By a state-space, only the usual notion is meant.
A single coordinate $x$ in the state-space should correspond to a single system state.
Each individual system should occupy a single state $x$ at any time $t$.  
Given the law of evolution (\ref{Ing2}), a coordinate in the state-space should constitute sufficient information to entirely determine to future state of the system.
\end{ingredient}
\begin{ingredient}\label{Ing2}
\textbf{A law of evolution:} $\dot{x}$\\
In-fitting with \ref{Ing1}, the law of evolution should entirely determine the future evolution of the system, given its current state-space coordinate. In fact it turns out that this requirement of determinism may not be necessary. As proven in section \ref{2.2} for a discrete state-space, determinism and indeed time-reversibility are already implied by information conservation, \ref{Ing5}.
In sections \ref{2.3} and \ref{2.4} the continuous space equivalent will (very nearly\footnote{There is a missing link in the proof. This is due to a divergence of the standard entropy argument $x\log x$ for perfectly defined states $\rho(x)\rightarrow\delta^{(n)}(x-x')$ in continuous state-spaces. This is explained in sections \hyperref[sec:state_propagators]{2.3.2} and \hyperref[sec:modified_state_propagators]{2.4.2}. The missing link is also indicated by a question mark in figure \ref{logic_map}.}) be proven.
Namely that information conservation requires systems to traverse completely determined \emph{trajectories} in the state-space. That it forbids any probabilistic evolution.
\end{ingredient}
\begin{ingredient}\label{Ing3}
\textbf{A distribution of minimal information (maximum entropy):} $\rho_\text{eq}(x)$\\
As entropy is conserved (\ref{Ing5}), any distribution with a uniquely maximum entropy must be conserved by the law of evolution, \ref{Ing1}.
The distribution of maximum entropy/least information may therefore be regarded as a stable equilibrium distribution, hence the notation $\rho_\text{eq}(x)$.
It also lends itself to deduction via symmetry arguments, particularly the principle of indifference.  
In sections \ref{2.2} and \ref{2.3}, the indifference principle is used to define the maximum entropy distribution. 
Later sections feature maximum entropy distributions motivated by the shortcomings of the indifference principle, and so the particular form of $\rho_\text{eq}(x)$ must be otherwise motivated.
\end{ingredient}
\begin{ingredient}\label{Ing4}
\textbf{A measure for information (entropy):} $S$\\ 
Clearly \ref{Ing3} and \ref{Ing4} must be consistent with each other.
And in section \ref{2.4} and \ref{2.5}, formulae for entropy will be defined with respect to the distribution of maximum entropy, $\rho_\text{eq}(x)$, so that this is indeed the case.
\end{ingredient}
\begin{ingredient}\label{Ing5}
\textbf{A statement of information/entropy conservation}\\
Information conservation is defined with respect to information measure \ref{Ing4}. It must be consistent with the law of evolution \ref{Ing2}. It serves to place a constraint upon the permissible laws of evolution.
\end{ingredient}
\begin{ingredient}\label{Ing6}
\textbf{Reasonable mechanism by which \textit{de facto} entropy rises}\\
Since by \ref{Ing5} the exact entropy is conserved, it may only rise \textit{in effect} through some limitation of experiment. Various means by which this comes about are discussed below. \ref{Ing6} is included as it is useful to consider the qualitative causes of relaxation. This aids for example in the recognition of relaxation taking place, and the identification of any potential barriers to full relaxation.
\end{ingredient}\\
\\
To illustrate how these ingredients enable the process of relaxation, consider the following examples.
\section{Discrete state-space iRelax theories} \label{2.2}
Consider first a simple discrete state-space example as follows.
A system is confined to a state-space (\ref{Ing1}) consisting of only five distinct states, denoted A to E.
As the state-space is discrete, the law of evolution (\ref{Ing2}) is iterative.
For each unit of time that passes, the law of evolution moves the system from its current state to some next state.
Four example iterative laws are shown diagrammatically in figure \ref{discrete_state_space}.
To explore the effect of these four laws upon information, it is first necessary to define information. 

For the purposes of this chapter, information shall be taken to be synonymous with the specification of a probability distribution over the state-space, in this case $\left\{p_i|i\in\text{state-space}\right\}$. A Bayesian would view such a probability distribution as representing incomplete knowledge of the state in which a single system resides. A frequentist would view the probability distribution as representing the variety and relative frequency of states occupied by an ensemble of systems.
From both points of view, the distribution
\begin{align}
p_A=0,\quad p_B=\frac12,\quad p_C=\frac12,\quad p_D=0,\quad p_E=0
\end{align} 
represents knowledge that a system is either in state B or state C with equal likelihood. A frequentist would assume an ensemble of systems, half occupying state $B$, half state $C$. To the frequentist, the probability represents ignorance over which member of the ensemble would be picked, were one ensemble member chosen at random. To the Bayesian, the probability distribution more clearly represents knowledge or information on a single system. Of course it is possible to invent information-rich statements that don't fully specify a probability distribution. For instance, suppose that the above system were again known to reside either in state B or in state C. This time, however, suppose that the states were not known to be equally likely. Mathematically, this is $p_B\cup p_C=1$, not $p_B=p_C=1/2$, and so the $p_i$'s are not fully specified. When the distribution is not fully specified, the statistical convention is to adopt the least informationful distribution that is consistent with the statement. In this case, $p_B=p_C=1/2$. This is called the principle of maximum entropy \cite{Jaynesbook}, and is closely related to the principle of indifference.  

In order to consider the conservation of information, it is necessary to have a method of quantifying the information content of probability distributions $\{p_i\}$.  
Such a method would provide the means to compare the information content of distributions, and use expressions like `more', `less', or `equivalent' information.
Entropy (\ref{Ing4}),
\begin{align}\label{discrete_entropy}
S = -\sum_i p_i\log p_i,
\end{align}
is the usual scalar measure of the information that fulfills this role. The following conventions for entropy are observed throughout: a) Any multiplicative constants (e.g.\ the Boltzmann constant) are omitted as they serve no useful role presently. b) The base of the logarithm is left unspecified unless required.
c) The negative sign is included as for historical reasons entropy and information conventionally vary inversely to each other. 
For this reason, in the following, the word \emph{entropy} could be replaced with word \emph{ignorance} without issue.

Entropy \eqref{discrete_entropy} is a measure of the spread of the probability distribution $\{p_i\}$.
For instance, the maximum information possible corresponds to zero-uncertainty of the state in which the system resides (no spread).
That is, $p_i=\delta_{ij}$ for some $j$. This is the state of lowest entropy, $S=0$.
This contrasts with the state of minimum information, when nothing is known of the system state. This is represented by uniformly distributed likelihoods, $p_i=1/5\ \ \forall\ \ i$. This is the distribution of maximum entropy (\ref{Ing3}). The base of the logarithm is, for the present purposes, somewhat arbitrary. It may be left unspecified or chosen judiciously. A sensible choice for the 5 state scenario would be base 5, which results in an entropy that varies between 0 and 1.

\begin{figure}
\begin{center}
\begin{overpic}[tics=10,width=\textwidth]{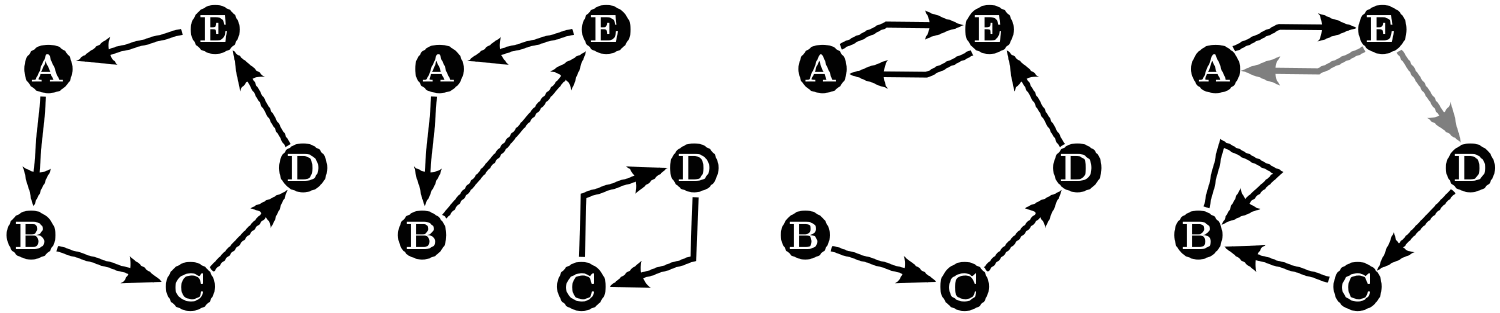}
\put (7,-3){Law 1}
\put (33,-3){Law 2}
\put (58.5,-3){Law 3}
\put (84.5,-3){Law 4}
\put (86,13.5){\textcolor{gray}{$T_{AE}$}}
\put (95,16){\textcolor{gray}{$T_{DE}$}}
\end{overpic}
\end{center}
\caption[Schematic diagram of four laws of evolution on a five-state discrete state-space]{This figure represents four discrete laws of evolution upon a five-state state-space. Letters A to E represent the five possible states of a system. The arrows between the states represent different iterative laws of evolution.
Laws 1 and 2 are one-to-one permutations between states. The effect of these laws upon a probability distribution $\{p_i\}$ is merely to permute the probabilities between the states.
For this reason, laws 1 and 2 leave entropy \eqref{discrete_entropy} invariant.
Law 3 on the other hand, is not one-to-one, as states A and D both are both mapped to state E.
After a few iterations of law 3, a probability distribution $\{p_i\}$ will evolve such that only $p_A$ and $p_E$ are non-zero.
This corresponds to an decrease in entropy, \eqref{discrete_entropy}, as the knowledge of which state the system occupies becomes more certain. 
Law 3 may be thought to violate information-conservation as a result of its many-to-one mapping.
Law 4 is intended to illustrate a one-to-many mapping. 
This possibility is also ruled out by information conservation as explained in section \hyperref[sec:transition_matrix]{2.2.1}.
 }\label{discrete_state_space}
\end{figure}
Now consider whether the four laws depicted in figure \ref{discrete_state_space} conserve information (entropy), \ref{Ing5}.
Laws 1 and 2 are simple one-to-one permutations between states.
Under an iteration of such a permutation law, the individual probabilities $p_i$ that comprise the probability distribution $\{p_i\}$ are permuted in the same manner as the states themselves. 
So for instance, an iteration of law 1 transforms the probabilities as $p_A\to p_B$, $p_B\to p_C$, $p_C\to p_D$ etc.
Clearly, such a permutation of probabilities leaves entropy \eqref{discrete_entropy} invariant. Hence laws 1 and 2 conserve information as a result of being one-to-one (injective) maps between states.
In contrast, laws 3 and 4 do not conserve information (in violation of \ref{Ing5}). 
Consider, for instance, the distribution of least information, $p_i=1/5\ \ \forall\ \ i$.
If law 3 were observed, then after a number of iterations of the law, this distribution would evolve to approximately $p_A=p_E=1/2$, $p_B=p_C=p_D=0$.
Hence, the distribution would become spread between states $A$ and $E$ only, rather than between all five possible states.
This constitutes better information regarding the state (the system is known to reside in one of two possible states rather than one of five possible states), and so entropy is decreased. 
In short, law 3 is in violation of \ref{Ing5} as the two-to-one mapping from states A and D onto state E lowers entropy, resulting in decreased uncertainty of the system state.
So one-to-one laws do conserve entropy, and many-to-one laws appear not to. Naturally the other case to consider is that of one-to-many laws.
Of course, for a law to map a single initial state to multiple possible final states, it must choose which final state to send each individual state to.
If there is some deterministic mechanism that facilitates this decision, then it must depend on factors that have not been taken into account in defining the state-space. 
The possibility of such a mechanism is disregarded for this reason.
Rather, if the state-space does indeed exhaust all factors relevant to the system evolution (as per the definition of \ref{Ing1}), then one-to-many laws must be fundamentally probabilistic. 
This is illustrated in figure \ref{discrete_state_space} by law 4, which is a near-inverse to law 3 and so contains a one-to-many component.
(A true inverse is impossible as law 3 is neither one-to-one nor onto.)
Using the notation below, the respective probabilities that state $E$ is iterated to states $A$ and $D$ are denoted $T_{AE}$ and $T_{DE}$.
As is proven below, such one-to-many probabilistic laws also necessarily violate information conservation (\ref{Ing5}).

Laws 3 and 4 serve to highlight an important distinction between determinism and time reversibility. 
Note for instance that law 3 is deterministic, despite not conserving information.
That is, given some present state of the system, law 3 perfectly determines the future state after any arbitrary number of iterations. 
It does not perfectly determine the previous evolution however.
There is no way for example, to tell whether a system in state E resided in state A or state D one iteration previously.
Although Law 3 determines future states perfectly, it rapidly loses it predictive power backwards in time. 
For this reason it is useful to distinguish between forwards-time determinism (many-to-one laws) and backwards-time determinism (one-to-many laws). The former category may be considered perfectly \emph{pre}dictive whilst the later are perfectly \emph{post}dictive.
One-to-one laws are of course forwards and backwards-time deterministic. Presently, this is used as the definition of time-reversibility.

\subsection{The transition matrix and the information conservation theorem}\label{sec:transition_matrix}
To formalize the above observations on information conservation, consider the generally probabilistic case of a law on $n$ discrete states in which every state has a probability of being iterated to any other state. Denote the probability that state $j$ is iterated to state $i$ by $T_{ij}$.
Then, under an iteration of the law, the probability distribution $p_i$ transforms as
\begin{align}\label{discrete_transition}
p_i\longrightarrow p_i' =\sum_j T_{ij}p_j,
\end{align}
which may be regarded as the matrix equation $p'=Tp$. \sloppy The $p'=(p_1',p_2',\dots,p_n')^\text{tr}$ and $p=(p_1,p_2,\dots,p_n)^\text{tr}$ are probability column vectors and $T$ is an $n\times n$ matrix with elements $T_{ij}$.
Of course, since the elements of matrix $T$ are indeed probabilities, they are restricted to real values between zero and one, $T_{ij}\in [0,1]\,\, \forall\,\, i,j$.
In addition, since a system in state $j$ must exist in some state after an iteration, it follows that the columns of $T$ sum to unity, $\sum_i T_{ij}=1\,\,\forall \,\,j$.
Without the imposition of further constraints however, these are the only two conditions satisfied by the matrix $T$.
Matrices with these properties are in fact well known and are widely used to study Markovian chains, where they are known as stochastic or Markov matrices.
Indeed the Markovian property, wherein the next state a system advances to is presumed only to depend upon the current state it occupies, has already been mentioned in passing whilst defining \ref{Ing1}. 
iRelax theories however, have the additional constraint of information (entropy) conservation (\ref{Ing5}), which changes the usual picture significantly.
To this end, consider the following theorem and proof.
\begin{theorem}[Information Conservation Theorem on discrete state-spaces]\quad\newline
On a discrete state-space $\Omega$,
\begin{align}\label{theorem1}
\begin{matrix}\text{The set of all} \\\text{information-conserving laws }\end{matrix}\quad&=\quad\begin{matrix}\text{ The set of all }\\\text{laws that are one-to-one}\end{matrix}.
\end{align}
\end{theorem}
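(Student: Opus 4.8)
The plan is to prove the set equality \eqref{theorem1} by establishing the two inclusions separately. For the easy inclusion---that every one-to-one law conserves information---I would observe that a one-to-one law is represented by a permutation matrix $T$, so that $p'=Tp$ merely relabels the entries of the distribution $\{p_i\}$. Since the entropy \eqref{discrete_entropy} is a symmetric function of its arguments, invariant under any permutation of the $p_i$, one has $S(p')=S(p)$ identically and information is conserved. This direction requires essentially no computation.

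The substance lies in the reverse inclusion: that any information-conserving law must be one-to-one. I would work directly with the transition matrix $T$ of \eqref{discrete_transition}, recalling that its entries satisfy $T_{ij}\in[0,1]$ and that its columns sum to unity, $\sum_i T_{ij}=1$. The strategy is to probe the law with judiciously chosen input distributions and extract structural constraints from the requirement $S(Tp)=S(p)$. The key preliminary fact is that the entropy \eqref{discrete_entropy} is non-negative and vanishes precisely for the point-mass distributions, since each term $-p_i\log p_i$ is non-negative and equals zero only when $p_i\in\{0,1\}$, which together with $\sum_i p_i=1$ forces a single unit entry.

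The argument then proceeds in two steps. First I would feed in the point mass $e_j$ (the distribution concentrated on state $j$), which has $S(e_j)=0$; its image $Te_j$ is exactly the $j$th column of $T$, and conservation forces $S(Te_j)=0$, so by the preliminary fact each column of $T$ must itself be a point mass. Hence $T_{ij}=\delta_{i\,\sigma(j)}$ for some function $\sigma$ on the state labels, which already rules out the probabilistic one-to-many laws (such as Law 4), since it shows the dynamics must be deterministic. Second, to rule out many-to-one laws (such as Law 3), I would suppose for contradiction that $\sigma$ is not injective, so that $\sigma(j_1)=\sigma(j_2)$ for distinct $j_1,j_2$. Feeding in any two-point distribution supported on $\{j_1,j_2\}$, which has strictly positive entropy, collapses both masses onto the single state $\sigma(j_1)$, producing a point mass of zero entropy and contradicting conservation. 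Thus $\sigma$ is injective, and since the state-space is finite an injective self-map is a bijection, so $T$ is a permutation matrix---that is, the law is one-to-one.

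I expect the main obstacle to be conceptual rather than technical: choosing the test distributions so that the two failure modes (probabilistic splitting and deterministic merging) are isolated cleanly, and recognising that the zero-entropy characterisation does all the heavy lifting. An alternative route would first show that conservation on the uniform distribution forces $T$ to be doubly stochastic, then invoke the majorization property $Tp\prec p$ together with the strict Schur-concavity of the entropy to conclude that equality for all $p$ demands a permutation; however, the point-mass and two-point testing above avoids this heavier machinery and keeps the proof self-contained.
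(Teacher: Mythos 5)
Your proof is correct, and its first half coincides with the paper's: both probe the conservation condition \eqref{theorem_misc_1} with point masses $p_i=\delta_{ik}$, whose images under \eqref{discrete_transition} are the columns of $T$, and both use the fact that $S=0$ characterizes point masses (equivalently, that $x\log x$ vanishes on $[0,1]$ only at the endpoints) to force each column of $T$ to be a point mass, i.e.\ determinism. Where you part ways is in ruling out many-to-one maps. The paper substitutes the uniform distribution $p_i=1/n$ into the conservation condition, obtaining $0=\tfrac{1}{n}\sum_i\left[\left(\sum_j T_{ij}\right)\log\left(\sum_j T_{ij}\right)\right]$; since after the first step the row sums are non-negative integers, this forces each row to contain at most a single $1$, and for countably infinite $\Omega$ the paper must rerun the argument with uniform distributions over finite subsets $\omega\subset\Omega$. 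You instead suppose $\sigma(j_1)=\sigma(j_2)$ for distinct $j_1,j_2$ and feed in a two-point distribution supported on $\{j_1,j_2\}$, which collapses onto a single state, dropping the entropy from $\log 2$ to $0$ --- a contradiction. Your local test is arguably cleaner: it isolates the offending pair directly and works verbatim whether $\Omega$ is finite or countably infinite, so the separate finite-subset case distinction disappears. One small blemish: your closing remark that ``an injective self-map on a finite set is a bijection, so $T$ is a permutation matrix'' silently assumes finiteness and is also unnecessary --- the theorem claims only one-to-one-ness, which your injectivity argument already delivers, and for countably infinite $\Omega$ an information-conserving law need not be onto (as the paper's footnote on the map $i\rightarrow 2i$ points out), so the permutation conclusion would actually be false there.
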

\begin{proof}
As already discussed, the implication: a one-to-one law $\Rightarrow$ information-conservation follows trivially. In order to prove this theorem therefore, the reverse implication: a one-to-one law $\Leftarrow$ information conservation, must be sought.
Let $T_{ij}\in[0,1]$ be the elements of a stochastic matrix $T$ (with columns normalized to unity, $\sum_i T_{ij}=1$, in accordance with the discussion above). 
Under an iteration of the law, probabilities transform as $p_i\to \sum_jT_{ij}p_{j}$. So, in order to conserve entropy, $S=-\sum_i p_i\log p_i$, it must be the case that 
\begin{align}\label{theorem_misc_1}
\sum_i p_i \log p_i = \sum_i\left[\left(\sum_j T_{ij} p_j\right)\log\left(\sum_j T_{ij} p_j\right)\right]
\end{align}
for all possible probability distributions $\{p_i\}$. Substitution of a maximum information distribution $p_i=\delta_{ik}$ into equation \eqref{theorem_misc_1} gives 
\begin{align}\label{theorem_misc_2}
0=\sum_i T_{ik} \log T_{ik}.
\end{align}
On the \emph{open} interval $x\in (0,1)$ the function $x\log x$ is strictly less than zero.
It is equal to 0 on both boundaries of this interval however (at $x=1$ and in the limit $x\to 0$).
 The summation of equation \eqref{theorem_misc_2} cannot work, therefore, if any $T_{ik}$ is between 0 and 1. 
All matrix elements must be on the limits of this interval; either $T_{ik}=0$ or $T_{ik}=1$ $\forall$ $i,k$.
Since the columns of matrix $T$ are normalized, this statement may be taken to mean that each column of $T$ is populated with 0s except for a single 1.

This is determinism; every state is mapped to a single other. 
The law may yet be many-to-one however (there may be multiple 1s to a row of $T$). To rule out this possibility for the case of a state-space $\Omega$ with a finite number of states, substitute the minimum information distribution, $p_{i}=1/n\,\,\forall \,\,i$, into equation \eqref{theorem_misc_1} to give an expression that simplifies to 
\begin{align}
0=\frac{1}{n}\sum_i\left[\left(\sum_jT_{ij}\right)\log\left(\sum_j T_{ij}\right)\right].
\end{align}
This time the properties of $x\log x$ indicate that $\sum_j T_{ij}$ must equal 0 or 1 for each $i$. That is, each row may admit at most a single 1. This concludes the proof for a finite number of discrete states. To conclude the proof for a countably infinite $\Omega$ (ie. $n\rightarrow\infty$), consider a uniform distribution over a finite $m$-state subset, $\omega\subset\Omega$. Substituting this into equation \eqref{theorem_misc_1} gives an expression that simplifies to 
\begin{align}
0&=\frac{1}{m}\sum_{i\in\Omega}\left[\left(\sum_{j\in\omega}T_{ij}\right)\log\left(\sum_{j\in\omega}T_{ij}\right)\right].
\end{align}
Hence, $\sum_{j\in\omega}T_{ij}$ equals 0 or 1 for all finite subsets $\omega\in\Omega$; rows of the transition matrix $T$ contain at most a single element equal to 1 with all other elements equal to 0.
\end{proof}
\hspace{-\parindent}In summary, for discrete state-spaces%
\footnote{On a finite discrete state-space $\Omega$, one-to-one laws of evolution are permutations and are guaranteed to be onto (surjective). This is not necessarily the case if $\Omega$ is (countably) infinite. Suppose states were labeled $i=1,2,...,\infty$, for instance, and a law of evolution specified an iteration that moved each state to the state with double its current label, $i\rightarrow 2i$. Then each state would be mapped to exactly one other, but no state labeled with an odd number would be ever be mapped to.
Such laws, that are one-to-one but not onto, should be considered time-reversible as they do perfectly predict future and past evolution. Indeed they could be considered an interesting sub-class of laws with the curious property that, in a sense, they may be used to infer a \emph{beginning to time}.
For instance, suppose that at the present time (iteration) a system governed by the law above were known to exist in state $i=12$. Then two iterations of the law previously, the system must have existed in state $i=3$. As there is no $i=3/2$ state, however, the law could not have begun iterating prior to this. So in this sense, two iterations prior to the current time, corresponds to the beginning of time.}, 
\begin{quote}
\hspace{-6mm}Information conserving laws = one-to-one laws = time reversible laws $\subset$ deterministic laws.
\end{quote}
For the sake of later comparison, consider a statement of the one-to-one property as follows. Let $p_n(i)$ denote the time dependent probability distribution over states $i\in \Omega $ at iteration $n$.
Let $i_0$ denote a system's initial state, and $i_n$ denote the state it is moved to after $n$ iterations. 
Then the one-to-one property ensures that 
\begin{align}\label{discrete_information_conservation}
p_n(i_n)=p_0(i_0).
\end{align}
This is the discrete equivalent of Liouville's Theorem, $\text{d}\rho/\text{d}t=0$, the usual statement of information conservation, \ref{Ing5}, in classical mechanics.

The final ingredient, \ref{Ing6}, describes how the limitations of experiment cause a \textit{de facto} rise in entropy, despite its exact conservation (\ref{Ing5}). Although this ingredient tends to be more appropriate for systems with a continuous state-space, it could arise for the discrete case as follows. Suppose a discrete law were to be iterated a great number of times. After some large number of iterations, the count may be lost, so that there is uncertainty over whether the next iteration has already been made. A sensible option in this case would be to employ the principle of maximum entropy (see reference \cite{Jaynesbook})--that is, use the probability distribution of highest entropy that is consistent with the degraded information. This would cause an entropy rise.

\begin{figure}\label{long_figure_part_1}
\includegraphics[width=\textwidth/6]{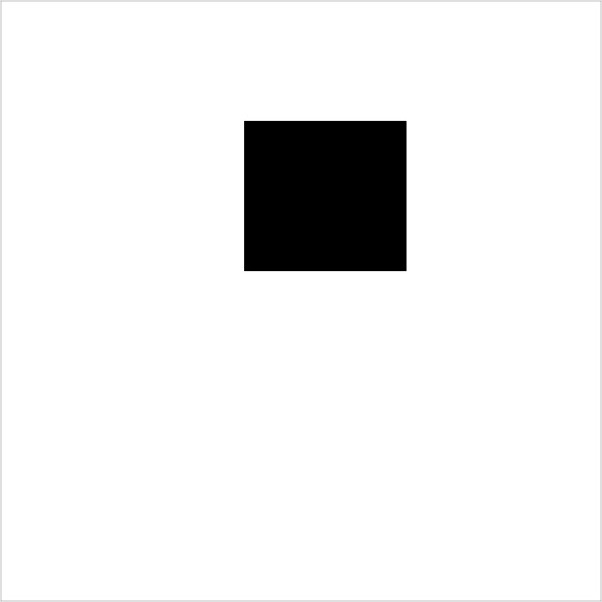}%
\includegraphics[width=\textwidth/6]{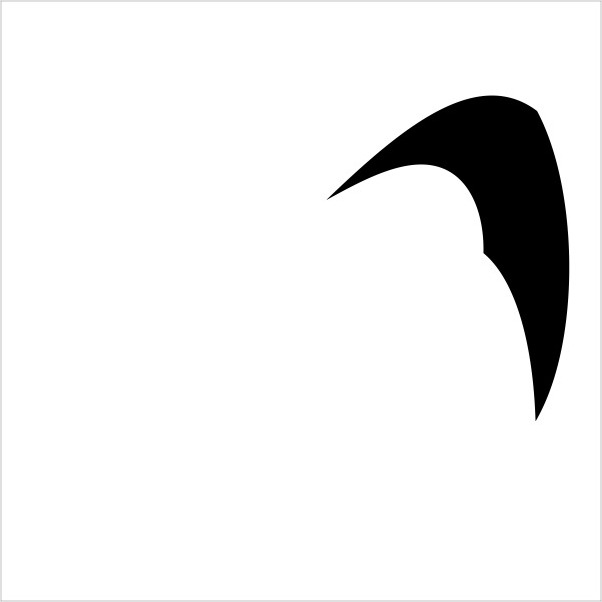}%
\includegraphics[width=\textwidth/6]{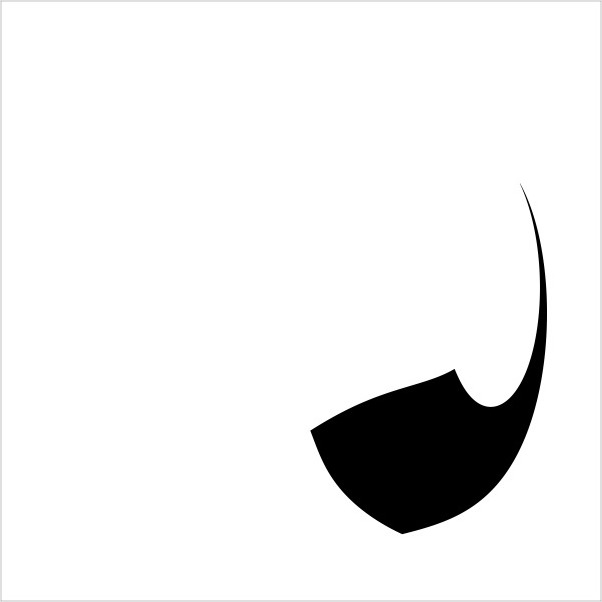}%
\includegraphics[width=\textwidth/6]{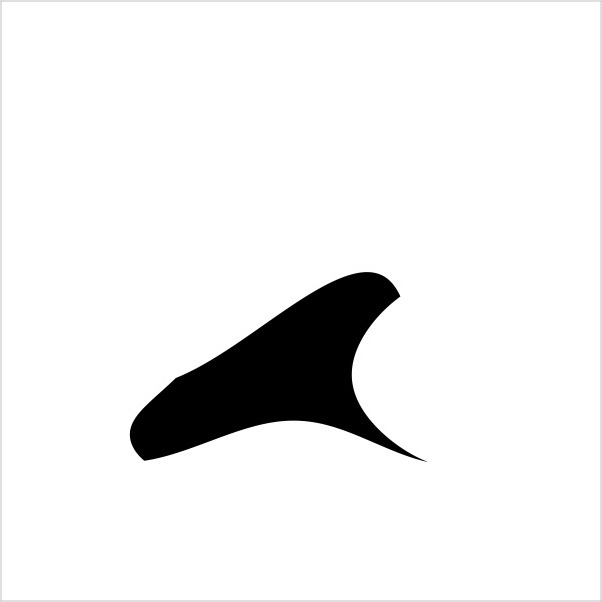}%
\includegraphics[width=\textwidth/6]{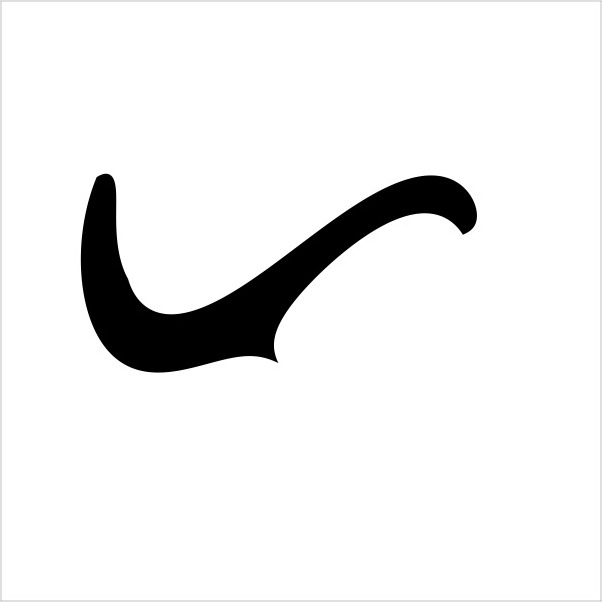}%
\includegraphics[width=\textwidth/6]{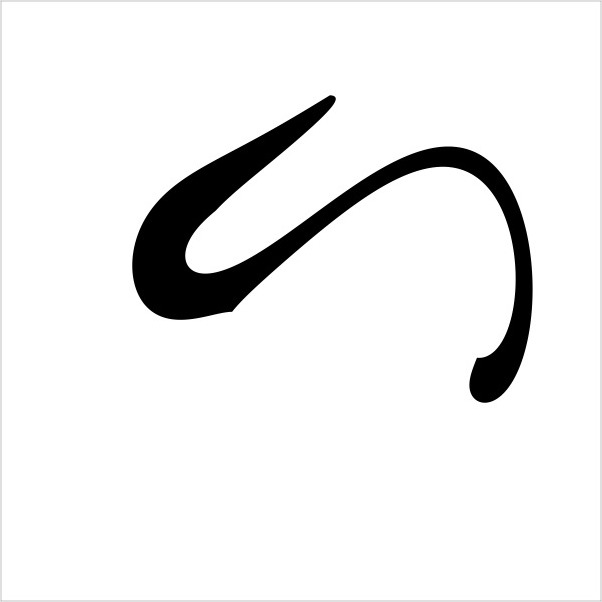}%

\vspace{-1pt}
\includegraphics[width=\textwidth/6]{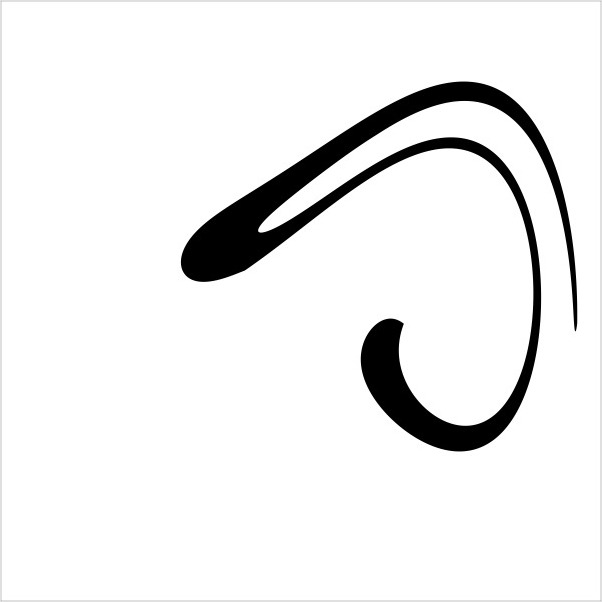}%
\includegraphics[width=\textwidth/6]{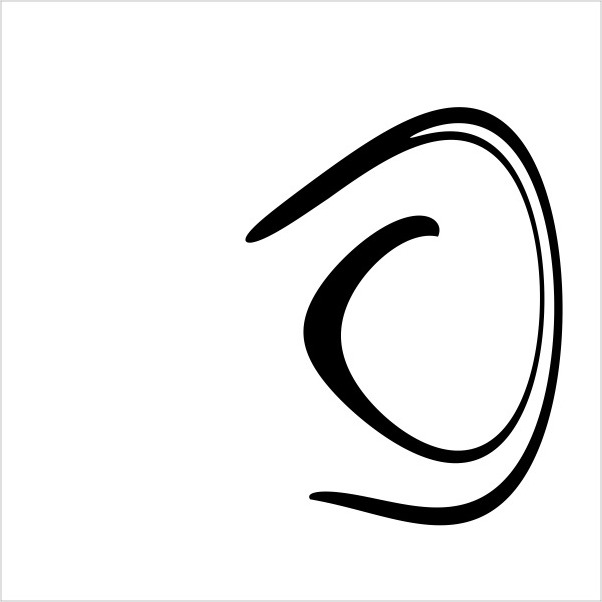}%
\includegraphics[width=\textwidth/6]{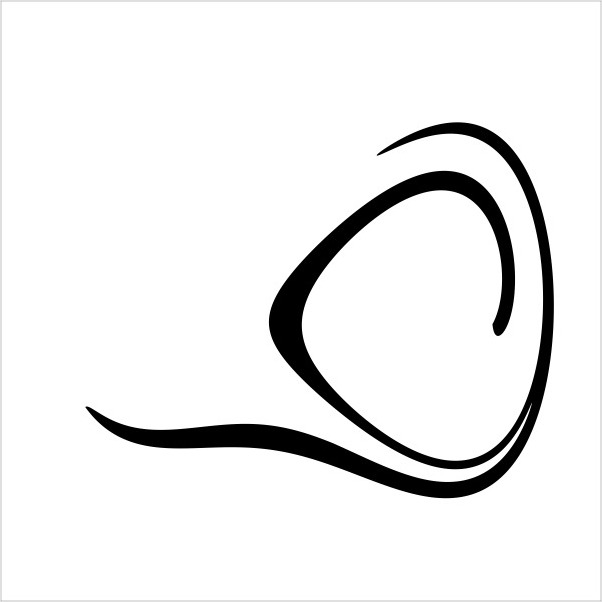}%
\includegraphics[width=\textwidth/6]{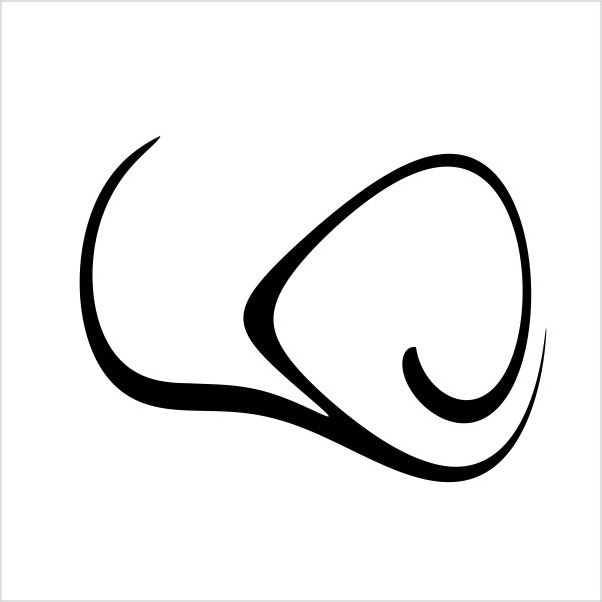}%
\includegraphics[width=\textwidth/6]{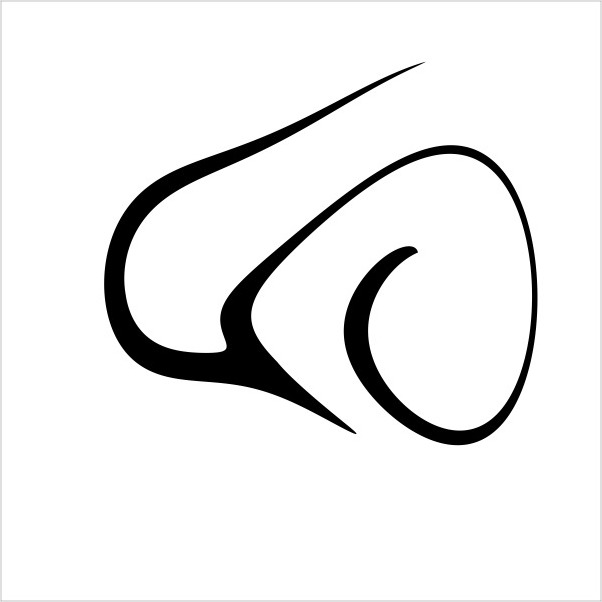}%
\includegraphics[width=\textwidth/6]{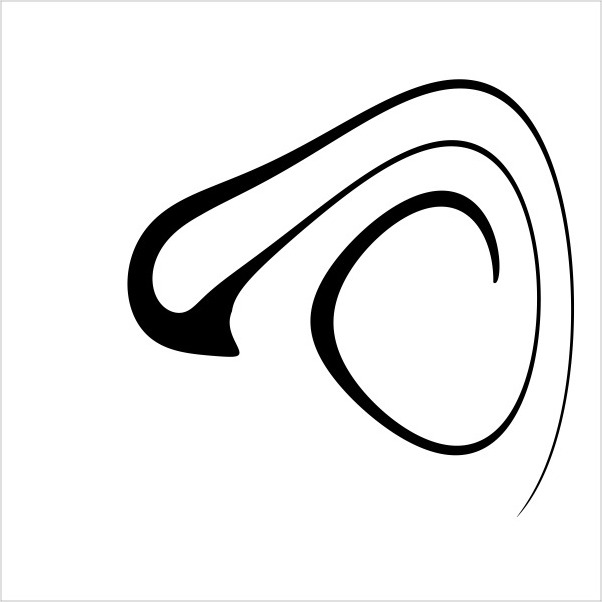}%

\vspace{-1pt}
\includegraphics[width=\textwidth/6]{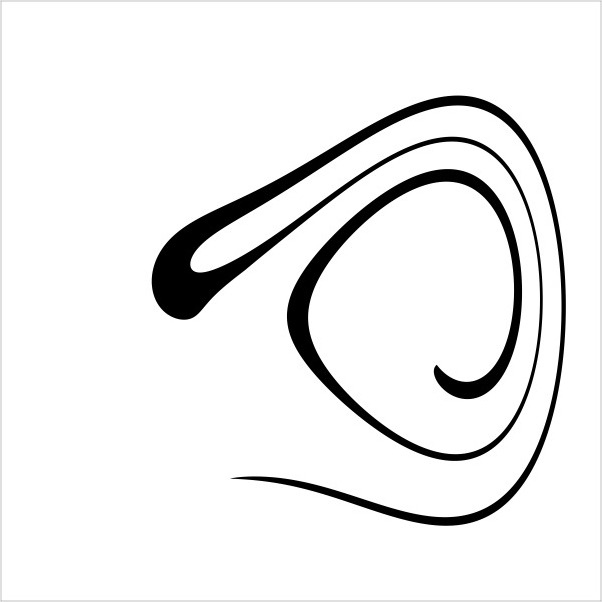}%
\includegraphics[width=\textwidth/6]{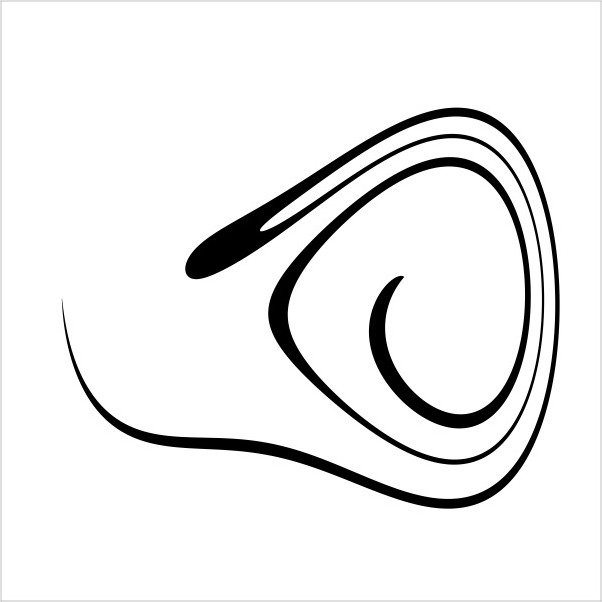}%
\includegraphics[width=\textwidth/6]{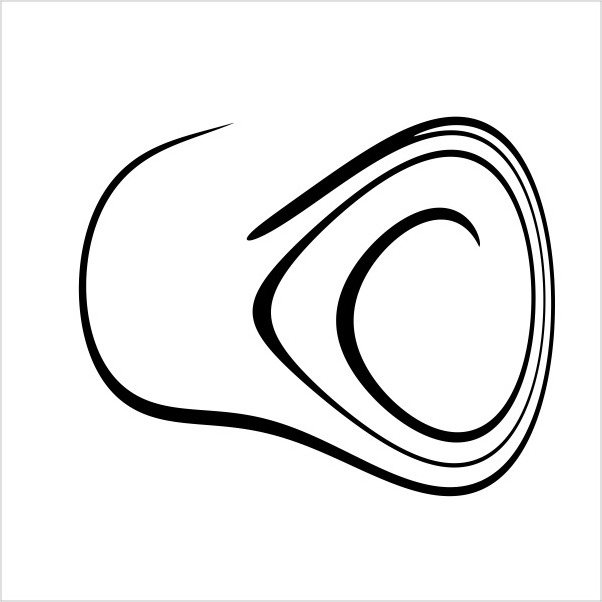}%
\includegraphics[width=\textwidth/6]{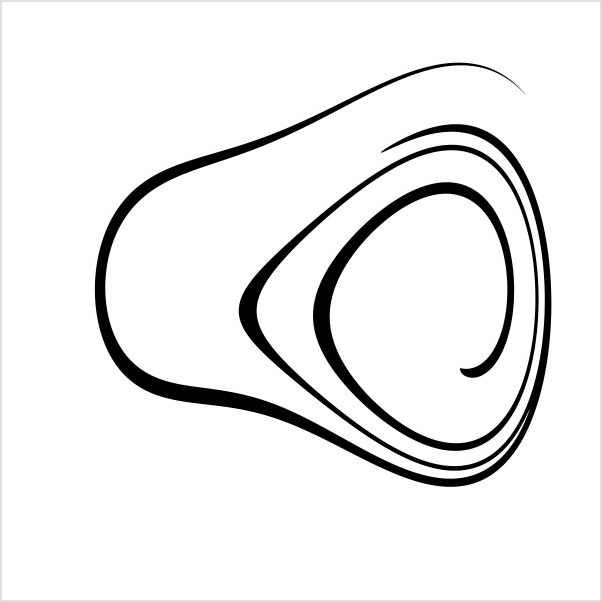}%
\includegraphics[width=\textwidth/6]{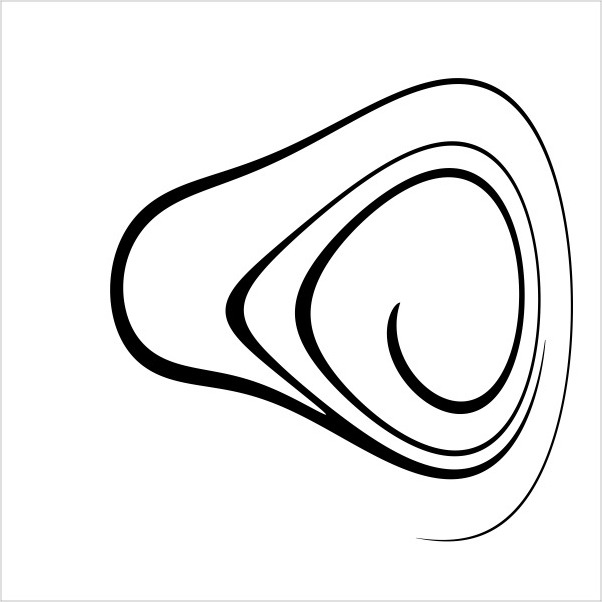}%
\includegraphics[width=\textwidth/6]{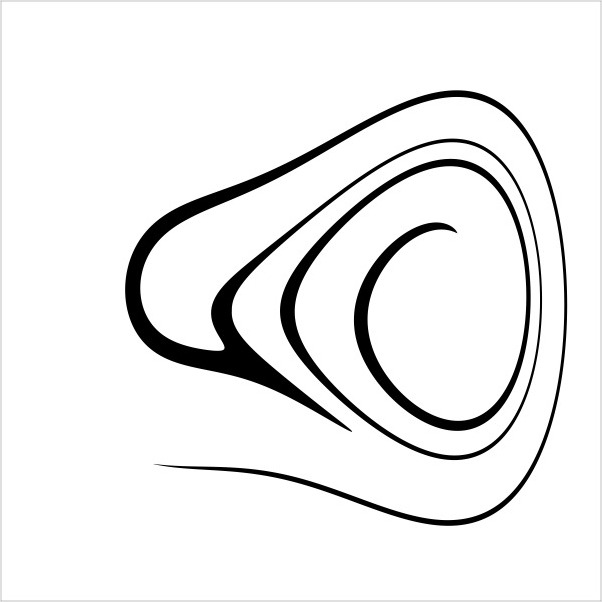}%

\vspace{-1pt}
\includegraphics[width=\textwidth/6]{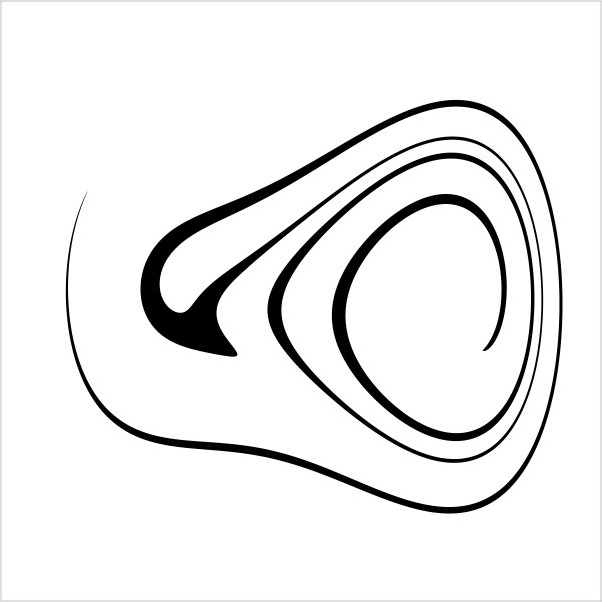}%
\includegraphics[width=\textwidth/6]{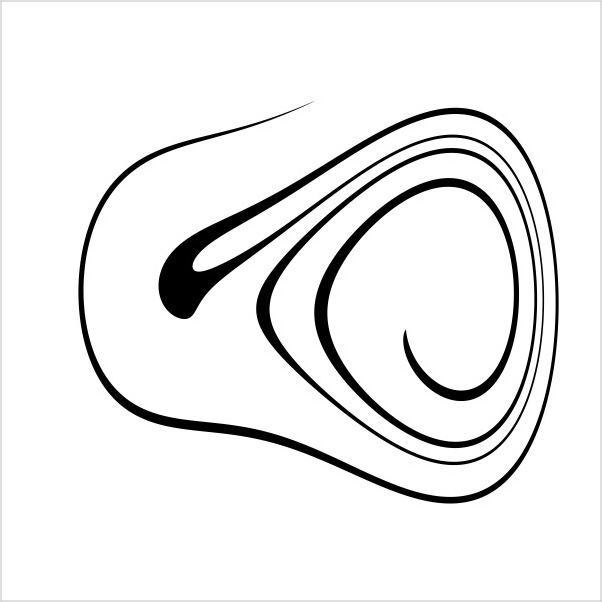}%
\includegraphics[width=\textwidth/6]{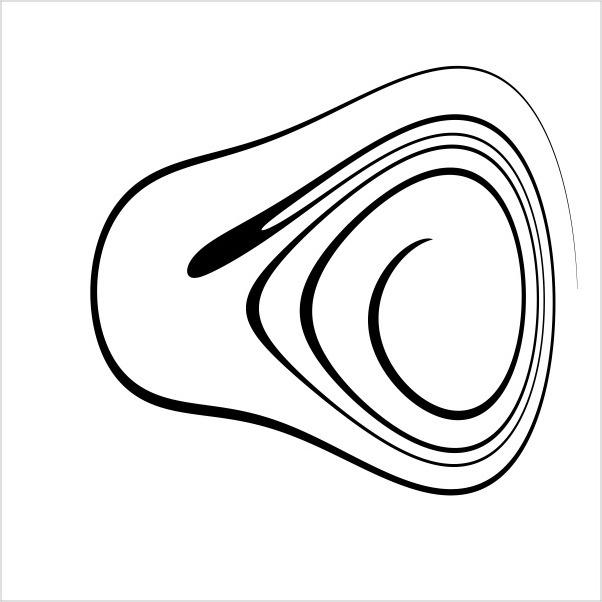}%
\includegraphics[width=\textwidth/6]{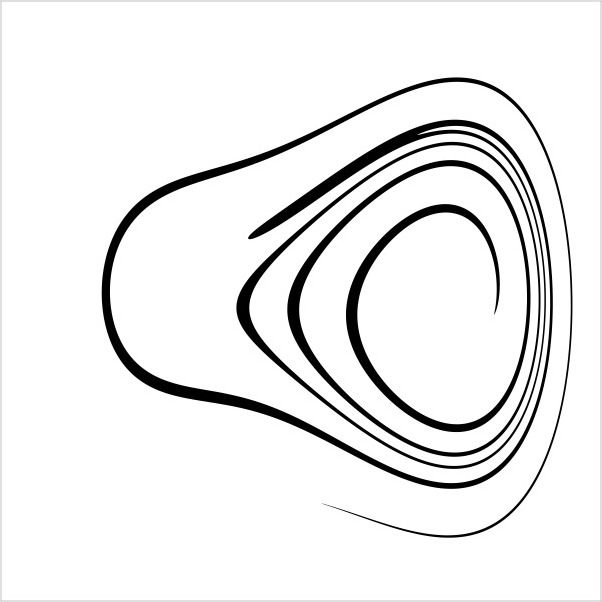}%
\includegraphics[width=\textwidth/6]{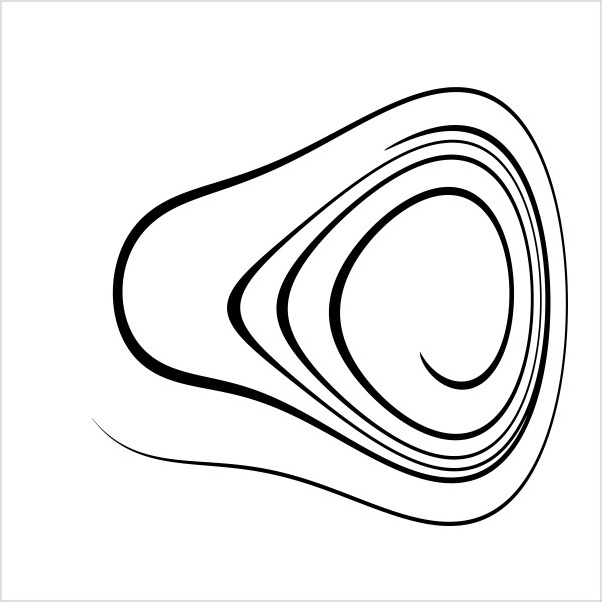}%
\includegraphics[width=\textwidth/6]{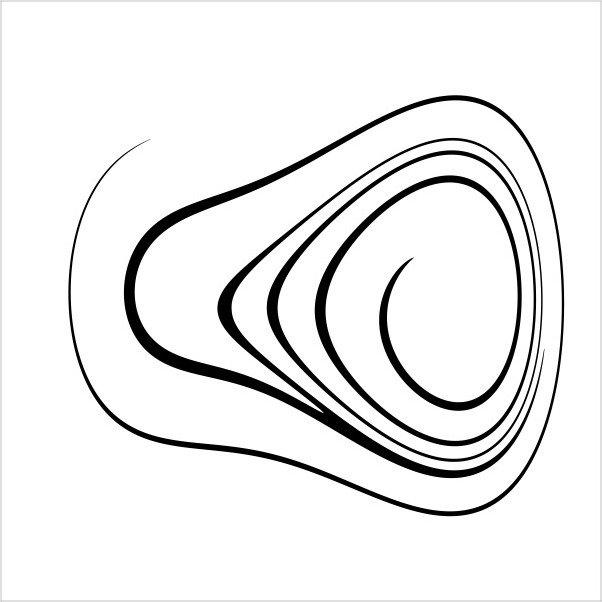}%

\vspace{-1pt}
\includegraphics[width=\textwidth/6]{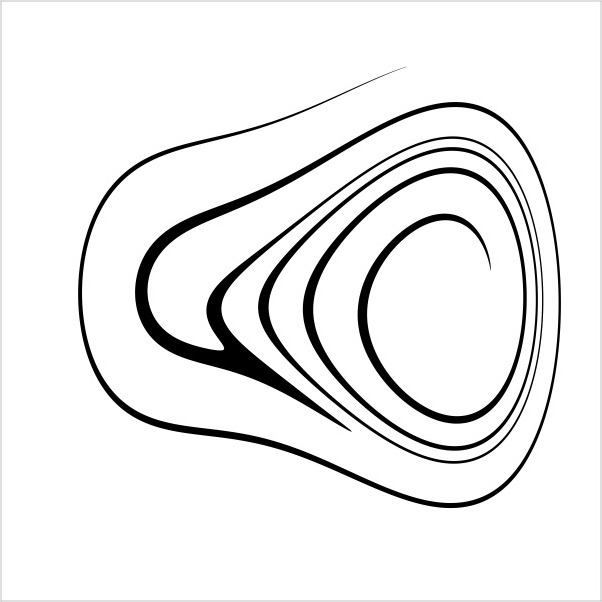}%
\includegraphics[width=\textwidth/6]{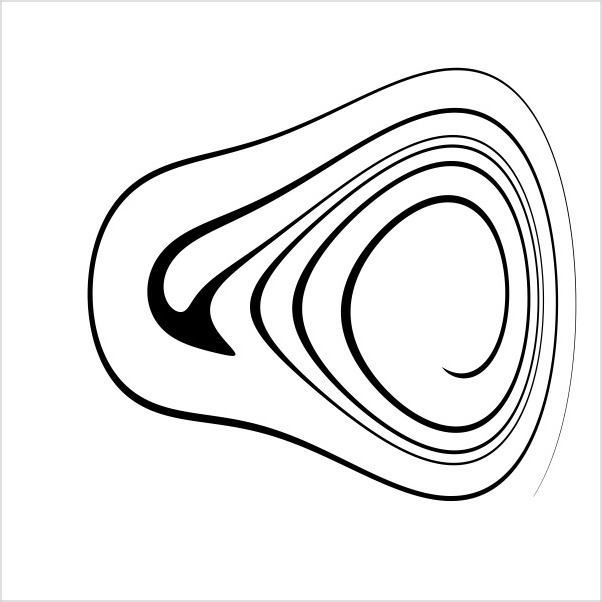}%
\includegraphics[width=\textwidth/6]{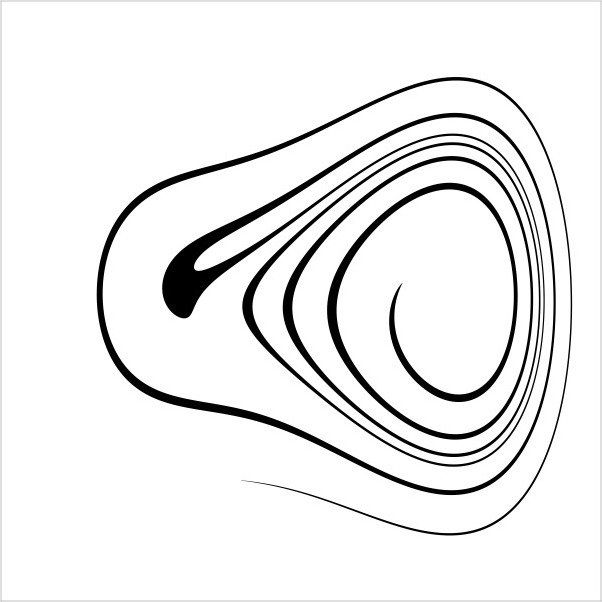}%
\includegraphics[width=\textwidth/6]{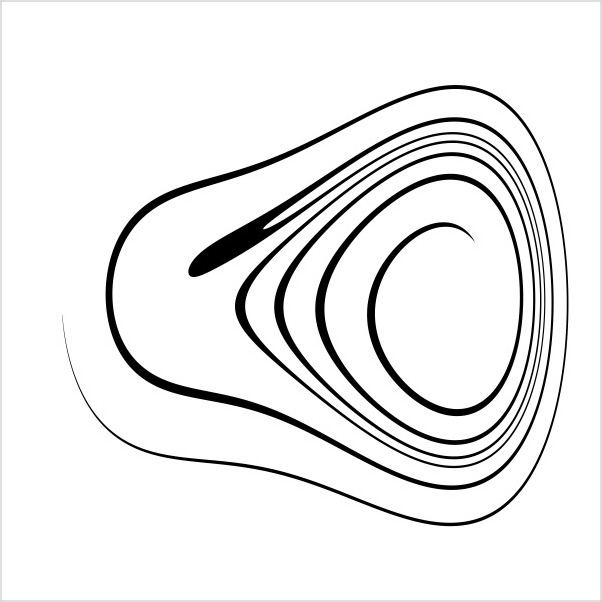}%
\includegraphics[width=\textwidth/6]{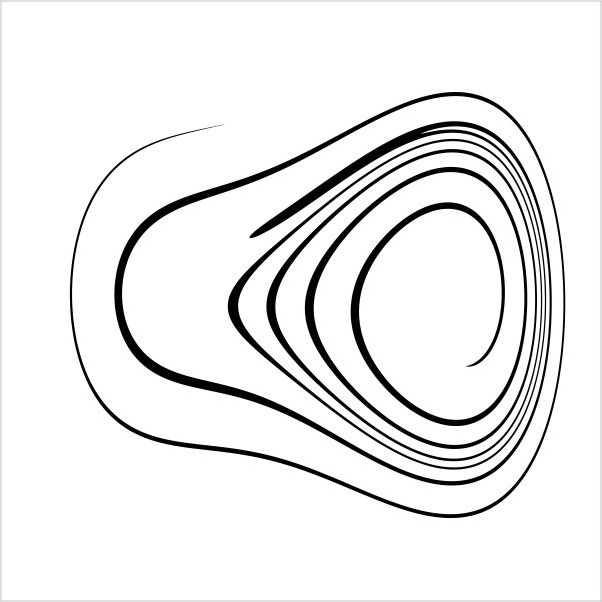}%
\includegraphics[width=\textwidth/6]{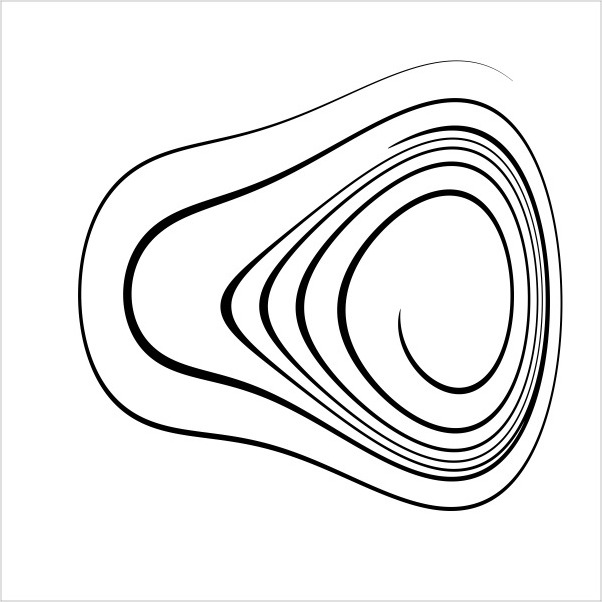}%

\vspace{-1pt}
\includegraphics[width=\textwidth/6]{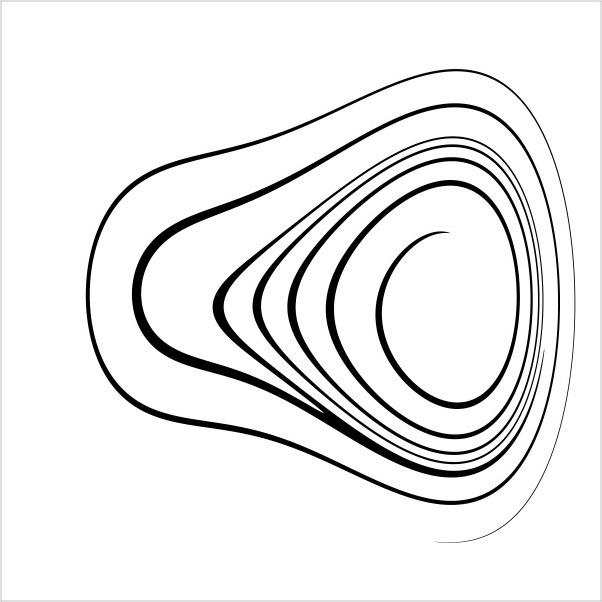}%
\includegraphics[width=\textwidth/6]{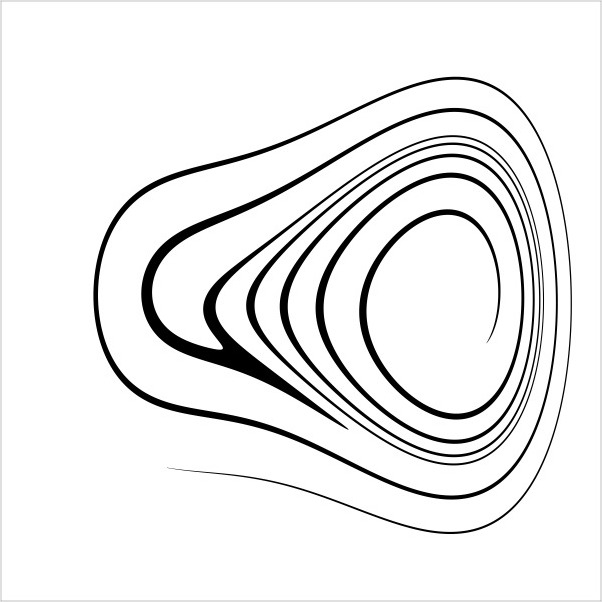}%
\includegraphics[width=\textwidth/6]{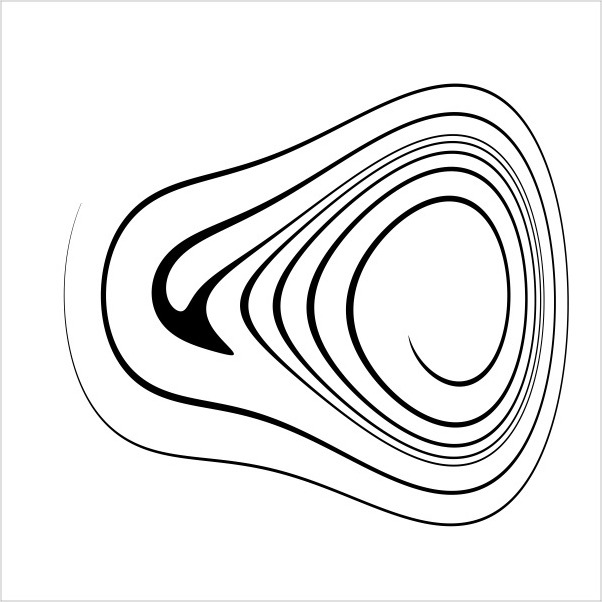}%
\includegraphics[width=\textwidth/6]{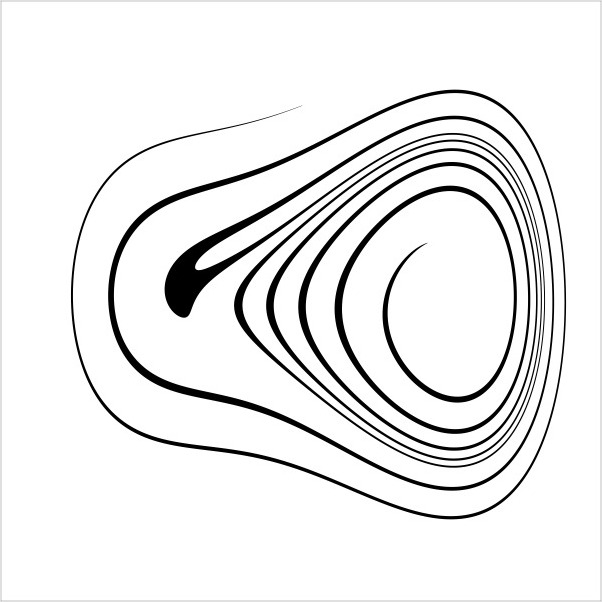}%
\includegraphics[width=\textwidth/6]{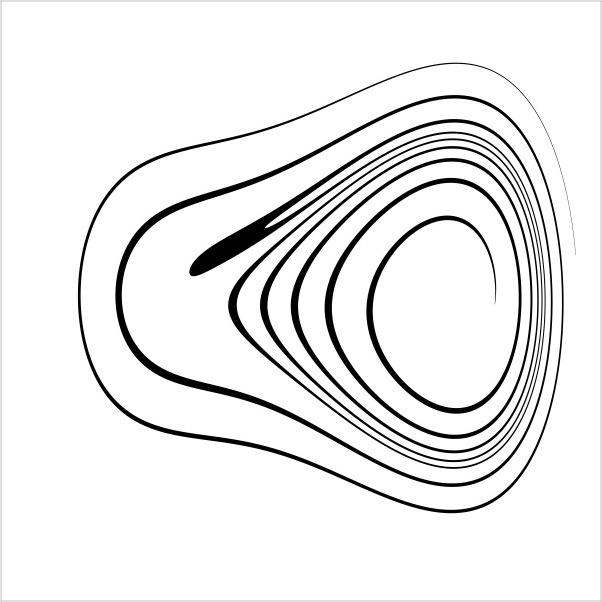}%
\includegraphics[width=\textwidth/6]{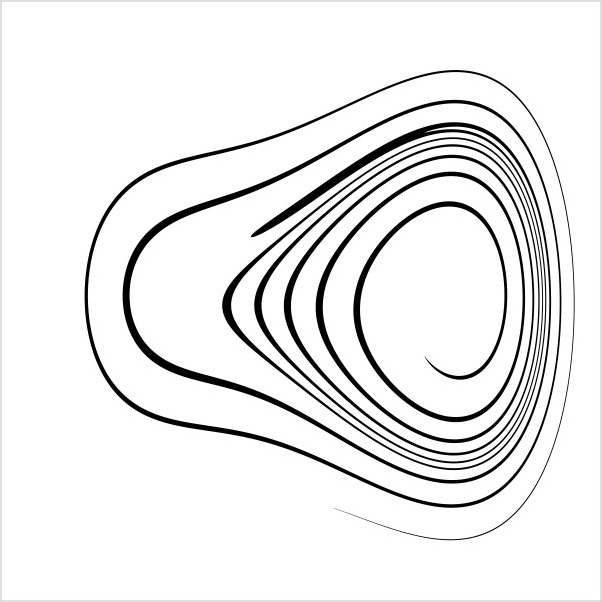}%

\vspace{-1pt}
\includegraphics[width=\textwidth/6]{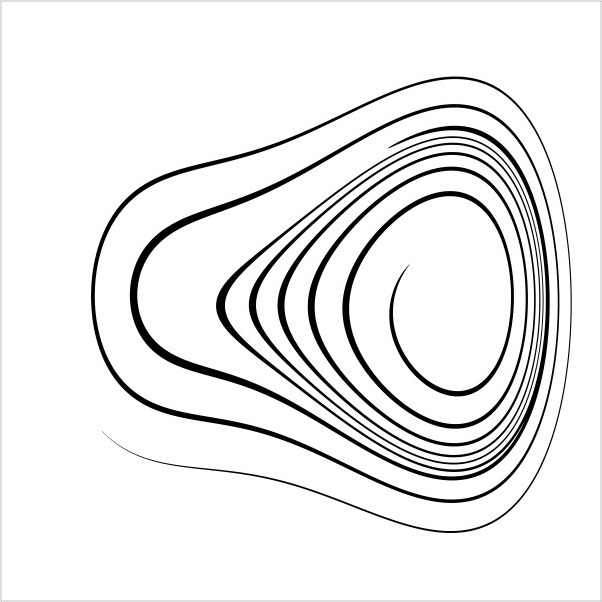}%
\includegraphics[width=\textwidth/6]{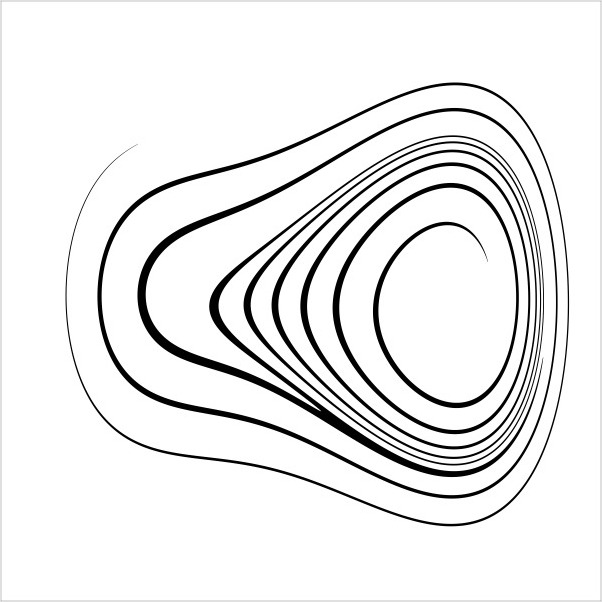}%
\includegraphics[width=\textwidth/6]{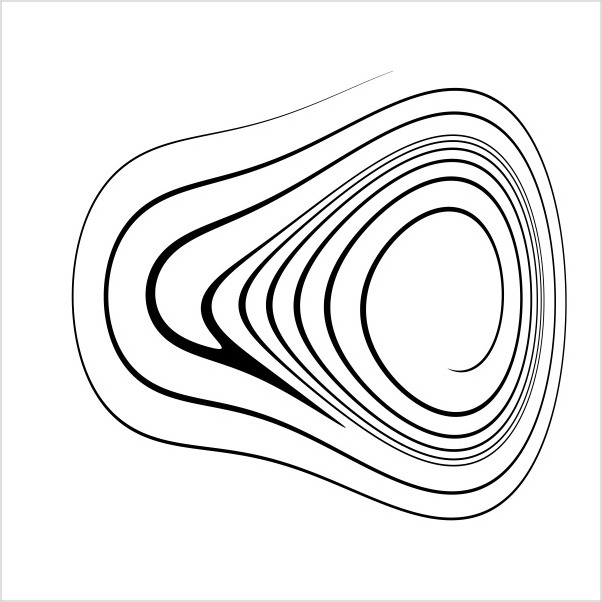}%
\includegraphics[width=\textwidth/6]{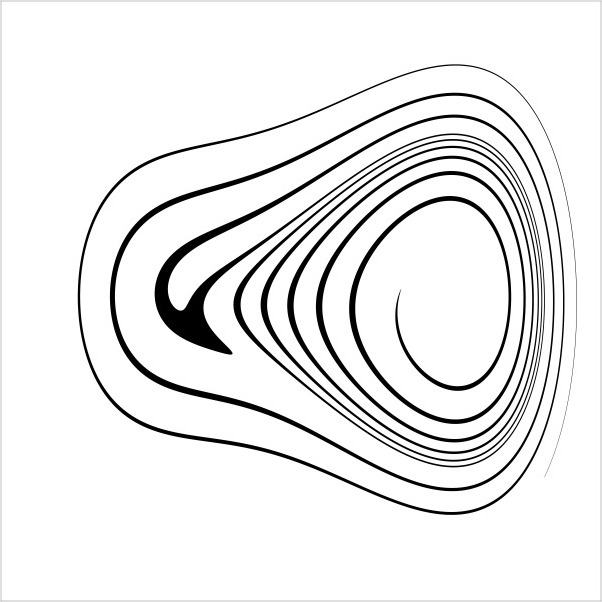}%
\includegraphics[width=\textwidth/6]{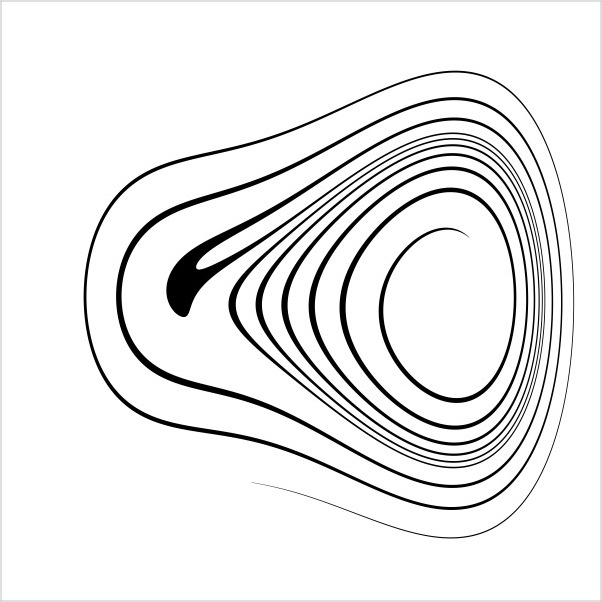}%
\includegraphics[width=\textwidth/6]{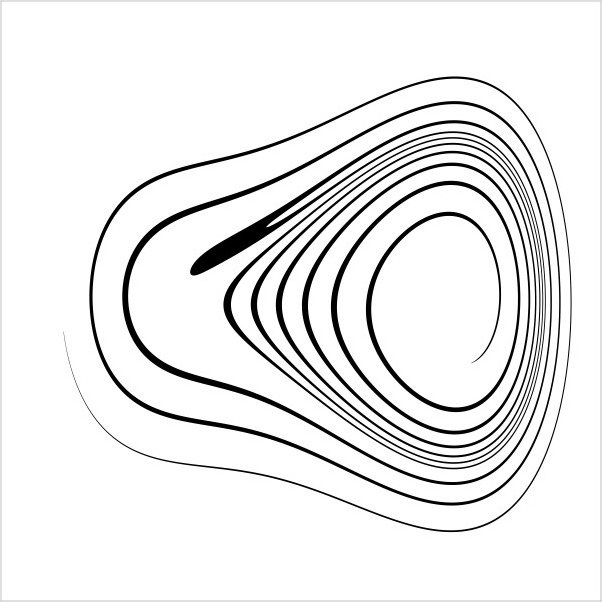}%

\vspace{-1pt}
\includegraphics[width=\textwidth/6]{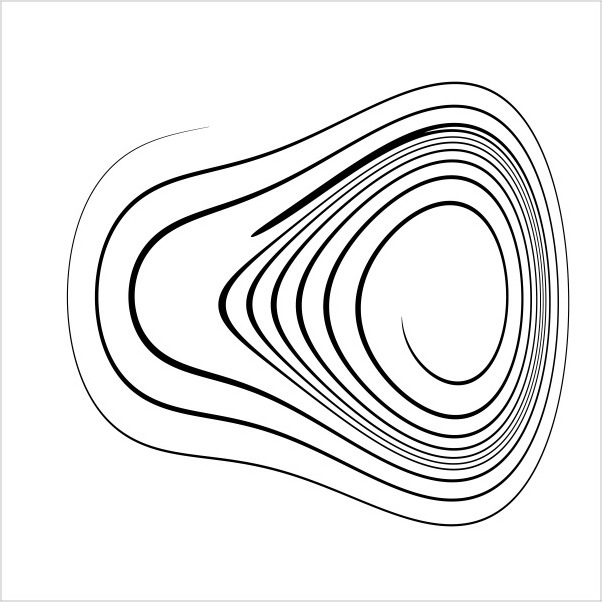}%
\includegraphics[width=\textwidth/6]{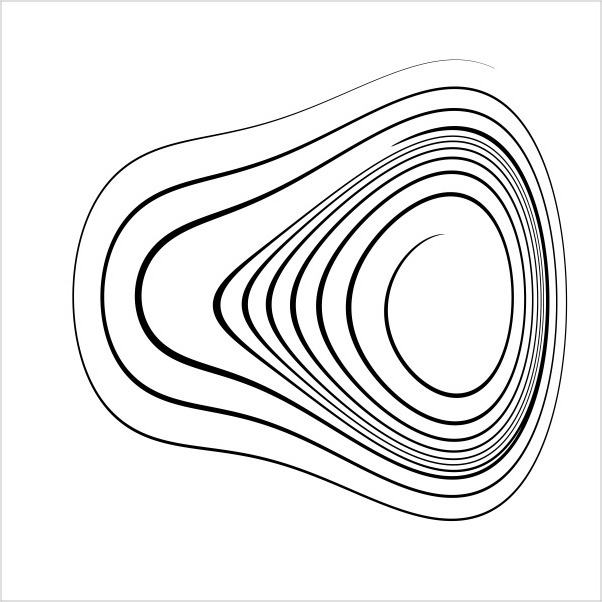}%
\includegraphics[width=\textwidth/6]{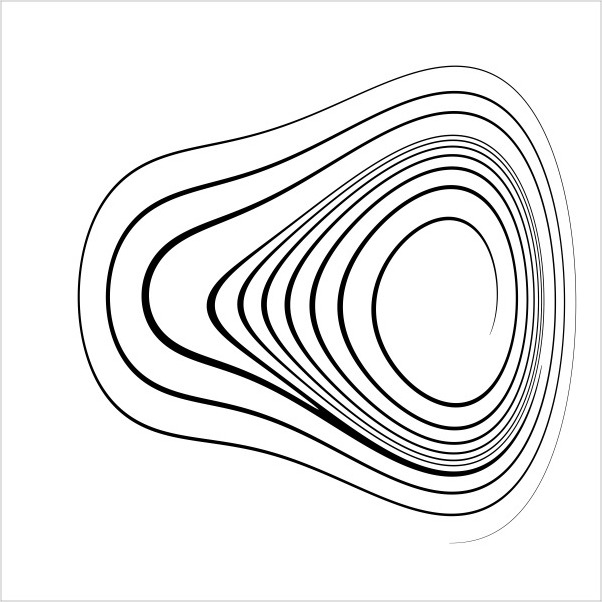}%
\includegraphics[width=\textwidth/6]{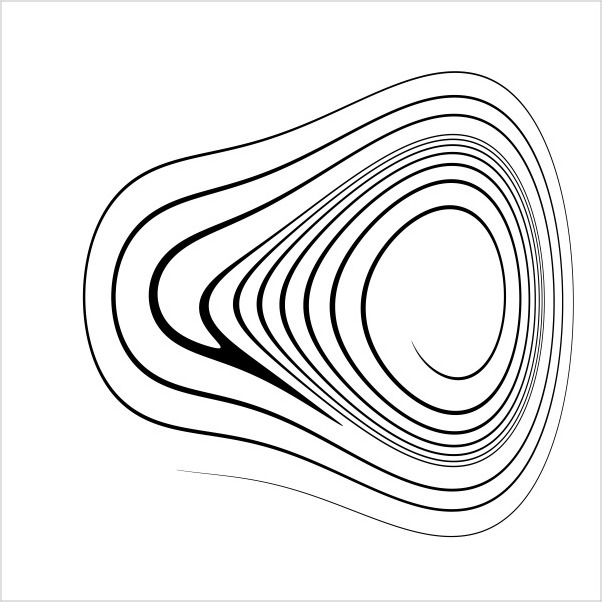}%
\includegraphics[width=\textwidth/6]{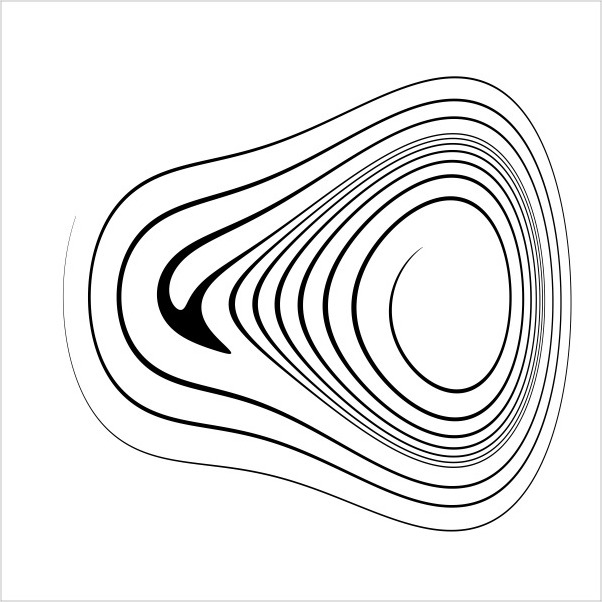}%
\includegraphics[width=\textwidth/6]{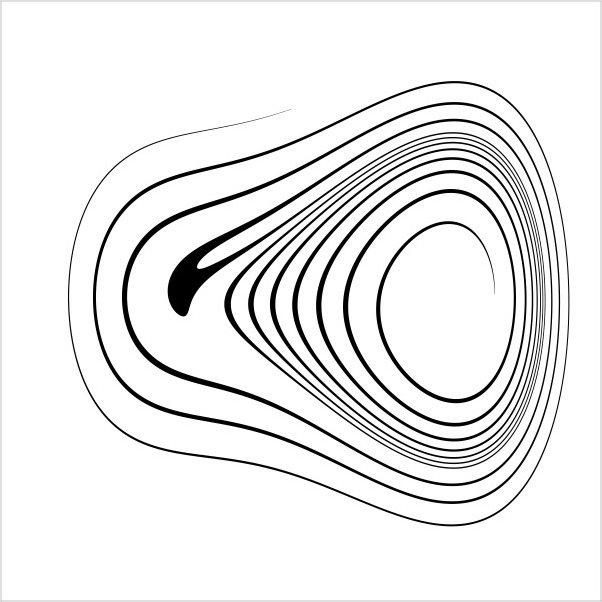}%

\vspace{-1pt}
\includegraphics[width=\textwidth/6]{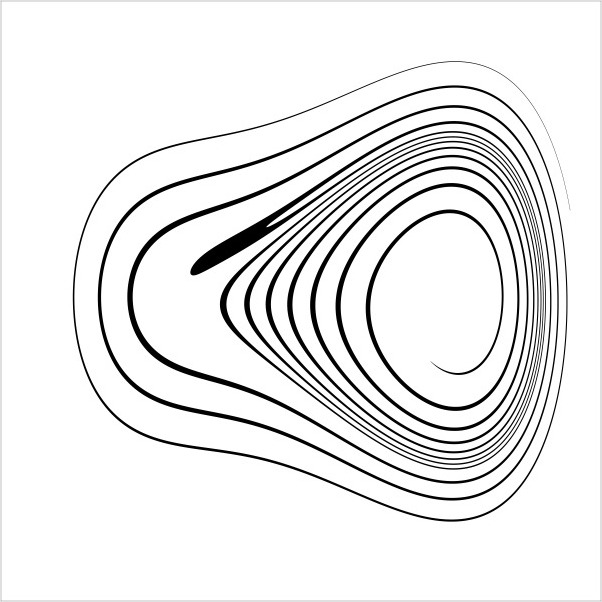}%
\includegraphics[width=\textwidth/6]{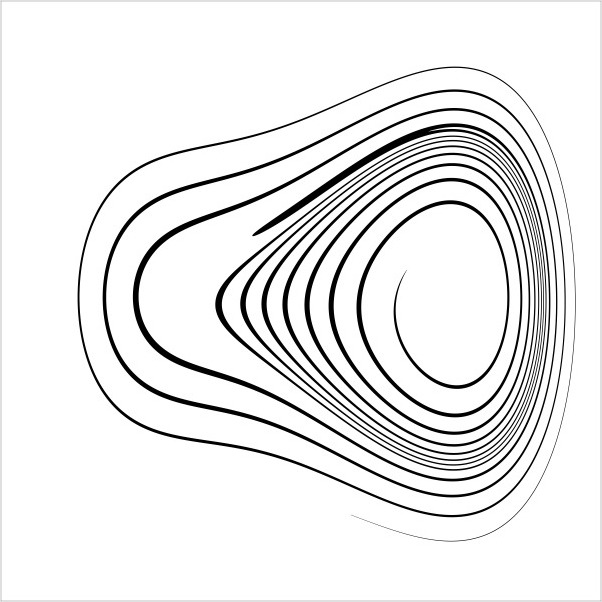}%
\includegraphics[width=\textwidth/6]{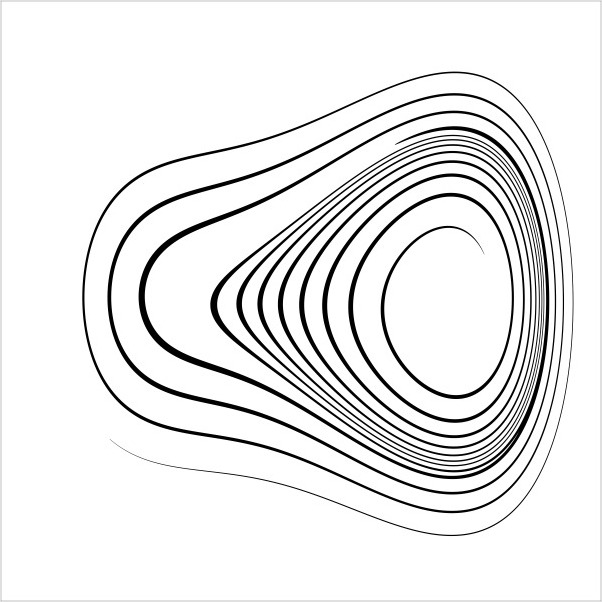}%
\includegraphics[width=\textwidth/6]{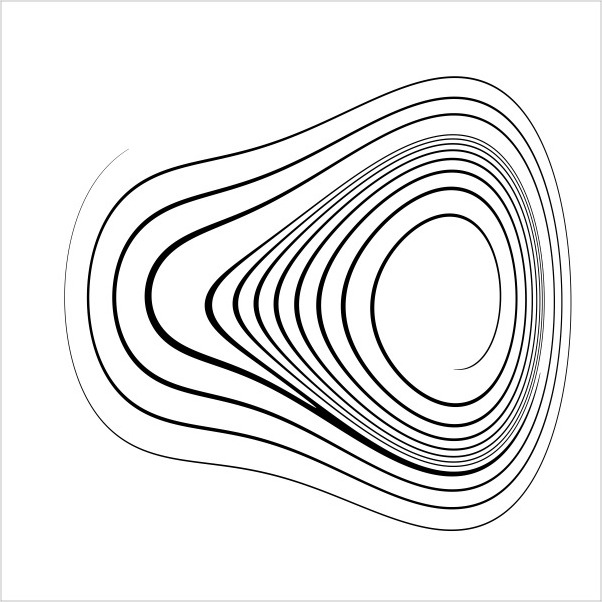}%
\includegraphics[width=\textwidth/6]{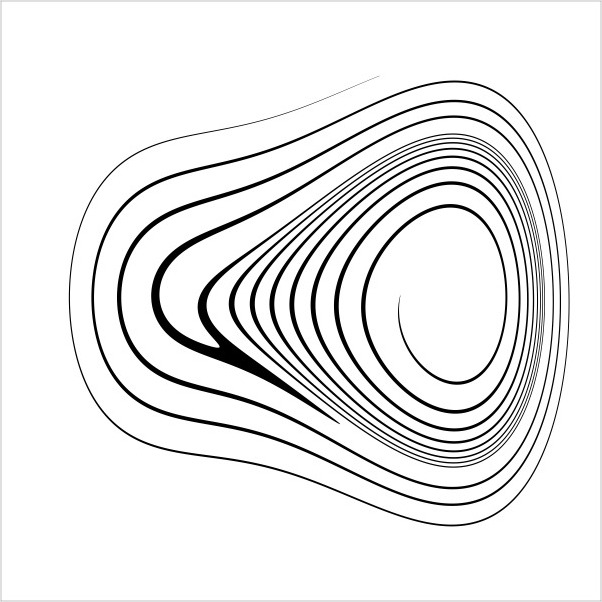}%
\includegraphics[width=\textwidth/6]{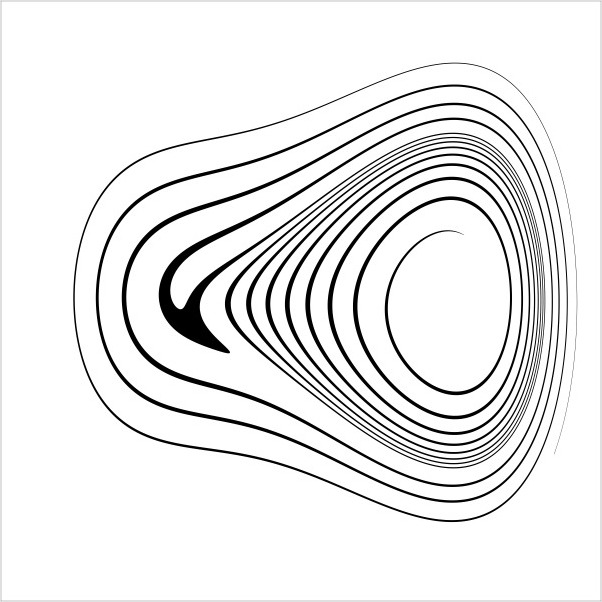}%
\phantomcaption
\end{figure}

\begin{figure}\ContinuedFloat
\includegraphics[width=\textwidth/6]{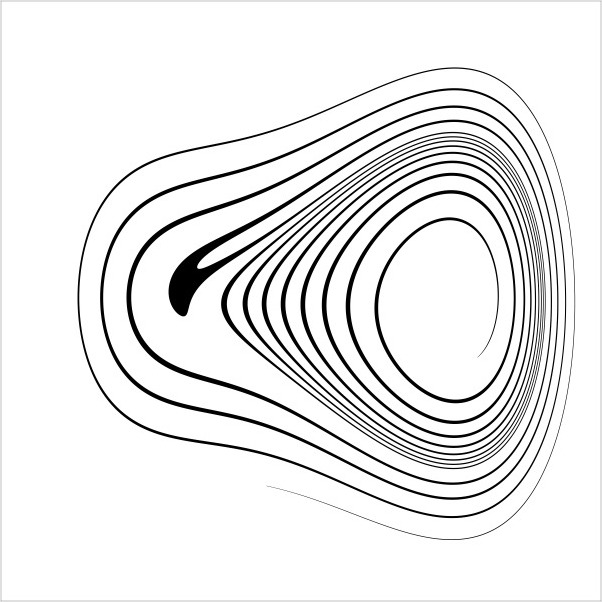}%
\includegraphics[width=\textwidth/6]{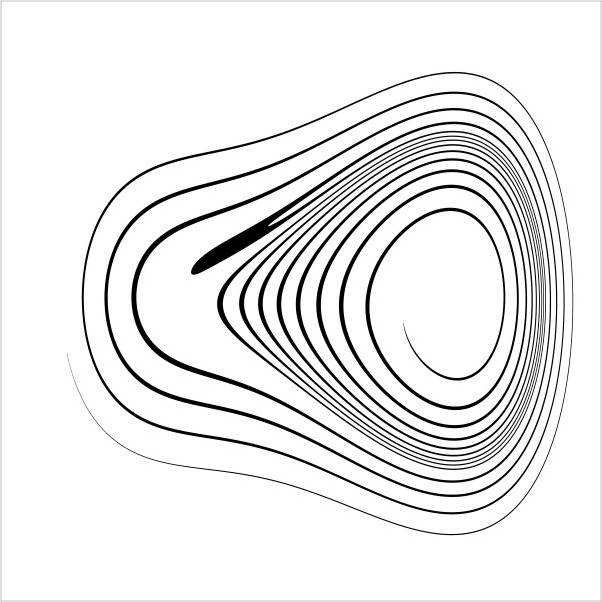}%
\includegraphics[width=\textwidth/6]{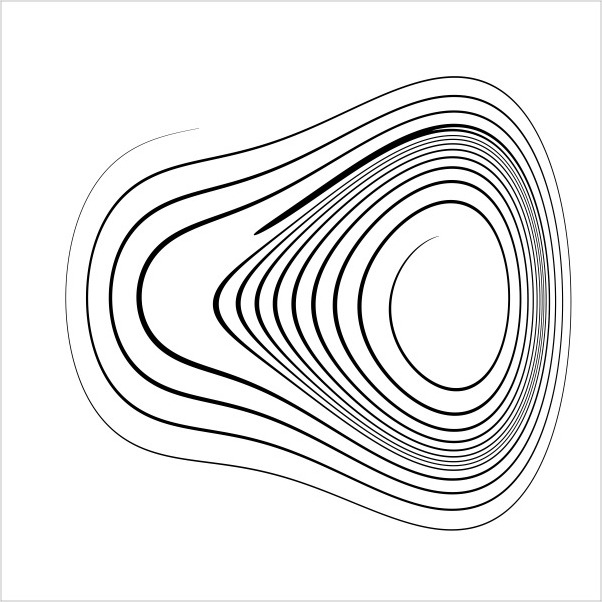}%
\includegraphics[width=\textwidth/6]{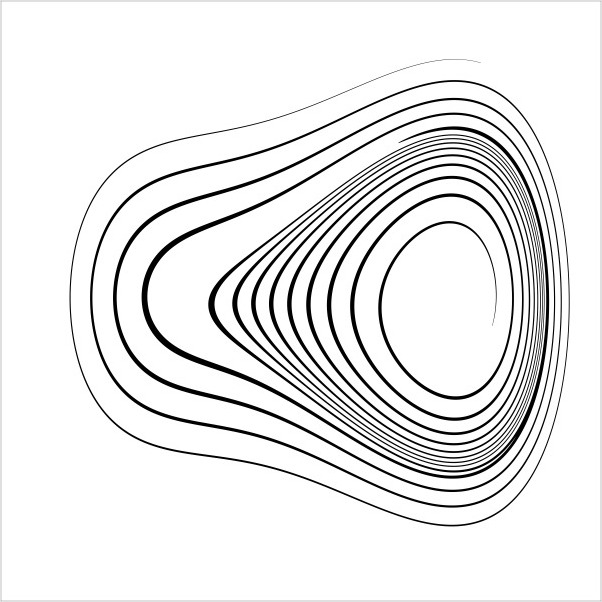}%
\includegraphics[width=\textwidth/6]{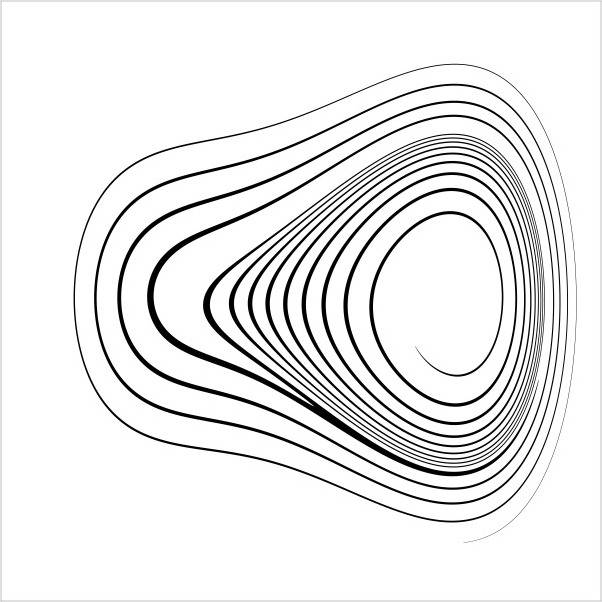}%
\includegraphics[width=\textwidth/6]{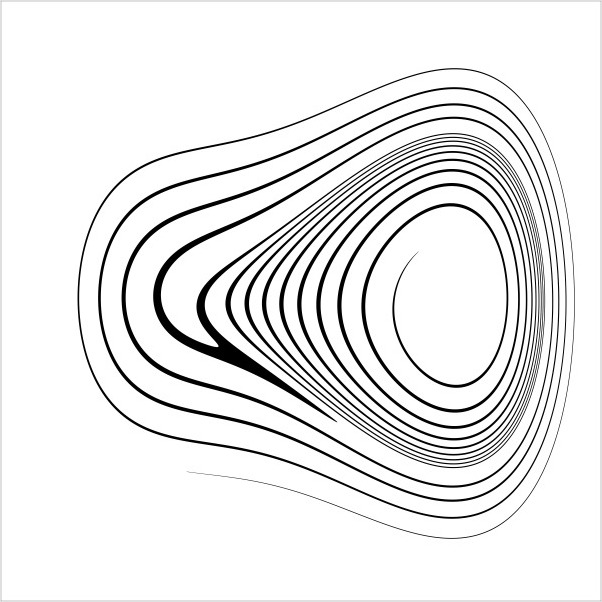}%

\vspace{-1pt}
\includegraphics[width=\textwidth/6]{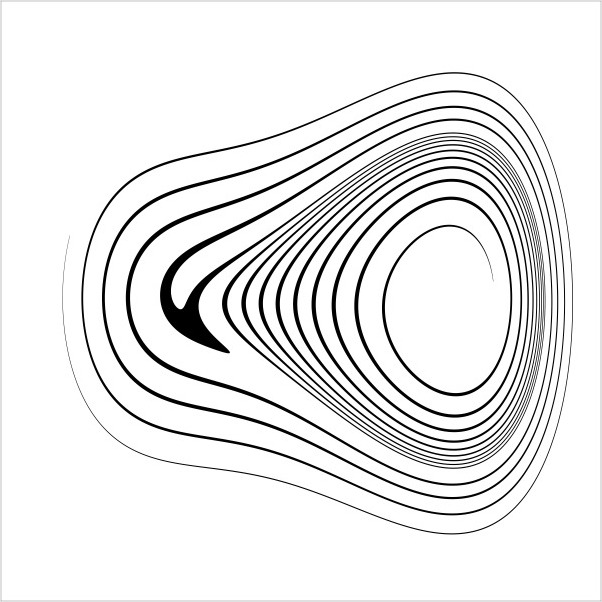}%
\includegraphics[width=\textwidth/6]{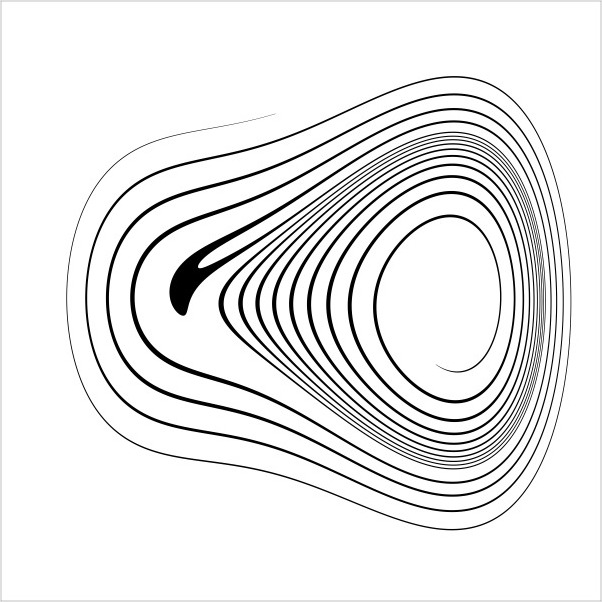}%
\includegraphics[width=\textwidth/6]{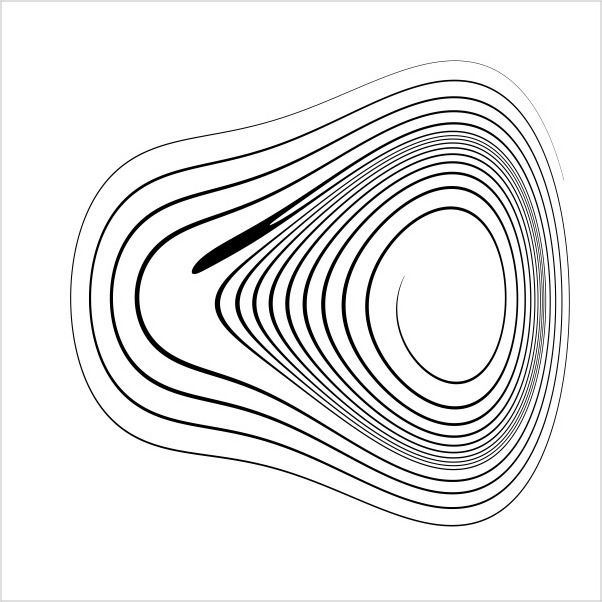}%
\includegraphics[width=\textwidth/6]{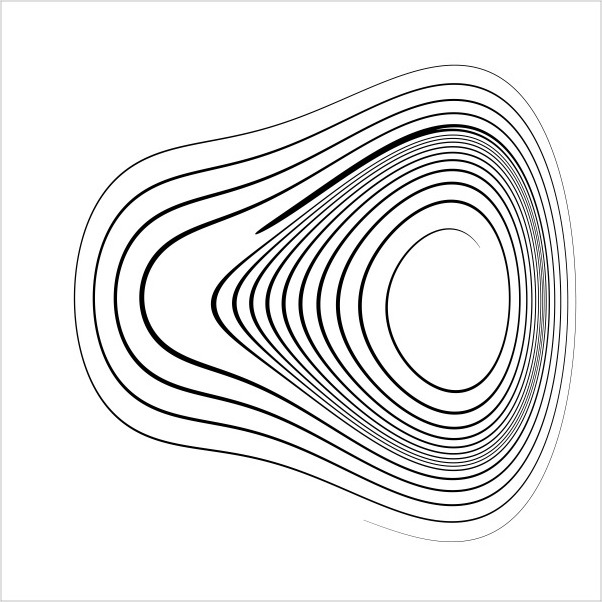}%
\includegraphics[width=\textwidth/6]{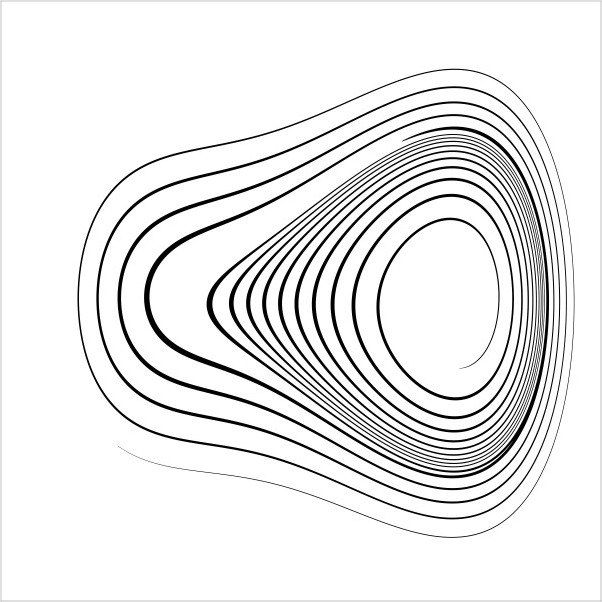}%
\includegraphics[width=\textwidth/6]{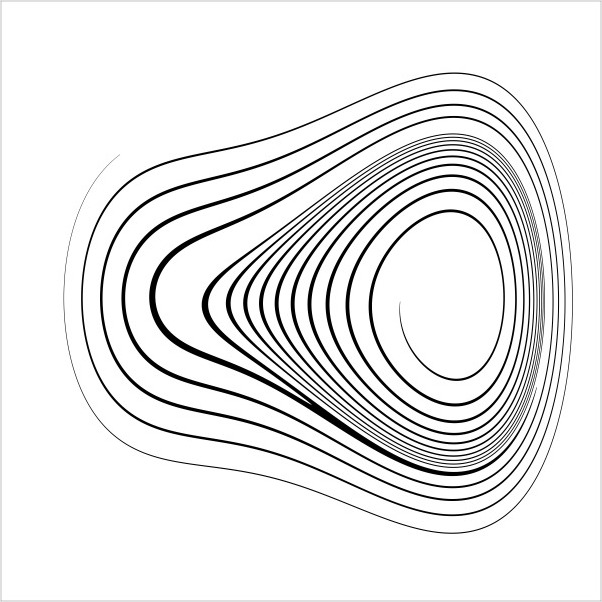}%

\vspace{-1pt}
\includegraphics[width=\textwidth/6]{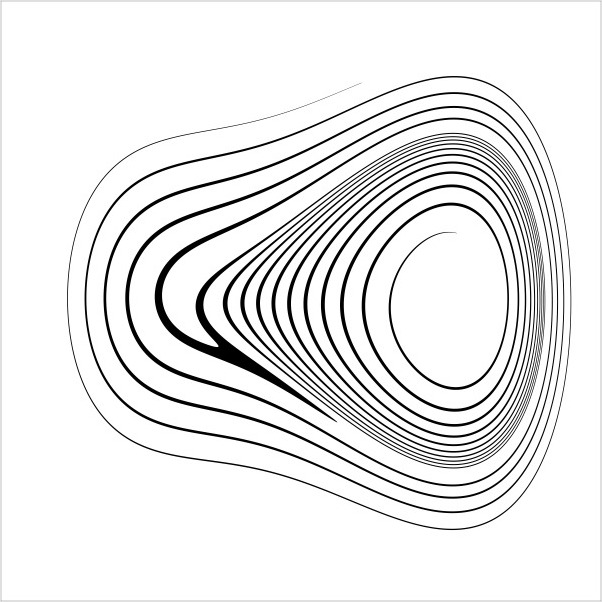}%
\includegraphics[width=\textwidth/6]{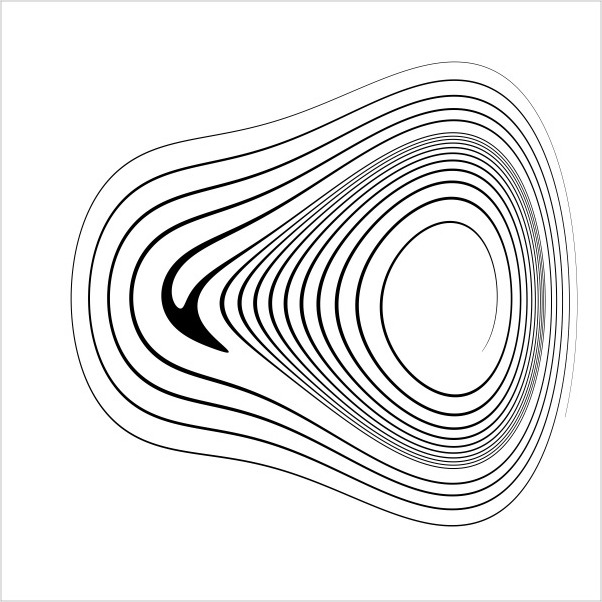}%
\includegraphics[width=\textwidth/6]{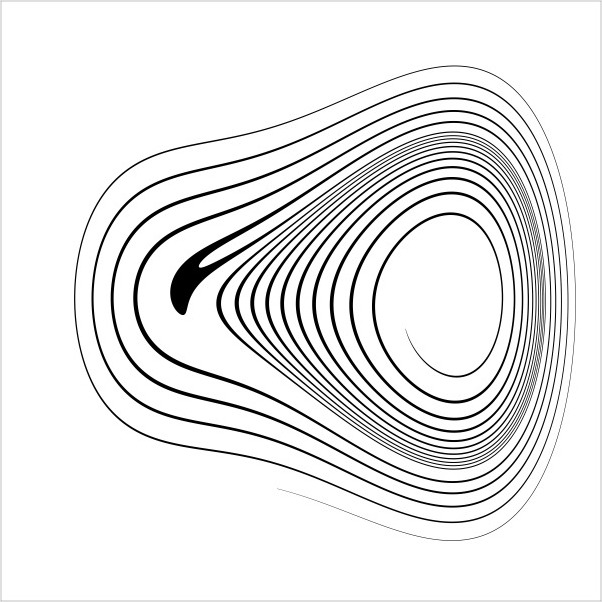}%
\includegraphics[width=\textwidth/6]{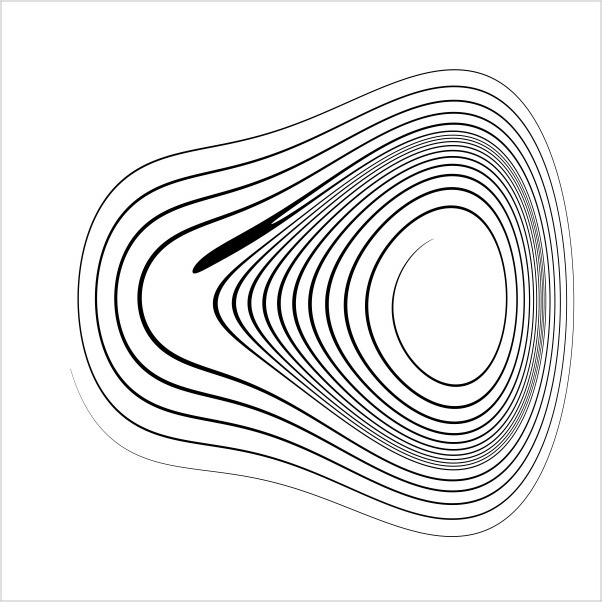}%
\includegraphics[width=\textwidth/6]{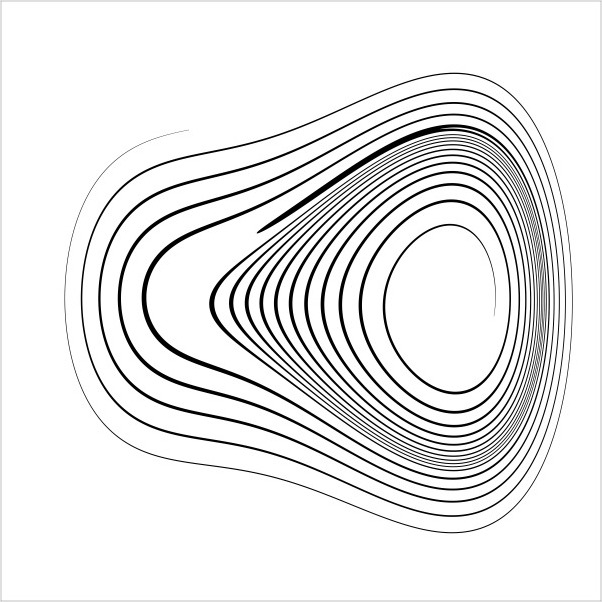}%
\includegraphics[width=\textwidth/6]{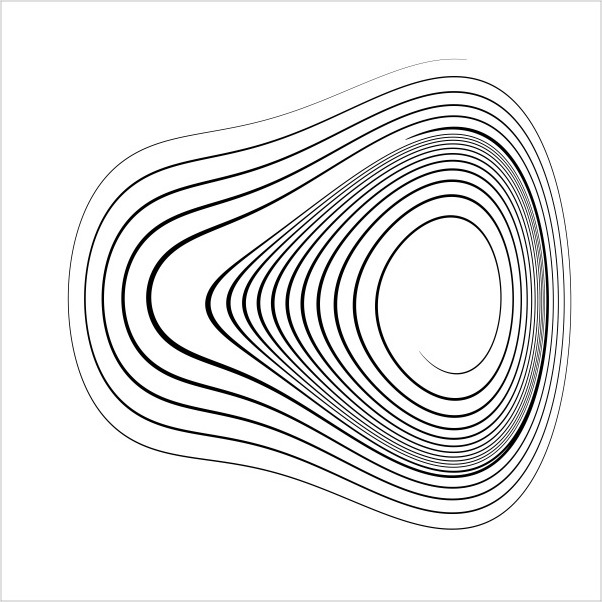}%

\vspace{-1pt}
\includegraphics[width=\textwidth/6]{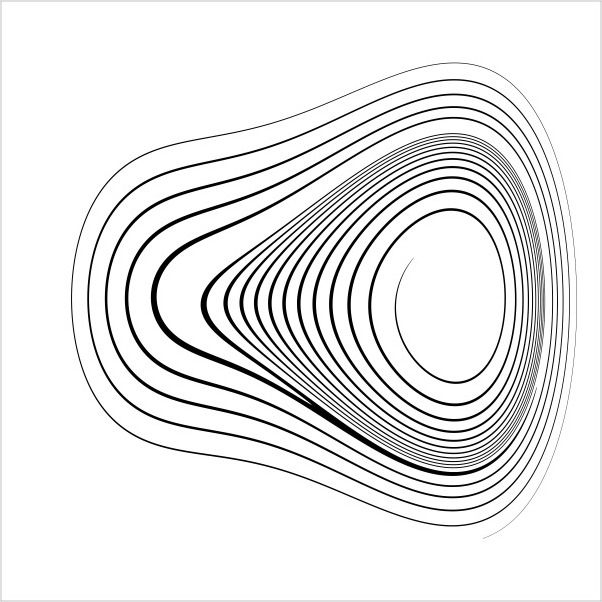}%
\includegraphics[width=\textwidth/6]{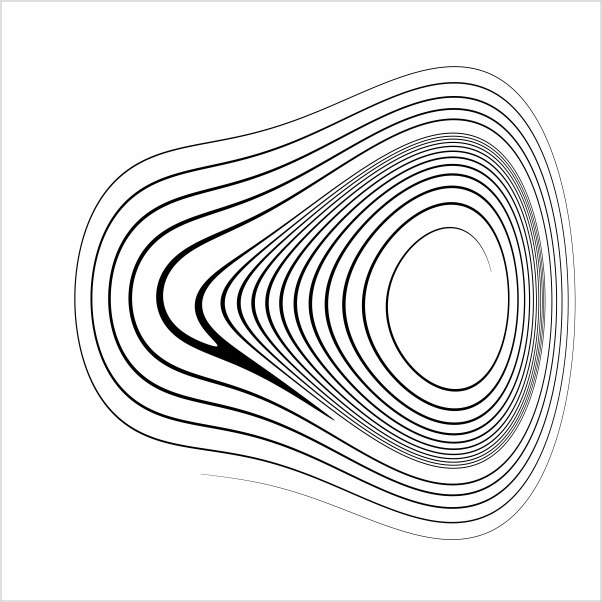}%
\includegraphics[width=\textwidth/6]{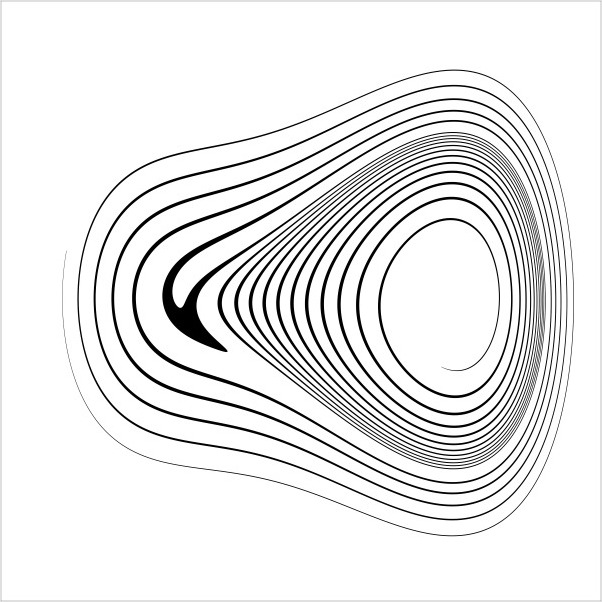}%
\includegraphics[width=\textwidth/6]{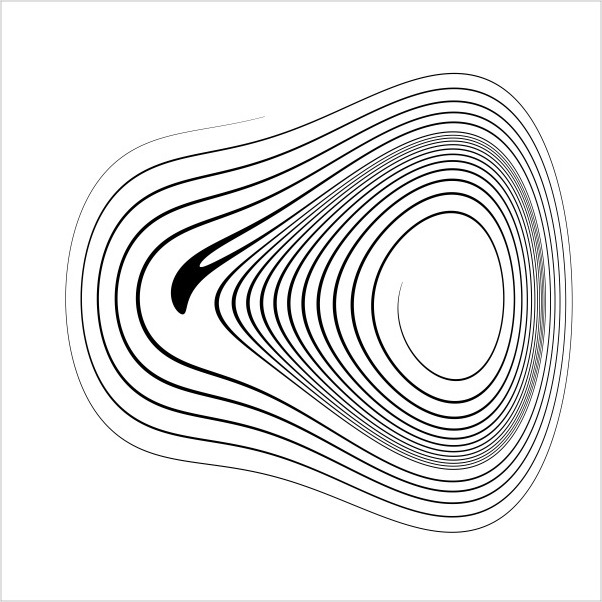}%
\includegraphics[width=\textwidth/6]{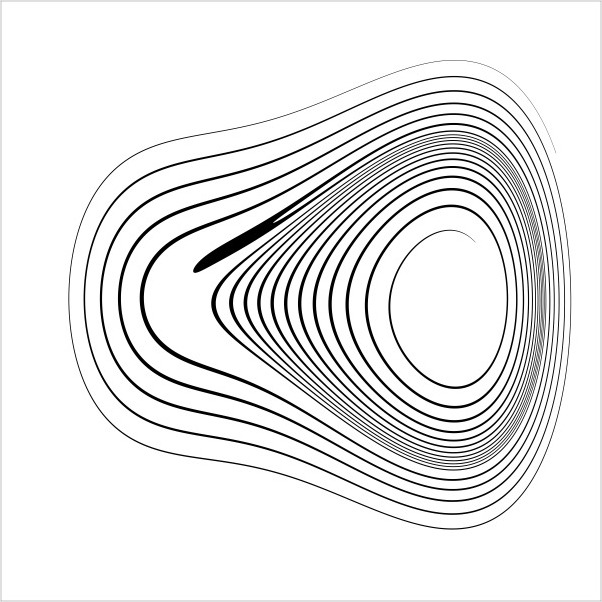}%
\includegraphics[width=\textwidth/6]{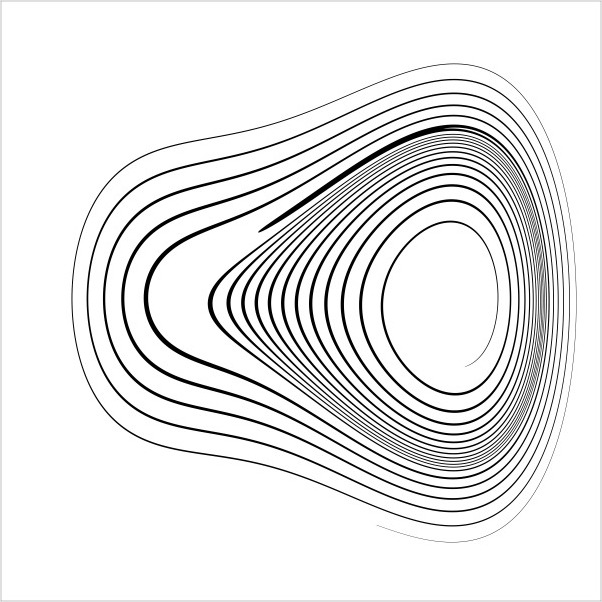}%
\caption[Classical relaxation as governed by Liouville's theorem (entropy conservation)]{Relaxation of classical nonequilibrium on the phase-space of a particle in the one-dimensional $V=q^4-q^3-q^2-q$. The probability density $\rho(q,p,t)$ of finding the particle in a particular region of the phase-space is depicted by shade. The darker tones indicate larger $\rho(q,p,t)$, and lighter tones the reverse. The first frame displays the initial conditions, in which $\rho(q,p,t)$ begins confined to a rectangular region and is constant within that region. Subsequent frames display snapshots in the evolution of $\rho(q,p,t)$. In the dark region $\rho(q,p,t)=\text{constant}$, and elsewhere $\rho(q,p,t)=0$. Hence the integrand of the entropy, $S=-\int \rho\log\rho\,\mathrm{d}q\mathrm{d}p$, is also constant in the dark region and zero elsewhere. During the Hamiltonian evolution, the first statement of Liouville's theorem, equation \eqref{L1}, dictates that the dark region may not change volume. The second statement of Liouville's theorem, equation \eqref{L2} forbids it from lightening or darkening in tone. Consequently, the \emph{exact} entropy is conserved by the dynamics. Over time however, the incompressible Hamiltonian flow stretches and warps the region. With the result that it develops fine, filamentary structure. After a sufficient time, in an empirical setting, this fine structure inevitably becomes too fine to reasonably resolve. The failure to resolve the fine structure makes the distribution appear larger in volume and lighter in shade. (Try observing the final frame with spectacles removed, for instance.) To experiment, therefore, entropy will rise, but it will only be a \emph{de facto} rise due to the experiment's own limitations.}\label{classical_relaxation}
\end{figure}

\clearpage
\section{Continuous state-space iRelax theories}\label{2.3}
To continue the development of iRelax theories, now consider an $n$-dimensional continuous state-space, $\Omega=\mathbb{R}^n$ (\ref{Ing1}). 
The system under consideration could be a number of particles, the Fourier modes of a physical field, or something more exotic altogether.
Whatever the system actually is, its state is written simply $x$. Together, all the possible $x$'s comprise the state-space $\Omega$.
Similarly to the discrete case, \emph{information} regarding the system is expressed as a probability distribution over the state-space, $\rho(x)$. This may represent (incomplete) knowledge on the state of a single system, or the variation of states in an ensemble of systems.
The measure of information (\ref{Ing4}) is the entropy,
\begin{align}\label{differential_entropy}
S=-\int_\Omega \rho\log\rho\, \mathrm{d}^nx.
\end{align}
This expression is something of a flawed generalization of discrete entropy \eqref{discrete_entropy}. 
It is usually called `differential entropy', and was originally introduced (without derivation) by Claude Shannon in his famous 1948 paper, reference \cite{Shannon}.
It is flawed in the sense that it doesn't take into account the issues that arise when extending the principle-of-indifference to continuous spaces (Bertrand's paradox). 
This said, much may be achieved by suspending disbelief for the time being.
The development of iRelax theories will proceed more clearly by stalling these considerations a little.
Full discussion on this point is left until section \ref{2.4}.
Until then it may be useful to note that, in contrast to discrete distributions, $\{p_i\}$, continuous distributions $\rho(x)$ are dimensionful. 
As a probability density, $\rho(x)$ inherits inverse units of state-space coordinate $x$. 
Hence, differential entropy \eqref{differential_entropy} involves a logarithm of a dimensionful quantity.
Even a simple change of units will cause the differential entropy to change. 
A true measure of information should be independent of the variables used to describe the states, and so this is problematic.

Nevertheless, the physical meaning of differential entropy \eqref{differential_entropy} is clear.
It straightforwardly carries over the notions of discrete entropy \eqref{discrete_entropy} to continuous spaces.
It is a scalar measure of how certain or uncertain the state $x$ of a system is known, according to $\rho(x)$.
It is low when the system is known to reside within a small region of the state-space (when $\rho(x)$ is well localized). 
It is high when the system cannot be well located with any confidence (when $\rho(x)$ has a large spread).
Unlike discrete entropy \eqref{discrete_entropy}, however, differential entropy \eqref{differential_entropy} is unbounded above and below. If the system is perfectly located $\rho(x)\rightarrow \delta^{(n)}(x-x_\text{location})$, it limits as $S\rightarrow-\infty$.
In the limit that $\rho(x)$ becomes uniformly spread over the state-space, $\Omega=\mathbb{R}^n$, it limits as $S=+\infty$. 
If the system were known to reside within some region $\omega\subset\Omega$ with volume $V(\omega)$, then the differential entropy would be
\begin{align}
S=\log V(\omega),
\end{align}
which parallels Boltzmann's famous expression, $S=k_B\log W$.
\subsection{State Propagators for information conservation}\label{sec:state_propagators}
To help probe the consequences of information conservation (\ref{Ing5}), it is useful to introduce a continuous equivalent of transition matrix $T_{ij}$ and its defining relation, equation \eqref{discrete_transition}.
This is the state propagator $T(x,x_0,t)$, which is the probability density that a system originally in state $x_0$ (i.e. $\rho(x')=\delta^{(n)}(x'-x_0)$) transforms to state $x$ in time $t$. An arbitrary distribution $\rho(x)$ evolves as 
\begin{align}\label{continuous_transformation}
\rho(x,t)=\int_\Omega T(x,x',t)\rho(x',0)\mathrm{d}^nx',
\end{align}
which is the continuous equivalent of equation \eqref{discrete_transition}. In the theory of Markov processes $T(x,x_0,t)$ is called the Markov kernel or the stochastic kernel. Formally, it is the Green's function of the operator $L(-t)$, which is defined by the transformation $L(-t)\rho(x,t)=\rho(x,0)$. For the present, however, the word propagator shall be used as this is familiar from quantum mechanics and quantum field theory. 

\begin{figure}[h]
\includegraphics[width=\textwidth]{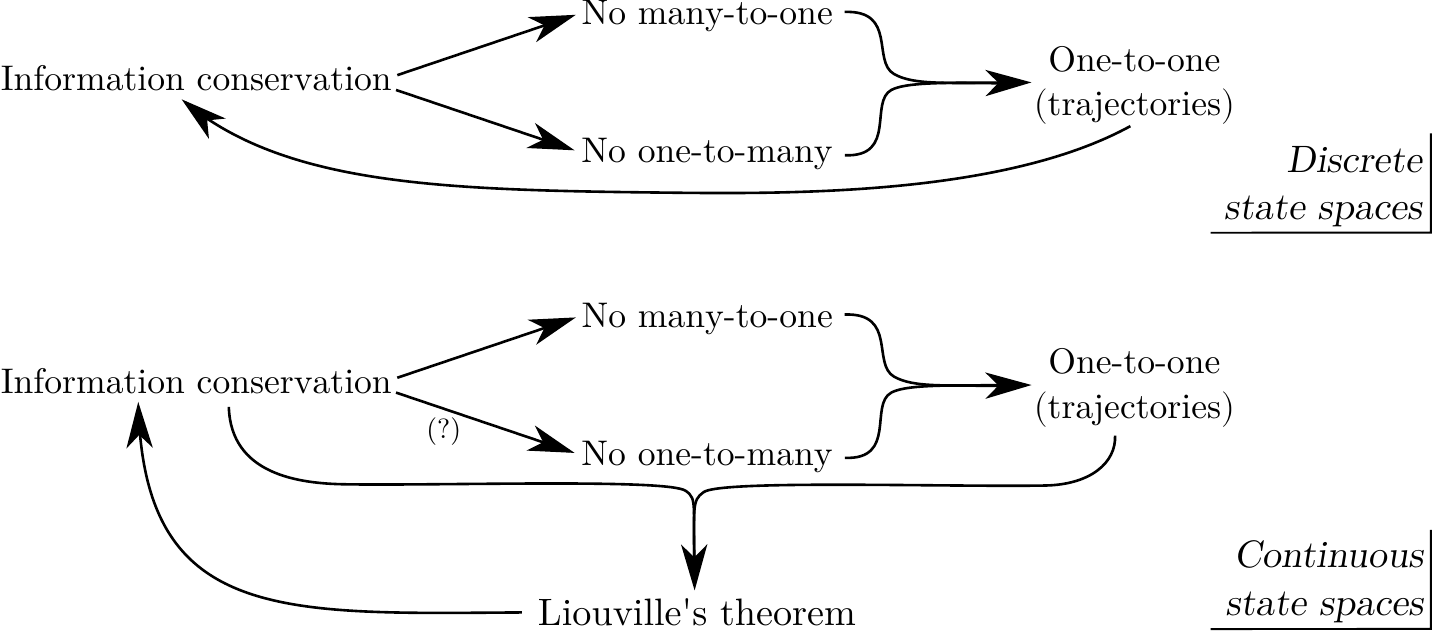}
\caption[Map of implications proven in sections \ref{2.3} and \ref{2.4}]{Contrasting mathematical implications found using the state propagator/transition matrix formalisms. The question mark indicates the missing link explained in sections \hyperref[sec:state_propagators]{2.3.2} and \hyperref[sec:modified_state_propagators]{2.4.2}.}\label{logic_map}
\end{figure}

In section \ref{2.2}, when treating discrete states, the one-to-one property trivially implied information conservation. In order to prove the equivalence of one-to-oneness and information conservation therefore, time was spent finding the reverse implication.
In contrast, for continuous state-spaces, one-to-oneness is not enough to ensure information conservation\footnote{For instance, let $\Omega=\mathbb{R}$. Consider the one-to-one law of evolution $x(t) = x_0/(1+t)$, beginning at $t=0$. Trajectories governed by this law will converge to $x=0$ as $t\rightarrow\infty$. The information-distribution evolves as $\rho(x)\rightarrow\rho(x(1+t))$. Hence, as time passes, the information-distribution will bunch-up near $x=0$, becoming ever-narrower, decreasing the entropy \eqref{differential_entropy}.}.
A more circuitous route must be taken. 
A map of the contrasting logical development is shown diagrammatically in figure \ref{logic_map}.
First, it will be shown that many-to-one laws are inconsistent with information conservation. 
Second, it shall be conjectured that one-to-many laws are also inconsistent with information conservation.
The lack of proof of this second implication (indicated with a question mark in figure \ref{logic_map}) is, admittedly, something of a loose end in the logical development of this chapter--a reflection that this chapter is still a work in progress. 
If the conjecture is accepted, then the implication is that information conservation requires systems to follow trajectories in the state-space, and forbids any probabilistic evolution.
This means the stated requirement for \ref{Ing2} to perfectly determine future evolution may be relaxed, as this is already implied by information conservation.
It is then shown that, by taking this one-to-one evolution together with information conservation, the classical Liouville's theorem (on the state-space $\Omega$ rather than phase-space) may be arrived at.
This then implies information conservation, hence showing that information conservation and Liouville's theorem are equivalent (if the no-one-to-many conjecture is accepted). 

\begin{theorem}
Information conservation $\implies$ No many-to-one on a $\Omega=\mathbb{R}^n$ state-space\label{theorem2}.
\end{theorem}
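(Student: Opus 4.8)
The plan is to mirror the discrete information conservation theorem, but to replace the divergent delta-function substitution with genuinely spread-out test distributions, so as to sidestep the $x\log x$ divergence flagged in the footnote to \ref{Ing2}. First I would restrict attention to a deterministic law, whose propagator is concentrated on the graph of the time-$t$ flow map $f_t$, that is $T(x,x',t)=\delta^{(n)}(x-f_t(x'))$; the genuinely probabilistic one-to-many case is precisely what is deferred to the separate conjecture, so it need not be treated here. Under such a law equation \eqref{continuous_transformation} collapses to a pushforward, and I would write the evolved density as a sum over preimage branches,
\begin{align}
\rho(x,t)=\sum_{x'\in f_t^{-1}(x)}\frac{\rho(x',0)}{\left|\det Df_t(x')\right|},
\end{align}
valid away from the measure-zero critical set on which $f_t$ fails to be a local diffeomorphism.

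The analytic engine is the elementary inequality $(a+b)\log(a+b)\ge a\log a+b\log b$ for $a,b\ge 0$, which follows from $\log(a+b)\ge\log a$ and $\log(a+b)\ge\log b$ and is \emph{strict} whenever both $a,b>0$. The argument then splits into two stages. First I would feed in an initial distribution supported inside a single neighbourhood on which $f_t$ is injective; there $f_t$ is a local diffeomorphism, a change of variables gives
\begin{align}
S(t)=S(0)+\int \rho(x',0)\,\log\left|\det Df_t(x')\right|\,\mathrm{d}^nx',
\end{align}
and demanding $S(t)=S(0)$ for \emph{every} such localised distribution forces $\left|\det Df_t\right|=1$ wherever $f_t$ is a local diffeomorphism. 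Second I would choose a distribution that genuinely feeds two distinct sheets of a many-to-one fold; the two branch contributions $a$ and $b$ are then simultaneously positive on a set of positive measure, the pointwise inequality becomes strict, and combined with the volume preservation just established one obtains $S(t)<S(0)$. This strict decrease (the same direction of information gain seen for the discrete many-to-one Law 3) contradicts information conservation, so no many-to-one law can conserve entropy, which is the assertion of Theorem \ref{theorem2}.

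The main obstacle I anticipate is not the inequality itself but the careful bookkeeping around the fold. One must confirm that the critical set where $\det Df_t=0$ (the locus where sheets actually join, for instance the origin for $x\mapsto x^2$) has measure zero and contributes nothing to the integrals, and that the branch-wise change of variables patches together consistently over all preimages. A secondary subtlety is expository: because the maximum-information distribution $\rho\to\delta^{(n)}(x-x')$ sends the differential entropy \eqref{differential_entropy} to $-\infty$, one cannot import the discrete trick of substituting a perfectly localised state, and must instead realise both stages above with admissible, finitely-spread $\rho$ — this is exactly the gap that prevents the analogous one-to-many implication from being proved rather than merely conjectured.
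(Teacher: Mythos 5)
Your restricted argument is internally sound (modulo the regularity bookkeeping you flag), but it does not prove the theorem in the generality in which it is stated and proved in the paper, and the restriction you make at the outset is exactly the gap. The theorem is a statement about \emph{general} state propagators $T(x,x',t)$: the only hypothesis is information conservation, and the conclusion is that $T(x,x'_1,t)\,T(x,x'_2,t)=0$ for all $x$, $t$ and all $x'_1\neq x'_2$, i.e.\ the transition densities out of any two distinct initial states have disjoint supports. By setting $T(x,x',t)=\delta^{(n)}(x-f_t(x'))$ you are assuming determinism, and determinism is precisely ``no one-to-many'' --- the content of the paper's separate, \emph{unproven} conjecture. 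Your justification (that the probabilistic case is deferred to the conjecture) conflates two different properties: the conjecture defers the question of whether a single state may spread into many final states, but it does not defer the many-to-one question for stochastic kernels. A kernel under which two distinct initial states both send positive probability density onto the same final state is a many-to-one law squarely within the theorem's scope, and your branch decomposition and change-of-variables machinery simply do not exist for such a kernel. Worse, because your proof needs determinism as a hypothesis, it makes the theorem conditional on the conjecture, inverting the paper's logical architecture, whose whole point is that the many-to-one half can be established unconditionally while only the one-to-many half remains conjectural.

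For comparison, the paper's proof stays at the kernel level throughout: it feeds in a test distribution uniform on the union of two small disjoint balls $B_{\epsilon_1}(x'_1)\cup B_{\epsilon_2}(x'_2)$, substitutes into the entropy-conservation identity, and differentiates once in $\epsilon_1$ and once in $\epsilon_2$; what survives is (up to positive numerical factors) the condition
\begin{align}
0=\int_\Omega \mathrm{d}^nx\,
\frac{\epsilon_1^{n-1}\epsilon_2^{n-1}\,T(x,x'_1,t)\,T(x,x'_2,t)}{\epsilon_1^{n}T(x,x'_1,t)+\epsilon_2^{n}T(x,x'_2,t)},
\end{align}
whose integrand is manifestly non-negative, forcing $T(x,x'_1,t)\,T(x,x'_2,t)=0$ pointwise. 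Note that this is the same analytic engine as yours --- strict superadditivity of $x\log x$ --- but deployed against arbitrary kernels rather than against fold branches of a map. Your Stage 1, deriving $\left|\det Df_t\right|=1$ from localized test distributions, is not wasted: it is essentially the paper's subsequent proof that information conservation plus one-to-one evolution yields Liouville's theorem. But as a proof of the present theorem your argument covers only the deterministic special case.
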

\begin{proof}
If entropy \eqref{differential_entropy} is conserved during the evolution \eqref{continuous_transformation}, then it follows that
\begin{align}\label{theorem2_misc1}
&\int_\Omega \rho(x',0)\log\rho(x',0)\mathrm{d}^nx'\nonumber\\
=& \int_\Omega\left[\int_\Omega T(x,x',0)\rho(x',0)\mathrm{d}^nx'\right]\log\left[\int_\Omega T(x,x',0)\rho(x',0)\mathrm{d}^nx'\right]\mathrm{d}^nx,
\end{align}
for all possible initial probability distributions $\rho(x',0)$, and for all times $t$.
Take $\rho(x',0)$ to be uniform on a finite sub-region $\omega\subset\Omega$. Then equation \eqref{theorem2_misc1} reduces to 
\begin{align}\label{theorem2_misc2}
0&=\int_\Omega\left[\int_\omega T(x,x',t)\mathrm{d}^nx'\right]\log\left[\int_\omega T(x,x',t)\mathrm{d}^nx'\right]\mathrm{d}^nx.
\end{align}
Now take $\omega$ to be the union of two small non-overlapping balls $\omega=B_{\epsilon_1}(x'_1)\cup B_{\epsilon_2}(x'_2)$, of radii $\epsilon_1$ and $\epsilon_2$, and centered on states $x'_1$ and $x'_2$. Take radius $\epsilon_1$ to be small enough so that $T(x,x',t)$ is approximately constant on $x'\in B_{\epsilon_1}(x'_1)$, and radius $\epsilon_2$ to be small enough so that $T(x,x',t)$ is approximately constant on $x'\in B_{\epsilon_2}(x'_2)$. Then,
\begin{align}\label{theorem2_misc3}
\int_\omega \mathrm{d}^nx' T(x,x',t) = \pi \epsilon_1^n T(x,x'_1,t) + \pi \epsilon_2^n T(x,x'_2,t).
\end{align}
Substituting \eqref{theorem2_misc3} into \eqref{theorem2_misc2}, then differentiating once with respect to $\epsilon_1$ and once with respect to $\epsilon_2$ gives
\begin{align}\label{theorem2_misc4}
0=\int_\Omega \mathrm{d}^nx\frac{n^2\epsilon_1^{n-1}\epsilon_2^{n-1}T(x,x'_1,t)T(x,x'_2,t)}{\epsilon_1^nT(x,x'_1,t)+ \epsilon_2^nT(x,x'_2,t)}
\end{align}
The integrand of equation \eqref{theorem2_misc4} is $\geq 0$ as $T(x,x',t)\geq 0$ for all $x$, $x'$, $t$. For equation \eqref{theorem2_misc4} to hold, therefore, it must be the case that
\begin{align}
T(x,x'_1,t)T(x,x'_2,t)=0
\end{align}
for all $x$, $t$, and $x'_1\neq x'_2$. Hence, no two distinct states, $x'_1$ and $x'_2$, may be mapped to the same final state $x$.
\end{proof}

\begin{conjecture}
Information conservation $\implies$ No one-to-many on a $\Omega=\mathbb{R}^n$ state-space\label{theorem2}.
\end{conjecture}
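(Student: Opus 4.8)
The plan is to mirror the discrete argument, where the \emph{one-to-many} possibility was excluded not by a uniform distribution but by a sharply-peaked one: substituting the maximal-information state $p_i=\delta_{ik}$ into the conservation condition forced each column of the transition matrix to carry a single unit entry, i.e.\ forced determinism. The continuous analogue is to feed a perfectly localized initial distribution, $\rho(x',0)=\delta^{(n)}(x'-x_0)$, into the evolution \eqref{continuous_transformation}. The evolved distribution is then exactly the propagator, $\rho(x,t)=T(x,x_0,t)$, so a genuine one-to-many law is precisely the statement that $T(\cdot,x_0,t)$ is spread out in $x$ rather than being a single delta $\delta^{(n)}(x-x_t)$. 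The goal is to show that information conservation is incompatible with any such spread.

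Since the differential entropy \eqref{differential_entropy} of a delta function is $-\infty$, a direct substitution yields the indeterminate identity $-\infty=-\infty$ and proves nothing. I would therefore regularize the localized state, taking $\rho(x',0)$ uniform on a small ball $B_\epsilon(x_0)$ of radius $\epsilon$, whose entropy is
\begin{align}\label{sketch_S0}
S_0=\log V(B_\epsilon)=n\log\epsilon+\text{const},
\end{align}
which diverges to $-\infty$ as $\epsilon\to0$. Averaging the propagator over this shrinking ball gives
\begin{align}\label{sketch_rho}
\rho_\epsilon(x,t)=\frac{1}{V(B_\epsilon)}\int_{B_\epsilon(x_0)}T(x,x',t)\,\mathrm{d}^nx'\xrightarrow[\epsilon\to0]{}T(x,x_0,t).
\end{align}
If the law were one-to-many with a genuinely extended image, so that $T(\cdot,x_0,t)$ is supported on a set of positive $n$-dimensional measure, then its entropy $S^{\ast}=-\int T(x,x_0,t)\log T(x,x_0,t)\,\mathrm{d}^nx$ is finite. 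Passing to the limit in \eqref{sketch_rho}, the final entropy tends to $S^{\ast}>-\infty$, whereas by \eqref{sketch_S0} the initial entropy tends to $-\infty$; for small enough $\epsilon$ the two cannot coincide, contradicting conservation. Hence $T(\cdot,x_0,t)$ cannot possess a positive-measure support, and the law admits no \emph{nondegenerate} one-to-many branching.

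The main obstacle — and the reason this is stated only as a conjecture — is that the argument above does not dispose of the pathological one-to-many laws whose image is a lower-dimensional (measure-zero) submanifold of $\Omega$, for instance a single state mapped uniformly onto a sphere or a line. Such an image also has $S^{\ast}=-\infty$, so the clean finite-versus-infinite contradiction evaporates. Excluding these cases requires a careful comparison of the \emph{rates} at which the two entropies diverge (the initial one as $n\log\epsilon$, the final one, for a $k$-dimensional image, presumably as $(n-k)\log\epsilon$) together with control of the subleading terms — a delicate estimate that is sensitive to precisely the coordinate-dependence and unboundedness that make the differential entropy \eqref{differential_entropy} a flawed information measure, so that these scaling arguments are suggestive rather than conclusive. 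I expect the gap to close once the density-of-states and the coordinate-invariant Jaynes entropy of section \ref{2.4} are introduced, replacing the singular propagator estimates by well-defined ones (section 2.4.2). A tempting alternative route, noting that a one-to-many forward law is a many-to-one backward law and invoking time-symmetry together with the preceding no-many-to-one theorem, founders on the same point: the backward propagator of a genuinely stochastic law is fixed only through Bayes' theorem and is not of the same form as $T$, so the earlier theorem does not transfer directly.
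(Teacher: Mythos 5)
The paper does not prove this statement: it is advanced there explicitly as a conjecture, supported only by a heuristic --- differential entropy \eqref{differential_entropy} assigns $-\infty$ both to $\delta^{(n)}(x-x_\text{A})$ and to $\frac{1}{2}[\delta^{(n)}(x-x_\text{A})+\delta^{(n)}(x-x_\text{B})]$, so the entropy rise that branching ought to produce is ``hidden in the infinity'', and would become visible if the deltas were replaced by tight Gaussians or bump functions. Your proposal starts from the same regularization germ but delivers more than the paper does in one direction: testing conservation on shrinking uniform balls is legitimate (the condition must hold for \emph{every} initial distribution, exactly as exploited in the paper's no-many-to-one theorem), and your finite-$S^{\ast}$ versus $n\log\epsilon$ contradiction rigorously excludes any law whose propagator spreads onto a set of positive measure --- a partial result the paper nowhere establishes. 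Concerning the gap you flag, one refinement: the difficulty is not uniform across measure-zero images. For a $k$-dimensional image with $k\geq 1$ your own rate estimate already suffices, since initial and final entropies diverge as $n\log\epsilon$ and $(n-k)\log\epsilon$ respectively (or the final entropy is outright $-\infty$ if the branch images coincide), and a mismatch growing like $k\log\epsilon$ cannot be absorbed by any constant; no delicate subleading control is needed there. The irreducible case is $k=0$, branching onto finitely many points --- exactly the paper's two-delta example --- where both entropies diverge as $n\log\epsilon$ and the difference is a Jacobian-dependent constant: for a two-branch law the ball test gives $\Delta S\to\log 2+\frac{1}{2}\log\left(|\det J_1||\det J_2|\right)$, which vanishes identically if each branch contracts volumes by a factor $1/2$. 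So that case cannot be settled by shrinking-ball tests at all, however carefully the rates are estimated; one would need qualitatively different test distributions (e.g.\ broad ones whose branch images overlap). Your closing remark is also correct: the reverse of a stochastic law is a Bayesian kernel depending on the prior, not a kernel of the form \eqref{continuous_transformation}, so the no-many-to-one theorem cannot simply be run backwards. In sum, everything you claim is sound, your positive-measure result strictly extends the paper's own support for its conjecture, and the residual obstruction you identify --- sharpened to the $k=0$ case --- is precisely why the paper leaves the statement unproven.
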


To carry on the logical development outlined in figure \ref{logic_map}, it should now be proven that a law of evolution which conserves information cannot be one-to-many. 
As indicated in the figure by a question mark, however, such a proof has not been forthcoming. 
The reason for this is the divergence to $-\infty$ of the differential entropy \eqref{differential_entropy} for distributions $\rho(x)$ that correspond to definite positions in the state-space.
For instance, suppose a system's state were known to be exactly $x_\text{A}$. Clearly this would correspond to the information distribution $\rho(x)=\delta^{(n)}(x-x_\text{A})$, which causes the entropy \eqref{differential_entropy} to diverge to $-\infty$.
This corresponds well with intuition--if a system's state is known exactly, there is no uncertainty in the state, and so the entropy should be the lowest it can be.
Suppose, however, that the system evolved in such a way to cause there to develop a 50:50 chance of finding the system in state $x_\text{A}$ or state $x_\text{B}$.
The new information distribution, $\rho(x)=1/2\left[\delta^{(n)}(x-x_\text{A})+\delta^{(n)}(x-x_\text{B})\right]$, would also correspond to a divergent $-\infty$ entropy. 
This is counter to intuition. 
As there is now some doubt over whether the system resides in state $x_\text{A}$ or state $x_\text{B}$, intuition says the uncertainty in the state (and therefore the entropy) should have risen. 
That differential entropy cannot tell the difference between two such $\rho(x)$'s seems simply to be a mathematical artifact caused by the delta functions. One might say the rise in entropy is `hidden in the infinity'.
If, for instance, the delta functions were replaced with tight Gaussians or bump functions, so that it was admitted in effect that it is impossible to pinpoint a state \emph{exactly}, then the entropy would not have diverged and it would also have risen when $\rho(x)$ split into two.  
This is the primary argument the author offers in support of the conjecture.
However it is hoped that this matter may be clarified in future.
For the reminder of this chapter, the conjecture shall be assumed to be true.

Together, the no many-to-one and no one-to-many conditions mean the law of evolution is one-to-one.
Given an initial state $x_0$ and a time $t$, information conserving laws specify a unique final state $x_f$.
In other words, the concept of trajectories has been recovered.
An individual trajectory may be written $x_f(t,x_0)$.
A probability distribution $\rho(x)$ representing perfect knowledge of such a trajectory may be written $\rho(x,t)=\delta^{(n)}\left(x-x_f(t,x_0)\right)$.
By substitution of this distribution into transformation \eqref{continuous_transformation}, it may be concluded that the information-preserving propagator is
\begin{align}\label{one-to-one-propagator}
T(x,x_0,t) = \delta^{(n)}(x-x_f(t,x_0)).
\end{align}

The standard statement of information conservation in classical mechanics is Liouville's theorem. 
The property of having one-to-oneness (trajectories) alone isn't enough to ensure Liouville's theorem is satisfied, however. 
Trajectories could bunch up, or spread apart causing $\rho(x)$ to narrow or widen in violation of Liouville's theorem.
Liouville's theorem may be recovered for an abstract law by imposing information conservation upon the trajectories in the following way.
\begin{theorem}
Information conservation $+$ one-to-one $\Rightarrow$ Liouville's Theorem.
\end{theorem}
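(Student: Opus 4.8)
The plan is to feed the one-to-one propagator \eqref{one-to-one-propagator} into the transformation law \eqref{continuous_transformation} to read off how $\rho$ is carried along the flow, and then to demand that this carrying leave the differential entropy \eqref{differential_entropy} fixed. Write the flow map as $x=x_f(t,x_0)$ and let $J(t,x_0)=\left|\det\frac{\partial x_f(t,x_0)}{\partial x_0}\right|$ denote its Jacobian determinant; one-to-oneness guarantees that for each fixed $t$ this map is invertible, so the change of variables below is legitimate. Substituting $T(x,x',t)=\delta^{(n)}(x-x_f(t,x'))$ into \eqref{continuous_transformation} and collapsing the delta function (which contributes a factor $1/J$) gives the transport rule $\rho(x_f(t,x_0),t)=\rho(x_0,0)/J(t,x_0)$: the density is simply rescaled by the local expansion or contraction of the flow.

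First I would substitute this transport rule into \eqref{differential_entropy} and change the integration variable from $x$ to $x_0$ using $\mathrm{d}^nx=J\,\mathrm{d}^nx_0$. The Jacobian in the measure cancels the $1/J$ carried by $\rho$, and the logarithm splits, leaving the compact identity
\begin{align}
S(t)=S(0)+\int_\Omega \rho(x_0,0)\,\log J(t,x_0)\,\mathrm{d}^nx_0 .
\end{align}
Information conservation (\ref{Ing5}) demands $S(t)=S(0)$ for every admissible $\rho(x_0,0)$ and every $t$, so the remaining integral must vanish identically. Since $\log J(t,\cdot)$ is a fixed function integrated against an otherwise arbitrary probability density, localizing $\rho(x_0,0)$ around an arbitrary point forces $\log J(t,x_0)=0$ pointwise, i.e.\ $J(t,x_0)\equiv 1$. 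This is precisely the incompressibility (volume-preserving) statement of Liouville's theorem \eqref{L1}; feeding $J\equiv 1$ back into the transport rule yields $\rho(x_f(t,x_0),t)=\rho(x_0,0)$, which is the statement \eqref{L2} that $\rho$ is constant along trajectories, $\mathrm{d}\rho/\mathrm{d}t=0$.

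The step I expect to be the main obstacle is the passage from ``$\int_\Omega\rho_0\log J=0$ for all $\rho_0$'' to ``$J\equiv 1$'', and in particular ensuring it is not undermined by the same $-\infty$ divergence of the differential entropy that blocked the no-one-to-many conjecture above. The crucial point in its favour is that only the \emph{difference} $S(t)-S(0)$ is ever needed, and this difference is exactly $\int_\Omega\rho_0\log J$, which stays finite and well behaved for smooth, sharply peaked $\rho_0$ (bump functions) even though $S(0)$ and $S(t)$ individually diverge. Localizing with such bumps therefore cleanly extracts $\log J=0$ at each point without invoking a genuine delta function. I would also record a shorter differential route that sidesteps the Jacobian: one-to-oneness makes $\rho$ obey the continuity equation $\partial_t\rho+\nabla\cdot(\rho\dot{x})=0$, and two integrations by parts give $\mathrm{d}S/\mathrm{d}t=\int_\Omega\rho\,(\nabla\cdot\dot{x})\,\mathrm{d}^nx=\langle\nabla\cdot\dot{x}\rangle$; requiring this to vanish for all $\rho$ again forces $\nabla\cdot\dot{x}=0$, the infinitesimal form of $J\equiv 1$.
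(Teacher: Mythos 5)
Your proof is correct and follows essentially the same route as the paper's: both substitute the one-to-one propagator \eqref{one-to-one-propagator} into \eqref{continuous_transformation}, change variables using the Jacobian of the flow, and conclude that the logarithm of the Jacobian determinant must integrate to zero against arbitrary initial densities (the paper localizes with uniform distributions on arbitrary sub-regions $\omega$, you localize with bump functions), forcing the determinant to equal unity pointwise, which is \eqref{L1}, and then recover \eqref{L2} by feeding this back into the transport rule. Your explicit identity $S(t)=S(0)+\int_\Omega\rho(x_0,0)\log J(t,x_0)\,\mathrm{d}^nx_0$ and the closing continuity-equation remark are compact repackagings of steps the paper carries out separately, not a different method.
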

\begin{proof}
Recall equation \eqref{theorem2_misc2}, the information conservation condition for a distribution $\rho(x_0,0)$ that is initial uniform on a sub-region $x_0\in\omega$ of the state-space $\Omega$,
\begin{align}
0&=\int_\Omega\left[\int_\omega T(x,x_0,t)\mathrm{d}^nx_0\right]\log\left[\int_\omega T(x,x_0,t)\mathrm{d}^nx_0\right]\mathrm{d}^nx.
\end{align}
In order to evaluate this expression once the one-to-one propagator \eqref{one-to-one-propagator} has been substituted in, it is necessary to evaluate the integrals 
\begin{align}
\int_\omega \mathrm{d}^nx_0\delta^{(n)}(x-x_f(t,x_0)).
\end{align}
However the argument of the delta function is linear in the final state-space coordinate $x_f$, and the integration is to be taken with respect to the initial state-space coordinate $x_0$.
As it has already been determined that there is a one-to-one mapping between these two, progress may be made through the substitution $x_0\rightarrow x_f(t,x_0)$.
The corresponding transformation of the integration measure is $\mathrm{d}^nx_0=|\mathrm{det}\mathbf{J}(x_f)|\mathrm{d}^nx_f$, where
\begin{align}
\mathbf{J}(x_f)=
\begin{pmatrix}
\frac{\partial x_0^1}{\partial x_f^1}&
\dots&
\frac{\partial x_0^n}{\partial x_f^1}\\
\vdots&\ddots&\vdots\\
\frac{\partial x_0^1}{\partial x_f^n}&
\dots&
\frac{\partial x_0^n}{\partial x_f^n}\\
\end{pmatrix}
\quad\quad\text{for}\quad\quad
\begin{matrix}
x_0=\left\{x_0^1, x_0^2,\dots,x_0^n\right\}\\
x_f=\left\{x_f^1, x_f^2,\dots,x_f^n\right\}
\end{matrix}.
\end{align}
This is the Jacobian matrix for the evolution (i.e.\ the transformation $x_0\rightarrow x_f(t,x)$). It is written with the argument $x_f$ so as to indicate the position in the state-space at which it is evaluated. It follows that
\begin{align}
\int_\omega d^nx_0\delta^{(n)}\left(x-x_f(t,x_0)\right)&=\int_{x_f(\omega)}\mathrm{d}^nx_f|\mathrm{det}\mathbf{J}(x_f)|\delta^{(n)}(x-x_f)\\
&=
\begin{cases}
|\mathrm{det}\mathbf{J}(x)|&\,\text{for}\, x\in x_f(\omega)\\
0&\text{Otherwise}
\end{cases},
\end{align}
where $x_f(\omega)$ is the projection of volume $\omega$ through the time interval time $t$, and $\mathbf{J}(x)$ is the Jacobian matrix as evaluated at $x_f=x$. By substitution of this into equation \eqref{one-to-one-propagator}, it is found that
\begin{align}
0=\int_{x=x_f(\omega)}\mathrm{d}^nx|\mathrm{det}\mathbf{J}(x)|\log |\mathrm{det}\mathbf{J}(x)|
=\int_{\omega}\mathrm{d}^nx_0\log |\mathrm{det}\mathbf{J}(x(x_0,t))|,
\end{align}
where the final equality is found as a result of making the substitution $x_f\rightarrow x_0$.
The only way for this to be equal to zero for arbitrary domains $\omega$ and arbitrary times $t$, is for 
\begin{align}\label{L1}
|\det{\mathbf{J}\left(x(x_0,t)\right)}|=1 
\end{align}
everywhere in $x_0\in\Omega$ and for all t.
Hence if an initial distribution $\rho(x,0)$ is confined to region $\omega$ of volume $\text{Vol}(\omega)$, then since 
\begin{align}
\text{Vol}(\omega)
=\int_\omega\mathrm{d}^nx
=\int_{x_f(\omega)}|\text{det}\mathbf{J}(x_f)|\mathrm{d}^nx_f
=\int_{x_f(\omega)}\mathrm{d}^nx_f
=\text{Vol}(x_f(\omega)),
\end{align}
the distribution will occupy a region $x_f(\omega)$ of identical volume after its evolution.
So information conserving laws of evolution conserve state-space volumes.
This result, along with equation \eqref{L1}, shall be referred to as the first statement of Liouville's theorem.
(For more detail on this point, see the section headed \emph{Suitable coordinates} below.)
\end{proof}
It may help to make the connection to the alternative, equivalent statement of Liouville's theorem.
In order to do so, consider a general initial information-distribution $\rho(x_0,0)$, which transforms as
\begin{align}
\rho(x,t)&=\int_\Omega \mathrm{d}^nx_0 T(x,x_0,t)\rho(x_0,0)\\
&=\int_\Omega \mathrm{d}^nx_0 \delta^{(n)}(x-x_f(t,x_0))\rho(x_0,0)\\
&=\int_{x_f(\Omega)} \mathrm{d}^nx_f |\mathrm{det}\mathbf{J}(x_f)|\delta^{(n)}(x-x_f)\rho(x_0(x_f,t),0)\\
&=|\mathrm{det}\mathbf{J}(x)|\rho(x_0(x,t),0).
\end{align}
Hence $\rho(x_f,t)=\rho(x_0,0)$, with the understanding that $x_0$ evolves to $x_f$. In other words, the information-distribution is constant along system trajectories,
\begin{align}\label{L2}
\frac{\mathrm{d}\rho}{\mathrm{d}t}=0.
\end{align}
(This is the material/convective derivative $\mathrm{d}/\mathrm{d}t=\partial/\partial t+\sum_i (\partial x_i/\partial t) (\partial/\partial x_i)$ along the trajectory.) Equation \eqref{L2} shall be referred to as the second statement of Liouville's theorem.

In order to complete the circuit of implications in figure \ref{logic_map}, one final implication must be sought.
Namely, Liouville's theorem $\implies$ information conservation. 
To proceed, note that the second statement of Liouville's theorem \eqref{L2} means that if trajectories are understood to evolve as $x_0\rightarrow x_f$ in time $t$, then $\rho(x_0,0)=\rho(x_f,t)$. So,
\begin{align}
S(0)&=-\int_\Omega \mathrm{d}^nx_0\, \rho(x_0,0)\log\rho(x_0,0)\\
&=-\int_{x_f(\Omega)} \mathrm{d}^nx_f\, |\mathrm{det}\mathbf{J}(x_f)|\rho(x_f,t)\log\rho(x_f,t)\\
&=S(t),
\end{align}
completing the round of implications in figure \ref{logic_map}.
(It is not necessary for the domain of the final $x_f$ integral, $x_f(\Omega)$ to equal the entire state-space $\Omega$. If the projection of $\Omega$ through the evolution, $x_f(\Omega)$, is indeed a subset of the whole state-space $x_f(\Omega)\subset\Omega$, then $\rho(x,t)=0$ in the complement $\left\{x\in\Omega|x\notin x_f(\Omega)\right\}$, and so the complement does not contribute to the entropy.

\subsection{Constraints on laws of evolution (deriving Hamilton's equations)}
In section \ref{2.2} information conservation constrained discrete laws of evolution to be simple permutations between states.
An analogous situation arises for continuous state-spaces.
Since $\rho(x)$ is a probability distribution, it is required to satisfy the continuity equation\footnote{Strictly speaking, in order for the distribution to satisfy a continuity equation, the system trajectories are required to be continuous in the state-space. Else systems could disappear at one point in the state-space and appear in another. This would happen if the particle were to strike a wall for instance. Upon rebounding the particle would experience a discontinuity in momentum, making the phase-space coordinate jump. It is assumed that such considerations could be accounted for when they arise without significantly affecting the formalism.},
\begin{align}\label{misc_eom}
0&=\frac{\partial\rho}{\partial t} + \nabla\cdot\left(\rho \dot{x}\right),
\end{align}
where $\nabla\cdot$ is the $n$-dimensional divergence operator and $\dot{x}$ is the time derivative of the state occupied by a system (which thereby acts as a placeholder for a possible law of evolution, \ref{Ing2}). 
The right hand side of equation \eqref{misc_eom} may be modified as
\begin{align}
&=\frac{\partial\rho}{\partial t} + \dot{x}\cdot\left(\nabla\rho\right) + \rho\left(\nabla\cdot\dot{x}\right)\\
&=\frac{d\rho}{dt} + \rho\left(\nabla\cdot\dot{x}\right).\label{modified_continuity}
\end{align}
The first term in this expression vanishes as a result of the second statement of Liouville's theorem, equation \eqref{L2}.
Hence, the law of evolution $\dot{x}$ must satisfy $\nabla\cdot\dot{x}=0$; in order to conserve information, laws of evolution must be divergence-free in the state-space. 
Of course, the property of being divergence free for all $x\in\Omega$ may be used in conjunction with the divergence theorem to conclude that
\begin{align}
0=\int_{\partial\omega} \dot{x}\cdot\widehat{\partial\omega}\,\, \mathrm{d}^{n-1}x\quad
\end{align}
for all $\omega\subset\Omega$, where $\partial\omega$ denotes the boundary of region $\omega$, and where $\widehat{\partial\omega}$ denotes a unit vector that is orthogonal to this boundary in the usual manner. 
In other words, there is no net flow of $\dot{x}$ in or out of any conceivable sub-region of the state-space $\Omega$.
The flow is said to be incompressible.

So iRelax theories that conserve differential entropy \eqref{differential_entropy} have laws of evolution that are akin to the flow of an incompressible fluid. 
Individual systems, in this analogy, would be depicted as being suspended in the fluid and guided around by it as it flows. 
The probability distribution $\rho(x)$ is a scalar defined upon the fluid, and so could correspond to a suspension of dye within the fluid. (What Josiah Gibbs called `coloring matter' in the context of incompressible Hamiltonian flow, \cite{Gibbsbook}.)
It should be qualified that the dye cannot diffuse however. 
The only way the dye may disperse throughout the fluid is by being `stirred' by the law of evolution (i.e. carried by the flow of the fluid).
As shall be described in the following section, this is key to the mechanism by which \ref{Ing5} and \ref{Ing6} may live in accord.\\
\\
The distribution $\rho(x)$ of maximum entropy (\ref{Ing3}) is always a special case for iRelax theories.
Later de Broglie-Bohm theory will be derived by proposing the Born distribution to be the maximum entropy distribution and then inferring the rest of the iRelax structure from this.
For the present case, the distribution of maximum entropy and the expression for differential entropy \eqref{differential_entropy} both reflect a decision to take the principle-of-indifference at its word. 
Differential entropy \eqref{differential_entropy} treats all regions of the state-space `equally' (more discussion on this point in section \ref{2.4}). 
The distribution $\rho(x)$ of maximum entropy is consequently uniform. 
If such a distribution of maximum entropy is unique, as is the case presently, then entropy conservation ensures it must be conserved by the law of evolution.
It becomes relevant, therefore, to ask which laws of evolution are consistent with the conservation of a uniform distribution.
The answer is of course those without sources or sinks of flow, with incompressible flow, with $\nabla \cdot \dot{x}=0$.
In this regard, the constraint placed upon the law of evolution by information conservation amounts to restricting it to conserve the distribution of maximum entropy. 

The incompressible flow that is a consequence of the conservation of differential entropy \eqref{differential_entropy}, parallels the incompressible Hamiltonian flow of classical mechanics. (Although textbooks usually state the incompressibility of Hamiltonian flow and hence ultimately information conservation to result from the form of Hamilton's equations, rather than the other way around.)
To elaborate further on this point, suppose $x=(q,p)$ and $(q,p)\in\Omega=\mathbb{R}^2$.
In $\mathbb{R}^2$, the curl of a vector field is a scalar field; a vector field $\mathbf{\bm{\xi}}$ may be decomposed into divergence and curl terms as $\mathbf{\bm{\xi}}=\nabla A + J\nabla B$, where $J=\left(\begin{smallmatrix}0&-1\\1&0\end{smallmatrix}\right)$, and where $A$ and $B$ are scalar fields. The divergence component is $\nabla A$ and the curl component is $J\nabla B$. 
As the law of evolution, $\dot{x}=(\dot{q},\dot{p})^\text{tr}$, is divergence free, it may be expressed $\dot{x}=J\nabla B$. 
In other words,
\begin{align}\label{possible_2d_eom}
\dot{q} = -\frac{\partial B}{\partial p},\quad\quad\dot{p}=\frac{\partial B}{\partial q}.
\end{align}
So an iRelax theory that is based upon a $(q,p)\in\mathbb{R}^2$ state-space, with differential entropy \ref{differential_entropy}, must have equations of motion \eqref{possible_2d_eom}, for some scalar function of the state-space $B(q,p)$.
Each permissible two-dimensional iRelax theory corresponds to a scalar field $B(q,p)$ upon the state-space.
Classical mechanics (Hamilton's equations) is recovered for the case where $B$ is the negative energy of the system, $B=-H$.

In two-dimensions, determining an information conserving law reduces to finding the correct scalar function on the two-dimensional state-space. 
In higher dimensions it is likely the situation becomes more complicated, meaning that other (symmetry?) arguments may need to be made to develop a theory fully.
For instance, if the state-space were $\mathbb{R}^3$, the standard Helmholtz decomposition could be used to express the law of evolution $\dot{x}=\nabla\times w$, where $w$ is a 3-vector field.
Hence, a 3-component field is required to determine a three-dimensional iRelax theory.
In more dimensions and on more complicated state-space manifolds, the relevant Hodge decomposition could be used.
Hodge decomposition extends the standard $\mathbb{R}^3$ Helmholtz decomposition to manifolds of arbitrary dimension using the language of differential geometry. Modulo boundary issues, Hodge decomposition allows a k-form to be decomposed into the exterior derivative $d$ of a $k-1$ form, and the codifferential $\delta$ of a $k+1$ form. The equation of motion is a vector field (1-form) upon the state-space of the system. By the Hodge decomposition, this may be expressed as the exterior derivative $d$ of a 0-form (1 component) plus the codifferential $\delta$ of a 2-form, which has a number of components equal to the binomial coefficient $\left(\begin{smallmatrix}n\\2\end{smallmatrix}\right)$, where $n$ is the dimensionality of the state-space. See references \cite{Nakahara,HHD} for further information.

\subsection{How to understand the tendency of entropy to rise}\label{2.3.3}
On the face of it, \ref{Ing5} and \ref{Ing6} appear to contradict each other.
On the one hand there is the information (entropy) conservation of \ref{Ing5} that is most usefully represented by Liouville's theorem in both its forms, equations \eqref{L1} and \eqref{L2}. This readily ensures that differential entropy \eqref{differential_entropy} is conserved. 
On the other hand, this goes against the usual notion of entropy, which famously tends to rise for isolated systems.
\ref{Ing6} is a statement stipulating that there should exist some mechanism by which entropy \emph{de facto} rises, even though the true entropy \eqref{differential_entropy} is forbidden from rising by information conservation, \ref{Ing5}.
This apparent incompatibility between \ref{Ing5} and \ref{Ing6} is at the center of the iRelax framework.

\begin{figure}
\subfloat[Evolution that is prohibited by information conservation (Liouville's theorem).]{\includegraphics[width=\textwidth]{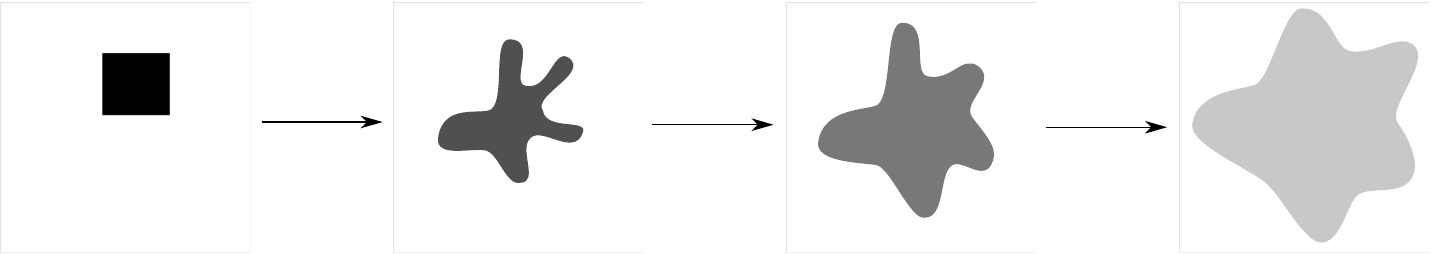}\label{prohibited}}

\subfloat[Evolution that is permitted by information conservation (Liouville's theorem).]{\includegraphics[width=\textwidth]{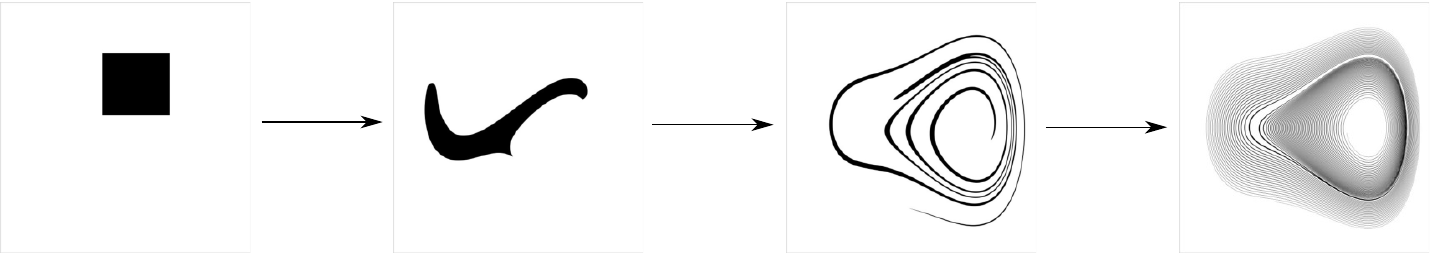}\label{permitted}}
\caption[Permitted and prohibited evolution according to Liouville's theorem]{Contrasting evolution of information-distribution $\rho(x)$ upon a 2 dimensional state-space $\Omega=\{x\}=\mathbb{R}^2$. Individual plots are colored according to the density of $\rho(x)$; darker regions indicate denser $\rho(x)$, while white regions indicate $\rho(x)=0$.}\label{permitted_vs_prohibited}
\end{figure}

The resolution of the conflict is best illustrated by example. Consider figure \ref{permitted_vs_prohibited}, which is divided into two sub-figures, \ref{prohibited} and \ref{permitted}, contrasting two possible manners in which an initial $\rho(x)$ could evolve upon a two dimensional state-space $\Omega=\mathbb{R}^2$.
In each of the individual square frames, the density of $\rho(x)$ upon the background state-space is represented by grayscale tone.
Darker regions indicate places where $\rho(x)$ is large.
Lighter regions indicate small $\rho(x)$.
White regions represent $\rho(x)=0$.
A low entropy $\rho(x)$ (denoting low ignorance of the state) is therefore represented by a small dark region. A high entropy $\rho(x)$ is represented by a wide and pale region.

The first statement of Liouville's theorem, equation \eqref{L1}, dictates that co-moving state-space volumes translate to state-space volumes of the same size if information is conserved. The second statement, equation \eqref{L2}, stipulates that along trajectories the density $\rho(x)$ remains constant. 
Clearly the evolution displayed in sub-figure \ref{prohibited} is prohibited by Liouville's theorem. The region in which $\rho\neq 0$ grows with time, in violation of \eqref{L1}, and as it does so its color lightens in violation of \eqref{L2}.
(If it grows it must lighten in color as the probability distribution $\rho(x)$ must remain normalized.)
Hence entropy does grow in figure \ref{prohibited}, as should be expected of an isolated system. But this is at the cost of violating information conservation.

In contrast, sub-figure \ref{permitted} displays an evolution that is permitted by Liouville's theorem.
The incompressible flow generated by the law means the region of non-zero $\rho(x)$ does not increase or decrease in size.
It doesn't lighten or darken either.
Rather, it becomes stretched and warped over time.
As it becomes ever more stretched and warped it develops structure that becomes ever finer and more stratified.   
Eventually the filamentary structure becomes too fine to observe clearly.
This results in a distribution that appears to have been spread over a wider region, even though strictly speaking it has not%
\footnote{In higher dimensional systems the process may happen in different ways. For instance in a three-dimensional state-space, one law of evolution may act to elongate a density $\rho(x)$ so that it eventually resembles a ball of cotton wool. Another law of evolution could instead flatten $\rho(x)$ in a single dimension so that it instead resembles a scrunched up piece of paper. In both instances, $\rho(x)$ would appear at first glance to occupy a larger volume, at a lower density, than is truly the case.}.
So the evolution proceeds in accordance with information (entropy) conservation.
But there arise practical (experimental) reasons why the fine structure becomes obscured, resulting in a distribution that appears to have been spread over a larger region, and ultimately an increase in entropy.
More detail on the classical system used to generate figure \ref{permitted} is given in figure \ref{classical_relaxation}.

Entropy-raising, fine-structure blurring practical limitations arise in numerous ways. 
Consider attempting to assign an empirical value to differential entropy \eqref{differential_entropy}.
First $\rho(x)$ must be `measured'. 
As any set of measurements performed on any individual system can yield at most the system's state, $x$, the probability distribution $\rho(x)$ must be reconstructed from repeated measurements over an ensemble. 
In the two-dimensional case displayed in figures \ref{classical_relaxation} and \ref{permitted}, the reconstruction of $\rho(x)$ could be carried out, for instance, by dividing the state-space into two-dimensional `bins' and creating a histogram from the number of ensemble members with states $x$ that fall into each bin.  
In order to reproduce the fine structure displayed in the final frame of figure \ref{permitted}, however, the bins would have to be exceedingly small. And in order to populate the bins sufficiently, the ensemble size would have to be correspondingly large. 
Clearly a tremendously large ensemble would be required to reproduce the fine structure displayed in the final frame of figure \ref{permitted} with any accuracy.
In a real empirical setting, bin size would likely be determined by the ensemble size available, and it may be expedient or necessary to choose bins that are too large to capture the fine structure. Such a choice will blur the fine structure in the reconstructed $\rho(x)$, resulting in an increase in entropy.
Even an infinite sample size could result in an entropy rise if the state measurements were performed with finite precision.
To get a better sense of this, the reader is invited to view figure \ref{permitted} with their spectacles removed.
(The infinite sample of state readings is then represented by the large number of photons entering the eye and the finite measurement precision is represented by the removal of the spectacles.)
Even if an infinite number of infinitely precise state measurements could be made, there may be other reasons why the fine structure in $\rho(x)$ is obscured. Systems may still be evolving whilst the measurements are being taken, for instance. There may be no way to ensure that the measurements are all taken at the same time. The measurements themselves may take a finite time to perform, so that systems evolve during the measurement process itself.
All these empirical limitations lead to a blurring of the fine structure in $\rho(x)$, resulting in a \emph{de facto} entropy rise, \ref{Ing6}.  

The argument above describes how the tendency of $\rho(x)$ to become stretched and warped leads to the rise of \emph{de facto} entropy.
It might be asked however, whether this stretching and warping is to be expected generally. 
Certainly it is possible to invent scenarios in which it does not take place.
The humble harmonic oscillator is a famous example. 
The paths taken by oscillators form concentric ellipses in phase-space. 
As the period of the oscillation is famously independent of amplitude, nearby oscillators remain in phase with each other throughout the evolution. 
Hence, in the case of an oscillator, $\rho(x)$ does not get stretched and warped. It merely rotates%
\footnote{This fact is put to use in section \hyperref[sec:classical_fields]{2.3.7} in a field theoretic context.}.
Entropy does not rise. It does not lower either. 
Still, any perturbation around the Hamiltonian of the oscillator would destroy this balance in favor of an entropy rise.
This particular scenario appears, therefore, to be a peculiar property of the oscillator (reflecting the rotational symmetry of the Hamiltonian in phase-space), rather than something to be concerned about.  
A much more commonly raised scenario involves inventing a $\rho(x)$ that will evolve from a complicated and structured distribution to a simple smooth distribution. 
This is easy to do since the time reversibility of the law of evolution ensures it may be extrapolated backwards. Thus creating a complicated filamentary distribution appear to evolve from a simply smooth distribution.
This argument is commonly made in the foundations of statistical physics.
The issue with it is that the possibility of this behavior, though non-zero, is so overwhelmingly unlikely that most scientists are minded to drop such concerns in the interest of pragmatism and utility.
Bearing in mind the above discussion on the incompressible flow of the law of evolution, the situation is directly analogous to stirring a glass of fruit squash and finding it separate into syrup and water\footnote{Or stirring chocolate milk and finding it separate into milk and chocolate powder.}. 
These so-called `conspiracies' of initial conditions \cite{AV91a} are therefore discounted for the usual reasons. See reference \cite{Daviesbook} for further discussion.

\subsection{Proof of entropy rise via coarse-graining}
In the view of the author, no further proof of entropy rise is necessary.
The justifications given above are conclusive. 
Nevertheless, the proof below has been important historically in the development of the subject area.
Through comparison, its presentation here will help to highlight some important details in section \ref{2.4}.
And consideration of the assumptions required by the proof will also help clarify some remarks on the arrow of time that will follow the proof. 
In particular, the proof does not require the particular details of a law of evolution, only the conservation of information, Liouville's theorem.
Before beginning, however, a word of warning.
The proof is based on so-called coarse-graining (discretization of the state-space).
In the past, this has lead some to falsely conclude that coarse-graining is an essential component for relaxation to take place. 
(Reference \cite{Penrosebook} even uses the apparent arbitrariness of how the coarse-graining is applied to argue against the robustness of the second law of thermodynamics.)
This is certainly not the case.
As discussed at length above, coarse-graining is but one method available to realize the experimental limitations that cause the \emph{de facto} rise in entropy.
In fact any kind of blurring of $\rho(x)$ would cause an equivalent entropy rise. 
In principle it should be possible, for instance, to model the blurring with some sort of Gaussian smearing and perform the proof using that in place of the coarse-graining. 
In this regard, the coarse-graining should be viewed as a convenient mathematical tool, and not something of vital importance\footnote{In the opinion of the author, it is regrettable that the term `coarse-graining' persists. In modern parlance, discretization, pixelation, and rasterization all seem more appropriate. The use of the word pixelated to describe a coarse grained image is widespread for instance.}.

\begin{proof}
The following is based upon the proof given in reference \cite{Daviesbook}, which in turn is based upon the treatment by P. and T. Ehrenfest in reference \cite{Ehrenfests} (English translation).
Experimental limitations are modeled via coarse-graining, or in other words, the discretization of the state-space.
Divide the state-space into cells, the volume of which, $\delta \omega$, reflects the precision of the experimental readings.
Suppose that experiment only has access to the average of $\rho(x)$ within each of the cells,
\begin{align}
\bar\rho(x):=\int_{\delta \omega}\rho(x)\,\mathrm{d}^nx.
\end{align}
Hence, experiment calculates entropy as,
\begin{align}
\bar S:=-\int_\Omega \mathrm{d}^nx \bar\rho(x)\log \bar\rho(x),
\end{align}
which is often called the coarse-grained entropy.
The argument relies on one crucial assumption, the postulate of equal \emph{a priori} probabilities. This asserts that $\bar\rho(x)=\rho(x)$ for the initial density $\rho_0(x)$. Of course, this will never be completely true. Implications of this postulate are interpreted below. For now, it will be assumed that the initial distribution is smooth enough so that for a sensible choice of coarse-graining the postulate is at least close to being satisfied. If it is satisfied, then the initial coarse-grained entropy $\bar S_0$ will equal the initial exact entropy $S_0$, which is preserved in time as a result of the information conservation and so will equal the final exact entropy $S$. Hence, the effect of the postulate is to make the correspondence between the rise in entropy and the effect of coarse-graining upon the final density, 
\begin{align}
\underbrace{\bar S-\bar S_0}_{\text{Rise in CG entropy}}&= \underbrace{\bar S-S}_{\text{Effect of coarse-graining}}.
\end{align}
Then, since $\log\bar\rho(x)$ is constant over each cell it follows that $\int \mathrm{d}^nx\bar\rho(x)\log\bar\rho(x)=\int \mathrm{d}^nx\rho(x)\log\bar\rho(x)$ and thus the change in entropy may be written
\begin{align}
\bar S-\bar S_0&=-\int_\Omega \mathrm{d}^nx\,\left\{\bar\rho(x)\log\bar\rho(x)-\rho(x)\log\rho(x)\right\}\\
&=\int_\Omega \mathrm{d}^nx\,\rho(x)\log\frac{\rho(x)}{\bar\rho(x)}.
\end{align}
The integrand here, $\rho(x)\log\left[\rho(x)/\bar \rho(x)\right]$, is positive whenever the exact density is greater than its coarse-grained counterpart, and negative for the reverse. Since there is a factor of $\rho$, however, a greater weight is given to the regions of the state-space in which it is positive. This fact may be exploited by adding $\bar\rho-\rho$ to the integrand. Since $\bar\rho$ is equal to $\rho$ on average, such a term will contribute nothing to the final change in entropy. It will furthermore ensure that the integrand is now positive for all regions in which $\rho<\bar\rho$. Hence, the change in entropy may be written
\begin{align}
\Delta S &=\int_\Omega \mathrm{d}^nx\,\left\{\rho(x)\log\frac{\rho(x)}{\bar\rho(x)}+\bar\rho(x)-\rho(x)\right\}\geq 0,
\end{align}
where the inequality is found as $a\log(a/b)+b-a\geq0$ for all $a\geq 0$, $b\geq 0$. 

\end{proof}

\subsection{Remarks on the arrow of time}
Contained in the above discussion are some points relevant to that perennial source of interest, the emergence of the arrow of time.
There are many directions from which to approach the question of the arrow of time. Relativity has of course much to say regarding time, for instance.
As the abstract iRelax theories under analysis at present do not yet possess a concept of space, however, relativistic considerations are a little premature.  
Rather, iRelax theories exhibit a very thermodynamical kind of time-asymmetry. Exactly the kind found in classical mechanics.

iRelax theories are time-symmetric theories. Not only does the one-to-oneness of the laws of evolution ensure they are in principle perfectly predictive both forwards and backwards in time. The information conservation also ensures that there is no explicit time asymmetry hard-coded. 
Nevertheless iRelax theories do exhibit spontaneous thermodynamic relaxation to equilibrium.
As has been covered at length, the reason for this is the tendency of the incompressible flow to stretch and warp simple smooth $\rho(x)$'s into complicated structures with fine filaments. This filamentary structure is then `blurred' by the inevitable shortcomings of experiment, obscuring the fine structure, making $\rho(x)$ appear to have been smeared out, resulting in an entropy rise. 
The process of stretching and warping is shown in detail in figure \ref{classical_relaxation} for a simple two-dimensional state-space.
A key point to note is that, had this simulation been evolved backwards in time from the initial uniform $\rho(x)$, a similarly filamentary structure would also have formed.
In iRelax theories the formation of structure, and thus thermodynamic relaxation, occurs \emph{regardless of the direction time flows}.
In this respect, the notion of `future' appears to have a close association to complex structure and the notion of `past', simple structure. The complex filamentary `future' distributions evolve from simple `past' distributions regardless of the direction time flows.  

This has lead some to question the validity of the postulate of equal \emph{a priori} probabilities. 
If there is an intrinsic tendency to form structure, it could be asked, is it reasonable to assume there to be no structure in any `initial' distribution?
To help answer this question, suppose there were an ensemble of systems distributed with fine structure, and that this structure were too fine to be perceptible to `low-resolution' experiment. 
Recall that $\rho(x)$ does not just have an ensemble interpretation, however.
In the Bayesian sense it also represents ignorance and knowledge, and knowledge of how an ensemble is distributed may be incomplete.
In cases of incomplete knowledge of $\rho(x)$, the principle of maximum entropy prescribes the assignment of the $\rho(x)$ with the maximum entropy that is consistent with the information available.
This may be a smooth distribution, despite the presence of fine structure in the actual ensemble distribution of systems.
The above proof goes through fine with the caveat that if $\rho(x,0)$ reflects both the variation of an ensemble and also incomplete knowledge of said variation, then the subsequent later time higher-entropy $\rho(x,t)$ also reflects both these factors.
A subsequent `high-resolution' measurement of $\rho(x,t)$ would of course yield a different ensemble distribution to that predicted.
This is because it is directly observing the ensemble distribution, not the prediction made of it from a position of ignorance. 
This is not surprising at all.
For the purposes of comparison, suppose an individual system were assigned a probability distribution $\rho(x,0)$ based on a poor-resolution measurement of its state. This would evolve to $\rho(x,t)$. A high-resolution measurement of its later state, however, would return its true state, not the $\rho(x,t)$ calculated. How could it? It is a single coordinate, not a distribution.

Nevertheless it is disturbing to some that entropy could appear to decrease were initially imperceptible fine structure to become visible at later times. 
Consider however, that the initially too-fine-to-see structure is nonetheless visible at some smaller lengthscale.
If the coarse-graining employed in the above proof were instead applied on this smaller lengthscale, an entropy rise would still be concluded. And this would be attributed to the structure evolving to become even finer than it was initially. 
It follows there is a hierarchy. 
Smooth distributions become structured.
Structured distributions become more finely structured.
Finely structured distribution become still more finely structured.
And so on. And so on.

It may be useful to draw a comparison between the assumption of a smooth initial $\rho(x)$ and the so-called coastline paradox, first introduced by Lewis Fry Richardson in reference \cite{R61}, and later popularized by Benoit Mandelbrot \cite{M67}.
Famously the question `How long is the coastline of Britain?' was answered `It depends how long your ruler is!'. 
As the coastline is jagged, an estimate of its length very much depends upon how much detail one is willing to take into account. 
Rulers about a mile long may capture the larger structure sufficiently, but miss bays smaller than a mile wide. 
Rulers about 100 meters long may capture these bays well, but smooth over the jagged, rocky parts of the coastline. 
Smaller rulers may capture the rocky parts well but fail to capture even smaller structure. And so on. 
So the shorter the ruler, the more detail taken into account, the longer the estimate of the coastline. 
And in principle this need not have a limit except for the discovery of some smallest lengthscale, which may in practice be out of reach in any case.
In order to make an estimate that may be of some utility (perhaps by comparison with other countries), a pragmatic choice of ruler length must be made. 
A similarly pragmatic choice must be made with the lengthscale of coarse-graining used to calculate entropy from an ensemble. 
In any real ensemble, there may exist many different lengthscales upon which structure is present.
The finer the coarse-graining used to calculate entropy, the more this structure will be resolved, and the lower the resulting estimate of the entropy\footnote{A parallel may also be drawn with astronomical observations. More sensitive telescopes may view more distant objects, from which the light will have departed earlier, and thus in a sense view farther backwards in time. The finer the structure in $\rho(x)$, the farther back in time it will have formed. Hence, finer scale coarse-graining that is able to resolve such structure is in effect able to measure a lower entropy from an earlier time.}. 
It may be that the finest structure requires a lengthscale of coarse-graining that is out of reach experimentally. 
Nevertheless, a pragmatic choice of lengthscale will still lead to an estimate that may be of some utility. 
Just as expediency inspired a choice of ruler length to measure coastline, the limitations of the experiment at hand dictate the lengthscale of coarse-graining used to measure entropy. 
As long as the initial distribution $\rho(x,0)$ appears suitably smooth on the lengthscale chosen, the above proof follows, and so entropy at that coarse grained lengthscale may be expected to rise.

\subsection{Suitable coordinates}\label{sec:suitable}
The form of differential entropy, equation \eqref{differential_entropy}, betrays the fact that the distribution of maximum entropy is expected to be uniform with respect to coordinates, $x$. 
It would not be form-invariant under a general coordinate transform, $x\rightarrow x'$. 
Hence the coordinates $x$ are in a sense special to differential entropy. 
This should not be surprising however.
Even in classical mechanics there are special sets of coordinates called `canonical'. 
Use of these canonical coordinates guarantees form invariance of the law of evolution (Hamilton's equations).
Coordinate transforms that translate between sets of canonical coordinates, and so preserve the form of Hamilton's equations, are also given special status as canonical transformations.
Interestingly, as a result of preserving Hamilton's equations, canonical transforms also leave state-space volumes invariant \cite{Goldstein}.
This means they ensure the form-invariance of differential entropy and Liouville's theorem, essentially leaving the informational structure of the iRelax theory presently under analysis untouched.
Hence the coordinates of an iRelax theory may be changed with the aid of a canonical transform (at least if the state-space has an even number of dimensions).
In the context of the iRelax framework, however, entropy and Liouville's theorem are of foremost concern while Hamilton's equations are not, as they only apply in the context of classical mechanics.
So canonical transforms, as a class, may be too restrictive to capture all possible transforms that leave entropy \eqref{differential_entropy} and so the rest of the iRelax structure form-invariant.

It becomes relevant to ask after the iRelax framework equivalent of the canonical transformation. For the purposes of this section the word `suitable' shall be used as the iRelax equivalent of `canonical' so that, for instance, suitable transformations shall be said to translate between sets of suitable coordinates. 
As suitable transformations are required to leave differential entropy \eqref{differential_entropy} form-invariant, the distribution of maximum entropy (the equilibrium distribution $\rho_\text{eq}(x)$) must be uniform with respect to any set of suitable coordinates.
In order to ensure this is the case, the Jacobian matrix $\mathbf{J}(x\rightarrow x')$ of a suitable transformation $x\rightarrow x'$ must satisfy $\text{det}\mathbf{J}(x\rightarrow x')=\pm1$ at every point in the state-space.
To illustrate the implications this has for the sets of suitable coordinates, consider the following back-of-the-envelope calculation. Suppose a small vector points from position $x$ to position $x+\mathrm{d}x$.
In new coordinates $x'(x)$, the same small vector points from $x'(x)$ to $x'(x+\mathrm{d}x)$, so that in the new coordinates the vector is $\mathrm{d}x'=x'(x+\mathrm{d}x)-x'(x)$.
Taylor expanding $x'(x+\mathrm{d}x)$ and keeping only linear terms in $\mathrm{d}x$ gives
\begin{align}
\mathrm{d}x'=
\begin{pmatrix}
\frac{\partial x'_1}{\partial x_1}\mathrm{d}x_1+\frac{\partial x'_1}{\partial x_2}\mathrm{d}x_2+\dots+\frac{\partial x'_1}{\partial x_n}\mathrm{d}x_n\\
\vdots\\
\frac{\partial x'_n}{\partial x_1}\mathrm{d}x_1+\frac{\partial x'_n}{\partial x_2}\mathrm{d}x_2+\dots+\frac{\partial x'_n}{\partial x_n}\mathrm{d}x_n
\end{pmatrix}
= \mathbf{J}(x\rightarrow x')\mathrm{d}x.
\end{align}
So the action of the Jacobian matrix may be understood to be to translate small vectors at each point in state-space into the new coordinates. 
This may be used to view the effect of the coordinate transformation on state-space volumes in the following way. 
Combine $n$ such small vectors, $v^1$, $v^2$, \dots , $v^n$, into matrix
\begin{align}
P= 
\begin{pmatrix}
v^1_1&\dots&v^n_1\\
\vdots&\ddots&\vdots\\
v^1_n&\dots&v^n_n
\end{pmatrix}.
\end{align}
If the vectors $v^1$, $v^2$, \dots , $v^n$ are considered to define the edges of a small $n$-parallelepiped, the volume of the parallelepiped is given by the absolute value of the determinant $\text{Vol}(P)=|\mathrm{det}P|$.
As the coordinate transform acts on $P$ by matrix multiplication, $P\rightarrow P'=\mathbf{J}(x\rightarrow x')P$, the volume transforms as
\begin{align}
\mathrm{Vol}(P')&=|\mathrm{det}P'|=|\mathrm{det}\mathbf{J}(x\rightarrow x')P|\nonumber\\
&=|\mathrm{det}\mathbf{J}(x\rightarrow x')||\mathrm{det}P|=|\mathrm{det}P|=\text{Vol}(P).
\end{align}
Hence any coordinate transform that leaves these volumes invariant is considered a suitable transformation.
Group theoretically, Jacobian matrices of suitable transformations (i.e.\ that satisfy $|\mathrm{det}\mathbf{J}(x\rightarrow x')|=1$) belong to $SL^\pm(n,\mathbb{R})$, the special linear group of $n$-dimensional matrices with $\pm1$ determinant. If attention is restricted to handedness-preserving $+1$ determinant Jacobians, the symmetry group is instead the usual special linear group $SL(n,\mathbb{R})$. In two-dimensions this is isomorphic to the symplectic group $SL(2,\mathbb{R})=Sp(2,\mathbb{R})$, the symmetry group of canonical transformations, so there is no difference between canonical transforms and suitable transforms in two-dimensional state-spaces. 
In higher dimensions, $Sp(2n,\mathbb{R})$ is a subgroup of $SL(2n,\mathbb{R})$, reflecting the fact that canonical transformations must satisfy a more restrictive condition than suitable transformations.
The special linear group $SL(n,\mathbb{R})$ may be thought to depend on $n^2-1$ parameters, whilst the symplectic group $Sp(2,\mathbb{R})$ depends on $n(n+1)/2$ (and isn't defined for odd $n$).
As $1=\text{det}\mathbf{J}=\text{det}\left(e^{\epsilon \mathbf{X}}\right)=\left[\text{det}\left(e^\mathbf{X}\right)\right]^\epsilon=\left[e^{\text{tr}\mathbf{X}}\right]^\epsilon$, elements $\mathbf{X}$ of the generating Lie algebra $sl(n,\mathbb{R})$ are traceless matrices.

\subsection{Classical fields (spanning the state-space with a complete set of solutions)}\label{sec:classical_fields}
As later chapters will focus on quantum fields, it is useful to consider information conservation in the context of classical fields.
Many of the above considerations also apply to field theories with the understanding that the individual states, $x(t)$, become fields, $\phi(\mathbf{q},t)$, and that everything else should be altered accordingly.
There are details that arise in the field theoretic context that are worth remarking upon, however.
In particular, free fields commonly Fourier transform to sets of decoupled harmonic oscillators, and this allows a property of harmonic oscillators to be exploited.
Namely the famous independence of period upon amplitude already highlighted in section \ref{2.3.3}.
A finely-tuned property of the harmonic oscillator that, due to the trivial evolution of $x(t)$, prevents relaxation of $\rho(x,t)$.
The exploit parallels the interaction picture of quantum theory, in that it extracts the `trivial' free behavior of the free field, in principle leaving behind the effect of the `more interesting' interactions.

The harmonic oscillator Hamiltonian, $H=(p^2+\omega^2 q^2)/2$, possesses the $U(1)$ symmetry $(\omega q, p)^\text{tr}\rightarrow R(\theta)(\omega q, p)^\text{tr}$, where $R(\theta)$ is the standard rotation matrix $R(\theta)=\left( \begin{smallmatrix}\cos\theta & -\sin\theta\\ \sin\theta & \cos\theta\end{smallmatrix}\right)$.
This results in a particularly simple evolution of $\rho(x,t)$.
By spanning the state-space with elliptical polar coordinates $r=\sqrt{\omega q^2+p^2/\omega}$ and $\theta=\arctan(p/\omega q)$, the equations of motion may be expressed simply $\dot{r}=0$, $\dot\theta=-\omega$.
Systems merely traverse paths of constant $r$ in a clockwise fashion at constant angular velocity $\dot{\theta}$.
By adopting the abbreviated coordinates $x=(q',p')^\text{tr}=(\sqrt{\omega} q, p/\sqrt{\omega})^\text{tr}$%
\footnote{The coordinates $x=(q',p')=(\omega q, p)^\text{tr}$ would also result in the trivial rotating evolution \eqref{rotating_evolution}. However, as this would amount to a stretch of coordinates in a single direction, the Jacobian determinant of the transformation does not equal unity, and hence it is not a suitable transformation in the above section \hyperref[sec:suitable]{2.3.6}. It does not leave entropy form invariant.}%
, the evolution of a system may be summarized by
\begin{align}\label{rotating_evolution}
x(t)= R(-\omega t) x(0).
\end{align}
With time, the corresponding probability distribution, $\rho(x,t)$, is simply rotated around the phase-space,
\begin{align}
\rho(x,t)=\rho(R(\omega t)x,0).
\end{align}
This particularly simple rotational evolution may be extracted into the definition of the coordinates by assigning coordinates $x'$ that co-move with the rotation, $x'=R(-\omega t)x$.
Then the evolution of $\rho(x')$ is frozen $\rho(x',t)=\rho(x',0)$.
And as a result, the entropy is trivially conserved, $\mathrm{d}S/\mathrm{d}t=0$.
If there were perturbations to the Hamiltonian, however, this would create perturbations in $\rho$, resulting in the same relaxation described earlier this chapter. 
And these perturbations would be much easier to view in the absence of the trivial free evolution.
A similar logic is used in the process of quantization of classical field theories, where it is common to employ a complete set of classical solutions \cite{Mandl_Shaw,Peskin_Schroeder}. 
The following example is intended to illustrate how this may be used to adopt useful bases for the description of field configurations.

Consider the complex Klein-Gordon field $\phi$ on a background spacetime $q=(t,\mathbf{q})$, where $\mathbf{q}\in \mathbb{R}^d$.
The corresponding Lagrangian is
\begin{align}
L=\int \dee^dq\left[\partial_\nu\phi\partial^\nu\phi^*-\mu^2\phi\phi^*\right].
\end{align}
In $\mathbf{q}$-space, the state of the field is described by coordinates $x=(\phi,\pi)$, and so the entropy would involve a functional integration $S=-\int\mathcal{D}\phi\mathcal{D}\pi\rho(\dots)\log\rho(\dots)$.
In order to avoid such a functional integration, assume the field to reside in a box, so that it may be expanded in a countable set of Fourier modes, 
\begin{align}\label{base_fourier_transform}
\phi(\mathbf{q},t)=\sum_\mathbf{k}\frac{e^{i\mathbf{k.q}}}{\sqrt{V}}\phi_\mathbf{k}(t),
\end{align}
where $V$ is the volume of the box. In terms of the Fourier mode components $\phi_\mathbf{k}(t)$, the Lagrangian may be written,
\begin{align}
L=\sum_\mathbf{k}\left[\dot{\phi}_\mathbf{k}(t)\dot{\phi}_\mathbf{k}^*(t) -(\mathbf{k}^2+\mu^2)\phi_\mathbf{k}(t)\phi_\mathbf{k}^*(t)\right].
\end{align}
Upon identification of canonical momenta $\pi_\mathbf{k}(t)=\partial L/\partial\dot{\phi}_\mathbf{k}=\dot{\phi}_\mathbf{k}^*(t)$ and their complex conjugates $\pi^*_\mathbf{k}(t)=\dot{\phi}_\mathbf{k}(t)$, this corresponds to the Hamiltonian
\begin{align}
H=\sum_\mathbf{k}\left[\pi_\mathbf{k}(t)\pi_\mathbf{k}^*(t) -\omega_\mathbf{k}^2\phi_\mathbf{k}(t)\phi_\mathbf{k}^*(t)\right],
\end{align}
where $\omega_\mathbf{k}=\sqrt{\mathbf{k}^2+\mu^2}$.
Hamilton's equations imply that each field mode $\phi_\mathbf{k}(t)$ satisfies the complex harmonic oscillator equation $\ddot{\phi}_\mathbf{k}=-\omega_\mathbf{k}\phi_\mathbf{k}$. This has the general solution
\begin{align}\label{complex_HO_solution}
\phi_\mathbf{k}(t) = C_\mathbf{k}e^{i\omega_\mathbf{k}t}+D_\mathbf{k}e^{-i\omega_\mathbf{k}t},
\end{align}
where $C_\mathbf{k},D_\mathbf{k}\in\mathbb{C}$. 
Note the dimensionality of the solution. Solution $\phi_\mathbf{k}(t)$ depends upon four real parameters corresponding to complex constants $C_\mathbf{k}$ and $D_\mathbf{k}$. This parallels the four real parameters required to specify complex canonical coordinates $\phi_\mathbf{k}$ and $\pi_\mathbf{k}$.
The complete set of solutions to $\phi(\mathbf{x},t)$ follows by substituting \eqref{complex_HO_solution} into \eqref{base_fourier_transform},
\begin{align}\label{expansion_by_mode}
\phi(\mathbf{q},t)=\sum_\mathbf{k}\frac{e^{i\mathbf{k.x}}}{\sqrt{V}}\left(C_\mathbf{k}e^{i\omega_\mathbf{k}t}+D_\mathbf{k}e^{-i\omega_\mathbf{k}t}\right).
\end{align}
By making the substitutions $C_\mathbf{k}=b^*_\mathbf{-k}/\sqrt{2\omega_\mathbf{k}}$ and $D_\mathbf{k}=a_\mathbf{k}/\sqrt{2\omega_\mathbf{k}}$, expansion \eqref{expansion_by_mode} may be brought into the usual form 
\begin{align}\label{canonical_field_expansion}
\phi(\mathbf{q},t) = \sum_\mathbf{k}\frac{1}{\sqrt{2\omega_\mathbf{k}V}}\left(a_\mathbf{k}e^{-i k_\nu q^\nu}+b^*_\mathbf{k}e^{i k_\nu q^\nu}\right)
\end{align}
as for instance may be found in reference \cite{Mandl_Shaw}. 
The state-space may be spanned either by canonical coordinates $\phi_\mathbf{k}$ and $\pi_\mathbf{k}$, or by $a_\mathbf{k}$ and $b_\mathbf{k}$, the parameters used to span the complete set of solutions. 
Either
\begin{align}
x=(\phi_{\mathbf{k}_1},\pi_{\mathbf{k}_1},\phi_{\mathbf{k}_2},\pi_{\mathbf{k}_2},...)\quad\text{or     }\quad
x=(a_{\mathbf{k}_1},b_{\mathbf{k}_1},a_{\mathbf{k}_2},b_{\mathbf{k}_2},...).
\end{align}
This seems to suggest two alternative ways of calculating the entropy. Either
\begin{align}
S&=-\int\rho(\dots)\log\rho(\dots)\prod_\mathbf{k}\dee(\Real\phi_\mathbf{k})\dee(\Imag\phi_\mathbf{k})\dee(\Real\pi_\mathbf{k})\dee(\Imag\pi_\mathbf{k}),
\end{align}
or
\begin{align}\label{entropy_from_interaction_basis}
S&=-\int\rho(\dots)\log\rho(\dots)\prod_\mathbf{k}\dee(\Real a_\mathbf{k})\dee(\Imag a_\mathbf{k})\dee(\Real b_\mathbf{-k})\dee(\Imag b_\mathbf{-k}).
\end{align}
In order for equation to \eqref{entropy_from_interaction_basis} to be valid, however, the coordinate transform away from the canonical coordinates must be suitable (in the sense of the above section). It's Jacobian determinant must equal one.
As for each wave vector $\mathbf{k}$ the canonical coordinates $(\phi_\mathbf{k},\pi_\mathbf{k})$ transform only to $(a_\mathbf{k},b^*_{-\mathbf{k}})$, the Jacobian matrix for the transformation may be written as a direct sum, in block diagonal form
\begin{align}
\mathbf{J}=\mathbf{J}_{\mathbf{k}_1}\oplus\mathbf{J}_{\mathbf{k}_2}\oplus ... =
\begin{pmatrix}
\mathbf{J}_{\mathbf{k}_1}&0&\dots \\
0 & \mathbf{J}_{\mathbf{k}_2}\\
\vdots&&\ddots
\end{pmatrix},
\end{align}
where 
\begin{align}
\mathbf{J}_\mathbf{k}=\frac{\partial \left(\Real\phi_{\mathbf{k}},\Imag\phi_{\mathbf{k}},\Real\pi_{\mathbf{k}},\Imag\pi_{\mathbf{k}}\right)}{\partial \left(\Real a_{\mathbf{k}},\Imag a_{\mathbf{k}},\Real b_{-\mathbf{k}},\Imag b_{-\mathbf{k}}\right)}.
\end{align}
As a result its determinant is the product of the individual determinants of each $\mathbf{J}_\mathbf{k}$,
\begin{align}
|\det\mathbf{J}|=\prod_\mathbf{k}|\det \mathbf{J}_\mathbf{k}|.
\end{align}
With only a little tedium, it is possible to show that $|\det \mathbf{J}_\mathbf{k}|=1$ for all $\mathbf{k}$, and so $|\det\mathbf{J}|=1$, meaning the transform does qualify as suitable.
Hence the state-space of an individual field mode $\phi_\mathbf{k}$ may be described in terms of coordinates $a_\mathbf{k}$ and $b^*_\mathbf{-k}$. And the entire field state may be described by a complete set of these (for all $\mathbf{k}$).
In these coordinates the trivial free-field behavior is hidden, so that for any individual system $d a_\mathbf{k}(t)/d t=d b_\mathbf{k}(t)/d t=0$ for all $\mathbf{k}$.
Hence, the distribution $\rho(a_{\mathbf{k}_1},b_{\mathbf{k}_2},...)$ is frozen, $d \rho/dt=0$.
And so clearly the entropy is conserved, $d S/dt=0$.

Classical Klein-Gordon scalar field theory is classically mechanical, and so it is not surprising that it conserves information.
Other partial differential equations in physics, particularly those that are meant to be effective descriptions of underlying phenomena, do not conserve information in the same way, however.
Take the heat equation,
\begin{align}\label{heat_equation}
\left(\partial_t-\nabla^2\right)\phi(q,t)=0,
\end{align}
for instance. For the purposes of argument, expand an initial distribution $\phi(q,0)$ in the tautological basis $\{\delta^{(n)}(q-q')\}$,
\begin{align}
\phi(q,0)=\int\mathrm{d}^3q'\phi(q',0)\delta^{(n)}(q-q').
\end{align}
The basis is referred to as tautological as the components of $\phi(q,0)$ are $\phi(q',0)$ in this basis.
According to heat equation \eqref{heat_equation}, an initial point distribution evolves as
\begin{align}
\delta^{(n)}(q-q')\quad\longrightarrow\quad K_\text{H}(q,q',t)=\frac{1}{(4\pi t)^{n/2}}e^{-|q-q'|^2/4t}.
\end{align}
It spreads out with time in accordance with the familiar notion of heat dispersal/diffusion. 
As the heat equation is linear, a general solution may be expanded in terms of such individual solutions, 
\begin{align}
\phi(q,t)=\int\mathrm{d}^3q'\phi(q',0)K_\text{H}(q,q',t).
\end{align}
Just like the above Klein-Gordon case, in this basis-of-solutions the components are frozen (in this case equal to $\phi(q',0)$).
The key difference to the entropy-conserving Klein-Gordon case is that, though the $K_\text{H}(q,q',0)$ initially form a good basis for the function-state-space, they do not remain as such. 
As functions $K_\text{H}(q,q',t)$ spread with time, they cannot capture quickly varying fields.
In contrast, the Fourier modes employed in the above treatment of the Klein-Gordon field \emph{do} remain a good basis at all times.

Of course, much of this rings true with common intuition on the heat equation.
The heat equation is not intended to describe fundamental behavior, but merely the effective flow created by an underlying many-body condensed matter physics.
For this reason it is not expected to conserve information (entropy).
And of course as $t\rightarrow\infty$ the heat equation is expected to approach an equilibrium configuration corresponding to initial and boundary conditions. 
A steady-state.
In the present conditions, in an unbounded space, states are expected to approach zero, $\lim_{t\rightarrow\infty}\phi=0$, regardless of the initial conditions. 
As all states approach a single state, a rise in entropy follows.
The difference is that this rise in entropy is not \emph{de facto} like that of a iRelax theory, but an exact, true rise in entropy, in violation of information conservation.

\section{Generalized continuous state-space iRelax theories} \label{2.4} 
The iRelax theories introduced in section \ref{2.3} were developed around use of differential entropy, equation \eqref{differential_entropy}, as the information measure, \ref{Ing4}. 
Differential entropy draws criticism on a number of grounds, however. 
The expression for differential entropy, equation \eqref{differential_entropy}, is dimensionally inconsistent, for instance, as it involves the logarithm of $\rho(x)$ which, as a density, possesses inverse units to that of $x$.
Furthermore, in contrast to the discrete entropy which is independent of how states are labeled and hence invariant under relabelings $i\rightarrow j(i)$, differential entropy would generally be expected to change under a coordinate transform $x\rightarrow y(x)$.
Even changing the units of $x$ would affect differential entropy.
(The class of coordinate transforms that do keep differential entropy \eqref{differential_entropy} form-invariant are discussed above in the section titled \hyperref[sec:suitable]{2.3.6}.)
Nevertheless, on the face of it, differential entropy does appear sympathetic to an important guiding principle, the principle-of-indifference. 
Discrete entropy \eqref{discrete_entropy} treats all discrete states $i$ equally, or perhaps `indifferently'. Is it not unreasonable to expect this indifference of states to carry over to continuous entropy?
For the following reason the principle-of-indifference is not so easily applied to continuous systems, however.
Equation \eqref{differential_entropy} does appear to treat all states $x$ equally, and as desired this results in a distribution of maximum entropy that is uniform.
But this uniformity may only be superficial, as a distribution that is uniform with respect to coordinates $x$ is not necessarily uniform with respect to another set of other coordinates $y(x)$\footnote{Suppose an ensemble of spheres varied such that their volumes were uniformly distributed. Their radii would not be uniformly distributed. The ambiguity in the concept of uniformity is the reason for the complications that arise in extending entropy to continuous spaces. This fact is famously exploited in Bertrand's paradox \cite{Bertrand}.}.
The implicit assumption made in using differential entropy \eqref{differential_entropy} is that coordinates $x$ are special. Or, using the terminology from above, suitable.
Of course this may not be the case.
Other critiques of expression \eqref{differential_entropy} point to the fact that it does not appear to be `derivable' from discrete entropy \eqref{discrete_entropy}.
Rather, it simply seems to have been written down.
Arguably the most damning criticism of differential entropy, however, is that given by Edwin Jaynes \cite{Jaynesbook}. Jaynes points out that Shannon's theorem establishing discrete entropy as the correct information measure (theorem 2 in \cite{Shannon}) does not follow through for differential entropy.  
For all of these reasons, the present section is devoted to exploring the consequences of correcting these seeming inadequacies in differential entropy. 

To proceed, a leaf will be taken from Edwin Jaynes. 
For this reason, the corrected version of differential entropy shall be referred to as Jaynes entropy.
The Jaynes entropy shall be used to help generalize the notion of entropy and thus the so-far developed notion of iRelax theories. 
The following theorem and proof is a repackaging of the passage beginning on page 375 of reference \cite{Jaynesbook}.
And as pointed out by Jaynes, the proof ``can be made as rigorous as we please, but at considerable sacrifice of clarity''.
\begin{theorem}
The correct generalization of discrete entropy \eqref{discrete_entropy} to a system with the continuous state-space $\Omega=\mathbb{R}^n$, is the Jaynes entropy
\begin{align}\label{Jaynes_entropy}
S=-\int_\Omega\mathrm{d}^nx\rho(x)\log\frac{\rho(x)}{m(x)},
\end{align}
where $m(x)$ is the normalized density-of-states at state $x$.
\end{theorem}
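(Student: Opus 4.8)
The plan is to recover the Jaynes entropy \eqref{Jaynes_entropy} as the continuum limit of the discrete entropy \eqref{discrete_entropy}, carefully tracking how a finite set of discrete states comes to populate the continuous space $\Omega=\mathbb{R}^n$. The essential insight is that discrete entropy \eqref{discrete_entropy} is manifestly relabeling-invariant precisely because it makes no reference to \emph{where} the states sit; when passing to the continuum, the locations of the states become relevant, and their limiting spatial density is exactly the density-of-states $m(x)$ that appears in \eqref{Jaynes_entropy}. So the strategy is not to postulate \eqref{Jaynes_entropy} but to derive it, with $m(x)$ emerging from the limit rather than being inserted by hand.

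First I would introduce a sequence of discrete state-spaces: a collection of $N$ points $\{x_i\}\subset\Omega$ that become dense in $\Omega$ as $N\to\infty$, whose limiting (normalized) number-density is $m(x)$. Concretely, the count of points lying in any region $\omega\subset\Omega$ is taken to approach $N\int_\omega m(x)\,\mathrm{d}^nx$, so that the state-space volume associated to the point $x_i$ is the reciprocal local density,
\begin{align}
\Delta V_i \approx \frac{1}{N\,m(x_i)}.
\end{align}
The continuous distribution $\rho(x)$ and the discrete probabilities $\{p_i\}$ are then linked by assigning to each point the probability mass in its cell, $p_i \approx \rho(x_i)\,\Delta V_i = \rho(x_i)/\!\left(N m(x_i)\right)$, which correctly respects normalization since $\sum_i p_i \approx \int_\Omega \rho\,\mathrm{d}^nx = 1$.

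Next I would substitute this relation into the discrete entropy \eqref{discrete_entropy} and convert the Riemann-type sum $\sum_i (\cdots)\,\Delta V_i$ into an integral:
\begin{align}
S_{\mathrm{d}} = -\sum_i p_i\log p_i
\;\longrightarrow\;
-\int_\Omega \mathrm{d}^nx\,\rho(x)\log\frac{\rho(x)}{N\,m(x)}
= \log N \;-\; \int_\Omega\mathrm{d}^nx\,\rho(x)\log\frac{\rho(x)}{m(x)}.
\end{align}
The divergent piece $\log N$ is independent of $\rho(x)$ — it is a pure additive constant reflecting the (infinite) information content of ever-finer state specification. Discarding this $\rho$-independent offset leaves the finite, physically meaningful part, which is exactly the Jaynes entropy \eqref{Jaynes_entropy}. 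As an immediate consistency check, under a coordinate change $x\to y(x)$ both $\rho$ and $m$ transform as densities and acquire the same Jacobian factor, so the ratio $\rho/m$ is a genuine scalar and $\rho\,\mathrm{d}^nx$ is invariant; hence \eqref{Jaynes_entropy} is coordinate-invariant, repairing precisely the defect of differential entropy \eqref{differential_entropy} (which is the special case $m(x)=\mathrm{const}$).

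The main obstacle is the handling of the divergent $\log N$ term and the interchange of the $N\to\infty$ limit with the integration — this is the point at which, in Jaynes' words, rigor is bought at the expense of clarity. One must argue that subtracting an $N$-dependent but $\rho$-independent constant is legitimate because only entropy \emph{differences} carry operational meaning (consistent with the \emph{de facto} character of entropy rise established in \ref{Ing6}), and that the sum-to-integral passage is controlled for the smooth $\rho$ and $m$ of interest. A secondary subtlety is motivating why $m(x)$ should be regarded as a fixed, $\rho$-independent feature of the state-space itself rather than a property of the particular distribution; I would emphasize that $m(x)$ encodes the limiting arrangement of the underlying discrete states and is therefore the natural continuum successor to the uniform weighting implicit in \eqref{discrete_entropy}, which foreshadows its later identification with $\rho_{\text{eq}}$ and ultimately with $|\psi|^2$ in section \ref{2.5}.
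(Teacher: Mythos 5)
Your proposal is correct and follows essentially the same route as the paper's proof: both discretize $\Omega$ with points whose limiting number-density is $m(x)$, identify $p_i \approx \rho(x_i)/(N\,m(x_i))$, pass the sum to an integral, and subtract the divergent $\rho$-independent $\log N$ term before taking the limit (the paper works the spacings out in one dimension "without loss of generality," while you phrase it with $n$-dimensional cells, but the argument is the same). Your added remarks on coordinate invariance and on the operational legitimacy of discarding $\log N$ are welcome refinements rather than a different method.
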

\begin{proof}
Without loss of generality, consider a discrete system in which each state $i$ may be associated with a one-dimensional continuous position $x_i$.
Suppose there were $n$ such states in all $\Omega$.
The proportion of states in a region $\omega\subset\Omega$ is given by 
\begin{align}
\frac{1}{n}\sum_{x_i\in\omega}1.
\end{align}
As discussed above, the number of states per unit state-space should not be considered uniform unless coordinates $x$ are `special' in the sense discussed above. 
A uniform density-of-states should not be assumed in general. 
To take account of this, in the limit of $n\rightarrow \infty$, the proportion of states in the region is given by
\begin{align}\label{discrete_to_continuous_measure}
\frac{1}{n}\sum_{x_i\in\omega}1 \rightarrow \int_\omega\mathrm{d}x\, m(x),
\end{align}
where $m(x)$ is the density-of-states with respect to coordinate $x$.
Jaynes called the function $m(x)$ the invariant measure.  
It is strictly positive, of course, and normalized in the sense that $\int_\Omega m(x) \mathrm{d}x=1$.
For the present purposes it is also taken to be smooth and suitably well-behaved. 
In this context $m(x)$ has a clear interpretation, and so rather than adopting Jaynes' terminology, the $m(x)$ distribution shall be referred to accordingly, as the \emph{density-of-states}. The iRelax theories discussed in section \ref{2.3} are recovered in the case of a uniform $m(x)$, modulo issues of normalizing a uniform $m(x)$ over $\Omega = \mathbb{R}^n$.
At any point $x_i$, the discrete analog of the probability density $\rho(x)$ is given by
\begin{align}\label{limit_1}
\rho_i= \frac{p_i}{x_{i+1}-x_i}.
\end{align}
This tends to $\rho(x)|_{x=x_i}$ as $n\rightarrow\infty$.
Similarly, the density-of-states at the same point $x_i$ is
\begin{align}\label{limit_2}
m_i= \frac{1/n}{x_{i+1}-x_i},
\end{align}
which tends to $m(x)|_{x=x_i}$ as $n\rightarrow\infty$.
Equations \eqref{limit_1} and \eqref{limit_2} suggest that prior to taking the limit, the $p_i$ in discrete entropy \eqref{discrete_entropy} should be substituted as $p_i=\rho_i/(nm_i)$.
Hence discrete entropy \eqref{discrete_entropy} may be expressed
\begin{align}
S=-\sum_{i}^{n}p_i\log p_i=-\sum_{i}^{n}\frac{\rho_i}{nm_i}\log\left(\frac{\rho_i}{nm_i}\right).
\end{align}
Upon taking the limit $n\rightarrow\infty$ (noting equation \eqref{discrete_to_continuous_measure} above), this would give
\begin{align}
S&=-\int \mathrm{d}x\rho(x)\log\left(\frac{\rho(x)}{m(x)}\right)+\lim_{n\rightarrow\infty}\log n.
\end{align}
However the $\log n$ term clearly diverges in the limit. To keep the final result finite, it should be subtracted before the limit is taken.
For this reason, the Jaynes entropy is defined 
\begin{align}\label{remove_divergent_term}
S:= \lim_{n\rightarrow\infty}\left(S_\text{discrete} - \log n\right).
\end{align}
Hence, the final expression for Jaynes entropy is given by equation \eqref{Jaynes_entropy}.
\end{proof}

\subsection{State propagators for information conservation (the modified Liouville theorem)}\label{sec:modified_state_propagators}
Now that the measure of information (\ref{Ing4}) has changed from the differential entropy \eqref{differential_entropy} to the Jaynes entropy \eqref{Jaynes_entropy}, there are a number of corresponding changes that must be made to the other ingredients outlined in section \ref{2.1}.
To find the consequences of imposing conservation of the Jaynes entropy, the same approach as section \ref{2.3} will be followed. 
State propagators (Green's functions), $T(x,x',t)$, defined by the operation
\begin{align}\label{propagator_defining_operation}
\rho(x,t)=\int_\Omega T(x,x',t)\rho(x',0)\mathrm{d}^nx',
\end{align}
will be used to demonstrate the circuit of implications shown in figure \ref{logic_map}. (Recall that $T(x,x',t)$ is the probability density that a system initially in state $x'$ moves to state $x$ in time $t$.)

\begin{theorem}
Conservation of Jaynes information \eqref{Jaynes_entropy} $\implies$ No many-to-one on a $\Omega=\mathbb{R}^n$ state-space\label{modified_no_many_to_one}.
\end{theorem}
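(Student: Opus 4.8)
The plan is to imitate, step for step, the no-many-to-one argument of section \ref{2.3}, tracking the single new element — the density-of-states $m(x)$ — as it propagates through the proof. The only structural change from differential entropy \eqref{differential_entropy} to the Jaynes entropy \eqref{Jaynes_entropy} is the replacement of $\rho$ by $\rho/m$ inside the logarithm. Since $m(x)$ is smooth, strictly positive, and time-independent, I expect it to survive the argument merely as harmless multiplicative factors and additive $\log m$ terms, and not to alter the final conclusion.

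First I would write down the conservation condition. Using the propagator \eqref{propagator_defining_operation} to evolve the distribution, conservation of \eqref{Jaynes_entropy} between times $0$ and $t$ reads
\begin{align}
&\int_\Omega \rho(x',0)\log\frac{\rho(x',0)}{m(x')}\,\mathrm{d}^nx'\nonumber\\
=&\int_\Omega\left[\int_\Omega T(x,x',t)\rho(x',0)\,\mathrm{d}^nx'\right]\log\frac{\int_\Omega T(x,x',t)\rho(x',0)\,\mathrm{d}^nx'}{m(x)}\,\mathrm{d}^nx,
\end{align}
which must hold for every admissible $\rho(x',0)$ and every $t$. The natural analog of the ``uniform on $\omega$'' probe is now the density-weighted distribution $\rho(x',0)=m(x')/M$ supported on a finite region $\omega$, with $M:=\int_\omega m\,\mathrm{d}^nx'$, since the maximum-entropy distribution is $m$ rather than a coordinate-uniform density. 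With this probe the left-hand side collapses to $-\log M$, and, writing $G(x):=\int_\omega T(x,x',t)m(x')\,\mathrm{d}^nx'$ and using $\int_\Omega T(x,x',t)\,\mathrm{d}^nx=1$ (whence $\int_\Omega G\,\mathrm{d}^nx=M$), the right-hand side splits into $\tfrac{1}{M}\int_\Omega G\log(G/m)\,\mathrm{d}^nx$ minus a constant piece $-\log M$. The two $-\log M$ terms cancel — the exact $m$-weighted counterpart of the $-\log V$ cancellation leading to \eqref{theorem2_misc2} — leaving the clean condition
\begin{align}
0 = \int_\Omega G(x)\log\frac{G(x)}{m(x)}\,\mathrm{d}^nx.
\end{align}

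Next I would localize $\omega=B_{\epsilon_1}(x'_1)\cup B_{\epsilon_2}(x'_2)$ into two small disjoint balls and treat $T$ and $m$ as constant on each, so that $G(x)\approx \pi\epsilon_1^n m(x'_1)T(x,x'_1,t)+\pi\epsilon_2^n m(x'_2)T(x,x'_2,t)$, in direct parallel with \eqref{theorem2_misc3}. Differentiating once in $\epsilon_1$ and once in $\epsilon_2$ then annihilates the additive $-\log m$ piece together with every term depending on only one radius, exactly as in the passage to \eqref{theorem2_misc4}, yielding
\begin{align}
0 = \int_\Omega\frac{n^2\epsilon_1^{n-1}\epsilon_2^{n-1}\,\pi^2 m(x'_1)m(x'_2)\,T(x,x'_1,t)T(x,x'_2,t)}{\pi\epsilon_1^n m(x'_1)T(x,x'_1,t) + \pi\epsilon_2^n m(x'_2)T(x,x'_2,t)}\,\mathrm{d}^nx.
\end{align}
The integrand is non-negative, so it must vanish pointwise; and because $m$ is \emph{strictly} positive, the prefactor $m(x'_1)m(x'_2)$ cannot produce the zero on its own, forcing $T(x,x'_1,t)T(x,x'_2,t)=0$ for all $x$, all $t$, and all $x'_1\neq x'_2$. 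Hence no two distinct initial states are carried to a common final state, which is the claim.

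The main point to watch is precisely this positivity-and-cancellation bookkeeping: I must verify that the $\log m$ contributions, which now appear on both sides, genuinely drop out under the mixed partial derivative (they do, being additively separable in the two radii), and that the strict positivity of $m$ keeps the conclusion identical to the $m\equiv\text{const}$ case rather than admitting spurious zeros. Everything else is a routine transcription of the section \ref{2.3} proof, and the delta-function/definite-state divergence that blocks the \emph{one}-to-many direction is not an issue here, exactly as in the differential-entropy version.
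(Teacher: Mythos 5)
Your proposal is correct and follows essentially the same route as the paper: the $m$-weighted probe distribution on $\omega$ (the paper's $\rho(x',0)=cm(x')$), the cancellation of the $\log$-normalization constants to reach $0=\int_\Omega G\log(G/m)\,\mathrm{d}^nx$, the two-ball localization with $T$ and $m$ approximately constant, the mixed $\partial_{\epsilon_1}\partial_{\epsilon_2}$ derivative, and the non-negativity argument forcing $T(x,x'_1,t)T(x,x'_2,t)=0$. The only (immaterial) discrepancy is that the paper's final integrand carries an extra factor of $m(x)$, which, like your factors of $\pi$, is strictly positive and does not affect the conclusion.
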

\begin{proof}
For Jaynes entropy \eqref{Jaynes_entropy} to be conserved under evolution \eqref{propagator_defining_operation}, propagator $T(x,x',t)$ must satisfy
\begin{align}\label{modified_theorem_misc_1}
\int_\Omega\mathrm{d}^nx\,\rho(x,0)\log\frac{\rho(x,0)}{m(x)}
=\int_\Omega\mathrm{d}^nx\,\left[\int_\Omega\mathrm{d}^nx'\,T(x,x',t)\rho(x',0)\right]\nonumber\\ \times\log\frac{\int_\Omega\mathrm{d}^nx'\,T(x,x',t)\rho(x',0)}{m(x)}
\end{align}
for all initial distributions $\rho(x',0)$ and all time intervals $t$. Take $\rho(x',0)$ equal to the density-of-states equivalent of a uniform distributed region, $\rho(x,0) = cm(x)$ on $\omega\subset\Omega$, and equal to zero otherwise, where $c$ is normalized such that $\int_\omega cm(x)\mathrm{d}^nx=1$. Then equation \eqref{modified_theorem_misc_1} becomes
\begin{align}\label{modified_theorem_misc_2}
\resizebox{\textwidth}{!}{$
0=\int_\Omega\mathrm{d}^nx\,\left[\int_\omega\mathrm{d}^nx'\,cm(x')T(x,x',t)\right]\log\left[\frac{1}{m(x)}\int_\omega\mathrm{d}^nx'\,m(x')T(x,x',t)\right].$}
\end{align}
Now take region $\omega$ to be the union of two small non-overlapping balls, $\omega=B_{\epsilon_1}(x'_1)\cup B_{\epsilon_2}(x'_2)$, centered on points $x'_1$ and $x'_2$, of radii $\epsilon_1$ and $\epsilon_2$. Take $\epsilon_1$ and $\epsilon_2$ to be small enough so that for given $x$ and $t$, $T(x,x',t)$ and $m(x')$ are approximately constant on each of the balls. Then,
\begin{align}\label{modified_theorem_misc_3}
\resizebox{\textwidth}{!}{$
\int_\omega\mathrm{d}^nx'\,cm(x')T(x,x',t)=\pi\epsilon_1^ncm(x'_1)T(x,x'_1,t)+\pi\epsilon_2^ncm(x'_2)T(x,x'_2,t).$}
\end{align}
By substituting \eqref{modified_theorem_misc_3} into \eqref{modified_theorem_misc_2} and then taking derivatives with respect to $\epsilon_1$ and $\epsilon_2$, the information conservation condition may be found to reduce to
\begin{align}
0=\int_\Omega\mathrm{d}^nx\,m(x)\frac{\epsilon_1^{n-1}\epsilon_2^{n-1}m(x'_1)m(x'_2)T(x,x'_1,t)T(x,x'_2,t)}{\epsilon_1^nm(x'_1)T(x,x'_1,t)+\epsilon_2^nm(x'_2)T(x,x'_2,t)}.
\end{align}
For this final equality to be satisfied, the product $T(x,x'_1,t)T(x,x'_2,t)$ must vanish for all $x$, $t$ and $x'_1\neq x'_2$. Hence no two distinct states may evolve to the same state.
\end{proof}

\begin{conjecture}
Conservation of Jaynes information \eqref{Jaynes_entropy} $\implies$ No one-to-many on a $\Omega=\mathbb{R}^n$ state-space\label{Jaynes_conjecture}.
\end{conjecture}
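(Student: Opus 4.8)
The plan is to argue by contraposition, reproducing the logic that settled the one-to-many case on a discrete state-space and then repairing the one place where it breaks down. Recall that discretely it was the maximum-information distribution $p_i=\delta_{ik}$, fed into \eqref{theorem_misc_1}, that produced \eqref{theorem_misc_2} and forbade probabilistic columns of $T$. The continuous counterpart is to substitute a sharply peaked initial distribution into the Jaynes conservation condition \eqref{modified_theorem_misc_1}. First I would record the obstacle already flagged in the chapter: a literal point distribution $\rho(x',0)=\delta^{(n)}(x'-x_0)$ sends both sides of \eqref{modified_theorem_misc_1} to $-\infty$ and yields a vacuous identity, which is precisely why the many-to-one direction is a theorem (it uses spread, density-of-states-weighted test distributions) while this direction is not.

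To extract content I would replace the point mass by a concentrating family: let $\rho_\eta(x')=c_\eta\,m(x')$ on a ball $B_\eta(x_0)$ and zero outside, with $c_\eta=(\int_{B_\eta}m)^{-1}$, so that $\rho_\eta\to\delta^{(n)}(\cdot-x_0)$ while for each $\eta>0$ the Jaynes entropy is finite, $S_0(\eta)=\log\!\int_{B_\eta(x_0)}m$, which diverges to $-\infty$ like $n\log\eta$. Under the evolution \eqref{propagator_defining_operation} this relaxes, for small $\eta$, to $\rho(x,t)\to T(x,x_0,t)$, the propagator issuing from $x_0$. Now suppose the law is one-to-many at $(x_0,t)$. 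If $T(\,\cdot\,,x_0,t)$ is genuinely spread over a set of positive measure, then its Jaynes entropy $S(t)=-\!\int T(x,x_0,t)\log\frac{T(x,x_0,t)}{m(x)}\,\mathrm{d}^nx$ is \emph{finite}, whereas $S_0(\eta)\to-\infty$; hence $S(t)>S_0(\eta)$ for small $\eta$, contradicting conservation. This disposes of the generic case immediately.

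The genuinely delicate case --- and the one the chapter's discussion of delta functions is really about --- is branching onto isolated targets, where $T(\,\cdot\,,x_0,t)$ concentrates on a measure-zero set, say with weights $w_1,w_2,\dots$ summing to one. Then $S(t)\to-\infty$ as well, and one must compare rates. I would evolve the regularized bump and evaluate the Jaynes entropy of the resulting separated bumps; to leading order I expect the divergent $n\log\eta$ pieces to cancel against those of $S_0(\eta)$, leaving the finite residual
\begin{align}
\lim_{\eta\to0}\bigl(S(t,\eta)-S_0(\eta)\bigr)=\sum_i w_i\log\frac{1}{w_i}\;>\;0,
\end{align}
which is exactly the discrete entropy \eqref{discrete_entropy} of the branching ratios (the density-of-states geometry entering only through subleading corrections absorbed in a careful treatment). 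This is the rigorous incarnation of the remark that the true entropy rise is merely ``hidden in the infinity'': regularizing and subtracting the common divergence exposes a strictly positive mixing entropy. Structurally the nonnegativity is underwritten by the monotonicity of relative entropy under a Markov kernel for which $m$ is stationary --- which it is, since $m$ is the unique maximum-entropy distribution and entropy is conserved --- with equality only for deterministic invertible maps.

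The hard part will be the analysis, not the idea. Making the singular case rigorous requires (i) the asymptotic expansion of $S_0(\eta)$ and $S(t,\eta)$ uniformly against the weight $m(x)$, (ii) a proof that the evolved bumps genuinely separate with branching weights converging to the $w_i$, and (iii) control of the $\eta\to0$ limit inside the entropy integral, including the subcase of branching onto a continuum, where the residual must be rewritten as a strictly positive relative entropy of the effective branching density against $m$. Supplying these estimates --- rather than the heuristic bump argument currently offered --- is the single gap between this conjecture and a theorem; the cleanest route may be to bypass the regularization entirely and invoke the equality condition of the data-processing inequality for relative entropy.
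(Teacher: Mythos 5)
You should first be clear about the status of this statement in the paper: the paper does \emph{not} prove it. It is deliberately left as a conjecture --- the ``missing link'' marked by a question mark in figure \ref{logic_map} --- and the only support offered is exactly the heuristic you are formalizing (replace the delta functions by tight bumps, so that the entropy rise upon splitting is no longer ``hidden in the infinity''). So there is no paper proof to measure you against, and parts of your proposal genuinely go beyond the paper: your generic-case argument (if $T(\,\cdot\,,x_0,t)$ is absolutely continuous, its Jaynes entropy is finite while $S_0(\eta)\sim n\log\eta\to-\infty$, contradicting conservation) is sound modulo routine regularity, and your observation that conservation forces $Tm=m$ (apply conservation to $\rho=m$ and use uniqueness of the entropy maximizer) is correct and is the right way to bring relative-entropy monotonicity into play.

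The branching case, however, contains a gap that is fatal to the strategy, and it is precisely the step you wave away in the parenthesis ``the density-of-states geometry entering only through subleading corrections.'' The geometry enters at leading order. If branch $i$ has weight $w_i$ and its map $f_i$ rescales the local state-count near $x_0$ by a factor $J_i$ (i.e.\ $J_i=m(f_i(x_0))\,|\det\mathbf{J}_i|/m(x_0)$ with $\mathbf{J}_i$ the Jacobian of $f_i$ at $x_0$), then each transported bump has entropy $S_0(\eta)+\log J_i+o(1)$, and separated branches give
\begin{align}
S(t,\eta)-S_0(\eta)=\sum_i w_i\log\frac{J_i}{w_i}+o(1),
\end{align}
which vanishes identically whenever $J_i=w_i$: concentrated bumps cannot detect branching whose branches compress state-count in proportion to their probabilities. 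Worse, this loophole is not an artifact of your test family. Take $\Omega=\mathbb{R}$, let $M(x)=\int_{-\infty}^{x}m\,\mathrm{d}x'$, and define the stochastic law: jump to $f_1(x)=M^{-1}\!\left(w_1M(x)\right)$ with probability $w_1$, or to $f_2(x)=M^{-1}\!\left(w_1+w_2M(x)\right)$ with probability $w_2$. The branch images are disjoint, $m(f_i)|f_i'|=w_im$ everywhere, and a one-line change of variables shows the Jaynes entropy \eqref{Jaynes_entropy} is \emph{exactly conserved for every} $\rho$ --- yet the law is one-to-many. (It is an iterated-function-system with contraction ratios equal to branch probabilities; higher-dimensional analogues exist.) This also defeats your proposed fallback: such kernels saturate the data-processing inequality for every input because a \emph{deterministic recovery map} exists (disjoint images mean the present determines the past uniquely), so the equality case of DPI characterizes recoverable (sufficient) channels, not ``deterministic invertible maps'' as your closing sentence asserts. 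The honest conclusion of your own program, carried through, is that entropy conservation enforces backwards determinism (no many-to-one, recoverability) but cannot enforce forwards determinism: the conjecture as stated appears to be false unless one adds hypotheses the paper does not state (e.g.\ that the law is a continuous-in-time flow, which excludes the jump-branching above). The discrete proof fails to carry over for exactly this reason: a discrete state cannot be subdivided, whereas a continuous state can be mapped into a region of arbitrarily small state-count --- which is the room the counterexample exploits.
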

This conjecture is identical to the no one-to-many conjecture of section \ref{2.3}, and as such the reader is referred to that section for arguments in its support. If the conjecture is accepted (as was necessary to do to proceed in section \ref{2.3}), then together the no many-to-one and no one-to-many requirements mean that a law of evolution which conserves Jaynes entropy \eqref{Jaynes_entropy} is one-to-one. In other words, conservation of the Jaynes information also implies the existence of trajectories. Hence, if a system trajectory that evolves from state $x_0$ to state $x_f$ in time $t$ is denoted $x_f(x_0,t)$, then a `perfect knowledge' information distribution may be written $\rho(x,t)=\delta^{(n)}(x-x_f(x_0,t))$.
By substitution of this into defining propagator relation \eqref{propagator_defining_operation}, it follows that the state propagators must take the form 
\begin{align}\label{one-to-one_propagators}
T(x,x_0,t)=\delta^{(n)}(x-x_f(x_0,t)).
\end{align}
Thus, the propagators appear unaffected by the new form of entropy. This is not the case for Liouville's theorem, however, as may be demonstrated as follows.

\begin{theorem}
Conservation of Jaynes entropy \eqref{Jaynes_entropy} $+$ one-to-one $\Rightarrow$ modified Liouville's Theorem.
\end{theorem}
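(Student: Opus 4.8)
The plan is to mirror the differential-entropy argument of section \ref{2.3} step for step, but to carry the density-of-states $m(x)$ through every change of variables rather than letting it drop out. The starting point is the information-conservation condition \eqref{modified_theorem_misc_2}, specialised to the one-to-one propagators \eqref{one-to-one_propagators}, namely $T(x,x_0,t)=\delta^{(n)}(x-x_f(x_0,t))$, which the preceding theorem and the no-one-to-many conjecture have already forced. The single computational workhorse is the substitution $x_0\rightarrow x_f(x_0,t)$, under which $\mathrm{d}^nx_0=|\det\mathbf{J}(x_f)|\,\mathrm{d}^nx_f$, with $\mathbf{J}$ the Jacobian matrix $\partial x_0/\partial x_f$ introduced in section \ref{2.3}.

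First I would evaluate the $m$-weighted delta-function integrals appearing in \eqref{modified_theorem_misc_2}. With $\rho(x,0)=cm(x)$ uniform-in-$m$ on a region $\omega$, substituting the propagator collapses both inner integrals to $|\det\mathbf{J}(x)|\,m(x_0(x,t))$ on $x_f(\omega)$ (and to zero elsewhere), while the $1/m(x)$ prefactor survives inside the logarithm. Converting the outer integral back to the initial coordinate via $|\det\mathbf{J}(x)|\,\mathrm{d}^nx=\mathrm{d}^nx_0$ reduces the whole condition to
\begin{align}
0=c\int_\omega\mathrm{d}^nx_0\,m(x_0)\log\left[\frac{|\det\mathbf{J}(x(x_0,t))|\,m(x_0)}{m(x(x_0,t))}\right].
\end{align}
Since this holds for every region $\omega$ and every $t$, and since $m(x_0)>0$, the logarithm must vanish pointwise, giving the \emph{first statement} of the modified Liouville theorem,
\begin{align}
|\det\mathbf{J}(x(x_0,t))|=\frac{m(x(x_0,t))}{m(x_0)},
\end{align}
which is equivalent to the invariance of the weighted measure $m(x)\,\mathrm{d}^nx$ along the flow: the density-of-states replaces bare volume as the conserved quantity.

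Next I would feed this back into the evolution \eqref{propagator_defining_operation} of a general distribution. The same delta-function substitution yields $\rho(x,t)=|\det\mathbf{J}(x)|\,\rho(x_0(x,t),0)$, and inserting the first statement converts this into $\rho(x_f,t)/m(x_f)=\rho(x_0,0)/m(x_0)$ for $x_0$ evolving to $x_f$; that is, the \emph{second statement} $\mathrm{d}(\rho/m)/\mathrm{d}t=0$, so it is now the ratio $\rho/m$, not $\rho$ alone, that is comoving-constant. To close the circuit of figure \ref{logic_map} I would finally verify the converse by a direct change of variables in $S(0)$: the combination $|\det\mathbf{J}(x_f)|\,\rho(x_0,0)$ collapses to $\rho(x_f,t)$ while the log-argument is left invariant by the second statement, so $S(0)=S(t)$ and Jaynes entropy is conserved.

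The main obstacle is bookkeeping rather than conceptual. In the differential-entropy case the analogous manipulation produced the clean $|\det\mathbf{J}|=1$, whereas here there are $m$-factors sitting in three distinct places—inside the outer integrand, inside the logarithm, and in the integration measure itself—and they must be tracked consistently so that they recombine into the single ratio $m(x_f)/m(x_0)$ rather than cancelling prematurely or leaving a spurious residual $m$-dependence. A secondary subtlety, inherited from section \ref{2.3}, is that the entire argument rests on the as-yet-unproven no-one-to-many conjecture stated just above (the missing link of figure \ref{logic_map}); as there, I would flag this dependence explicitly and otherwise assume it.
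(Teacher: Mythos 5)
Your proposal is correct and takes essentially the same route as the paper's own proof: the same uniform-in-$m$ test distribution inserted into \eqref{modified_theorem_misc_2}, the same change of variables $\mathrm{d}^nx_0=|\det\mathbf{J}(x_f)|\,\mathrm{d}^nx_f$ to force $|\det\mathbf{J}(x_f)|=m(x_f)/m(x_0)$ pointwise from vanishing on arbitrary $\omega$, and the same back-substitution into \eqref{propagator_defining_operation} to obtain $\mathrm{d}(\rho/m)/\mathrm{d}t=0$. Your closing converse check and the explicit flag on the no-one-to-many conjecture likewise mirror the paper's surrounding discussion, so there is nothing to correct.
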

\begin{proof}
Recall that in order for Jaynes entropy \eqref{Jaynes_entropy} to be conserved, propagator $T(x,x',t)$ must satisfy equation \eqref{modified_theorem_misc_2} for any subset of the state-space, $\omega\subset\Omega$.
As the state propagators take the form \eqref{one-to-one_propagators}, in order to evaluate the RHS of \eqref{modified_theorem_misc_2}, integrals of the form $\int_\omega \mathrm{d}^nx'\delta^{(n)}(x-x_f(x',t))m(x')$ must be performed.
As the law translates initial states $x'$ to final states $x_f$ in a one-to-one fashion, the integration may be performed with respect to final positions, $\int_\omega \mathrm{d}^nx'\rightarrow \int_{x_f(\omega,t)}\mathrm{d}^nx_f|\text{det}\mathbf{J}(x_f)|$.
So,
\begin{align}
&\int_\omega \mathrm{d}^nx'\,\delta^{(n)}(x-x_f(x',t))m(x')\\
=&\int_{x_f(\omega,t)} \mathrm{d}^nx_f \,|\text{det}\mathbf{J}(x_f)|\delta^{(n)}(x-x_f)m(x'(x_f,-t))\\
=&
\begin{cases}
|\text{det}\mathbf{J}(x)|m(x'(x,-t))&\text{for}\, x\in x_f(\omega,t)\\
0 & \text{otherwise}
\end{cases}.
\end{align}
This may be then substituted into equation \eqref{modified_theorem_misc_2} to yield
\begin{align}
0=&\resizebox{0.95\textwidth}{!}{$
\int_\Omega\mathrm{d}^nx\,\left[\int_\omega\mathrm{d}^nx'\,cm(x')T(x,x',t)\right]\log\left[\frac{1}{m(x)}\int_\omega\mathrm{d}^nx'\,m(x')T(x,x',t)\right]$}\\
=&\resizebox{0.95\textwidth}{!}{$
\int_{x_f(\omega,t)}\mathrm{d}^nx\,|\text{det}\mathbf{J}(x)|m(x'(x,-t))\log\left[\frac{1}{m(x)}|\text{det}\mathbf{J}(x)|m(x'(x,-t))\right]$}\\
=&\int_{\omega}\mathrm{d}^nx'\,m(x')\log\left[|\text{det}\mathbf{J}(x)|\frac{m(x')}{m(x(x',t))}\right].
\end{align}
For this expression to vanish for every possible state-space region $\omega\subset\Omega$, it must be the case that 
\begin{align}\label{ML1}
|\text{det}\mathbf{J}(x)|=\frac{m(x)}{m(x')},
\end{align}
where it is understood that state $x'$ evolves to state $x$, ie.\ $x=x_f(x',t)$.
So Liouville's theorem in it's original form, equation \eqref{L1}, no longer applies; state-space volumes are no longer conserved, at least if the volume is measured with respect to state-space coordinates $x$. If state-space volumes are measured in number (or proportion) of states, however, the volumes are preserved.
To see this, note that the proportion of all states in region $\omega$ is given by $\#(\omega)=\int_\omega m(x)\mathrm{d}^nx$. 
And from equation \eqref{ML1}, 
\begin{align}
\#(\omega)
=\int_\omega m(x)\mathrm{d}^nx
=\int_{x_f(\omega)} m(x)\frac{m(x_f)}{m(x)}\mathrm{d}^nx_f
=\#(x_f(\omega)).
\end{align}
Hence the number of states is preserved in comoving volumes. This observation, along with equation \eqref{ML1} shall be referred to as the first statement of the \emph{modified} Liouville theorem.
\end{proof}
To find the second statement of the modified Liouville theorem, first substitute the one-to-one propagator \eqref{one-to-one_propagators} into the propagator defining relation \eqref{propagator_defining_operation} to find $\rho(x_f,t)=|\text{det}\mathbf{J}|\rho(x_0,0)$, where it is understood that state $x_0$ evolves to state $x_f$. Then use the first statement of the modified Liouville theorem \eqref{ML1} to show that
\begin{align}\label{ML2_misc}
\frac{\rho(x_f,t)}{m(x_f)}=\frac{\rho(x_0,0)}{m(x_0)}.
\end{align}
Hence the ratio of $\rho$ to $m$ stays constant along trajectories. Or in analogy with \eqref{L2},
\begin{align}\label{ML2}
\frac{d}{dt}\left(\frac{\rho}{m}\right)=0,
\end{align}
where $\mathrm{d}/\mathrm{d}t$ is the material/convective derivative, $\mathrm{d}/\mathrm{d}t=\partial/\partial t+\sum_i (\partial x_i/\partial t) (\partial/\partial x_i)$, the derivative along the trajectory.
Equation \eqref{ML2} shall be referred to as the second statement of the modified Liouville theorem.

The final step in the circuit of implications of figure \ref{logic_map} is to show that the modified Liouville theorem $\implies$ conservation of the Jaynes entropy \eqref{Jaynes_entropy}.
To do this, substitute \eqref{ML2_misc} and \eqref{ML1} into \eqref{Jaynes_entropy} to find
\begin{align}
S(0)&=-\int_\Omega \mathrm{d}^nx_0\rho(x_0,0)\log\frac{\rho(x_0,0)}{m(x_0)}\\
&=-\int_{x_f(\Omega)}\left(\frac{m(x_f)}{m(x_0)}\mathrm{d}^nx_f\right)\left(\frac{m(x_0)}{m(x_f)}\rho(x_f,t)\right)\log\frac{\rho(x_f,t)}{m(x_f)}\\
&=S(t).
\end{align}
Note that, just as was the case in section \ref{2.3}, it does not matter that the integration range is $x_f(\Omega)$ and not $\Omega$. By definition $\rho(x,t)=0$ in the compliment $\left\{x\in\Omega|x\notin x_f(\Omega)\right\}$ and so these regions do not contribute to the entropy.

\subsection{Consequences of using the Jaynes entropy}
At present in this section ingredients \ref{Ing1}, \ref{Ing4}, and \ref{Ing5} (the state-space, the measure of information and the statement(s) of information conservation) are accounted for.
The distribution of maximum entropy \ref{Ing3} follows naturally from the Jaynes entropy \ref{Ing4}. 
To see why, recall that the Jaynes entropy introduces the idea of a normalized density-of-states upon the state-space, $m(x)$. 
And that the principle-of-indifference states that the distribution of maximum entropy corresponds to all states being equally likely.
Clearly the distribution of maximum entropy must equal the density-of-states,
\begin{align}\label{max_ent}
\rho_\text{eq}(x)= m(x).
\end{align}
As this distribution is uniquely high in entropy, the conservation of entropy requires that it be conserved by the dynamics.
By \ref{Ing6}, Valentini's relaxation theorem below, entropy naturally tends to rise, and so all other distributions $\rho(x)\neq m(x)$ will tend towards $\rho_\text{eq}(x)$. 
It is therefore natural to refer to it as the \emph{equilibrium distribution}. 
Likewise, all distributions for which $\rho(x)\neq m(x)$ are referred to as \emph{nonequilibrium distributions}.
And the process by which the equilibrium distribution arises from nonequilibrium distributions is called \emph{relaxation}.

This also solves an issue which is present in the differential entropy formulation of section \ref{2.3}.
It was noted previously that the tendency of both types of entropy to diverge leads to some pathological behavior for low ignorance distributions $\rho(x)$.
Neither type of entropy can tell the difference, for instance, between a perfectly determined state $\rho(x)=\delta^{(n)}(x-a)$ and one which is only half determined, $\rho(x)=1/2[\delta^{(n)}(x-a)+\delta^{(n)}(x-b)]$. In both cases $S\rightarrow-\infty$%
\footnote{This shouldn't be too disheartening. After all this issue also appears in standard classical mechanics.}.
It was \emph{not} noted, however, that differential entropy \eqref{differential_entropy} also exhibits pathological behavior at the other end of the spectrum, for high ignorance distributions.
Take a one-dimensional state-space $\Omega =\mathbb{R}$ for example. 
Differential entropy cannot tell the difference between a $\rho(x)$ that is uniform on $\mathbb{R}$, and a $\rho(x)$ that is uniform on $\mathbb{R}^+$ (and zero on $\mathbb{R}^-$), despite these distributions intuitively possessing different informational contents. 
In both cases differential entropy is divergent, $S\rightarrow+\infty$.
This issue does not emerge when Jaynes entropy is used. 
There is only one distribution of maximum entropy, $\rho(x)=m(x)$.
And this is unambiguously referred to by the Jaynes entropy as $S=0$ (as may be readily checked by substituting \eqref{max_ent} into \eqref{Jaynes_entropy}).

For uniformly distributed discrete probabilities $p_i=1/N$ $(N\leq n)$, discrete entropy \eqref{discrete_entropy} produces $S=\log N$ in analogy with Boltzmann's famous expression.
For the equivalent continuous case, $\rho(x)=1/V(\omega)$, differential entropy \eqref{differential_entropy} produces $S=\log V(\omega)$.
In the present case the equivalent of a uniform distribution on $\omega\subset\Omega$ is $\rho(x)=m(x)/\#(\omega)$, where $\#(\omega)$ is the proportion of total states in $\omega$. 
Substitution of this into Jaynes entropy \eqref{Jaynes_entropy} produces 
\begin{align}
S=\log\left(\#(\omega)\right).
\end{align}

When formulating iRelax framework using differential entropy, the second statement of Liouville's theorem \eqref{L2}, and thereby information conservation, was shown to imply an incompressible flow of trajectories, $\nabla.\dot{x}=0$. In this case, the same procedure yields
\begin{align}
-\nabla.\dot{x}=\frac{1}{m}\frac{\mathrm{d}m}{\mathrm{d}t}.
\end{align}
So the density of states experienced by a system along its trajectory changes in proportion to the \emph{convergence} of it's trajectory with those that surround it.
As there is an implicit $\dot{x}$ in the $\mathrm{d}/\mathrm{d}t$ derivative however, this is not the most useful manner in which to express the condition that information conservation places upon the law of evolution $\dot{x}$.
Instead note that the distribution of maximum entropy $\rho_\text{eq}(x,t)=m(x)$ is conserved by the dynamics. 
It also satisfies a continuity equation $\partial m/\partial t+\nabla\cdot(m\dot{x})=0$, and so the condition placed upon the law of evolution may be written
\begin{align}\label{2.4_condition}
\nabla\cdot\left(m\dot{x}\right)=0.
\end{align}
Hence, it is now the \emph{equilibrium probability current} $j=m\dot{x}$ rather than the law of evolution that is incompressible.
Nevertheless, this still represents one real condition upon the law of evolution $\dot{x}$, and so constrains the law of evolution in a similar manner to the $\nabla\cdot\dot{x}=0$ condition found using differential entropy.
In a practical situation it may be the case that the law of evolution \ref{Ing2} is not known, but the equilibrium state \ref{Ing3} is. (Think thermodynamics prior to kinetic theory.)
As the equilibrium distribution is $\rho_\text{eq}(x)=m(x)$, equation \eqref{2.4_condition} provides a means to arrive at a consistent law of evolution $\dot{x}$.
Of course, since \eqref{2.4_condition} is but one real condition, it does not fully determine the $n$ component law of evolution $\dot{x}$.
The set of consistent laws are related by the addition of an incompressible current $j_\text{inc}$ to the $m\dot{x}$ in \eqref{2.4_condition} so that
\begin{align}
\dot{x}'=\dot{x}+\frac{j_\text{inc}}{m(x)}.
\end{align}
In two-dimensions $x=(x_1,x_2)\in\Omega=\mathbb{R}^2$, for instance, the two component law may be decomposed as $\dot{x}=\nabla A + J\nabla B$, where $J=\left(\begin{smallmatrix}0&-1\\1&0\end{smallmatrix}\right)$, and where $A$ and $B$ are scalar fields.
Then the condition \eqref{2.4_condition} reduces to $\nabla\cdot(m\nabla A)=0.$
Alternatively, the current could be decomposed in the same manner, $j=\nabla A + J\nabla B$.
Then the condition \eqref{2.4_condition} means that $\nabla^2A=0$ in all $\Omega$, and hence $A(x)$ is constant everywhere in $\Omega$.
It follows that the law of evolution is
\begin{align}\label{eom_for_x}
\dot{x}_1=-\frac{1}{m(x_1,x_2)}\frac{\partial B}{\partial x_2},\quad\quad\dot{x}_2=\frac{1}{m(x_1,x_2)}\frac{\partial B}{\partial x_1}.
\end{align}
And Hamiltonian mechanics is recovered in the case that $B=-H$ and the density of states $m(x_1,x_2)$ is uniform.

\subsection{Proof of rise in Jaynes entropy (Valentini's relaxation theorem)}
The same arguments presented in support of the \emph{de facto} rise of differential entropy (\ref{Ing6}) also apply to Jaynes entropy. 
The incompressible flow is still present (and so the chocolate powder will still disperse into the milk as expected). The only difference is that the flow should be understood to be incompressible with respect to the actual state-space, rather than simply coordinates $x$.
More information on this point in the next section.
Nevertheless, the following proof has been important in the development of the discipline, and it is useful to show that the coarse-graining (discretization, pixelation, rasterization) arguments carry over. 
The proof is due to Valentini \cite{AV91a}, and was originally presented in the context of de Broglie-Bohm theory.

\begin{proof}
The proof follows most concisely with the introduction of the ratio $f:=\rho/m$ and its coarse-grained counterpart $\tilde f=\bar \rho/\bar m$. In analogy with section \ref{2.3}, define the coarse-grained (discretized) entropy,
\begin{align}
\bar{S}=-\int \mathrm{d}x\, \bar\rho\log\tilde f.
\end{align}
The exact entropy is of course conserved by the information preserving dynamics, so that $S_0$ is equal to $S$, the exact entropy at some later time.
The assumption of the postulate of \emph{a priori} probabilities establishes the equality of the initial coarse-grained entropy $\bar S_0$ and the initial exact entropy $S_0$.
The effect of the postulate is now, as before, to make the correspondence between the rise in entropy and the effect of coarse-graining upon the final densities,
\begin{align}
\Delta \bar S=\bar S -\bar S_0=\bar S-S=-\int \mathrm{d}x\,\left\{\bar\rho\log\tilde f-\rho\log f\right\}.
\end{align}
Then, since $\log\tilde f$ is constant over any cell, the first term in the integrand may be replaced with $\rho\log\tilde f$, so that
\begin{align}
\Delta \bar S=\int \mathrm{d}x\,m\left\{f\log(f/\tilde f)\right\}.
\end{align}
As $\tilde f$ is constant in each cell, $\int \mathrm{d}x\,m(\tilde f-f)=0$. This may be added to the RHS, giving
\begin{align}
\Delta S&=\int \mathrm{d}x\,m\left\{f\log(f/\tilde f)+\tilde f-f\right\}\\
&\geq 0,
\end{align}
where the final inequality is found as a result of $f\log(f/\tilde f)+\tilde f-f$ being positive unless $f=\tilde f$. 
\end{proof}

\subsection{Geometrical interpretation}
To use the Jaynes entropy \eqref{Jaynes_entropy} is to admit that the principle of indifference may not apply with respect to coordinates $x$. 
This does not mean that suitable coordinates $x'$ cannot be found for which the principle of indifference does apply, however. 
(At least locally.)
The process of finding such suitable coordinates might proceed as follows.
Suppose the equilibrium distribution $\rho_\text{eq}(x)$ of a system were known in coordinates $x$.
As $\rho_\text{eq}(x)=m(x)$, density-of-states $m(x)$ is also known in coordinates $x$.
If $m(x)$ is suitably well-behaved, it should be possible to find coordinates $x'$ such that $m(x)$ becomes flattened out locally.
This distribution must be appropriately normalized, so write $m'(x')=N$, where $N$ is a normalization constant.  
In some local region $\omega$, both sets of coordinates should be capable of counting the states,
\begin{align}\label{state_counting}
\#(\omega) = \int_\omega m(x)\mathrm{d}^nx=\int_\omega N \mathrm{d}^nx'=\int_\omega N\left|\text{det}\mathbf{J}\right|\mathrm{d}^nx,
\end{align}
and so it follows that
\begin{align}\label{geo_misc1}
\left|\text{det}\mathbf{J}\right|=\frac{m(x)}{N}.
\end{align}
The total probability of finding the system within region $\omega$ should be similarly independent of the coordinates,
\begin{align}\label{geo_misc2}
\int_\omega\rho(x)\mathrm{d}^nx=\int_\omega\rho'(x')\mathrm{d}^nx'=\int_\omega \rho'(x')\frac{m(x)}{N}\mathrm{d}^nx,
\end{align}
and so it follows that
\begin{align}
\frac{\rho(x)}{m(x)}=\frac{\rho'(x')}{N}.
\end{align}
Substitution of relations \eqref{geo_misc1} and \eqref{geo_misc2} into Jaynes entropy \eqref{Jaynes_entropy} produces
\begin{align}
S=-\int d^nx' \rho'(x')\log \frac{\rho'(x')}{N}.
\end{align}
So with respect to coordinates $x'$, the Jaynes entropy treats all positions in the state-space equally, manifestly demonstrating the principle of indifference%
\footnote{The constant $N$ in the denominator contributes the constant factor $+\int d^nx' \rho'(x')\log N=\log N$ to the entropy. This becomes divergent in an unbounded state-space and is precisely the term removed in equation \eqref{remove_divergent_term} when deriving the Jaynes entropy.}.
To find an actual coordinate transform $x\rightarrow x'$ for which this is the case, a solution must be found to equation \eqref{geo_misc1}.
To provide a concrete example, further suppose a two-dimensional state-space $x=(x_1,x_2)\in\Omega=\mathbb{R}^2$ in which the density of states is only dependent upon coordinate $x_1$, $m(x)=m(x_1)$.
Then equation \eqref{geo_misc1} may be written
\begin{align}\label{geo_misc3}
\left|\frac{\partial x_1'}{\partial x_1}\frac{\partial x_2'}{\partial x_2} - \frac{\partial x_2'}{\partial x_1}\frac{\partial x_1'}{\partial x_2}\right|=\frac{m(x_1)}{N}
\end{align}
Since the density of states is purely a function of $x_1$, it is sensible to look for solutions of the form $x_1'=x_1'(x_1)$, and $x_2'=x_2$. The Jacobian matrix of such a transform would be
\begin{align}
\mathbf{J}=\begin{pmatrix}m(x_1)/N&0\\0&1\end{pmatrix},
\end{align}
which trivially satisfies equation \eqref{geo_misc3}. If the indefinite integral $\int m(x_1)\mathrm{d}x_1=M(x_1)$ could be found, the coordinate transforms would be $x_1'=M(x_1)/N + C$, $x_2'=x_2$.
And the law of evolution in $x'$ may be found by acting the Jacobian matrix on the law of evolution in $x$, 
\begin{align}
\begin{pmatrix}\dot{x}'_1\\\dot{x}'_2\end{pmatrix}
=\mathbf{J}\begin{pmatrix}\dot{x}_1\\\dot{x}_2\end{pmatrix}
=\begin{pmatrix}\frac{m(x_1)}{N}\dot{x}_1\\\dot{x}_2\end{pmatrix}.
\end{align}
Substituting in the law of evolution for $x$, equation \eqref{eom_for_x}, then gives
\begin{align}
\dot{x}_1'=-\frac{1}{N}\frac{\partial B}{\partial x_2'} ,\qquad\dot{x}_2'=\frac{1}{N}\frac{\partial B}{\partial x_1'},
\end{align}
and so locally, in the $x'$ coordinates, the form of the law of evolution is the same as for differential entropy. 
In the language of section \ref{2.3}, coordinates $x'$ may be considered locally suitable.

Of course, much of this discussion is suggestive of a geometrical interpretation to the state-space. 
The incompressibility condition, equation \eqref{2.4_condition}, may be written for instance as
\begin{align}
0=\frac{1}{m(x)}\nabla\cdot\left(m(x)\dot{x}\right) = \frac{1}{\sqrt{g}}\left(\sqrt{g}\dot{x}^a\right)_{,a}=\dot{x}\indices{^a_{;a}},
\end{align}
where $g$ is the determinant of metric $g_{ab}$. This of course indicates that $\sqrt{g}=m(x)$, and suggests the metric
\begin{align}\label{metric}
\mathrm{d}s^2&=m(x_1,x_2,\dots,x_n)\left(\mathrm{d}x_1^2+\mathrm{d}x_2^2+\dots,\mathrm{d}x_n^2\right).
\end{align}
Superficially, this looks like a FLRW metric with scale factor $a(t)$ replaced with $m(x)$, but there are some key differences.
Of course time doesn't feature in metric \eqref{metric}, which is therefore Riemannian rather than pseudo-Riemannian.
And $a(t)$ has time dependence, whereas $m(x)$ has state coordinate $x$ dependence. 
Unlike the scale factor, which is dimensionless, $m(x)$ has units of states per state-space volume. Consequently, the quantity $\mathrm{d}s^2$ has units of states per $[x_i]^{n-2}$, where $[x_i]$ are the as yet unspecified units of $x_i$.
Metric \eqref{metric} suggests that $n$-volume integrals should be performed with respect to measure $\mathrm{d}(\text{proper volume})=\sqrt{g}\mathrm{d}^nx=m(x)\mathrm{d}^nx=N\mathrm{d}^nx'$, so that state counting equation \eqref{state_counting} and Jaynes entropy \eqref{Jaynes_entropy} may be expressed in the more symmetrical form
\begin{align}
\#(\omega)=\int_\omega\left(m\mathrm{d}^nx\right) \frac{\rho}{m},\qquad
S=-\int_\Omega \left(m\mathrm{d}^nx\right)\frac{\rho}{m}\log\frac{\rho}{m}.
\end{align}
In dimensions $n\geq 2$ a nonconstant $m(x)$ does generally indicate curvature. For instance the Ricci scalar in $n$-dimensions may be found to be equal to
\begin{align}
R=-(n-1)\left[\frac{1}{m^2}\sum_i\partial_i^2m +\frac{1}{4}(n-6)\sum_i\frac{1}{m^3}\left(\partial_im\right)^2\right],
\end{align}
where $\partial_i$ denotes a partial derivative with respect to coordinate $x_i$. Recall however that if equation \eqref{geo_misc1} has a local solution, then the coordinate transform $x\rightarrow x'$ is able to make the metric locally flat, $\mathrm{d}s^2=N\left(\mathrm{d}^2x'_1+\dots\mathrm{d}^2x'_n\right)$. 
As this $x\rightarrow x'$ transform affects the metric as $g_{ab}\rightarrow g'_{ab} =(N/m(x))g_{ab}$, it is a conformal transformation. 
In other words, metric \eqref{metric} is locally conformally flat \cite{encyclopedia_of_mathematics}. 
In dimensions $\geq 4$, this has a consequence for the traceless part of the Riemann tensor, the Weyl tensor $C\indices{^a_b_c_d}$.
The Weyl tensor is invariant under conformal transformations $g_{ab}\rightarrow g'_{ab}=f^2(x)g_{ab}$.
Hence the components of the Weyl tensor calculated with metric \eqref{metric} are the same as those calculated with a locally flat metric, and so all vanish identically\footnote{A similar argument holds for the Cotton tensor in three-dimensions. Two-dimensional Riemannian manifolds are automatically conformally flat.}.

Realist quantum theories based upon similar geometrical considerations have been proposed from time to time. For more information, see reference \cite{R15}, which provides a nice overview of these proposals.

\begin{figure}
\includegraphics[width=\textwidth/4]{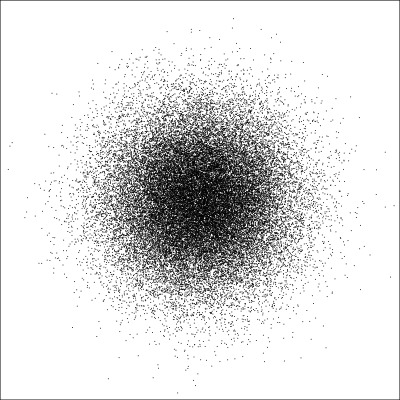}%
\includegraphics[width=\textwidth/4]{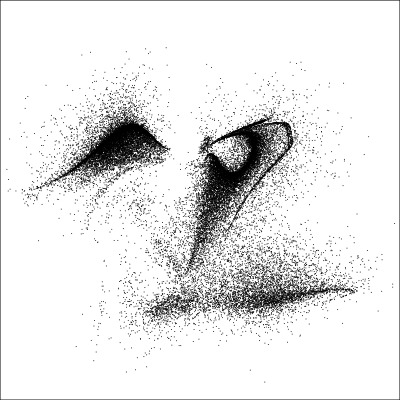}%
\includegraphics[width=\textwidth/4]{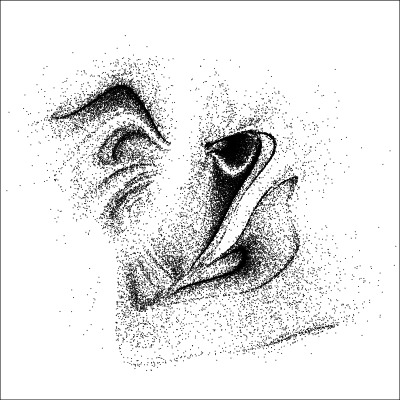}%
\includegraphics[width=\textwidth/4]{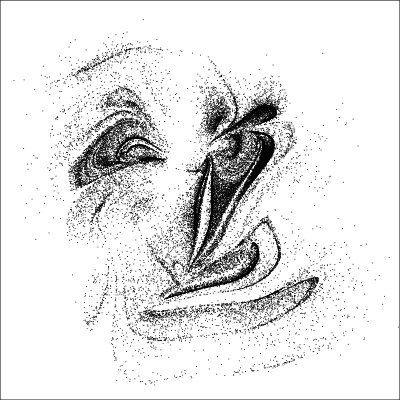}\vspace{-3.0pt}\\ \vspace{-3.0pt}%
\includegraphics[width=\textwidth/4]{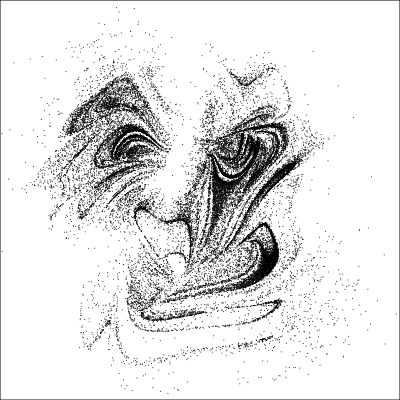}%
\includegraphics[width=\textwidth/4]{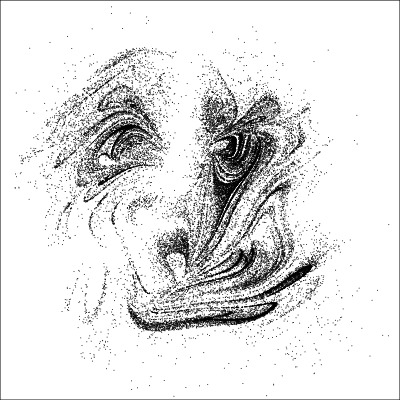}%
\includegraphics[width=\textwidth/4]{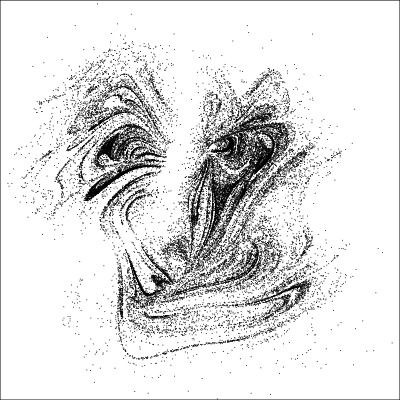}%
\includegraphics[width=\textwidth/4]{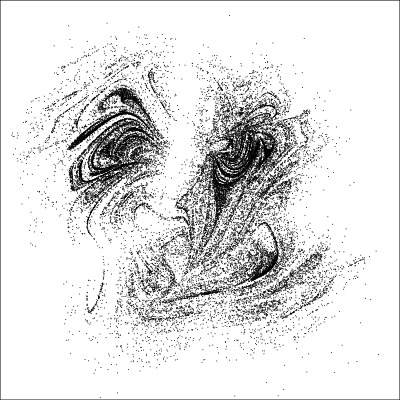}\\ \vspace{-3.0pt}%
\includegraphics[width=\textwidth/4]{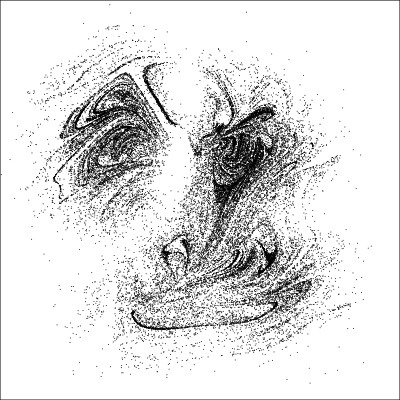}%
\includegraphics[width=\textwidth/4]{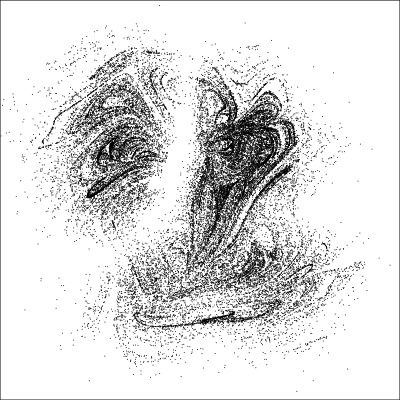}%
\includegraphics[width=\textwidth/4]{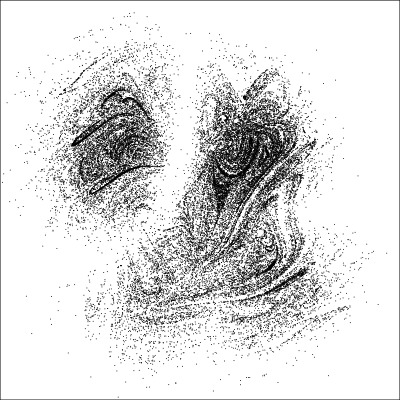}%
\includegraphics[width=\textwidth/4]{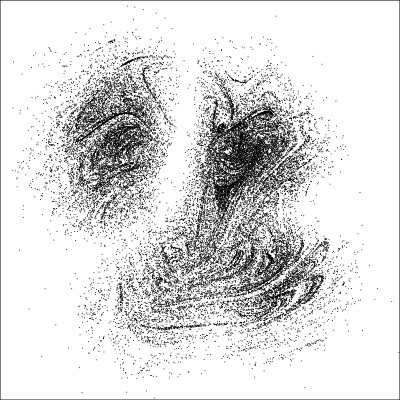}\\ \vspace{-3.0pt}%
\includegraphics[width=\textwidth/4]{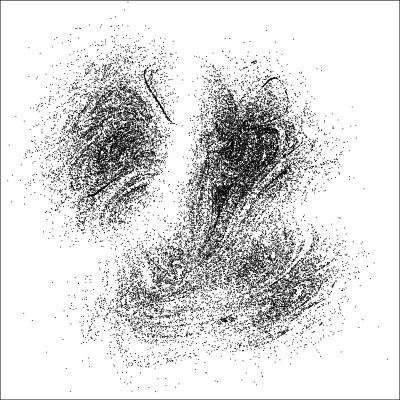}%
\includegraphics[width=\textwidth/4]{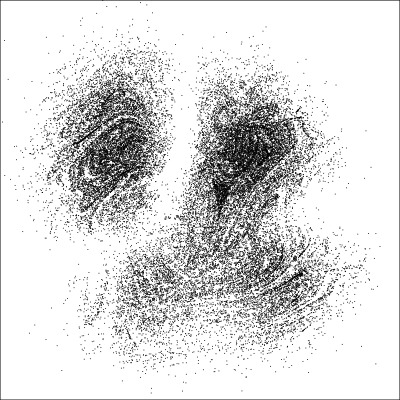}%
\includegraphics[width=\textwidth/4]{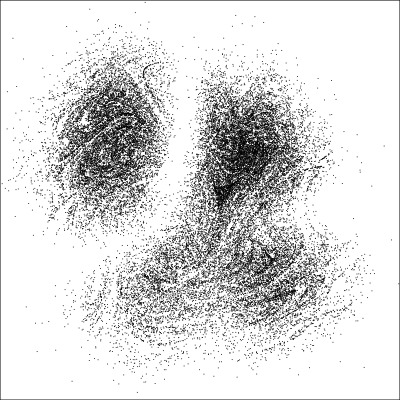}%
\includegraphics[width=\textwidth/4]{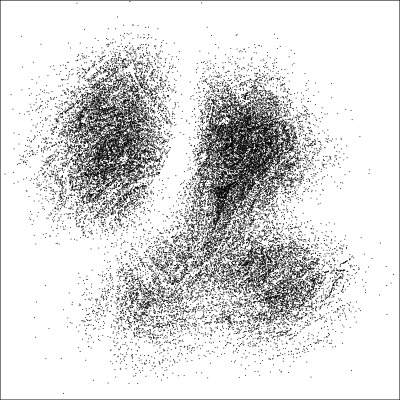}\\ \vspace{-3.0pt}%
\includegraphics[width=\textwidth/4]{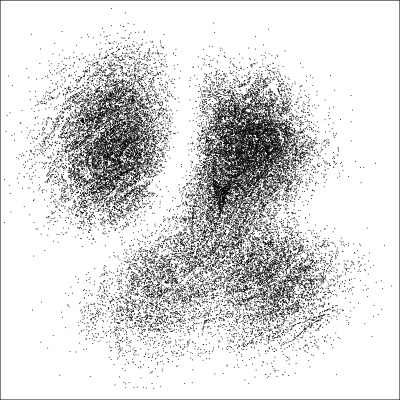}%
\includegraphics[width=\textwidth/4]{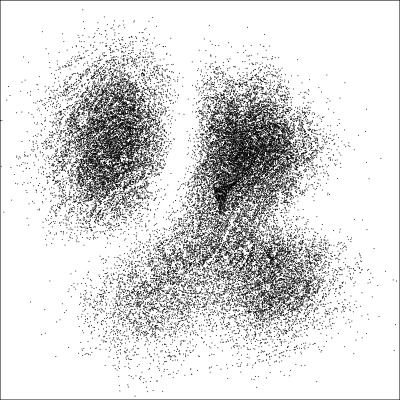}%
\includegraphics[width=\textwidth/4]{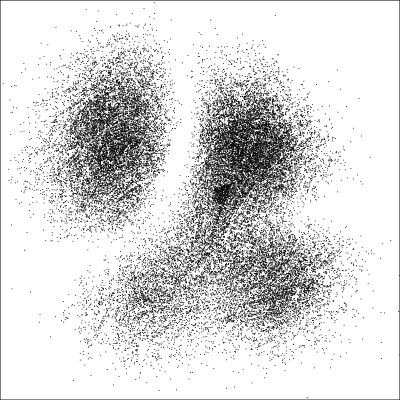}%
\includegraphics[width=\textwidth/4]{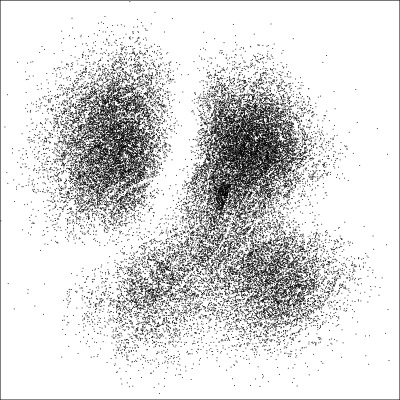}%
\caption[Example of quantum relaxation to quantum equilibrium]{Quantum relaxation for a two-dimensional harmonic oscillator with a superposition of nine low energy states. The first frame displays the initial $\rho(x,0)$, which is Gaussian. The quantum state is periodic. Subsequent frames display the progress of the quantum relaxation at the end of each period. 
Similarly to the classical relaxation in figure \ref{classical_relaxation}, quantum relaxation proceeds by the creation of structure, which becomes ever more fine as time progresses.
Eventually this fine structure becomes too fine to resolve, leading to a \emph{de facto} rise in entropy.
The equilibrium state $\rho_\text{eq}(x,t)=|\psi(x,t)|^2$ is obtained to a reasonable degree by the final frame, a mere 19 wave function periods into the evolution. }\label{quantum_relaxation}
\end{figure}
\begin{figure}
\includegraphics[width=\textwidth/4]{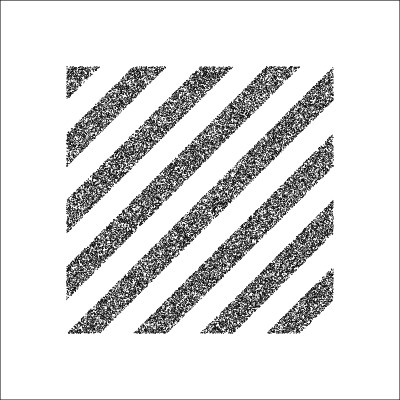}%
\includegraphics[width=\textwidth/4]{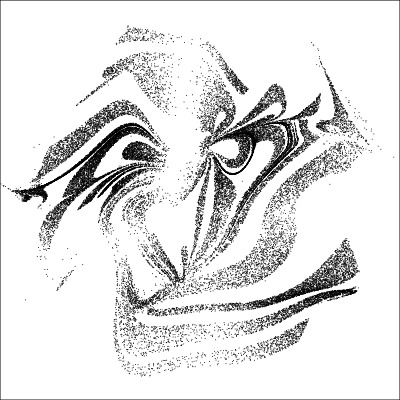}%
\includegraphics[width=\textwidth/4]{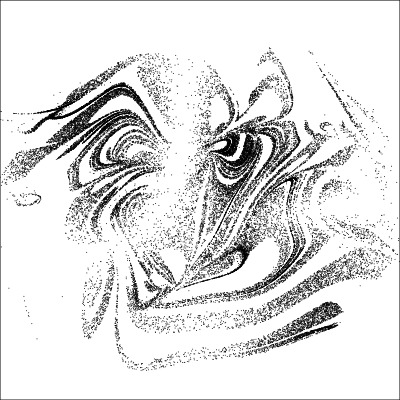}%
\includegraphics[width=\textwidth/4]{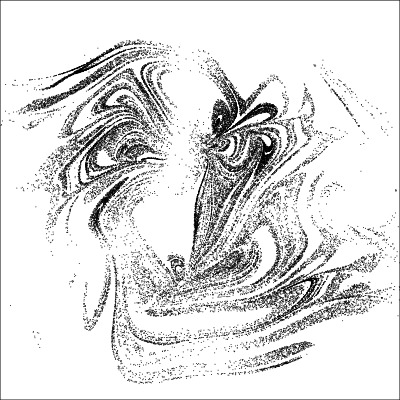}\vspace{-3.0pt}\\ \vspace{-3.0pt}%
\includegraphics[width=\textwidth/4]{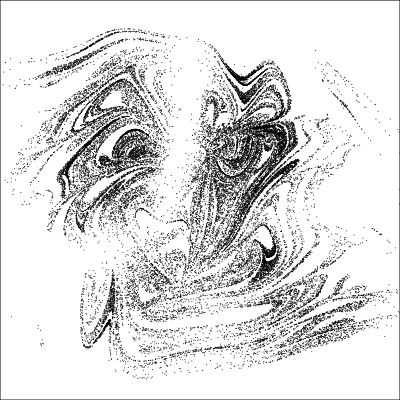}%
\includegraphics[width=\textwidth/4]{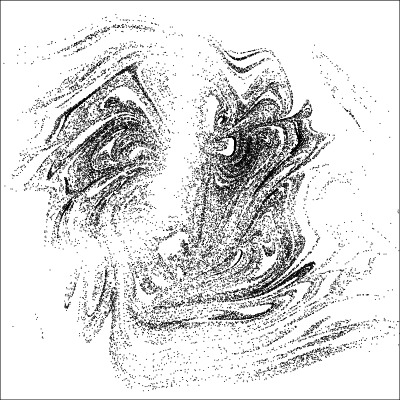}%
\includegraphics[width=\textwidth/4]{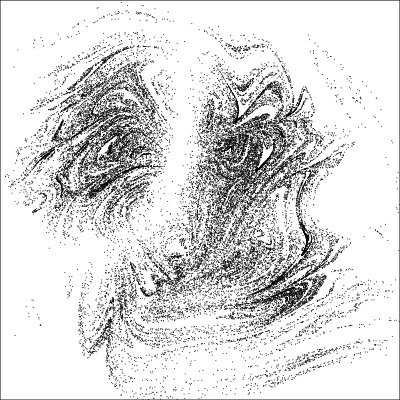}%
\includegraphics[width=\textwidth/4]{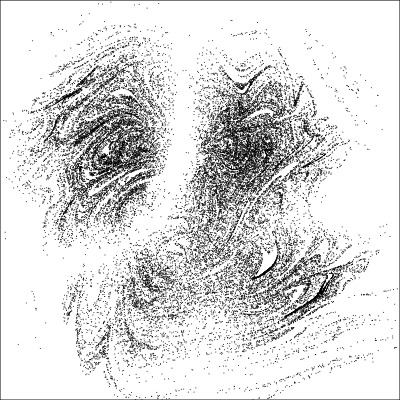}\\ \vspace{-3.0pt}%
\includegraphics[width=\textwidth/4]{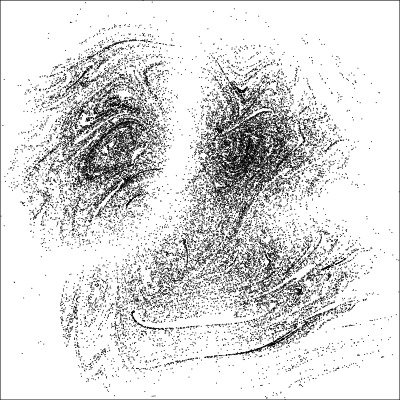}%
\includegraphics[width=\textwidth/4]{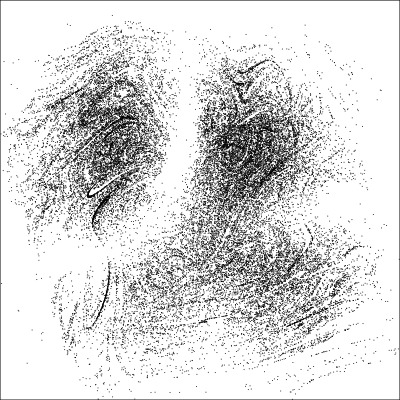}%
\includegraphics[width=\textwidth/4]{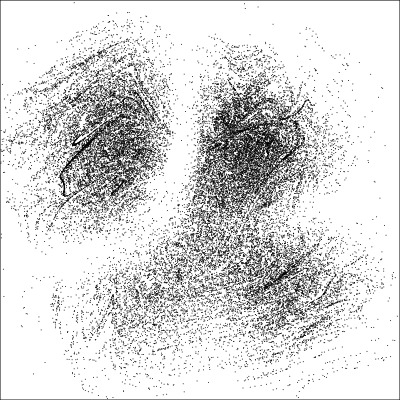}%
\includegraphics[width=\textwidth/4]{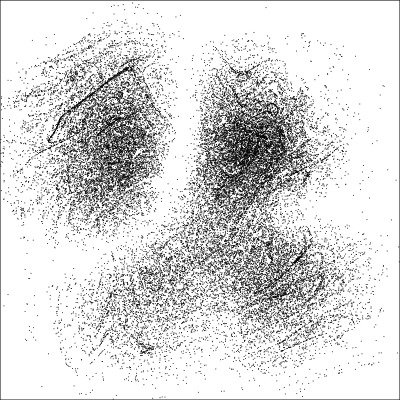}\\ \vspace{-3.0pt}%
\includegraphics[width=\textwidth/4]{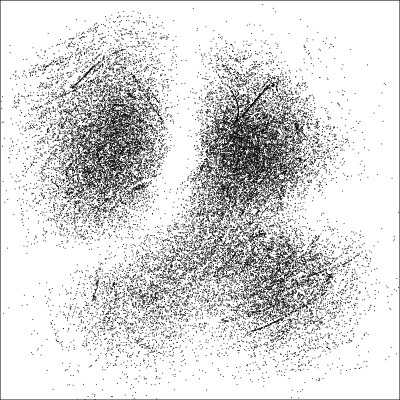}%
\includegraphics[width=\textwidth/4]{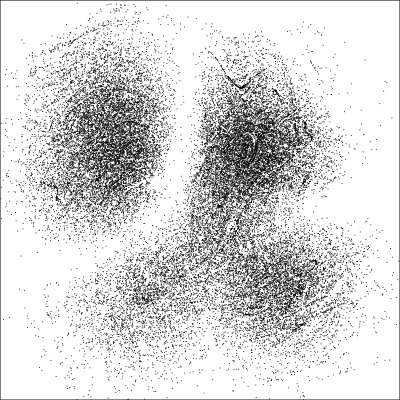}%
\includegraphics[width=\textwidth/4]{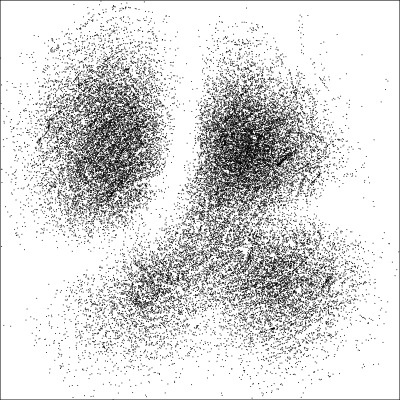}%
\includegraphics[width=\textwidth/4]{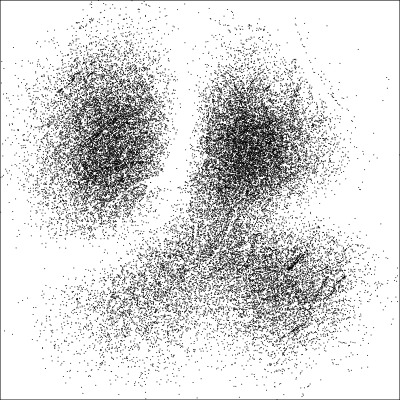}\\ \vspace{-3.0pt}%
\includegraphics[width=\textwidth/4]{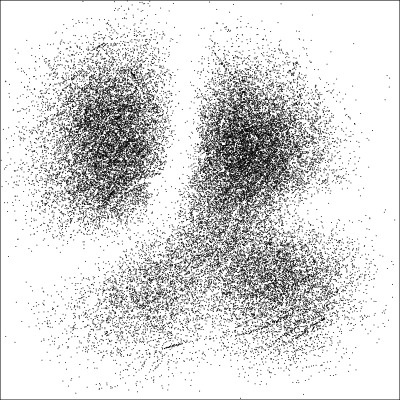}%
\includegraphics[width=\textwidth/4]{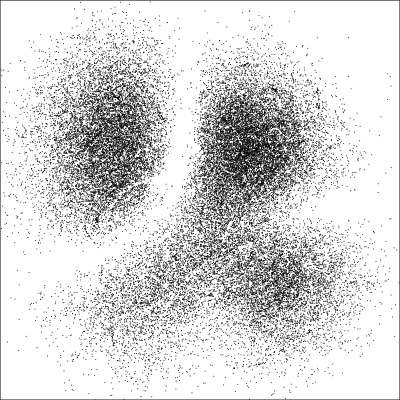}%
\includegraphics[width=\textwidth/4]{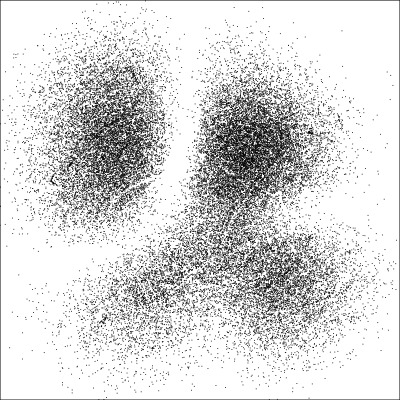}%
\includegraphics[width=\textwidth/4]{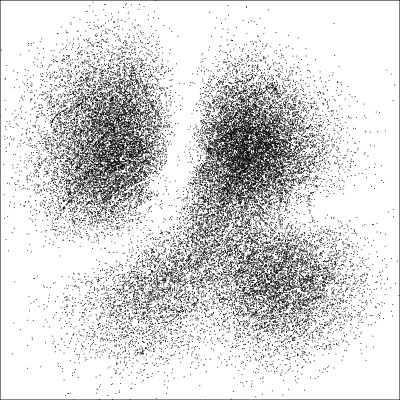}%
\caption[Another example of quantum relaxation to quantum equilibrium]{Relaxation of an identical system to figure \ref{quantum_relaxation} except with a different initial nonequilibrium, displayed in the top-left frame.}
\end{figure}
\newpage
\section{Quantum theory as an iRelax theory?}\label{2.5}
With the construction of the statistical underpinnings of this work now complete, attention returns to quantum theory.
The minimal iRelax framework outlines how a deterministic, information preserving theory may, through a \emph{de facto} loss of information, admit a probabilistic description with a concept of entropy rise.
Low entropy \emph{nonequilibrium} distributions $\rho_\text{noneq}(x)$ evolve towards the high entropy \emph{equilibrium} distribution $\rho_\text{eq}(x)=m(x)$.
The process is called \emph{relaxation} for the reason that it is precisely the thermodynamic relaxation that occurs in classical mechanics.
With this framework in place, it becomes natural to speculate on whether quantum theory, which after all is canonically probabilistic, will admit the iRelax framework\footnote{Although of course this work was inspired by de Broglie-Bohm theory in the first place, and so this was a foregone conclusion.}
The wonderful thing about the iRelax framework is that if an equilibrium distribution is known, perhaps from experiment, then the informational structure of the theory may be backwards inferred, and potential consistent laws of evolution deduced. 
To see how this works in practice, this method shall now be used to derive canonical and non-canonical variants of de Broglie-Bohm theory.

\subsection{The Valentini entropy (a derivation of de Broglie-Bohm theory)}
In 1991, in reference \cite{AV91a}, Valentini uncovered the informational structure of de Broglie Bohm theory. 
By using a known feature of de Broglie-Bohm theory, an observation by Bohm \cite{B53} that the ratio $f=\rho/|\psi|^2$ is conserved along trajectories, Valentini was able to argue that in de Broglie Bohm theory, $|\psi|^2$ should be regarded as a distribution of maximum entropy.
An equilibrium distribution to which all other nonequilibrium distributions tend to relax. 
The following passage describes how the logic works in reverse. 
The motivation being the question--Which realist, deterministic, time-reversible theories (iRelax theories) are consistent with our knowledge of quantum theory?

The quantum theory in question may remain abstract for now. Particle, field, or something more exotic altogether.
Suppose, however, that it admits a continuous complete basis $\{\ket{x}\}$ spanned by $n$ real parameters $x\in \mathbb{R}^n$ such that the resulting Born distribution $|\braket{x}{\psi}|^2=|\psi(x,t)|^2$ obeys a continuity equation, $\partial |\psi(x,t)|^2/\partial t + \nabla\cdot j(x,t)=0$. 
Regardless of the system and basis in question, if basis parameters $x$ are such that $|\psi(x,t)|^2$ obeys a continuity equation, then $|\psi(x,t)|^2$ may be regarded as a conserved distribution.
From the perspective of the IRPT formalism, the spirit of de Broglie-Bohm theory is to identify $|\psi(x,t)|^2$ as the equilibrium, maximum entropy distribution, $\rho_\text{eq}(x,t)=|\psi(x,t)|^2$. 
Hence the entry point of de Broglie-Bohm theory into the iRelax framework is at \ref{Ing3}. 
This distribution is of course defined upon the background state-space, $\Omega$. 
So it follows that the state-space (\ref{Ing1}) corresponding to this choice is $\Omega=\{x\}=\mathbb{R}^n$.
As by equation \eqref{max_ent}, density of states $m(x)$ is equal to $\rho_\text{eq}(x)$, it follows that $m(x)=\rho_\text{eq}(x,t)=|\psi(x,t)|^2$\footnote{This of course means that the density of states should be regarded as time-dependent, a feature that is critiqued in the next section.}.
Substitution of this into Jaynes entropy \eqref{Jaynes_entropy}, produces the entropy formula (\ref{Ing4}),
\begin{align}\label{Valentini_entropy}
S=-\int_\Omega \mathrm{d}^nx \rho(x,t)\log\frac{\rho(x,t)}{|\psi(x,t)|^2}.
\end{align}
This formula shall be referred to as the Valentini entropy, as it was first written down in reference \cite{AV91a}.
By the modified Liouville theorem, equation \eqref{ML2}, the corresponding statement of information/entropy conservation (\ref{Ing5}) is
\begin{align}\label{valentini_entropy_conservation}
\frac{d}{dt}\left(\frac{\rho}{|\psi|^2}\right)=0.
\end{align}
This formula, originally written down by Bohm \cite{B53}, was the starting point used by Valentini for his development of the statistical structure of de Broglie-Bohm theory \cite{AV91a}.
Finally, the \emph{de facto} rise in entropy \ref{Ing6}, follows from \ref{Ing4} and \ref{Ing5} as a result of Valentini's relaxation theorem above. 
The corresponding relaxation is commonly referred to as `quantum relaxation', and is shown in figure \ref{quantum_relaxation} for a two-dimensional system.

\subsection{Guidance equations}
The iRelax framework ensures that \ref{Ing1}, \ref{Ing4}, \ref{Ing5}, and \ref{Ing6} follow from \ref{Ing3}. 
The final ingredient is \ref{Ing2}, the law of evolution, which in de Broglie-Bohm Theory is commonly referred to as the guidance equation(s).

As the density of states $m(x,t)=|\psi(x,t)|^2$ is now time dependent, the situation differs a little from the iRelax theories in section \ref{2.4}.
Since $\partial m /\partial t=\partial |\psi|^2 /\partial t$ no longer vanishes, the condition placed on the law of evolution, equation \eqref{2.4_condition}, is modified to 
\begin{align}\label{debb_condition}
\nabla\cdot j(x,t)=-\frac{\partial |\psi(x,t)|^2}{\partial t}.
\end{align}
Hence the law of evolution $\dot{x}$ is constrained such that the equilibrium current $j=|\psi|^2\dot{x}$ has a \emph{convergence} $(-\nabla\cdot j_\text{eq})$ equal to the local rate of change of the density $|\psi|^2$. 
In order to derive a set of guidance equations, a current $j(x,t)$ must be sought such that it obeys continuity equation \eqref{debb_condition}.
The guidance equations $\dot{x}$ are then
\begin{align}
\dot{x}=\frac{j(x,t)}{|\psi(x,t)|^2}.
\end{align}
Of course, this is only one real condition upon the $n$-component law of evolution $\dot{x}$. Once one solution $\dot{x}$ is found to equation \eqref{debb_condition}, others $\dot{x}'$ may be found by adding an incompressible current to the equilibrium current, so that
\begin{align}
\dot{x}'=\dot{x}+\frac{j_\text{inc}}{|\psi|^2},
\end{align}
where $\nabla.j_\text{inc}(x,t)=0$.

The following two passages explain two different methods to arrive at de Broglie-Bohm guidance equations. The first is the traditional method, and takes advantage of the kinetic term found in many Hamiltonians. This method results in the canonical de Broglie-Bohm guidance equations that are favored by the majority of researchers in the field. 
Although guidance equations of this form are used exclusively in later chapters, in order to highlight that this is not the only option, a second derivation making use of Green's functions follows.
This results in an integral rather than a differential expression that bears a resemblance to classical electromagnetism.
Both sets of guidance equations are in their own sense unique, and may be argued to be in their own sense natural.

\subsubsection{Canonical guidance equations}
In any continuous basis $\ket{x}$ with components $\braket{x}{\psi}=\psi(x)$, the Schr\"{o}dinger equation may be used to write
\begin{align}\label{traditional_method_misc1}
\frac{\partial |\psi(x)|^2}{\partial t} - \frac{2}{\hbar}\text{Im}\left(\psi^* H \psi\right)=0.
\end{align}
The traditional method used to arrive at guidance equations involves taking advantage of the quadratic kinetic term found in many Hamiltonians to transform the quantity $-(2/\hbar)\text{Im}\left(\psi^* H \psi\right)$ in equation \eqref{traditional_method_misc1} into a total divergence. 
Once this is done, the current $j$ may be picked out, from which the guidance equations $\dot{x}=j/|\psi|^2$ follow.

To see how this works, note that in any quantum theory which has been canonically quantized \cite{Dirac}, canonical coordinates $q_i$ and $p_i$ carry over as operators $\hat{q}_i$ and $\hat{p}_i$.
These obey canonical commutation relations $\left[\hat{q}_i,\hat{p}_j\right]=i\hbar\delta_{ij}$%
\footnote{Extensions to De Broglie-Bohm to account for fermionic field theories in which anti-commutators are used may be found for instance in references \cite{CS07,CW11}.}.
If the Hamiltonian is of the form $\hat{H}=\sum_i \hat{p}_i^2/2m_i + V(\hat{q}_1,...,\hat{q}_n)$, then by adopting the standard Schr\"{o}dinger $\ket{q}$ representation\footnote{The term `Schr\"{o}dinger representation' is used to refer to the representation of the canonical operators by differential operators as for instance used by reference \cite{Hatfield}, and shouldn't be confused with the Schr\"{o}dinger picture. For instance, commutation relation $[q,p]=i\hbar$ is satisfied with $\hat{q}\rightarrow q$, $\hat{p}\rightarrow -i\hbar\partial/\partial q$ in the Schr\"{o}dinger coordinate or configuration representation, and by $\hat{q}\rightarrow i\hbar\partial/\partial p$, $\hat{p}\rightarrow p$ in the Schr\"{o}dinger momentum representation.}, $\hat{q}_i\rightarrow q_i$, $\hat{p}_i\rightarrow -i\hbar\partial/\partial q_i$, the term may be written
\begin{align}
- \frac{2}{\hbar}\text{Im}\left(\psi^* H \psi\right)
&=\sum_i\frac{\hbar}{m_i}\text{Im}\left(\psi^* \partial^2_{q_i} \psi\right)\nonumber \\
&=\sum_i\frac{\hbar}{m_i}\text{Im}\partial_{q_i}\left(\psi^* \partial_{q_i} \psi\right)\nonumber\\
&=\sum_i\partial_i\left[|\psi|^2\frac{\hbar}{m_i}\text{Im}\left(\frac{\partial_{q_i} \psi}{\psi}\right)\right],
\end{align}
which indicates guidance equations,
\begin{align}\label{basic_guidance_long}
\dot{q}_i =\frac{\hbar}{m_i}\text{Im}\left(\frac{\partial_{q_i} \psi}{\psi}\right).
\end{align}
This is most commonly expressed by writing $\psi$ in complex polar form $\psi=|\psi|e^{iS/\hbar}$, whereby equations \eqref{basic_guidance_long} become
\begin{align}\label{basic_guidance_short}
\dot{q}_i = \frac{\partial_{q_i} S}{m_i},
\end{align}
the canonical form of de Broglie-Bohm guidance equations.

When treating a field theory, much the same process is followed, except with the state-space $\Omega$ promoted to a function space, with states promoted to functions, and with functions of such states promoted to functionals.
Take a classical real scalar field $\phi(y,t)$, where $y$ is a 3-space coordinate.
Canonical quantization results in the replacement of $\phi(y,t)$ and its momentum conjugate $\pi(y,t)$ with operators $\hat{\phi}(y)$ and $\hat{\pi}(y)$ that satisfy commutation relations $[\hat{\phi}(y),\hat{\pi}(y')]=i\delta^{(3)}(y-y')$. (Working in the Schr\"{o}dinger picture with units where $\hbar=c=1$.)
The evolution of quantum state $\ket{\psi}$ is still determined by the Schr\"{o}dinger equation, except now the Hamiltonian is the operator equivalent of a functional, $H=\int \mathrm{d}^3y\frac{1}{2}\hat{\pi}(y)^2 + V(\hat{\phi})$.
To find the current $j$, represent the components of quantum state $\ket{\psi}$ in a field configuration basis $\ket{\phi}$, with functional $\braket{\phi}{\psi}=\psi[\phi]$. 
This is called the wave functional.
Take the corresponding functional Schr\"{o}dinger representation for the field operators $\hat{\phi}(y)=\phi(y)$, $\hat{\pi}(y) =\delta/\delta\phi(y)$.
Then, 
\begin{align}
- 2\text{Im}\left(\psi^* H \psi\right)
=\int \mathrm{d}^3y\text{Im}\left(\psi^* \frac{\delta^2\psi}{\delta\phi(y)^2}\right)
=\int \mathrm{d}^3y\text{Im}\frac{\delta}{\delta\phi(y)}\left(\psi^*\frac{\delta\psi}{\delta\phi(y)}\right)\\
=\int \mathrm{d}^3y\frac{\delta}{\delta\phi(y)}\left[|\psi|^2\text{Im}\left(\frac{1}{\psi}\frac{\delta\psi}{\delta\phi(y)}\right)\right].
\end{align}
The operator $\int \mathrm{d}^3y\frac{\delta}{\delta\phi(y)}$ is the functional equivalent of $\nabla\cdot$, and so this final expression may be thought of as a total divergence. Hence, the guidance equation may be read off as 
\begin{align}
\dot{\phi}(y)=\text{Im}\left(\frac{1}{\psi}\frac{\delta\psi}{\delta\phi(y)}\right),
\end{align}
which, using the polar form of the wave functional, $\psi[\phi]=|\psi[\phi]|\exp(iS[\phi])$, may be expressed
\begin{align}
\dot{\phi}(y)=\frac{\delta S}{\delta\phi(y)}
\end{align}
in analogy with equation \eqref{basic_guidance_short}.
This is one method by which quantum fields may be treated formally. 
For practical calculations involving quantum fields (see later chapters), it is generally speaking more convenient to work in Fourier space $\braket{\tilde{\phi}(k)}{\psi}=\psi[\tilde{\phi}(k)]$ with a box-normalized quantum field. 
The box normalization ensures that the number of modes $k$, and hence the dimensionality of the configuration space is countable.

\myparagraph{The character of the canonical law}
The canonical method produces guidance equations that stipulate system velocities proportional to the gradient of the complex phase of the wave function, $\dot{x}\sim\nabla S(x,t)$.
This expression appears particularly compelling when considering a single non-relativistic point particle in 3-space.
Such a particle would be guided around 3-space by the gradient of the 3-space complex phase $S$ of the wave function $\psi(x,t)$.
In this sense, such a theory would be a true `pilot-wave' theory.
Care should be taken with this `wave in 3-space' idea however.
If there are $m$ such particles, $S$ is not defined upon 3-space, but upon configuration space $\mathbb{R}^{3m}$.
And in the scalar field theory described above, the complex phase is the functional $S[\phi]$, and so this resembles the wave-in-space notion even less.

As $\dot{x}\sim \nabla S(x,t)$, it may seem at first glance that the resulting dynamics may be characterized as irrotational.
This is only partially true however. 
As $S(x,t)$ is the complex phase of the wave function, it maps the configuration space as $S:\Omega\rightarrow [0,2\pi)$, not $\Omega\rightarrow\mathbb{R}$, and this has important consequences for the dynamics.
Note for instance that $S(x,t)$ is ill-defined on the nodes of the wave function, at $\psi=0$.
That the canonical guidance equations are not defined on nodes might appear concerning at first, but as $\psi=0$ constitutes two real conditions, in practical situations the nodes are $(n-2)$-dimensional hypersurfaces on the $n$-dimensional configuration space, and hence measure zero.
In any usual circumstance, there is a zero chance of finding a system \emph{on} a node.
Nevertheless, nodes do play an important role in the dynamics.
Consider for instance applying the classical Kelvin-Stokes theorem to velocity field $\dot{x}$, in two or three-dimensions, to a two-dimensional subregion $\omega$ and its boundary $\partial\omega$,
\begin{align}\label{Kelvin-Stokes}
\oint_{\partial\omega}\dot{x}\cdot \mathrm{d}x=\iint_\omega \left(\nabla\times\dot{x}\right)\cdot\mathrm{d}^2x.
\end{align}
If the region $\omega$ contains no nodes, then $\dot{x}=\nabla S$ everywhere in $\omega$, and so $\nabla\times\dot{x}=0$, causing the RHS of equation \eqref{Kelvin-Stokes} to vanish.
If on the other hand, there is a node in $\omega$, then $\dot{x}$ does not equal $\nabla S$ on the node, and so the RHS of equation \eqref{Kelvin-Stokes} need not vanish.
By writing 
\begin{align}
\oint_{\partial\omega}\dot{x}\cdot\mathrm{d}x=\oint_{\partial\omega}\nabla S\cdot\mathrm{d}x=\oint_{\partial\omega}\mathrm{d}S=\Delta S,
\end{align}
the LHS of \eqref{Kelvin-Stokes} may be seen to equal $\Delta S$, the change in complex phase $S$ around the circuit defined by $\partial\omega$.
Of course, in order to ensure $S$ is single valued, it can only change by some integer multiple of $2\pi$ around a closed circuit.
Hence the nodes may be said to have `quantized vorticities' of $\pm2\pi m$ for $m\in \mathbb{Z}^+$, in the sense that $\Delta S=\oint_{\partial\omega}\dot{x}\cdot\mathrm{d}x$ takes only these values when evaluated on a path $\partial\omega$ that winds around a single node%
\footnote{In higher dimensions the same result follows by regarding the `velocity field', $\dot{x}$, as the 1-form found by taking the exterior derivative $d$ of scalar $S$, $\dot{x}=dS$. This is defined everywhere except on nodes. The generalized Stokes' theorem may then be applied to $\dot{x}$, 
\begin{align}
\int_{\partial \omega}\dot{x}=\int_\omega d\dot{x}.
\end{align}
If there is no node within $\omega$, then as the exterior derivative is nilpotent, $d\dot{x}=d^2S=0$ and so the RHS vanishes.
If on the other hand there is a node in $\omega$, then it need not vanish but, again, it may only change as $\pm2\pi m$ because $S$ is single valued.}.
The case of $m=0$ is ruled out as the $m$ relates to the order of the zero created by the node.
See chapter \ref{3} for more on this point.
Close to nodes, the primary component of $\dot{x}$ is that which circulates around the node $\dot{x}_\text{circ}$. 
By taking $\omega$ to be a small disk of radius $r$ around a node%
\footnote{In n-dimensions, as the nodes are n-2 dimensional hypersurfaces, there are always 2 spare dimensions in which to orient a circular $\partial\omega$ so that it winds around a node.}.
Then, by approximating the integral as $|\Delta S|\sim 2\pi|\dot{x}_\text{circ}|r$, and noting that $|\Delta S|$ is restrained to a fixed value, it may be concluded that $|\dot{x}_\text{circ}|\sim 1/r$. 
The circular component of the velocity $\dot{x}$ varies inversely with the distance from the node, and hence diverges as the node is approached.
This vortex that surrounds each node is a generator of chaos, and is often described as a driving factor in quantum relaxation.
So the canonical guidance equations may be characterized as irrotational except for nodes, which are each surrounded by a vortex of quantized strength.
A systematic, in-depth description of the properties of the nodes and how they relate to quantum relaxation is given in chapter \ref{3}.

\subsubsection{Non-canonical guidance equations from Green's functions}
The need to find a continuity equation in some basis $\ket{x}$ suggests the need find some vector $j$ such that
\begin{align}\label{alternative_method_misc1}
\nabla \cdot j(x) = -\frac{\partial |\psi|^2}{\partial t},
\end{align}
where $\nabla \cdot$ is the $n$-dimensional divergence. 
This is bears some resemblance to Gauss' law in classical electromagnetism, and may be solved with a similar procedure.
To find a solution, find a vector Green's function $G(x,x')$ such that 
\begin{align}\label{green_equation}
\nabla \cdot G(x,x') =\delta^{(n)}(x-x').
\end{align}
In $n$-dimensional hyperspherical coordinates the divergence of distribution $f=(f_r,f_{\theta_1},...,f_{\theta_{n-1}})$ is
\begin{align}
\nabla \cdot f= \frac{1}{r^{n-1}}\frac{\partial}{\partial r}\left(r^{n-1} f_r\right)+\text{angular terms}.
\end{align}
As may be readily checked, equation \eqref{green_equation} is satisfied by Green's function 
\begin{align}\label{alternative_method_green_function}
G(x,x')=-\frac{1}{A(n)}\frac{\widehat{\Delta x}}{|\Delta x|^{n-1}},
\end{align}
where $\Delta x$ is a vector pointing from state $x$ to state $x'$ in $\Omega$ (not the Hilbert space), $\widehat{\Delta x}$ is the corresponding unit vector, and $A(n)$ is the surface area of the unit $n$-sphere, $A(n)=2\pi^{n/2}/\Gamma(n/2)$. (In 3-space this is the familiar factor of $4\pi$.)
Green's function \eqref{alternative_method_green_function} suggests equation \eqref{alternative_method_misc1} is solved by current
\begin{align}\label{Green_current}
j(x,t)=\frac{1}{A(n)} \int_\Omega\mathrm{d}^nx' \frac{\widehat{\Delta x}}{|\Delta x|^{n-1}} \frac{\partial |\psi(x')|^2}{\partial t}.
\end{align}
The corresponding guidance equation is
\begin{align}\label{Green_velocity}
\dot{x}=\frac{1}{A(n)|\psi(x)|^2} \int_\Omega\mathrm{d}^nx' \frac{\widehat{\Delta x}}{|\Delta x|^{n-1}} \frac{\partial |\psi(x')|^2}{\partial t}.
\end{align}

So, in contrast to the differential expression $\dot{x}\sim\nabla S$ obtained through the canonical method, the outcome of using this method is an integral expression.
As this is an integral expression, it is more difficult than the canonical equations to model computationally.
Possibly it is for this reason that a study focusing on guidance equation \eqref{Green_velocity} is yet to make it into the literature.
Practical considerations aside, however, expression \eqref{Green_current} does possess some intriguing properties.
Recall that canonical expression $\dot{x}\sim\nabla S$ may be considered unique in the sense that it is irrotational (excepting nodes).
In its own sense, the equilibrium current \eqref{Green_current} may be considered unique and minimal. 
Of course, continuity equation \eqref{alternative_method_misc1} determines the divergence component of the current, but it allows for arbitrary `curl' parts.
Current \eqref{Green_current} is unique and minimal in the sense that it doesn't feature any such `curl' part, and so may be considered `curl-free' or irrotational.
As may be checked readily in 2 and 3-dimensions, 
\begin{align}
\nabla\times\frac{\widehat{\Delta x}}{|\Delta x|^{n-1}}=0.
\end{align}
In higher dimensions the terms curl-free and irrotational are used in the sense that the current may be expressed as the gradient of a scalar. 
In this case as a result of the fact that
\begin{align}
\frac{\widehat{\Delta x}}{|\Delta x|^{n-1}}=\nabla\left[\frac{1}{n-2}\frac{1}{|\Delta x|^{n-2}}\right].
\end{align}
Equilibrium current \eqref{Green_current} and guidance equation \eqref{Green_velocity} are intuitively minimal.
The quantity $\widehat{\Delta x}/|\Delta x|^{n-1}$ is nothing more than an $n$-dimensional inverse square.
If the density of states $|\psi(x')|^2$ nearby to a state $x$ increases, systems in state $x$ will be drawn towards $x'$ with a velocity that varies inversely to the distance of separation. 
In this regard, this alternative formulation could be considered to involve \emph{straight} lines-of-influence.
This might be considered appealing for the purposes of conserved quantities for instance. 
Another feature that is lacking from the canonical formulation is a momentum representation.
Canonical equations \eqref{basic_guidance_short} rely on the presence of a quadratic kinetic term in the Hamiltonian, and could only in principle permit a momentum representation in the presence of a quadratic potential term.
The alternative guidance equation \eqref{Green_velocity} does not rely on any particular form of the Hamiltonian and so permits a momentum representation to be used.

\subsection{An imperfect fit? The need for a unified realist theory?}
The bulk of this chapter has been devoted to exposing the consequences of adopting the two minimal tenets of realism\footnote{The term realism is used in its broadest possible sense, to mean the existence of a state-space.}+ information conservation.  
The resulting theories are automatically endowed with a concept of thermodynamic relaxation, the notion of equilibrium and nonequilibrium distributions, and the concept of thermodynamic relaxation.
Although this whole approach was initially motivated by extraordinary success of quantum relaxation and quantum nonequilibrium in De Broglie-Bohm theory, it is useful to reflect on how well de Broglie-Bohm theory fits into the formalism developed.
To illustrate, consider the first three iRelax ingredients.

\paragraph{\ref{Ing1}}
The state-space of classical mechanics is of course phase-space, which may be loosely characterized as spanning all possible canonical coordinates $q$ and $p$. The state-space of canonical De Broglie-Bohm theory, on the other hand, is \emph{configuration space}, and only spans the $q$'s, not the $p$'s.
So by quantizing a classical theory, the dimensionality of the state-space (the number of parameters required to determine a system's evolution), is apparently halved.
Of course this is not without its consequences.
Recall that a coordinate in state-space is supposed to `constitute sufficient information to entirely determine to future state of the system'. 
In de Broglie-Bohm theory, this is not the case, as is made evident by the form of the guidance equations, \ref{Ing2}.
\paragraph{\ref{Ing2}}
Although it has not specifically been mentioned yet, guidance equations \eqref{basic_guidance_short} and \eqref{Green_velocity}, \emph{actually do not} by themselves determine the future evolution of a quantum system.
In both cases, some knowledge of the wave function $\psi(x,t)$ is required to be entered on the RHS%
\footnote{Curiously, the two different methods require distinct (non-overlapping) pieces of information on $\psi(x,t)$. The Canonical guidance equations \eqref{basic_guidance_short} require knowledge of the complex phase $S(x,t)$ of the wave function, whereas the non-canonical require knowledge only of its norm $|\psi(x,t)|^2$.}.
Hence, if de Broglie-Bohm theory is viewed as an iRelax theory, it must be one that is necessarily supplemented with extra knowledge on the quantum state.
\paragraph{\ref{Ing3}}
The derivation of de Broglie-Bohm theory above began with the premise of taking the equilibrium distribution $\rho_\text{eq}(x,t)$ to be the Born distribution $|\psi(x,t)|^2$.
The obvious question of course is why this distribution is time-dependent.
In a self-contained theory with no external factors, there is no reason for this to be so.
Indeed, by the considerations of section \ref{2.4}, $\rho_\text{eq}$ is equal to density of states $m$, and so the density of states must also be time-dependent $m=m(x,t)$.
The need for outside factors to determine the geometry of the state-space is surely in conflict with the principle of indifference.

So although de Broglie-Bohm theory does possess an iRelax framework of sorts, it does not appear to be complete.
The de Broglie-Bohm equivalent of the iRelax state is the configuration, but this does not supply the requisite information to determine a system's evolution. In addition to the configuration, the quantum state $\psi$ must also be known.  
Suppose for a moment the state-space in question to be $\Omega=\mathbb{R}^n$. Then true iRelax theories map states to other states as
\begin{alignat}{3}
\dot{x}:\mathbb{R}^n&\longrightarrow\mathbb{R}^n &\qquad\qquad
x:\mathbb{R}^n\times\mathbb{R}&\longrightarrow\mathbb{R}^n\\
x&\longmapsto \dot{x}(x) &\qquad\qquad
(x_0,t)&\longmapsto x(x_0,t)\nonumber,
\end{alignat}
whereas de Broglie-Bohm maps states as
\begin{alignat}{3}
\dot{x}:\mathbb{R}^n\times\mathcal{H}&\longrightarrow\mathbb{R}^n &\qquad\qquad
x:\mathbb{R}^n\times\mathbb{R}\times\mathcal{H}&\longrightarrow\mathbb{R}^n\\
(x,\psi)&\longmapsto \dot{x}(x,\psi) &\qquad\qquad
(x_0,t,\psi)&\longmapsto x(x_0,t,\psi)\nonumber
\end{alignat}
If nature does admit a complete underlying quantum iRelax framework, then clearly it is one in which configuration space $\Sigma$ does not exhaust the true state-space $\Omega$.
The quantum state exists within its own Hilbert space, $\mathcal{H}$, though. 
And so it may be informative to entertain the idea of adjoining the two spaces $\Omega\overset{?}{\sim}\Sigma\otimes\mathcal{H}$, and searching for a theory with a time-invariant density $m(x)$, that is consistent with conventional quantum theory given knowledge of $\ket{\psi}$.
Conventional quantum theory could then be viewed a effective theory obtained by integrating over the configuration space degrees of freedom. (A marginal theory if you will.)
De Broglie-Bohm would be viewed as the effective theory found by presupposing perfect knowledge of the Hilbert space degrees of freedom. (A cross-section of the underlying theory.)
Any further speculation is unwarranted at the present time. 
However this is clearly an avenue for future work.
\section{Outlook}\label{2.6}
As has hopefully been made clear, if nature truly does possess an iRelax framework, whether fully implemented, or with a partial de Broglie-Bohm style implementation, then its details are still very much an open question.
With all the open possibilities (the correct basis, the `curl' components of the velocity, etc.) it feels unwise to put too much faith into any single formulation.
Despite these misgivings over the exact details, the startling fact remains that the imposition of the really-quite-minimal principle of information conservation upon a background abstract state-space results in theories with a concept of thermodynamic relaxation.
And de Broglie-Bohm theory is the prototypical example of how this might be applied to quantum theory, so that quantum probabilities arise in the same way as classical ignorance/uncertainty probabilities. 

There appear to be two obvious paths forward.
For the sake of argument, call these the route of the idealist and the route of the pragmatist.
The idealist would attempt to find ways to pin down the as-yet undetermined factors in the implementation. They might search for an application of the iRelax framework that unifies quantum theory with the ideas brought forward by de Broglie-Bohm.
The pragmatist, on the other hand, would note that all conceived theories to-date predict violations of standard quantum predictions for nonequilibrium distributions $\rho_\text{noneq}$.
And so surely the pursuit of these quantum violations resulting from quantum nonequilibrium is a clear way forward. 
Such a discovery, if made, would rule out all other major interpretations of quantum mechanics, most of which do not deviate from conventional quantum mechanics in their predictions. 
(More on this in chapters \ref{4} and \ref{5}.)
So the real task is to discover evidence for, or possibly even a source of, quantum nonequilibrium, as this would give \emph{experimental} weight to the whole affair.
In order to make experimental predictions, however, the pragmatist faces the dual challenges of an uncertain theory and as yet unknown quantum nonequilibrium distributions.

The idealist might berate the pragmatist for attempting to make predictions based upon a theory with meaningful unknown factors. 
The pragmatist might scorn the idealist for pursuing a theory with no experimental proof. 
To the mind of the author, however, both approaches have significant merit.
For the prospect is showing that quantum probabilities fit into a classical framework. Thus helping to de-mystify quantum theory in general, and providing a wealth of other quantum-violating predictions.
The purpose of this chapter has been to attempt to clearly pose the problem of the idealist, so that it may provide a foundation for future idealists to build upon.
For the remainder of this thesis, the role of the pragmatist shall be adopted.

\newpage
\begin{figure}
\begin{center}
\resizebox{0.9\textwidth}{!}{%
\begin{turn}{270}
\begin{tabular}{|p{70pt}|p{76pt}|p{75pt}|p{70pt}|p{75pt}|p{70pt}|p{85pt}|}
\hline
&
\mbox{\ref{Ing1}:} State Space $\Omega=\{x\}$&
\mbox{\ref{Ing2}:} Law of evolution $\dot{x}$&
\mbox{\ref{Ing3}:} Equilibrium distribution $\rho_\text{eq}(x)$&
\mbox{\ref{Ing4}:} \mbox{Information} measure \mbox{(entropy)} $S$&
\mbox{\ref{Ing5}:} Information Conservation statement 
&Issues?\\
\hline Discrete states (section \ref{2.2})&
\mbox{$\Omega=\{A,B,C,...\}$}&
Iterative \mbox{permutations}&
\mbox{$p_A=p_B=\dots$}&
Gibbs entropy $-\sum_{i}p_i\log p_i$&
One-to-one map between states&
No issues\\
\hline Flat Continuous state-space (section \ref{2.3})&
$\Omega=\{x\}=\mathbb{R}^n$&
Divergence free, $\nabla\cdot\dot{x}=0$&
Uniform $\rho(x)$ w.r.t.\ $x$, \mbox{$\rho(x)=\text{constant}$}&
Differential entropy $-\int \mathrm{d}^nx\rho\log\rho$&
Liouville's theorem \mbox{$\frac{d\rho(x,t)}{dt}=0$}&
Issues with diff.\ entropy  (cf. section \ref{2.4})\\ 
\hline Classical \mbox{mechanics} (section \ref{2.3})&
Phase Space, $\Omega=\{x\}=\{q,p\}$&
Hamilton's equations&
Uniform $\rho(q,p)$ (w.r.t. canonical coords.)&
Differential entropy $-\int\mathrm{d}q\mathrm{d}p\,\rho\log\rho$&
Liouville's theorem $\frac{d\rho(q,p,t)}{dt}=0$&
Issues with diff.\ entropy  (cf. section \ref{2.4})\\ 
\hline Generalized continuous state-space (section \ref{2.5})&
$\Omega = \{x\}$ (at least locally)&
Divergence free w.r.t.\ density of states $m(x)$, \mbox{$\nabla\cdot[m(x)\dot{x}]=0$}&
$\rho(x,t)=m(x)$&
Jaynes entropy $-\int\mathrm{d}^nx\,\rho\log\frac{\rho}{m}$&
Generalized Liouville theorem $\frac{d}{dt}\left(\frac{\rho}{m}\right)=0$&
No issues (excepting the aforementioned missing link)\\
\hline De Broglie-Bohm type (section \ref{2.4})&
Configuration space $\{x\}=\Sigma$ &
$\dot{x}=j/|\psi|^2$. Canonically \mbox{$\dot{x}\sim\partial_{x_i}S/m_i$}, where \mbox{$\psi=|\psi|^{iS/\hbar}$}.&
$\rho(x,t)=|\psi(x,t)|^2$&
Valentini entropy $-\int\mathrm{d}^nx\,\rho\log\frac{\rho}{|\psi|^2}$&
Generalized Liouville theorem $\frac{d}{dt}\big(\frac{\rho}{|\psi|^2}\big)=0$ due to $m=|\psi|^2$&
State \mbox{incompletely} specified by $x$. Need to know $\psi$. Partial iRelax structure.\\
\hline Unified \mbox{quantum theory?} (section \ref{2.5})&
$\Omega=\{x\}=\Sigma\otimes\mathcal{H}$?&
Time independent $\dot{x}=\dot{x}(x)$. Subject to \mbox{$\nabla\cdot[m(x)\dot{x}]=0$}&
Time \mbox{independent} $\rho_\text{eq}=\rho_\text{eq}(x)$&
$-\int\mathrm{d}^nx \rho\log\frac{\rho}{\rho_\text{eq}}$&
$\frac{d}{dt}\big(\frac{\rho}{\rho_\text{eq}}\big)=0$.&
Not yet formulated. Unclear whether possible.\\
\hline
\end{tabular}
\end{turn}
}
\end{center}\vspace{-5mm}
\caption{A summary of the iRelax theories considered in chapter \ref{2}.}\label{IPRT_summary}
\end{figure}

\chapter*{PREFACE TO CHAPTER 3}\addcontentsline{toc}{chapter}{PREFACE TO CHAPTER 3}
\vspace{-2mm}Chapter \ref{2} is the theoretical background and justification for relaxation, both quantum and classical.
It attests the tendency of nonequilibrium distributions to approach equilibrium.
However it does not say anything about whether equilibrium will actually be reached.
In any individual case, there may arise practical barriers that prevent full relaxation to equilibrium.
For classical systems, conservation laws are such barriers.
Consider, for instance, the simple two-dimensional example depicted in figure \ref{classical_relaxation}.
In that example, the Hamiltonian is time-independent and so energy is conserved by the individual systems.
But the ensemble distribution corresponding to equilibrium and maximum differential entropy \eqref{differential_entropy}, is uniform on the phase space, and hence is unbounded in energy.
Energy conservation restricts trajectories to closed circuits in the phase space defined by $E(q,p)=$ constant.
The result is that the relaxation reaches the roughly stationary distribution depicted in the final frame of figure \ref{classical_relaxation}, but goes no further towards full relaxation to a uniform distribution.
This distribution is certainly closer to equilibrium than the initial conditions.
To the eye it appears more spread out. 
It is higher in entropy than initially.
But it is not full relaxation.

In de Broglie-Bohm theory, conservation laws exist on the level of the quantum state, and not the individual trajectories.
So there is no directly analogous barrier to quantum relaxation.
Nevertheless, the field is still very much in its infancy, and the possibility of there existing non-trivial barriers to relaxation still is very much open.
The discovery of such barriers would certainly be of importance to the field as they would lend support to the prospect of quantum nonequilibrium persisting and being detectable today. 

Since the advent of the widespread availability of personal computers in the 1990s, there have been a variety of computer-aided investigations into the properties of de Broglie-Bohm trajectories.  
These may be roughly divided into three main categories.
First there were investigations into the chaotic nature of the trajectories \cite{deP95,FS95,PV95,DM96,SC96,C00,MPD00,SF08}.
Second, through a series of papers, \cite{F97,WS99,FF03,WP05,WPB06,EKC07,WPB06,CE08,EKC09,CMS16,TCE16}, there came a gradual but general acceptance that the chaos was generated (or at least driven) by the nodes of the wave function.
Third, for the most part inspired by the first demonstration of quantum relaxation in 2005 \cite{VW05}, there came a series of papers discussing the properties of the relaxation, \cite{EC06,B10,CS10,SC12,TRV12,CDE12,ACV14,KV16,ECT17}.

Of this final category, references \cite{ACV14} and \cite{KV16} have most directly considered the possibility of relaxation barriers.
Reference \cite{ACV14} found that for sufficiently simple states, there can remain a residual nonequilibrium that is unable to decay away.
Reference \cite{KV16} found that for system that are perturbatively close to the ground (vacuum) state, trajectories could be confined to regions in the configuration space.
To obtain their results, both of these studies evolved de Broglie trajectories numerically through periods of time that push the boundaries of what is computationally feasible at present. 
Chapter \ref{3} also contributes to the effort to find barriers, but does so through a different approach in order to avoid the computational bottleneck.

\chapter{EXTREME QUANTUM NONEQUILIBRIUM, NODES, VORTICITY, DRIFT, AND RELAXATION RETARDING STATES}\label{3}
\vspace{-15mm}
\begin{center}
\textit{Nicolas G Underwood\hyperlink{1st_address_chap_3}{$^\dagger$}\hyperlink{2nd_address_chap_3}{$^\ddagger$}}
\end{center}
\begin{center}
\textit{Adapted from J. Phys. A: Math. Theor. 51 055301 (2018)} \cite{NU18}
\end{center}
\begin{center}
\hypertarget{1st_address_chap_3}{$^\dagger$}Perimeter Institute for Theoretical Physics, 31 Caroline Street North, Waterloo, Ontario, N2L 2Y5, Canada\\
\hypertarget{2nd_address_chap_3}{$^\ddagger$}Kinard Laboratory, Clemson University, Clemson, 29634, SC, United States of America
\end{center}

\section*{Abstract}
Consideration is given to the behaviour of de Broglie trajectories that are separated from the bulk of the Born distribution with a view to describing the quantum relaxation properties of more `extreme' forms of quantum nonequilibrium. For the 2-dimensional isotropic harmonic oscillator, through the construction of what is termed the `drift field', a description is given of a general mechanism that causes the relaxation of `extreme' quantum nonequilibrium. Quantum states are found which do not feature this mechanism, so that relaxation may be severely delayed or possibly may not take place at all. A method by which these states may be identified, classified and calculated is given in terms of the properties of the nodes of the state. Properties of the nodes that enable this classification are described for the first time.

\section{Introduction}\label{3.1}
De Broglie-Bohm theory \cite{deB28,B52a,B52b,Holl93} is the archetypal member of a class \cite{CS10,W07,DG97} of interpretations of quantum mechanics that feature a mechanism by which quantum probabilities arise dynamically \cite{CS10,AV91a,AV92,VW05,TRV12,ACV14,SC12} from standard ignorance-type probabilities. 
This mechanism, called `quantum relaxation', occurs spontaneously through a process that is directly analogous to the classical relaxation (from statistical nonequilibrium to statistical equilibrium) of a simple isolated mechanical system as described by the second thermodynamical law.
Demonstration of the validity of quantum relaxation \cite{VW05,EC06,CS10,SC12,TRV12,ACV14} has meant that the need to postulate agreement with quantum probabilities outright may be dispensed with. It has also prompted the consideration of `quantum nonequilibrium'--that is, violations of standard quantum probabilities, which in principle could be observed experimentally\footnote{This conclusion is not entirely without its critics. For an alternative viewpoint see references \cite{DGZ92,DT09}. For counterarguments to this viewpoint see references \cite{AV96,AV01}.}.
In other words, these theories allow for arbitrary nonequilibrium probabilities, only reproducing the predictions of standard quantum theory in their state of maximum entropy.
It has been conjectured that the Universe could have begun in a state of quantum nonequilibrium \cite{AV91a,AV92,AV91b,AV96,AV01,UV15} which has subsequently mostly degraded, or that exotic gravitational effects may even generate nonequilibrium \cite{AV10,AV07,AV04b}.
If such nonequilibrium distributions were indeed proven to exist, this would not only demonstrate the need to re-assess the current quantum formalism, but could also generate new phenomena \cite{AV91b,AV08,AV02,PV06}, potentially opening up a large field of investigation.
In recent years some authors have focused their attention upon the prospect of measuring such violations of quantum theory. This work is becoming well developed in some areas, for instance with regard to measurable effects upon the cosmic microwave background \cite{AV10,CV13,CV15,CV16}. 
In other areas, for instance regarding nonequilibrium signatures in the spectra of relic particles \cite{UV15,UV16}, the literature is still in the early stages of development.

Other authors have instead focused upon the de Broglie trajectories and what this might tell us about the process of quantum relaxation.
In this regard, some statements may be made with certainty.
De Broglie trajectories tend to be chaotic in character \cite{F97,WP05,EC06,EKC07,TCE16,WS99}, and this is generally \cite{F97,WS99,WPB06}, but not always \cite{SC12}, attributed to the nodes of the wave function.
This chaos, in turn, is generally seen as the driving factor in quantum relaxation \cite{WPB06,TRV12,ECT17}.
Certainly, for relatively small systems with a small superposition and relatively minor deviations from equilibrium, relaxation takes place remarkably quickly \cite{VW05,TRV12,CS10}. The speed of the relaxation appears to scale with the complexity of the superposition \cite{TRV12}, whilst not occurring at all for non-degenerate energy states. 
Of course, the particles accessible to common experimentation have had a long and turbulent astrophysical past, and hence will have had ample time to relax. Consequently, it is not unreasonable to expect a high degree of conformance to standard quantum probabilities in everyday experiments.
This of course lends credence to the notion of quantum nonequilibrium despite the apparent lack of experimental evidence to date.
That said, there are still many open questions regarding the generality of quantum relaxation, and for sufficiently minimally interacting systems it may be possible that there are some windows of opportunity for quantum nonequilibrium to persist. The intention of this work is to address two such questions. These may be phrased as follows.\\
\\
\emph{If a system is very out-of-equilibrium, how does this affect the relaxation time?}\\
To date, every study of quantum relaxation has been concerned only with nonequilibrium distributions that begin within the bulk of the standard quantum probability distribution (variously referred to as $|\psi|^2$, the Born distribution or quantum equilibrium). There is however no \emph{a priori} reason why the initial conditions shouldn't specify a distribution that is far from equilibrium. (We refer to nonequilibrium distributions that are appreciably separated from the bulk of the Born distribution as `extreme' quantum nonequilibrium.) In the early development of the theory the main priority was to prove the validity of the process of relaxation, and so it was natural to choose initial distributions that had the possibility of relaxing quickly enough to be computationally tractable. When considering the possibility of quantum nonequilibrium surviving to this day however, this choice may appear as something of a selection bias. Certainly, the chaotic nature of the trajectories can make longer term simulations impractical, and so it is not entirely surprising that such situations have yet to make it into the literature. For the system discussed here (the 2 dimensional isotropic oscillator), state of the art calculations (for instance \cite{ACV14,KV16}) achieve evolution timescales of approximately $10^2$ to $10^4$ wave function periods and manage this only with considerable computational expense. A new numerical methodology described in section \ref{sec:drift} (construction of the `drift field') allows this computational bottleneck to be avoided, providing the means to describe the behaviour of systems which relax on timescales that may be longer by many orders of magnitude. \\
\\
\\
\emph{Is it possible that for some systems, relaxation does not take place at all?}\\
As has already been mentioned, relaxation does not occur for non-degenerate energy eigenstates. This is widely known and follows trivially as a result of the velocity field vanishing for such states. It is not yet clear, however, whether relaxation may be frozen or impeded more commonly and for more general states\footnote{Of course, since de Broglie-Bohm is time-reversible it is always possible to contrive initial conditions such that distributions appear to fall out of equilibrium. (A so-called `conspiracy' of initial conditions leading to `unlikely' entropy-decreasing behaviour \cite{AV91a}.) Indeed, since the cause of quantum relaxation is in essence the same as the cause of classical statistical relaxation (see chapter \ref{2}), this is also true in classical mechanics. The possibility of such behaviour is excluded from the discussion for the usual reasons (see for instance reference \cite{Daviesbook}).}. Some authors have recently made some inroads regarding this question. One recent study \cite{KV16} found that for states that were perturbatively close to the ground state, trajectories may be confined to individual subregions of the state space, a seeming barrier to total relaxation. Another found that for some sufficiently simple systems there could remain a residual nonequilibrium that is unable to decay \cite{ACV14}. In this paper, a number of points are added to this discussion. Firstly, a general mechanism is identified that causes distributions that are separated from the bulk of the Born distribution to migrate into the bulk, a necessary precursor to relaxation for such systems. Also identified are states for which this mechanism is conspicuously absent. It is argued that for such states, the relaxation of systems separated from the bulk of $|\psi|^2$ will be at least severely retarded and possibly may not take place at all. It is shown that quantum states may be cleanly categorised according to the vorticity of their velocity field and that this vorticity may be calculated from the state parameters. With supporting numerical evidence, it is conjectured that states belonging to one of these categories always feature the relaxation mechanism, and also that for states belonging to another of the categories the mechanism is always absent. This provides a method by which states that feature either efficient or retarded (possibly absent) relaxation may be calculated. 

The article is structured as follows. Section \ref{sec:nodes} provides a description of the quantum system under investigation and provides an explanation of the role of nodes in quantum relaxation. A number of previously unknown properties of nodes are described which are valid under two general assumptions. These properties allow the categorisation of quantum states by their total vorticity. In section \ref{sec:drift} the `drift field' is introduced, its construction is outlined, and categories of drift field are described. An account is given of the mechanism by which `extreme' quantum nonequilibrium relaxes and it is argued that states that do not feature this mechanism will in the least exhibit severely retarded relaxation. A systematic study of 400 randomised quantum states is described which allows a link to be made between vorticity categories of section \ref{sec:nodes} and the categories of drift fields. These links are then stated as formal conjectures and supported with data from a further 6000 quantum states.
Finally, in section \ref{sec:conclusion} the main results of the investigation are summarised and implications for the prospect of quantum nonequilibrium surviving to this day are commented upon, with a view to future research directions.

\section{The system, its nodes and the vorticity theorem}\label{sec:nodes}
In studies of quantum relaxation, the two-dimensional isotropic harmonic oscillator is often chosen as the subject of investigation.
This is partly as it may be shown to be mathematically equivalent to a single uncoupled mode of a real scalar field \cite{AV07,AV08,ACV14,CV15,UV15,KV16}, granting it physical significance in studies concerning cosmological inflation scenarios or relaxation in high energy phenomena (relevant to potential avenues for experimental discovery of quantum nonequilibrium).
From a more practical perspective, two-dimensional systems also lend themselves well to plotting, allowing a more illustrative description to be made.
There is yet to be a systematic study of relaxation in many dimensions, and such a study would certainly add to our current understanding of quantum relaxation. That said, one primary intention of this work is to introduce a method which is tailored to provide an effective description of more `extreme' forms of nonequilibrium, and for this purpose it is useful to have other works with which to draw comparison. 
For these reasons, the two-dimensional isotropic harmonic oscillator is the subject of this investigation.
It has been useful to depart from convention, however, by working primarily with an `angular' basis of states that is built, not from the energy states of two one-dimensional oscillators (what will be referred to as a Cartesian basis), but from states that are simultaneous eigenstates of total energy and angular momentum. This is for a number of reasons. Principally, many of the results described in this section have been found as a result of using this basis. Also, as described in section \ref{sec:drift}, the long-term drift of trajectories decomposes well into radial and angular components, the size of which may differ by orders of magnitude. The angular basis is naturally described in terms of polar coordinates, and so its use helps to reduce numerical errors. Finally, the angular basis is arguably more natural as it takes advantage of the symmetry of the system. There are many sets of Cartesian bases but only one angular basis. Adoption of a Cartesian basis has lead to the study of states that could be considered `unusual' or `finely-tuned'. For instance, in \cite{ACV14} the authors used the notation $M=4$ and $M=25$ to denote the `first 4' and the `first 25' energy states. By $M=4$ it was meant that the state had $\psi_{00}$, $\psi_{10}$, $\psi_{01}$ and $\psi_{11}$ components (using the notation below). It is the case, however, that $\psi_{11}$ has the same energy as $\psi_{20}$ and $\psi_{02}$. Furthermore, it may easily be shown that under rotations energy states mix amongst other states of equal energy. A rotation of such a state would therefore supplement the superposition with $\psi_{20}$ and $\psi_{02}$ parts, turning what was referred to as $M=4$ into a superposition of 6 states. Similarly, rotated axes transform what was referred to as $M=25$ into the lowest 45 energy states. It is debatable whether states that have been selected in this manner should or should not be considered finely-tuned. In the course of the present study, however, it was found that such states exhibit quite unusual, indeed pathological behaviour. It is the intention here to keep the discussion general and hence to avoid discussion of such states, which will be the subject of another paper. In order to exclude such `fine-tuning', the following criterion is held throughout this work. If any state may be considered to have been selected using criteria that restricts the Hilbert space in such a manner that the resulting subspace is measure-zero, then this shall be considered fine-tuning and worthy to be disregarded from discussion. For example, superpositions with components that are exactly the same magnitude or that differ in phase by exactly $\pi/2$ are considered to be finely-tuned. Of course, a state in a Hilbert space that is completely unrestricted in this manner would require an infinite number of parameters to specify and make the problem intractable. Hence, it is necessary to make one exception to the rule. This exception is chosen to be an upper limit to the allowable energy in a superposition that, following the notation of \cite{ACV14} will be specified by the symbol $M$. With the stated considerations, however, $M$ should be understood to take only the values 1, 3, 6, 10, 15 etc. Note that $M$ may be unambiguously referred to in this manner as for instance, accounting for degeneracy, a combination of the first 15 energy states remains a combination of the first 15 energy states regardless of the basis used (Cartesian, rotated Cartesian, or angular). 

It is useful to develop the Cartesian description in parallel with the angular description.
The subject of investigation is an isotropic two-dimensional harmonic oscillator of mass $m$ and frequency $\omega$.
The standard Cartesian coordinates $(q_x,q_y)$, radial coordinate $r$, and time $t$, are replaced with dimensionless counterparts $(Q_x,Q_y)$, $\eta$, and $T$ respectively. These are related by
\begin{align}
Q_x=\sqrt{\frac{m\omega}{\hbar}}q_x,\quad
Q_y=\sqrt{\frac{m\omega}{\hbar}}q_y,\quad
\eta=\sqrt{\frac{m\omega}{\hbar}}r,\quad
T=\omega t.
\end{align}
The symbol $\varphi$ is used to denote the anticlockwise angle from the $Q_x$ axis. The partial derivative with respect to dimensionless time $T$ is denoted by the placement of hollow dot $\overset{\circ}{}$ above the subject of the derivative. 
In these coordinates the Schr\"{o}dinger equation for the oscillator is
\begin{align}
\frac{1}{2}\left(\partial_{Q_x}^2+\partial_{Q_y}^2+Q_x^2+Q_y^2\right)\psi&=i\overset{\circ}{\psi},\\
\frac{1}{2}\left(\partial_\eta^2+\eta^{-1}\partial_\eta+\eta^{-2}\partial_\varphi^2+\eta^2\right)\psi&=i\overset{\circ}{\psi}.
\end{align}
From these equations, continuity equations
\begin{align}
0&=\overset{\circ}{|\psi|^2}+\partial_{Q_x}\left[|\psi|^2\text{Im}(\partial_{Q_x}\psi/\psi)\right]+\partial_{Q_y}\left[|\psi|^2\text{Im}(\partial_{Q_y}\psi/\psi)\right],\\
0&=\overset{\circ}{|\psi|^2}+\eta^{-1}\left[\partial_\eta\left(\eta|\psi|^2\text{Im}(\partial_\eta\psi/\psi)\right)+\partial_\varphi\left(|\psi|^2\eta^{-1}\text{Im}(\partial_\varphi\psi/\psi)\right)\right],
\end{align}
may be found in the usual manner, from which follow the de Broglie guidance equations
\begin{align}\label{xy_guidance}
\overset{\circ}{Q_x}=\text{Im}(\partial_{Q_x}\psi/\psi),&\quad\overset{\circ}{Q_y}=\text{Im}(\partial_{Q_y}\psi/\psi),\\
\overset{\circ}{\eta}=\text{Im}(\partial_\eta\psi/\psi),&\quad\overset{\circ}{\varphi}=\eta^{-2}\text{Im}(\partial_\varphi\psi/\psi).\label{angular_guidance}
\end{align}
The expansion in terms of Cartesian basis states $\psi_{n_x n_y}$ is expressed
\begin{align}\label{xy_state}
\psi=\sum_{n_n n_y}D_{n_x n_y}e^{-i(n_x+n_y)T}\psi_{n_x n_y}(Q_x,Q_y),
\end{align}
where the basis states are
\begin{align}
\psi_{n_x n_y}=\frac{1}{\sqrt{n_x!n_y!}}\left(a_x^\dagger\right)^{n_x}\left(a_y^\dagger\right)^{n_y}\psi_{00}=\frac{H_{n_x}(Q_x)H_{n_y}(Q_y)}{\sqrt{2^{n_x}2^{n_y}n_x!n_y!}}\psi_{00},\label{xy_basis}
\end{align}
$\psi_{00}$ is the ground state, and $a_x^\dagger$ and $a_y^\dagger$ denote the raising operators in their respective $Q_x$ and $Q_y$ directions. It will sometimes also be useful to express the complex coefficients $D_{n_x n_y}$ in terms of the real $d_{n_x n_y}$ and $\theta_{n_x n_y}$ which are related as $D_{n_x n_y}=d_{n_x n_y}\exp(i\theta_{n_x n_y})$. 

The angular basis is composed of simultaneous eigenstates of the total energy and angular momentum\footnote{Much of the notation adopted follows the conventions of \cite{CT}. In particular, it may be useful to note that the subscripts of symbols $n_d$ and $n_g$ derive from the French for right and left.}. The basis states $\chi_{n_d n_g}$ may be constructed in an analogous manner to equation \eqref{xy_basis}, as
\begin{align}\label{angular_basis}
\chi_{n_d n_g}=\frac{1}{\sqrt{n_d!n_g!}}(a_d^\dagger)^{n_d}(a_g^\dagger)^{n_g}\chi_{00}=e^{i(n_d-n_g)\varphi}f_{n_d n_g}(\eta)\chi_{00},
\end{align}
where $a_d^\dagger$ and $a_g^\dagger$ are right and left raising operators related to the usual Cartesian raising operators by
\begin{align}
a_d^\dagger=\frac{1}{\sqrt{2}}\left(a_x^\dagger + ia_y^\dagger\right),\quad a_g^\dagger=\frac{1}{\sqrt{2}}\left(a_x^\dagger-ia_y^\dagger\right).
\end{align}
The $f_{n_d n_g}(\eta)$ are polynomials of order $n_g+n_d$ in radial coordinate $\eta$, the first 15 of which are
\begin{align}\label{eq:f}
&f_{00}=1,\quad f_{10}=f_{01}=\eta,\quad f_{20}=f_{02}=\frac{1}{\sqrt{2}}\eta^2, \quad f_{11}=\eta^2-1,\nonumber\\
&f_{30}=f_{03}=\frac{1}{\sqrt{6}}\eta^3,\quad f_{21}=f_{12}=\frac{1}{\sqrt{2}}\left(\eta^3-2\eta\right),\\
&\resizebox{\textwidth}{!}{$
f_{40}=f_{04}=\frac{1}{\sqrt{24}}\eta^4,\quad\!\! f_{31}=f_{13}=\frac{1}{\sqrt{6}}\left(\eta^4-3\eta^2\right),\quad\!\! f_{22}\frac{1}{\sqrt{4}}\left(\eta^4-4\eta^2+2\right).$}\nonumber
\end{align}
The energy eigenvalue of state $\chi_{n_d n_g}$ is proportional to $n_d + n_g$ and its angular momentum eigenvalue is proportional to $n_d -n_g$. The state expansion in the angular basis states is denoted
\begin{align}\label{angular_expansion}
\psi=\sum_{n_d n_g}C_{n_d n_g}e^{-i(n_d+n_g)T}\chi_{n_d n_g},
\end{align}
and it is occasionally useful to express the complex coefficients in the polar form $C_{n_d n_g}=c_{n_d n_g}\exp(i\phi_{n_d n_g})$.
Components for up to $M=15$ states may be translated between bases using
{\footnotesize
\begin{align}\label{eq:coef_transforms}
&D_{00}=C_{00},\,
D_{10}=\sqrt{1/2}C_{10}+\sqrt{1/2}C_{01},\,
D_{01}=i\sqrt{1/2}C_{10}-i\sqrt{1/2}C_{01},\nonumber\\
&D_{20}=\sqrt{1/4}C_{20}+\sqrt{2/4}C_{11}+\sqrt{1/4}C_{02},\,
D_{11}=i\sqrt{1/2}C_{20}-i\sqrt{1/2}C_{02},\nonumber\\
&D_{02}=-\sqrt{1/4}C_{20}+\sqrt{2/4}C_{11}-\sqrt{1/4}C_{02},\nonumber\\
&D_{30}=\sqrt{2/16}C_{30}+\sqrt{6/16}C_{21}+\sqrt{6/16}C_{12}+\sqrt{2/16}C_{03},\nonumber\\
&D_{21}=i\sqrt{6/16}C_{30}+i\sqrt{2/16}C_{21}-i\sqrt{2/16}C_{12}-i\sqrt{6/16}C_{03},\nonumber\\
&D_{12}=-\sqrt{6/16}C_{30}+\sqrt{2/16}C_{21}+\sqrt{2/16}C_{12}-\sqrt{6/16}C_{03},\\
&D_{30}=-i\sqrt{2/16}C_{30}+i\sqrt{6/16}C_{21}-i\sqrt{6/16}C_{12}+i\sqrt{2/16}C_{03},\nonumber\\
&D_{40}=\sqrt{1/16}C_{40}+\sqrt{4/16}C_{31}+\sqrt{6/16}C_{22}+\sqrt{4/16}C_{40}+\sqrt{1/16}C_{04},\nonumber\\
&D_{31}=i\sqrt{4/16}C_{40}+i\sqrt{4/16}C_{31}-i\sqrt{4/16}C_{13}-i\sqrt{4/16}C_{04},\nonumber\\
&D_{22}=-\sqrt{6/16}C_{40}+\sqrt{4/16}C_{22}-\sqrt{6/16}C_{04},\nonumber\\
&D_{13}=-i\sqrt{4/16}C_{40}+i\sqrt{4/16}C_{31}-i\sqrt{4/16}C_{13}+i\sqrt{4/16}C_{04},\nonumber\\
&D_{40}=\sqrt{1/16}C_{40}-\sqrt{4/16}C_{31}+\sqrt{6/16}C_{22}-\sqrt{4/16}C_{40}+\sqrt{1/16}C_{04}.\nonumber
\end{align}}

The nodes of the wave function are often cited as the primary source of chaos in de Broglie trajectories \cite{F97,WS99,EKC07,ECT17}.
This chaos is in turn thought to be one the primary driving factors in quantum relaxation \cite{WPB06,TRV12,ECT17}.
As is described in the next section, the nodes (which are mostly to be found amongst the bulk of the Born distribution) also play an important role in the long-term relaxation of systems that may be far away and exhibit very regular behaviour. Indeed, global properties of the `drift field' described in section \ref{sec:drift} may be inferred from the properties of the nodes. It is therefore useful to know some of these properties. 
For the sake of clarity, these properties are listed, giving a short justification of each. For the sake of brevity and simplicity the system is taken to be the two-dimensional isotropic oscillator with the dual assumptions of no fine-tuning and limited energy expressed earlier, although most if not all of these properties may be easily generalised. (Properties (i) through (vi) are widely known or have been previously noted and are included to aid the reader. To our knowledge, properties (vii) through (xii) are new.) \\
\\
\textbf{(i) Nodes are points.} This follows trivially as $\psi=0$ places two real conditions on the two-dimensional space.\\
\textbf{(ii) The velocity field has zero vorticity away from nodes.} By writing the wave function in the complex exponential form $\psi=|\psi|\exp(iS)$, guidance equations \eqref{xy_guidance} may be seen to define a velocity field, $v=(\overset{\circ}{Q_x},\overset{\circ}{Q_y})=\nabla S$, that is the (two-dimensional) gradient of complex phase $S$. Consequently the vorticity (as the curl of the velocity $\nabla\times v$) must vanish everywhere except on nodes, where $S$ is ill-defined.\\
\textbf{(iii) The vorticity of nodes is `quantised'.} This observation was originally made by Dirac \cite{Dirac31} and is well known to the field \cite{Holl93}. It is also well known in other fields like chemical physics \cite{HCP74,HGB74,W06}. By Stokes' theorem, the line integral of the velocity field around a closed curve $\partial\Sigma$, that defines the boundary of some region $\Sigma$ which does not contain a node, must vanish; $\oint_{\partial\Sigma}v.\mathrm{d}l=\int_{\Sigma}\nabla\times v \mathrm{d}\Sigma=0$.
If on the other hand, the region $\Sigma$ is chosen such that it contains a node, this need not be the case.
The single-valuedness of $\psi$, however, assures that along any closed path, the phase $S$ can only change by some integer number of $2\pi$ from its initial value, i.e.
\begin{align}\label{eq:v_around_node}
\oint_{\partial \Sigma}v.\mathrm{d}l=\oint_{\partial \Sigma}\mathrm{d}S=2\pi n,
\end{align}
for integer $n$.
In this manner, it is said that the vorticity is `quantised'. Note that $\nabla\times v$ is actually ill-defined on the node and that the vorticity $\mathcal{V}$ of a node should instead be understood to refer to $\oint_{\partial \Sigma}v.\mathrm{d}l$ as evaluated around a closed path enclosing only the node in question.\\
\textbf{(iv) Nodes generate chaos.} It is possible to go into much detail on this point. For the intricacies regarding this process, articles \cite{EKC07,ECT17} and references therein are recommended. It is possible, however, to illustrate this point in the following simple way. 
 By taking the region $\Sigma$ in equation \eqref{eq:v_around_node} to be a ball $B_{\rcurs}$ of radius $\rcurs$, centred on some node, it may be concluded that the component of the velocity field $v$ around the node varies in proportion to $1/\rcurs$.
In other words, a trajectory that approaches close to a node will tend to circle the node with a speed that is strongly dependent upon how close the trajectory manages to get. Two trajectories that are initially near to each other and come close to a node may only differ by a small amount in their approach, but due to the $1/\rcurs$ dependence, this small difference may cause a significant difference in how the trajectories circle the node. The trajectories will likely be scattered in completely different directions.
This `butterfly effect' may lead to highly erratic trajectories and is commonly thought to be one of the primary causes of chaos in de Broglie trajectories.\\
\textbf{(v) Nodes pair create/annihilate.} This is explained for the three dimensional case in \cite{H77}. For our purposes, the condition of a node ($\psi=0$) may be regarded as two separate real conditions upon the 2-dimensional space. Each of these conditions defines a (plane algebraic) curve that evolves with time with nodes appearing where the curves intersect. Two curves that are not initially intersecting may begin to do so, creating a node. In an open system however, this must take place either at infinity (as for instance must be the case for two straight lines), or if the curves begin to intersect at a finite coordinate, they must necessarily intersect twice. This creates the appearance of the pair-creation of nodes. Similarly, two curves may cease to intersect, creating the appearance of annihilation. (Consider for instance an ellipse whose path crosses a static straight line. Upon initial contact, two nodes are created. These then annihilate when the ellipse completes its passage through the line.)\\
\textbf{(vi) Vorticity is locally conserved.} This was noted by \cite{WPB06} although to our knowledge there is yet to be an explanation of why this is the case. By local conservation it is meant that pair creation and annihilation of nodes may only take place between two nodes of opposite vorticity. Consider for instance the line integral $\oint_{\partial\Sigma}v.\mathrm{d}l=\oint_{\partial\Sigma}\mathrm{d}S$ around a region $\Sigma$ the instant before a pair of nodes is created. It is expected that $S$ is smooth in both space and time everywhere but on the node, and hence immediately after the pair creation there cannot be a jump to one of the quantised values of $2\pi n$ allowed by equation \eqref{eq:v_around_node}. Hence, it must be the case that the nodes created are equal and opposite in their vorticity. It follows that a local conservation law must hold, but as nodes may appear from or disappear to infinity in a finite time, this does not guarantee global conservation.\\
\textbf{(vii) Nodes have $\pm 2\pi$ vorticity.}
This stronger condition than property (iii) may be arrived at with the no fine-tuning assumption.
Suppose there exists a node in $\psi$ at coordinate $Q_{x_0},Q_{y_0}$. Since $\psi$ is analytic (in the real analysis sense), at any moment in time and in some small region around the node, it may be Taylor expanded
\begin{align}
\psi(Q_x,Q_y)=&a_x(Q_x-Q_{x_0})+a_y(Q_y-Q_{y_0})\\&+\mathcal{O}\left[(Q_x-Q_{x_0})^2,(Q_y-Q_{y_0})^2\right]\nonumber\\
=&a_x\epsilon\cos\theta +a_y\epsilon\sin\theta+\mathcal{O}(\epsilon^2),
\end{align}
where $a_x$ and $a_y$ are complex constants that depend upon the state parameters, the coordinate of the node and the time. Vanishing $a_x$ or $a_y$ would either be instantaneous or require fine tuning of the state parameters. It is assumed therefore that $a_x$ and $a_y$ are non-zero. The $\epsilon$ and $\theta$ are polar coordinates centred upon the node. The vorticity of the node is calculated by integrating the change in phase around the edge of a small ball $B_{\epsilon}(Q_{x_0},Q_{y_0})$ centred on the node,
\begin{align}\label{eq:that_one}
\mathcal{V}=\oint_{\partial B_{\epsilon}(Q_{x_0},Q_{y_0})} \mathrm{d}S
=\oint_{\partial B_{\epsilon}(Q_{x_0},Q_{y_0})}\text{Im}\partial_\theta\log(a_x\epsilon\cos\theta+a_y\epsilon\sin\theta)\mathrm{d}\theta.
\end{align}
The radius of the ball $\epsilon$ is taken to be small enough firstly to ignore $\mathcal{O}(\epsilon^2)$ terms, but also to exclude other nodes from the interior of the ball. The factor $\epsilon$ then drops out of \eqref{eq:that_one} once the imaginary part is taken. By making the substitution $z=a_x\cos\theta+a_y\sin\theta$ the vorticity may be written
\begin{align}
\mathcal{V}=\text{Im}\oint_{\gamma(\theta)}\frac{\mathrm{d}z}{z}=\pm 2\pi,
\end{align}
where the final equality is found as the contour $\gamma(\theta)=a_x\cos\theta+a_y\sin\theta$ winds around $z=0$ once. The sign of the vorticity is determined by whether the path $\gamma$ is clockwise or anticlockwise, which may be found to be equal to the sign of $\sin(\text{Arg}(a_x)-\text{Arg}(a_y))$. Note that states which are finely-tuned in the manner described may not satisfy this property. For instance, the exact angular momentum state $\chi_{3\,0}$ has a pole of vorticity $6\pi$ at the origin.\\
\textbf{(viii) Vorticity is globally conserved.}
The total vorticity of a state may be written
\begin{align}\label{A1}
\mathcal{V}_\text{tot}=\lim_{\eta\rightarrow\infty}\oint_{\partial B_{\eta}(0)}\mathrm{d}S=\lim_{\eta\rightarrow\infty}\int_{0}^{2\pi}\frac{\partial S}{\partial \varphi}\,\mathrm{d}\varphi=\lim_{\eta\rightarrow\infty}\text{Im}\int_{0}^{2\pi}\frac{\partial_\varphi\psi}{\psi}\,\mathrm{d}\varphi.
\end{align}
By substituting \eqref{angular_basis} into \eqref{angular_expansion}, an arbitrary state may be written
\begin{align}\label{A2}
\psi&= \sum_{n_d n_g}C_{n_d n_g}\exp\left[-i(n_d+n_g)T+i(n_d-n_g)\varphi\right]f_{n_dn_g}(\eta)\chi_{00}(\eta),
\end{align}
and hence the integrand in \eqref{A1} becomes
\begin{align}\label{eq:dpsi}
&\resizebox{\textwidth}{!}{$
\frac{\partial_\varphi\psi}{\psi}=\frac{\sum_{n_d n_g}i(n_d-n_g)C_{n_d n_g}\exp\left[-i(n_d+n_g)T+i(n_d-n_g)\varphi\right]f_{n_dn_g}(\eta)}{\sum_{n_d n_g}C_{n_d n_g}\exp\left[-i(n_d+n_g)T+i(n_d-n_g)\varphi\right]f_{n_dn_g}(\eta)}.$}
\end{align}
The polynomials $f_{n_d n_g}(\eta)$ are of order $n_d+n_g$ and so in the limit $\eta\rightarrow\infty$, only the highest order terms in the numerator and denominator expansions will contribute. Since these terms have identical time-phase factors $\exp[-i(n_d+n_g)T]$, these cancel leaving the integrand, and hence the total vorticity, time independent.\\
\textbf{(ix) States of total vorticity $\mathcal{V}_\text{tot}$ have a minimum of $\mathcal{V}_\text{tot}/2\pi$ nodes.} This follows from properties (vii) and (viii). Of course, pair creation allows there to be more than this number.\\
\textbf{(x) The `vorticity theorem' provides a simple method by which the total vorticity of a state may be calculated.}
The vorticity theorem may be stated as follows. \\
\emph{Let $\psi$ be a state of the 2-dimensional isotropic oscillator with an expansion \eqref{angular_expansion} that is bounded in energy by $n_d+n_g=m$. Let $f(z)$ be the complex Laurent polynomial,
\begin{align}\label{laurent_polynomial}
f(z)=\sum_{n_d=0}^{m}\frac{C_{n_d\, (m-n_d)}}{\sqrt{n_d!(m-n_d)!}}z^{2n_d-m},
\end{align}
formed from the coefficients corresponding to the states of highest energy in the expansion. If no zeros or poles of $f(z)$ lie on the complex unit circle $\partial B_1(0)$, then the total vorticity $\mathcal{V}_\text{tot}$ of the state $\psi$ is
\begin{align}\label{eq:tot_vort}
\mathcal{V}_\text{tot}&=2\pi \left[\#\text{ of zeros of }f(z)\text{ in }B_{1}(0)-\#\text{ of poles of }f(z)\text{ in }B_{1}(0)\right],
\end{align}
the difference between the number of zeros and poles of $f(z)$ in the unit disk $B_{1}(0)$ multiplied by $2\pi$, where it is understood that the counting should be according to multiplicity.}\\
This may be proven by inserting \eqref{eq:dpsi} into \eqref{A1}, assuming the energy upper bound $n_d+n_g=m$, and taking the limit $\lim_{\eta\rightarrow\infty}$. This results with the expression
\begin{align}
\resizebox{\textwidth}{!}{$
\mathcal{V}_\text{tot}=\nonumber\text{Im}\int_0^{2\pi}\frac{\sum_{n_d=0}^m i(2n_d-m)C_{n_d (m-n_d)}[n_d!(m-n_d)!]^{-\frac{1}{2}}\exp\left[i(2n_d-m)\varphi\right]}{\sum_{n_d=0}^mC_{n_d (m-n_d)}[n_d!(m-n_d)!]^{-\frac{1}{2}}\exp\left[i(2n_d-m)\varphi\right]}\mathrm{d}\varphi,$}\label{A4}
\end{align}
where the factors of $[n_d!(m-n_d)!]^{-\frac{1}{2}}$ come from the leading factors of $f_{n_d n_g}(\eta)$ (see equation \eqref{eq:f}). 
By writing $z=e^{i\varphi}$, this may be expressed as a complex contour integral around the unit circle $\partial B_{1}(0)$,
\begin{align}
&\mathcal{V}_\text{tot}=
\resizebox{0.9\textwidth}{!}{$
\oint_{\partial B_{1}(0)}\frac{\sum_{n_d=0}^m i(2n_d-m)C_{n_d (m-n_d)}[n_d!(m-n_d)!]^{-\frac{1}{2}}z^{2n_d-m}}{\sum_{n_d=0}^mC_{n_d (m-n_d)}[n_d!(m-n_d)!]^{-\frac{1}{2}}z^{2n_d-m}}(-iz^{-1})\mathrm{d}z$}\\
&=\text{Im}\oint_{\partial B_{1}(0)}\frac{f'(z)}{f(z)}\mathrm{d}z,
\end{align}
where $f(z)$ is defined as in \eqref{laurent_polynomial}. The Argument Principle of complex analysis (see for instance p.119 of \cite{Conway}) may then be used to arrive at the final result \eqref{eq:tot_vort}.\\
\textbf{(xi) A state that is bounded in energy by $n_d+n_g=m$ may only take the following $m+1$ possible total vorticities,}
\begin{align}\label{eq:poss_vort}
\mathcal{V}_\text{tot}=-2\pi m,\; -2\pi m+4\pi,\;\dots,\;2\pi m-4\pi,\;2\pi m.
\end{align}
This follows from taking a factor of $z^{-m}$ out of \eqref{laurent_polynomial}. The $z^{-m}$ contributes an $m$th order pole at the origin and the remaining polynomial has the property that if $z$ is a zero then so also is $-z$ prompting a rephrasing of the vorticity theorem as follows.\\
\emph{Let $\psi$ be a state of the 2-dimensional isotropic oscillator with an expansion \eqref{angular_expansion} that has an upper limit in energy, $n_d+n_g=m$. Let $g(z)$ be the complex polynomial,
\begin{align}\label{polynomial}
g(z)=\sum_{n_d=0}^{m}\frac{C_{n_d\, (m-n_d)}}{\sqrt{n_d!(m-n_d)!}}z^{n_d}.
\end{align}
formed from the coefficients corresponding to the states of highest energy in the expansion. If there are no zeros on the unit circle $\partial B_{1}(0)$, the total vorticity $\mathcal{V}_\text{tot}$ of the state $\psi$ is
\begin{align}
\mathcal{V}_\text{tot}&=2\pi \left[2\times\#\text{ of zeros of }g(z)\text{ in }B_{1}(0)-m\right],
\end{align}
where again it is understood that the counting should be according to multiplicity.}\\
An $m$th order complex polynomial has $m$ roots, each of which may or may not be inside the unit ball. Accounting for every possible case, it may be concluded that a state of maximum energy $m$ may have the $m+1$ possible total vorticities expressed in equation \eqref{eq:poss_vort}.\\
\textbf{(xii) A state that is bounded in energy by $n_d+n_g=m$ has a maximum of $m^2$ nodes.}
Since $\psi_{00}$ is everywhere non-zero, the condition of a node may be written
\begin{align}
\sum_{n_n n_y}D_{n_x n_y}e^{-i(n_x+n_y)T}\frac{H_{n_x}(Q_x)H_{n_y}(Q_y)}{\sqrt{2^{n_x}2^{n_y}n_x!n_y!}}=0,
\end{align}
the real and imaginary parts of which define plane algebraic curves of degree $m$. B\'{e}zout's theorem (originally to be found in Newton's Principia) states that the maximum number of times two such curves may intersect is given by the product of their degrees, in this case $m^2$.

\section{The drift field and its structure}\label{sec:drift}
The chaos generated in trajectories close to the nodes should be understood to be in contrast to the regular behaviour of trajectories that are far from the nodes \cite{EKC07,ECT17}. Substitution of state \eqref{angular_expansion} and bases \eqref{angular_basis} into \eqref{angular_guidance} gives the guidance equations
\begin{align}
&\overset{\circ}{\eta}=\text{Im}\left\{\frac{\sum_{n_d n_g}C_{n_d n_g}\exp\left[-i(n_d+n_g)T+i(n_d-n_g)\varphi\right]\partial_\eta f_{n_d n_g}}{\sum_{n_d n_g}C_{n_d n_g}\exp\left[-i(n_d+n_g)T+i(n_d-n_g)\varphi\right] f_{n_d n_g}}\right\},\\
&
\resizebox{\textwidth}{!}{$
\overset{\circ}{\varphi}= \frac{1}{\eta^2}\text{Im}\left\{\frac{\sum_{n_d n_g}i(n_d-n_g)C_{n_d n_g}\exp\left[-i(n_d+n_g)T+i(n_d-n_g)\varphi\right]f_{n_d n_g}}{\sum_{n_d n_g}C_{n_d n_g}\exp\left[-i(n_d+n_g)T+i(n_d-n_g)\varphi\right] f_{n_d n_g}}\right\}$}.
\end{align}
Accordingly, for large $\eta$ the physical velocity will scale as $\sim \eta^{-1}$ and will vary as $\sim \eta^{-2}$. Away from the bulk of the Born distribution and away from the nodes, one therefore generally expects to find a small and smooth velocity field.
This smoothness may be exploited as follows.
\begin{enumerate}
\item
 For a grid of initial positions, evolve trajectories with high precision through a single wave function period using a standard numerical method. (For this investigation a 5th order Runge-Kutta algorithm with Cash-Karp parameters was used.)
\item Record the final displacement vector from the original position for each grid point.
\item Plot the grid of displacement vectors as a vector field. This is done in figures \ref{fig1}, \ref{fig2} and \ref{fig3}. In these figures it was found to be useful to separate the angular and radial components of the field. This helps to distinguish the radial component from the generally large angular component.
\end{enumerate}
 The resulting `drift field' may be regarded as a time-independent velocity field of sorts, and used to track long-term evolution without the need for large computational resources as would usually be the case.
Of course, the drift field is merely intended to be indicative of long-term, slow drift and will certainly not be a good approximation in regions near to nodes where the motion is chaotic or otherwise quickly varying. Nevertheless one might suspect that, in regions where the field varies slowly with respect to the grid size upon which it is calculated, the drift field may capture the long-term evolution of the system well. 
The drift field has in practice proven very useful in classifying the global properties of long term evolution. 

As the production of such plots is relatively computationally inexpensive, it is possible to compute many such plots in a reasonable time. 
To study the typical behaviour of systems far out from the bulk of $|\psi|^2$ therefore, 400 drift field plots were calculated with 100 plots each for $M=3,6,10$ and 15 states. For the sake of comparison, superposed upon these plots were the numerically calculated trajectories of the corresponding nodes. The state parameters were randomised as follows. Random numbers in the range $[0,1]$ were assigned to all the $C_{n_d n_g}$ present in the state. These were then normalised with the factor $(\sum_{n_d n_g} C_{n_d n_g}^2)^{-1/2}$, before assigning a random complex phase. In all cases, the radial component of the drift field was notably smaller than the angular component, and in practice this can make the radial behaviour difficult to read from the plot. It was therefore useful to plot the radial component of the drift separately to the angular component. It was found that the angular and radial components of the drift fields displayed distinct global structures that could be readily categorized and which mirrored properties of the nodes of the state concerned. \\
\\
\textbf{Drift field structure and types}\\
In all the drift fields plotted, the drift was predominantly angular with only a small radial component. The structure of this dominant angular drift may be used to divide the drift fields into types 0, 1 and 2. Type-0 fields display flows that are entirely clockwise or entirely anti-clockwise. An example of a type-0 field is shown in figure \ref{fig1}. Type-1 fields, as shown in figure \ref{fig2}, feature one central attractive axis and one central repulsive axis which may or may not be perpendicular. Naturally, the prevailing flow is away from the repulsive axis and towards the attractive axis. Finally, type-2 fields are similar to type-1 except with additional axes. Specifically, type-2 fields feature two attractive axes and two repulsive axes. An example of a type-2 field is shown in figure \ref{fig3}. Considering the relative likelihood of each type of drift field for each of the states studied, it is likely that further types may appear for superpositions with components with larger cut-off energies ($m$ values) than studied here.
\begin{figure}
\includegraphics[width=0.5\textwidth]{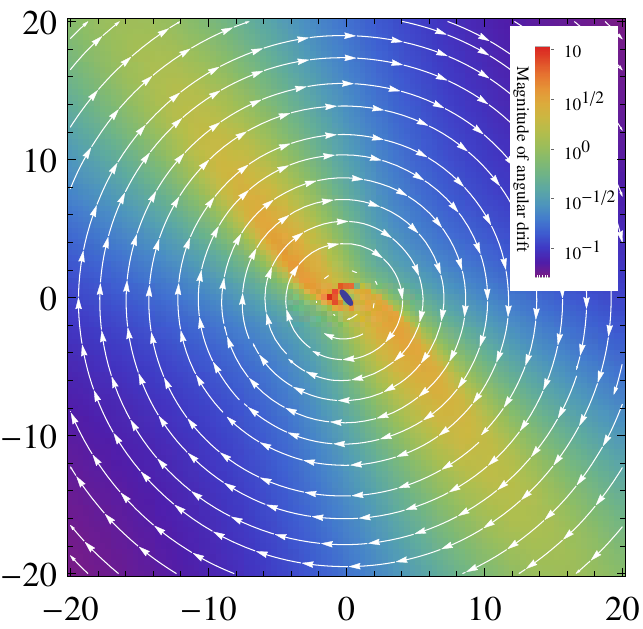}%
\includegraphics[width=0.5\textwidth]{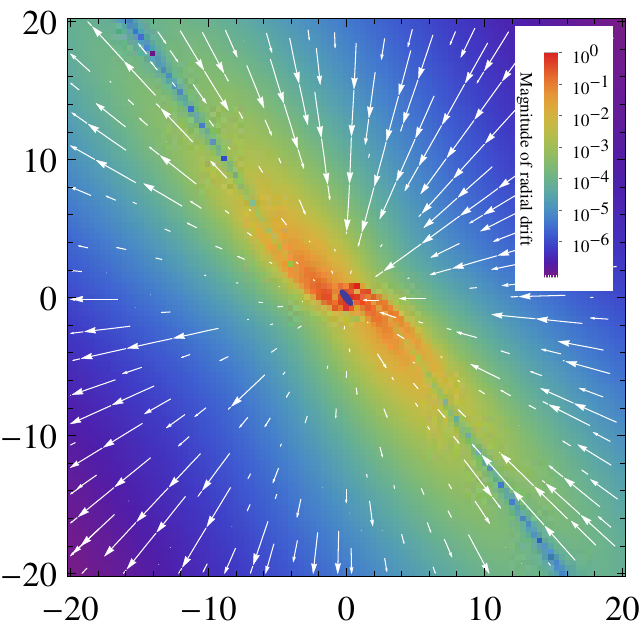}
\caption[A type-0 drift field (relaxation retarding)]{The angular and the radial components of a type-0 `drift field'. The drift field is calculated by computing the displacement of a grid of individual trajectories after one period of the quantum state. Every individual coloured square depicted in the plots represents a data point on the $100\times100$ grid used. Such a drift field may be regarded as a time-independent velocity field of sorts, and may be used to track the long-term evolution of systems in regions where the velocity field has slow spatial variation. In this respect drift fields may be a useful tool for tracking the evolution of systems that are far from the bulk of the Born distribution (which may be considered to occupy a region in the centre of these plots with an approximate radius of 4). The arrows should be understood only to represent drift direction (the length of the arrows is meaningless). Instead, the wide variety in drift magnitude is represented by colour. The small elliptical orbit of the single node is indicated in navy blue. This is a type-0 drift field. In such fields, the predominant angular component is monotonic and the smaller radial component appears to produce equal inwards and outwards flow. A trajectory in such a drift field will circle the Born distribution with a mildly oscillating radius. In contrast to the type-1 and type-2 drift fields shown in figures \ref{fig2} and \ref{fig3}, such type-0 fields do not display a clear mechanism for a trajectory to approach the central region, a necessary precursor to quantum relaxation. Instead it could be the case that the trajectory does not approach the centre at all. If it eventually does reach the centre, it will certainly be significantly retarded with respect to type-1 and type-2 drift fields (examples in figures \ref{fig2} and \ref{fig3}). Hence, for states with type-0 drift fields, quantum relaxation of extreme quantum nonequilibrium may be frozen or in the least severely impeded. } \label{fig1}
\end{figure}

\begin{figure}
\includegraphics[width=0.5\textwidth]{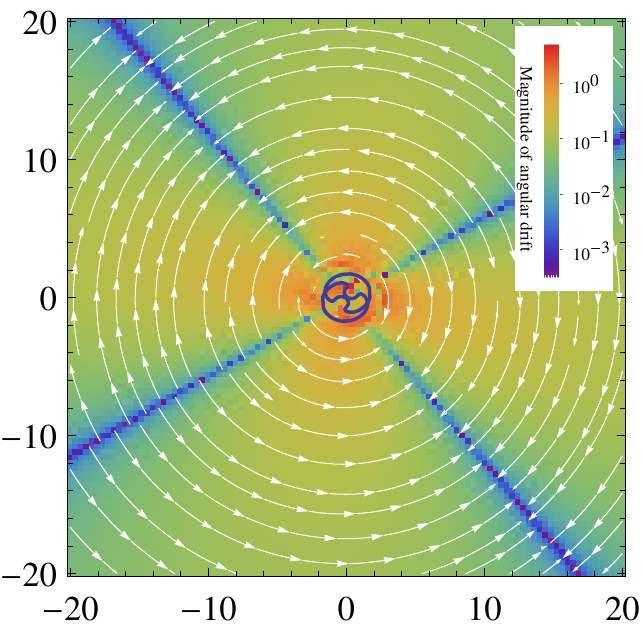}%
\includegraphics[width=0.5\textwidth]{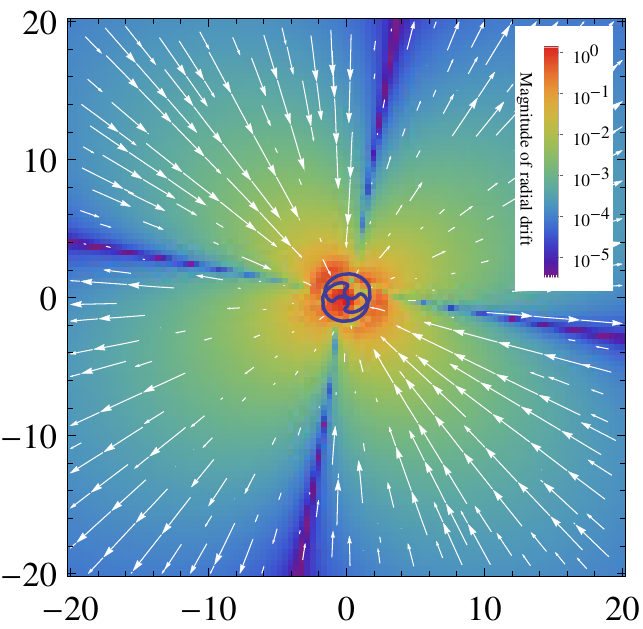}
\caption[A type-1 drift field (non-relaxation retarding)]{An example of the angular and radial components of a type-1 drift field with the nodal paths indicated in navy blue. In type-1 drift fields the dominant angular flow features two axes that may or may not be perpendicular. One of these axes is repulsive and the other attractive. The radial component of the field may be variously divided into sections (slices) that are inwards towards the bulk of the Born distribution and outwards away from it. It always appears to be the case, however, that the axes that are angularly repulsive reside within the sections that are radially outwards, whilst the axes that are angularly attractive reside within the sections that are radially inwards. Such a structure provides a clear mechanism for the drift of systems into the bulk of the Born distribution. Firstly the predominant angular drift will draw any system towards the attractive axis, which for this quantum state runs from top-left to bottom-right. The system will become trapped on this axis, allowing the inwards radial drift to draw it into the central region. In this sense, type-1 drift fields possess a mechanism which enables relaxation even for distributions that are highly separated from quantum equilibrium.}\label{fig2}
\end{figure}

\begin{figure}
\begin{minipage}[c]{0.65\textwidth}
\includegraphics[width=\textwidth]{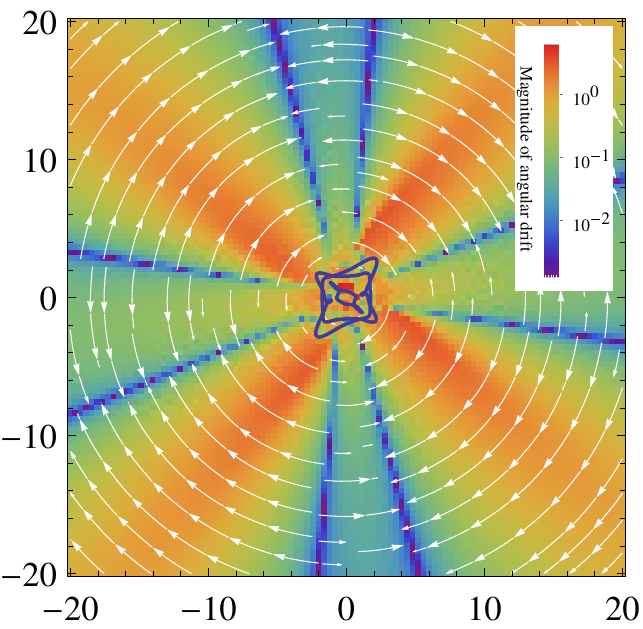}\\
\includegraphics[width=\textwidth]{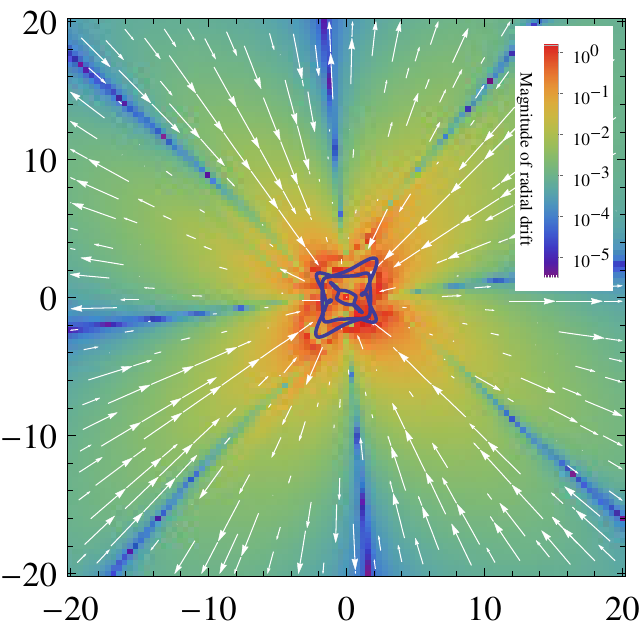}
\end{minipage}\hfill
\begin{minipage}[c]{0.35\textwidth}
\caption[A type-2 drift field (non-relaxation retarding)]{An example of the angular and radial components of a type-2 drift field with the nodal paths indicated in navy blue. Such a field has a similar structure to a type-1 field except with two axes that are angularly attractive and two that are angularly repulsive. In the states studied, it is always the case that angularly repulsive axes align with regions with radially outwards drift, whilst angularly attractive axes align with regions with drift that is radially inwards. This provides a clear mechanism by which trajectories may be drawn into the central region (the Born distribution) regardless of where they are to be found initially. As such, for states with a type-2 drift field it is expected that even extreme forms of quantum nonequilibrium will relax efficiently.}\label{fig3}
\end{minipage}
\end{figure}
In contrast to the angular drift, the radial component of the drift field appears always to be (equally) divided into regions that are radially inwards and regions that are radially outwards. No state tested featured a radial field that was entirely radially inwards or entirely radially outwards. The simplest way in which the space may be divided is by a single dividing axis through the origin as shown in figure \ref{fig1}. In this case it is evident that there is as much of the space with outwards drift as there is with inwards drift. Often it is the case that two or more of these axes divide the space (see figures \ref{fig2} and \ref{fig3}), so that the space is apportioned into wedges that alternate between flow that is radially inwards and flow that is radially outwards. In the case of two such axes (as for instance is shown in figure \ref{fig2}), it appears that these axes are always perpendicular, so that again half of the space exhibits inwards flow and the other half outwards flow. More complicated, bulb-like structures may sometimes be seen in the radial drift, especially in larger superpositions. To the eye however, it always appears the case that the space is \emph{equally} divided into inwards and outwards regions, although a mathematical proof of whether this is indeed the case remains to be seen.\\
\\
\textbf{Mechanism for the relaxation of extreme quantum nonequilibrium}\\
Any nonequilibrium distribution that is initially located away from the central region (where the Born distribution is located) cannot be considered relaxed until the vast majority of systems have at least relocated into the central region. 
As discussed in the introduction, it is expected that once this happens relaxation will proceed efficiently.  
It therefore becomes relevant to consider the time it takes an individual trajectory to migrate into the central region (or indeed whether this migration takes place at all).
To this end, type-1 and type-2 drift fields have a clear, identifiable mechanism by which this migration takes place.
Type-0 drift fields do not feature this mechanism. In every type-1 and type-2 field calculated, the angularly repulsive axes appear to coincide with portions of the field with inwards radial drift whilst the angularly attractive axes coincide with regions that are radially inwards. This structure is clearly displayed in figures \ref{fig2} and \ref{fig3}. In this regard, any system that is initially separated from the bulk of the Born distribution will be swept by the dominant angular drift towards one of the angularly attractive axes. As these axes always coincide with a region of inwards radial drift, the trajectory will then be dragged into the central region where the bulk of the Born distribution is found.

In contrast to this, type-0 drift fields produce trajectories that perpetually orbit around the central region. As a trajectory does so it will sample both regions (wedges) with inwards drift and regions with outwards drift. As these appear (at least to the eye) equal in size it is tempting to conclude that, for such trajectories, outwards drift will balance inwards drift. If this were the case then trajectories would not be drawn into the central region and relaxation would not take place. Certainly it is the case that there is a balancing effect between inwards and outwards flow and so for type-0 fields relaxation from outside the central region will be at least significantly retarded if not stopped altogether. Whether or not there exists some delicate imbalance between inwards and outwards flow that eventually produces relaxation remains to be seen. An answer to this question could have important implications for studies considering conjectured quantum nonequilibrium in the early Universe, and so this is clearly a direction for further work.
\begin{figure}
\includegraphics[width=0.5\linewidth]{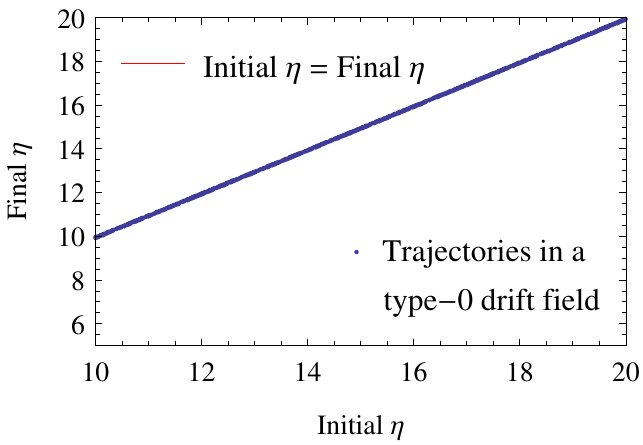}
\includegraphics[width=0.5\linewidth]{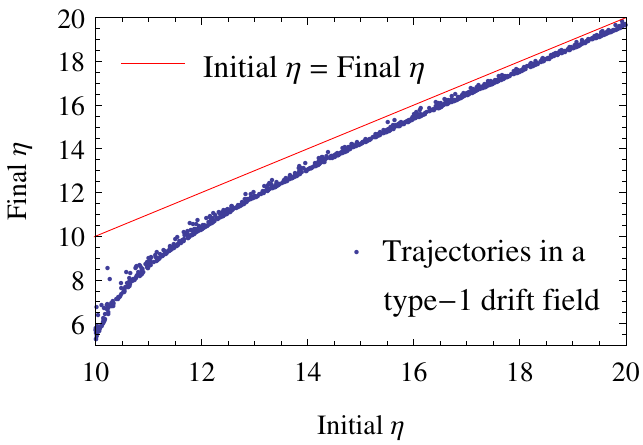}
\caption[Comparison of relaxation after 1000 periods by the relaxation-retarding and non-relaxation-retarding drift fields shown in figures \ref{fig1} and \ref{fig2}.]{A comparison of radial displacement of trajectories after 1000 periods produced by the type-0 and type-1 drift fields shown in figures \ref{fig1} and \ref{fig2}. The type-1 drift field causes a clear net drift inwards that appears to be absent in the type-0 field. Under the influence of a type-1 drift field, an `extreme' nonequilibrium distribution (that is concentrated away from the central region--small $\eta$) will be transported to the bulk of the Born distribution where it is presumed that it will relax efficiently to quantum nonequilibrium. In contrast, the balancing effect on radial drift for type-0 fields (discussed in figure \ref{fig1}) results in no such clear radial displacement. These frames were produced by placing 1000 trajectories randomly in the interval $10<\eta<20$ and numerically evolving for 1000 wave function periods with a standard Runge-Kutta algorithm. }\label{fig4}
\end{figure}

Although a detailed study is left for future work, the effect of the relaxation mechanism (the resulting radial transport of trajectories) is illustrated in figure \ref{fig4}. To produce this figure, 1000 trajectories were chosen randomly in the radial interval $10<\eta<20$, and numerically evolved for 1000 wave function periods with a standard Runge-Kutta algorithm. In the first frame of the figure, the wave function parameters that produced the type-0 drift field displayed in figure \ref{fig1} were used. In the second frame the wave function parameters that produced the type-1 drift field of figure \ref{fig2} were used. In accordance with our statements regarding the mechanism, the type-1 drift field causes a clear inward drift of the trajectories. An `extreme' quantum nonequilibrium ensemble (that is concentrated away from the central region) will therefore be transported towards the Born distribution where, as discussed, it is presumed to relax efficiently. In contrast, the type-0 drift field produces no such radial drift and, after 1000 wave function periods, the trajectories still appear close to their original radial coordinate $\eta$.  \\
\\
\hspace{-\parindent}\textbf{Specific properties of states studied}\\
\textbf{M=3}\\
An $M=3$ state has only $\chi_{00}$, $\chi_{10}$ and $\chi_{01}$ components. (Or equivalently $\psi_{00}$, $\psi_{10}$ and $\psi_{01}$ components.) Since the cut-off energy is $m=1$, by property (xi) the total vorticity of the state may only be $\mathcal{V}_\text{tot}=\pm2\pi$. Hence, there must always exist a single node with $\mathcal{V}=\mathcal{V}_\text{tot}$. Property (xii) ensures there is a maximum of one node and so no pair creation can take place. (The two plane algebraic curves mentioned at the end of section \ref{sec:nodes} are in this case straight lines and so intersect only once.) The sign of the vorticity may be determined with the vorticity theorem. The relevant Laurent polynomial $C_{01}/z+C_{10}z$ has a simple pole at $z=0$ and two simple zeros at $z=\pm\sqrt{-C_{01}/C_{10}}$. Hence, if $|C_{10}|>|C_{01}|$ the vorticity is positive, else it is negative. Clearly the two possibilities appear with equal probability when wave function parameters are chosen at random. For the 100 states generated, the drift field was type-0 in every case. Hence, the relaxation of extreme nonequilibrium may be retarded for every $M=3$ state. The angular flow is in the direction indicated by $\mathcal{V}_\text{tot}$. Note however that due to the way the $\chi_{n_d n_g}$ are defined, $d$ should be understood to refer to positive, anticlockwise vorticity rather than the common notion of rotating to the right implied by the letter $d$.

For the simple $M=3$ case, further properties of the node may be found as follows. By taking the modulus square of the state \eqref{xy_state}, the path of the node may be written $AQ_x^2+BQ_xQ_y+CQ_y^2+DQ_x+EQ_y+F=0$ (which is the general Cartesian form of a conic-section) with the coefficients
\begin{align}
\resizebox{\textwidth}{!}{$
A=d_{10}^2,\enspace B=(D_{10}D^*_{01}+D^*_{10}D_{01}),\enspace C=d_{01}^2,\enspace D=E=0,\enspace F=-\frac12 d_{00}^2.$}
\end{align}
The condition that this path is elliptical is $B^2-4AC<0$, which is the case unless $\theta_{10}$ and $\theta_{01}$ differ by an exact integer number of $\pi$, the limiting case in which the ellipse becomes a straight line. 
That the ellipse is centred upon the origin is implied by the vanishing $D$ and $E$ coefficients.
The nodal trajectory is circular if $B=0$ and $A=C$, which is the case when $\theta_{01}$ and $\theta_{10}$ differ by a half-integer number of $\pi$ and $d_{10}=d_{01}$, or equivalently $D_{10}=\pm i D_{01}$. By translating the angular basis, it may be shown that the ellipse has a semi-minor axis of $c_{00}/(c_{10}+c_{01})$ and a semi-major axis of $c_{00}/|c_{10}-c_{01}|$. (A circular trajectory is found if either $c_{10}$ or $c_{01}$ vanish.) As the closest the node approaches the origin is $c_{00}/(c_{10}+c_{01})$, for a ground state with only perturbative contributions from $\psi_{10}$ and $\psi_{01}$ (as studied by \cite{KV16}), the node stays far from the bulk of the Born distribution and interesting relaxation properties are to be expected. The orientation of the ellipse varies linearly with the difference between $\phi_{01}$ and $\phi_{10}$. The area of the ellipse is $\pi c_{00}^2/|c_{10}^2-c_{01}^2|$. 
The nodal trajectory as a function of time may be expressed
\begin{align}
\resizebox{\textwidth}{!}{$
Q_x(T)=\frac{1}{\sqrt{2}}\frac{d_{00}}{d_{10}}\frac{\sin(\theta_{01}-\theta_{00}-T)}{\sin(\theta_{10}-\theta_{01})},\,\,
Q_y(T)=\frac{1}{\sqrt{2}}\frac{d_{00}}{d_{01}}\frac{\sin(\theta_{10}-\theta_{00}-T)}{\sin(\theta_{01}-\theta_{10})}.$}
\end{align}
\textbf{M=6}\\
By property (xi) the total vorticity may only be $-4\pi$, $0$ or $4\pi$. As an $M=6$ state is bounded in energy by $m=2$, by property (xii) there is a maximum of 4 nodes at any time. In the case of $\mathcal{V}_\text{tot}=\pm 4\pi$ vorticity, there must always exist at least 2 nodes, each with vorticity $\mathcal{V}=\mathcal{V}_\text{tot}/2$, and one pair may create and annihilate in addition to these. All states generated with $\mathcal{V}_\text{tot}=\pm 4\pi$ produced type-0 drift fields. In the case of $\mathcal{V}_\text{tot}=0$, there may be zero nodes, or up to two pairs of opposite vorticity nodes. (The plane algebraic curves upon whose intersections the nodes reside are in this case conic sections which may intersect 0, 2 or 4 times.) The Vorticity 0 states produced type-1 drift fields. If the wave function parameters are chosen at random in the way described, the $\mathcal{V}_\text{tot}=0$ type-1 drift fields appear approximately 66\% of the time whilst the $\mathcal{V}_\text{tot}=\pm 4\pi$ type 0 appear approximately 17\% of the time each. (The relative frequency of states was determined by randomly selecting state parameters for 100,000 states and calculating $\mathcal{V}_\text{tot}$ using the vorticity theorem \eqref{polynomial}.) Hence, states that exhibit retarded relaxation are relatively less common than for $M=3$ states, appearing in 34\% of cases.\\
\\
\textbf{M=10}\\
By property (xi) the total vorticity may be $\pm 2\pi$ or $\pm6\pi$, with the former possibility always featuring at least one node and the latter at least 3 nodes. In both cases property (xii), allows pair creation to increase this number up to a total of 9 nodes. The $\mathcal{V}_\text{tot}=\pm6\pi$ are relatively rare, each appearing in approximately 3\% of randomly selected cases each. All of the 6 randomly generated states that had $\mathcal{V}_\text{tot}=\pm6\pi$ produced type-0 drift fields. States with $\mathcal{V}_\text{tot}=\pm2\pi$ each appear in approximately 47\% of cases. These may exhibit any one of the three types of drift field. Of the 94 fields generated by states with $|\mathcal{V}_\text{tot}|=2\pi$, type-0 drift fields (with retarded relaxation) appeared in 7 cases, whilst type-1 and type-2 fields appeared in 72 and 15 cases respectively.\\
\\
\textbf{M=15}\\
By property (xi) the total vorticity may be 0, $\pm4\pi$ or $\pm8\pi$. States with $\mathcal{V}_\text{tot}=0$ may feature no nodes, but states with $\mathcal{V}_\text{tot}=4\pi$ and $\mathcal{V}_\text{tot}=8\pi$ must always retain at least 2 and 4 nodes respectively. For all possible vorticities, property (xii) and pair creation allow up to a maximum of 25 nodes. As with the $M=10$ superpositions, the states of maximal total vorticity are relatively rare. For $M=15$ superpositions, maximal vorticities $\mathcal{V}_\text{tot}=\pm8\pi$ each appear in approximately 0.4\% of cases, whilst $\mathcal{V}_\text{tot}=\pm4\pi$ each appear in approximately 19\% of cases. The most commonly found vorticity is $\mathcal{V}_\text{tot}=0$, appearing in approximately 61\% of cases. The single maximal vorticity $\mathcal{V}_\text{tot}=\pm8\pi$ state that appeared in the 100 randomly selected states produced a type-0 retarded drift field. Of the 42 random trials that had $|\mathcal{V}_\text{tot}|=4\pi$, 15 had type-0 drift fields. Of the remaining 57 trials that were found to have $\mathcal{V}_\text{tot}=0$, not a single one produced a type-0 drift field.\\
\\
\textbf{Vorticity-drift conjectures}\\
Although proofs have not been forthcoming, we advance the following statements as conjectures and argue their validity primarily on the basis of lack of counterexample.\\
\\
\textbf{Conjecture 1 - A state with zero total vorticity cannot produce a type-0 drift field}\\
For sufficiently large $\eta$, the velocity field of a state with zero total vorticity cannot be uniformly clockwise or uniformly anticlockwise. 
To demonstrate this, consider that for a zero-vorticity state it is the case that $\int_{0}^{2\pi}\frac{\partial S}{\partial\varphi}\mathrm{d}\varphi=0$ for sufficiently large $\eta$. Hence, the integrand $\partial S/\partial\varphi$ (which is proportional to the angular component of the velocity field) must change sign or be trivially zero (in the presence of fine tuning). If, then, the drift field classifications were applied instead to velocity fields, it would be the case that zero total vorticity fields could not produce type-0 velocity fields. To support the validity of this statement when applied to drift fields, 1000 extra zero-vorticity states were generated for each of the $M=6$ and $M=15$ categories. (Property (xi) means $M=3$ and $M=10$ states cannot have zero-vorticity.) (States with maximal or zero total vorticity may be easily calculated by randomly generating many states and checking the vorticity with the vorticity theorem.) All 2000 cases were found to be type-1 or type-2, in support of the conjecture.\\
\\
\textbf{Conjecture 2 - All states of maximal vorticity produce type-0 drift fields}\\
Consistently it was found that all states of maximal vorticity produced type-0 relaxation retarding states. As these states are relatively rare for $M=10$ and $M=15$ categories, however, this conjecture is certainly in need of further substantiation. To provide this, 1000 extra maximal vorticity states were generated for each of the $M=3,6,10,15$ state categories.  All 4000 cases were found to be of type-0. We note that this conjecture provides a convenient method of generating these type-0 relaxation retarding states were one to wish to study their properties. One need simply to randomly generate state parameters and then check that their vorticity is maximal using the vorticity theorem.

\section{Conclusions}
\label{sec:conclusion}
In this paper, consideration has been given to the behaviour of de Broglie trajectories that are separated from the bulk of the Born distribution with a view to describing the quantum relaxation properties of more `extreme' forms of quantum nonequilibrium. The main results are as follows. 
A number of new properties of nodes have been described ((vii) to (xii) in section \ref{sec:nodes}) that are true under the assumption of a quantum state with bounded energy and in the absence of finely-tuned parameters. It is hoped that these results prove useful to the community and that they may be extended and generalised to suit problems other than that considered here. For the 2-dimensional isotropic oscillator (which has physical significance in studies regarding quantum nonequilibrium in the early Universe \cite{AV10,CV13,CV15,CV16,UV15,UV16}), these properties have been shown to have consequences for the quantum relaxation of systems that are separated from the bulk of the Born distribution (what has been referred to as `extreme' quantum nonequilibrium). The relaxation properties of these systems have been classified in terms of the structure of their `drift field', a new concept introduced here. States have been divided into 3 distinct classes. Type-1 and type-2 drift fields have been shown to feature a mechanism that gives rise to efficient relaxation of extreme quantum nonequilibrium. Type-0 fields lack such a relaxation mechanism, appearing instead to sample the drift field in a manner that suggests the prevention of quantum relaxation. Even if it is not the case that relaxation is entirely blocked in such states, it will at least be significantly retarded. Whether or not some delicate imbalance of flow may eventually cause relaxation, and the calculation of relaxation timescales if this is indeed the case, presents a clear avenue for further investigation. Another subject that will be the focus of a future paper concerns the consequences of fine-tuning and perturbations around finely-tuned states. (In the process of this investigation, such states were found to produce highly unusual, pathological behaviour that was judged to be extraneous to the intended focus.)

 For the states studied, it was found that all states of maximal total vorticity $\mathcal{V}_\text{tot}$ (according to their energy bound) produced type-0, relaxation-retarding drift fields. It was also found that no zero-vorticity states produced exhibit type-0 drift. In section \ref{sec:drift} it was formally conjectured that these two correspondences are true for all (non-finely-tuned) states. These conjectures are a central result of this work which, when used with the vorticity theorem (point (x) in section \ref{sec:nodes}), allow the generation of states that should exhibit efficient relaxation as well as those for which relaxation is retarded (or possibly stopped altogether). The relative abundances of these states given random parameters have been discussed. It is hoped that the identification of relaxation retarding states, and the methodology that has otherwise been formulated, may be useful to those investigating the intriguing possibility of discovering quantum nonequilibrium.

\chapter*{PREFACE TO CHAPTER 4}\addcontentsline{toc}{chapter}{PREFACE TO CHAPTER 4}
Experimental evidence of quantum nonequilibrium would clearly be a momentous finding for the foundations of quantum mechanics and physics in general.
The main route through which this possibility has thus far been explored is in the potential of nonequilibrium imprints on the power spectrum of the CMB.
The work done in this regard is summarized at the beginning of chapter \ref{4}, and the most recent work to date may be found in reference \cite{VPV19}.
Chapter \ref{4} introduces a second, more speculative possibility. 
Namely that quantum nonequilibrium could have been preserved for some species of relic particles. 

Two main scenarios are outlined that could conceivably leave relic particles with nonequilibrium statistics today. 
The first is through the decay of a nonequilibrium inflaton field.
(The case for the inflaton field to have existed in nonequilibrium has been previously developed due to its potential to leave an imprint on the CMB sky.)
But the decay of a nonequilibrium inflaton field would also transfer nonequilibrium to its decay products.
And as these decay products are thought to constitute the majority of the matter density in the universe, it seems conceivable that nonequilibrium could yet be stored in some subset of these particles.
As discussed, practical considerations appear to make this less likely. 
For instance, the particles would have had to be created at a time after their corresponding, decoupling time $t_\text{dec}$, else the interactions with other particles would be likely to cause rapid relaxation.
Moreover, in order to be subsequently detectable, such particles would need to decay or annihilate into detectable particles.
And if the decays/annihilations produced photons, say, then in order to avoid subsequent relaxation of the photons, this would need to take place after recombination.
Nevertheless, simple estimates suggest that this is a non-vanishing possibility, and an illustrative scenario involving the gravitino is outlined.
To complement this, a second, simpler scenario is outlined concerning relic nonequilibrium in the vacuum modes of conformally coupled fields.
Vacuum modes are simple enough for quantum nonequilibrium to be conserved trivially. 
And particle physics processes taking place on such a nonequilibrium background would have their statistics altered accordingly.
Particles created on such a background would be expected to pick up traces of the nonequilibrium in the vacuum.
The gravitino field is again adopted as an illustrative candidate for this second scenario.

In order to support the claims made, sections \ref{particle_decay} and \ref{IV} describe the results of some numerical work.
Through a toy model based upon the Jaynes-Cummings model of quantum optics, section \ref{particle_decay} explicitly demonstrates the transfer of nonequilibrium from one quantum field to another during particle decay and creation.
Section \ref{IV} then demonstrates how simple quantum-mechanical measurements performed on a nonequilibrium electromagnetic field may yield incorrect and anomalous spectra.  
This line of argument is picked up and significantly elaborated in chapter \ref{5}.

\chapter{QUANTUM FIELD THEORY OF RELIC NONEQUILIBRIUM SYSTEMS}\label{4}
\vspace{-15mm}
\begin{center}
\textit{Nicolas G. Underwood\hyperlink{address_chap_4}{$^\dagger$} and Antony Valentini\hyperlink{address_chap_4}{$^\dagger$}}
\begin{center}
\textit{Adapted from Phys. Rev. D 92, 063531 (2015)} \cite{UV15} 
\end{center}
\end{center}
\begin{center}
\hypertarget{address_chap_4}{$^\dagger$}Kinard Laboratory, Clemson University, Clemson, 29634, SC, United States of America
\end{center}
\subsection*{Abstract}
In terms of the de Broglie-Bohm pilot-wave formulation of quantum theory, we develop field-theoretical models of quantum nonequilibrium systems which could exist today as relics from the very early Universe. We consider relic excited states generated by inflaton decay, as well as relic vacuum modes, for particle species that decoupled close to the Planck temperature. Simple estimates suggest that, at least in principle, quantum nonequilibrium could survive to the present day for some relic systems. The main focus of this paper is to describe the behaviour of such systems in terms of field theory, with the aim of understanding how relic quantum nonequilibrium might manifest experimentally. We show by explicit calculation that simple perturbative couplings will transfer quantum nonequilibrium from one field to another (for example from the inflaton field to its decay products). We also show that fields in a state of quantum nonequilibrium will generate anomalous spectra for standard energy measurements. Possible connections to current astrophysical observations are briefly addressed.

\section{Introduction}
\label{I}
In the de Broglie-Bohm pilot-wave formulation of quantum theory
\cite{deB28,BV09,B52a,B52b,Holl93}, the Born probability rule has a dynamical
origin \cite{AV91a,AV92,AV01,VW05,EC06,TRV12,SC12} and ordinary quantum
physics is recovered as a special equilibrium case of a wider nonequilibrium
physics
\cite{AV91a,AV91b,AV92,AV96,AV01,AV02,AV07,AV08,AV09,AV10,AVPwtMw,PV06}. On
this view, we may understand the Born rule as arising from a relaxation
process that took place in the remote past. Quantum nonequilibrium -- that is,
violations of the Born rule -- may have existed in the very early Universe
before relaxation took place \cite{AV91a,AV91b,AV92,AV96}. Such effects could
leave observable traces today -- in the cosmic microwave background (CMB)
\cite{AV07,AV08,AV09,AV10,CV13,CV15} or in relic systems that decoupled at
very early times \cite{AV01,AV07,AV08}. The former possibility has been
developed in some detail and comparisons with data are beginning to be made
\cite{AV10,CV13,CV15,VPV19}. The latter possibility is the focus of this paper.

According to our current understanding, the observed temperature anisotropy in
the CMB was ultimately seeded by quantum fluctuations during an inflationary
era \cite{LL00,Muk05,W08,PU09}. Inflationary cosmology then provides us with
an empirical window onto quantum probabilities in the very early Universe. On
an expanding radiation-dominated background, relaxation in pilot-wave theory
can be suppressed at long (super-Hubble)\ wavelengths while proceeding
efficiently at short (sub-Hubble)\ wavelengths
\cite{AV07,AV08,AV10,CV13,CV15,AVbook}. Thus, in a cosmology with a
radiation-dominated pre-inflationary phase \cite{VF82,L82,S82,PK07,WN08}, one
may obtain a large-scale or long-wavelength power deficit in the CMB
\cite{AV07,AV08,AV10,CV13,CV15}. For an appropriate choice of cosmological
parameters, the expected deficit is consistent with the deficit found in data
from the \textit{Planck} satellite \cite{PlanckXV,CV13,CV15}. Whether the
observed deficit is in fact caused by quantum relaxation suppression during a
pre-inflationary era or by some other more conventional effect remains to be seen.

A pilot-wave or de Broglie-Bohm treatment of the early Bunch-Davies vacuum
shows that relaxation to quantum equilibrium does not take place at all during
inflation itself \cite{AV07,AV10}. Thus, if a residual nonequilibrium still
existed at the end of a pre-inflationary era, the inflaton field would carry
traces of that nonequilibrium forward to much later times.\textbf{ }Similarly,
should nonequilibrium be generated during the inflationary era by exotic
gravitational effects at the Planck scale \cite{AV10}, the resulting
departures from the Born rule will be preserved in the inflaton field and
carried forward into the future where they might have an observable effect.

As we shall discuss in this paper, as well as imprinting a power deficit onto
the CMB sky, a nonequilibrium inflaton field would also transfer
nonequilibrium to the particles that are created by inflaton decay. Since such
particles make up almost all of the matter present in our Universe, it seems
conceivable that today there could exist relic particles that are still in a
state of quantum nonequilibrium. We will also consider relic vacuum modes for
other fields (apart from the inflaton) as potential carriers of nonequilibrium
at late times.

These scenarios raise a number of immediate questions. First of all, even if
nonequilibrium relics were created in the early Universe, could the
nonequilibrium survive until late times and be detected today? As we shall
see, simple estimates suggest that (at least in principle) relaxation to
equilibrium could be avoided for some relic systems. A second question that
must be addressed is the demonstration, in pilot-wave field theory, that
perturbative interactions will in general transfer nonequilibrium from one
field to another. This will be shown for a simplified model of quantum field
theory involving just two energy levels for each field. Finally, one must ask what kind of
new phenomena might be generated by relic nonequilibrium systems in an
astrophysical or cosmological context. This opens up a potentially large
domain of investigation. General arguments have already shown that the
quantum-theoretical predictions for single-particle polarisation probabilities
(specifically Malus' law) would be broken for nonequilibrium systems
\cite{AV04a}. In this paper we focus on measurements of energy as a simplified
model of high-energy processes. It will be shown that conventional energy
measurements performed on nonequilibrium systems would generate anomalous
 spectra. We may take this as a broad indication of the kinds of
anomalies that would be seen in particle-physics processes taking place in the
presence of quantum nonequilibrium.

In this paper we are not concerned with the question of practical detection of
relic nonequilibrium. Rather, our intention is to make a case that detection
might be possible at least in principle, and to begin the development of
field-theoretical models of the behaviour of relic nonequilibrium matter.

Generally speaking, even a lowest-order calculation of perturbative processes
in quantum field theory will involve all of the field modes that are present
in the system. While such calculations are in principle possible in de
Broglie-Bohm theory, in practice it would involve integrating trajectories for
an unlimited number of field modes. In this paper, we make a beginning by
confining ourselves to simplified or truncated models of quantum field theory
involving only a small number of field modes. Our models are inspired by
approximations commonly used in quantum optics, where one is often interested
in the dynamics of a single (quantised)\ electromagnetic field mode inside a
cavity. Our main aim is to justify the assertions that underpin our scenarios.
In particular, we wish to show by explicit calculation of examples that
perturbative couplings will in general transfer nonequilibrium from one field
to another, and that nonequilibrium will affect the spectra for
basic particle-physics processes involving measurements of energy. We
emphasise that the calculations presented in this paper are only intended to
be broadly illustrative. The development of more realistic models is left for
future work.

In Section \ref{IIA} we summarise the background to our scenario, and in particular
the justification for why the inflaton field singles itself out as a natural
carrier of primordial quantum nonequilibrium. In Section \ref{IIB} we argue that
inflaton decay can generate particles in a state of quantum nonequilibrium
(induced by nonequilibrium inflaton perturbations and also by the other
nonequilibrium degrees of freedom that can exist in the vacuum), and that such
nonequilibrium could in principle survive to the present day for those decay
particles that were created at times later than the relevant decoupling time.
The gravitino provides a suggestive, or at least illustrative, candidate. In
Section \ref{IIC} we consider a somewhat simpler scenario involving relic
nonequilibrium field modes for the vacuum only. For simplicity we restrict
ourselves to conformally-coupled fields, as these will not be excited by the
spatial expansion. It is argued that super-Hubble vacuum modes that enter the
Hubble radius after the decoupling time for the corresponding particle species
will remain free of interactions and could potentially carry traces of
primordial nonequilibrium to the present day (for sufficiently long comoving
wavelengths). An illustrative example is provided by the massless gravitino.
In Section \ref{IID} we indicate how particle-physics processes would be affected by
a nonequilibrium vacuum.

These preliminary considerations provide motivation for the subsequent
detailed calculations. In Section \ref{particle_decay} we give an example of the perturbative
transfer of nonequilibrium from one field to another, a process that could
play a role in inflaton decay as well as in the decay of relic nonequilibrium
particles generally. In Section \ref{IV} we provide a field-theoretical model of
energy measurements, and we show by detailed calculation of various examples
that nonequilibrium will entail corrections to the energy spectra generated by
high-energy physics processes. Finally, in Section V we present our
conclusions. We briefly address the possible relevance of our scenarios to
current searches for dark matter. We also comment on some practical obstacles
to detecting relic nonequilibrium (even if it exists) and we emphasise the
gaps in our scenarios that need to be filled in future work.

\section{Relic nonequilibrium systems}

\label{II}
In this section we first summarise the background to our scenario and in
particular the role that quantum nonequilibrium might play in the very early
Universe. We then provide some simple arguments suggesting that primordial
violations of the Born rule might survive until much later epochs and perhaps
even to the present day \cite{AVbook}. These arguments motivate the detailed
analysis of nonequilibrium systems provided later in the paper.

\subsection{Nonequilibrium primordial perturbations}
\label{IIA}

In de Broglie-Bohm pilot-wave theory \cite{deB28,BV09,B52a,B52b,Holl93}, a
system has a configuration $q(t)$ whose velocity $\dot{q}\equiv dq/dt$ is
determined by the wave function $\psi(q,t)$. As usual, $\psi$ obeys the
Schr\"{o}dinger equation $i\partial\psi/\partial t=\hat{H}\psi$ (with
$\hbar=1$). For standard Hamiltonians $\dot{q}$ is proportional to the phase
gradient $\operatorname{Im}\left(  \partial_{q}\psi/\psi\right)  $. Quite
generally,
\begin{align}\label{the_first}
\frac{dq}{dt}=\frac{j}{|\psi|^{2}}
\end{align}
where $j=j\left[  \psi\right]  =j(q,t)$ is
the Schr\"{o}dinger current \cite{SV08}. The configuration-space `pilot wave'
$\psi$ guides the motion of an individual system and has no intrinsic
connection with probabilities. For an ensemble with the same wave function we
may consider an arbitrary distribution $\rho(q,t)$ of configurations $q(t)$.
By construction, $\rho(q,t)$ will satisfy the continuity equation
\begin{align}\label{ant_cont}
\frac{\partial\rho}{\partial t}+\partial_{q}\cdot\left(  \rho\dot{q}\right)  =0.
\end{align}
Because $\left\vert \psi\right\vert ^{2}$ obeys the same equation, an initial
`quantum equilibrium' distribution $\rho(q,t_{i})=\left\vert \psi
(q,t_{i})\right\vert ^{2}$ trivially evolves into a final quantum equilibrium
distribution $\rho(q,t)=\left\vert \psi(q,t)\right\vert ^{2}$. In equilibrium
we obtain the Born rule and the usual empirical predictions of quantum theory
\cite{B52a,B52b}. Whereas, for a nonequilibrium ensemble ($\rho(q,t)\neq
\left\vert \psi(q,t)\right\vert ^{2}$), the statistical predictions will
generally differ from those of quantum theory
\cite{AV91a,AV91b,AV92,AV96,AV01,AV02,AV07,AV08,AV09,AV10,AVPwtMw,PV06}.

If they existed, nonequilibrium distributions would generate new phenomena
that lie outside the domain of conventional quantum theory. This new physics
would allow nonlocal signalling \cite{AV91b} -- which is causally consistent
with an underlying preferred foliation of spacetime \cite{AV08a} -- and it
would also allow `subquantum' measurements that violate the uncertainty
principle and other standard quantum constraints \cite{AV02,PV06}.

The equilibrium state $\rho=\left\vert \psi\right\vert ^{2}$ arises from a
dynamical process of relaxation (roughly analogous to thermal relaxation).
This may be quantified by an $H$-function $H=\int dq\ \rho\ln(\rho/\left\vert
\psi\right\vert ^{2})$ \cite{AV91a,AV92,AV01}. Extensive numerical simulations
have shown that when $\psi$ is a superposition of energy eigenstates there is
rapid relaxation $\rho\rightarrow\left\vert \psi\right\vert ^{2}$ (on a
coarse-grained level) \cite{AV92,AV01,VW05,EC06,TRV12,SC12,ACV14}, with an
approximately exponential decay of the coarse-grained $H$-function with time
\cite{VW05,TRV12,ACV14}. In this way, the Born rule arises from a relaxation
process that presumably took place in the very early Universe
\cite{AV91a,AV91b,AV92,AV96}. While ordinary laboratory systems -- which have
a long and violent astrophysical history -- are expected to obey the
equilibrium Born rule to high accuracy, quantum nonequilibrium in the early
Universe can leave an imprint in the CMB \cite{AV07,AV10,CV13,CV15} and
perhaps even survive in relic particles that decoupled at sufficiently early
times \cite{AV01,AV07,AV08}. The latter possibility provides the subject
matter of this paper.

Much of the physics may be illustrated by the dynamics of a massless,
minimally-coupled and real scalar field $\phi$ evolving freely on an expanding
background with line element $d\tau^{2}=dt^{2}-a^{2}d\mathbf{x}^{2}$ (where
$a=a(t)$ is the scale factor and we take $c=1$). Beginning with the classical
Lagrangian density%
\begin{align}
\mathcal{L}=\frac{1}{2}\sqrt{-g}g^{\mu\nu}\partial_{\mu}\phi\partial_{\nu}%
\phi\ ,
\end{align}
where $g_{\mu\nu}$ is the background metric, and working with Fourier
components $\phi_{\mathbf{k}}=\frac{\sqrt{V}}{(2\pi)^{3/2}}\left(
q_{\mathbf{k}1}+iq_{\mathbf{k}2}\right)  $ -- where $V$ is a normalisation
volume and $q_{\mathbf{k}r}$ ($r=1,2$) are real variables -- the field
Hamiltonian becomes a sum $H=\sum_{\mathbf{k}r}H_{\mathbf{k}r}$ where%
\begin{align}
H_{\mathbf{k}r}=\frac{1}{2a^{3}}\pi_{\mathbf{k}r}^{2}+\frac{1}{2}%
ak^{2}q_{\mathbf{k}r}^{2}%
\end{align}
is formally the Hamiltonian of a harmonic oscillator with mass $m=a^{3}$ and
angular frequency $\omega=k/a$. Straightforward quantisation then yields the
Schr\"{o}dinger equation%
\begin{align}
i\frac{\partial\Psi}{\partial t}=\sum_{\mathbf{k}r}\left(  -\frac{1}{2a^{3}%
}\frac{\partial^{2}}{\partial q_{\mathbf{k}r}^{2}}+\frac{1}{2}ak^{2}%
q_{\mathbf{k}r}^{2}\right)  \Psi
\end{align}
for the wave functional $\Psi=\Psi\lbrack q_{\mathbf{k}r},t]$, from which one
may identify the de Broglie guidance equation%
\begin{align}
\frac{dq_{\mathbf{k}r}}{dt}=\frac{1}{a^{3}}\operatorname{Im}\frac{1}{\Psi
}\frac{\partial\Psi}{\partial q_{\mathbf{k}r}}%
\end{align}
for the evolving degrees of freedom $q_{\mathbf{k}r}$ \cite{AV07,AV08,AV10}.
(We have assumed a preferred foliation of spacetime with time function $t$. A
similar construction may be given in any globally-hyperbolic spacetime
\cite{AV04b,AV08a,AVbook}.)

An unentangled mode $\mathbf{k}$ has an independent dynamics with wave
function $\psi_{\mathbf{k}}(q_{\mathbf{k}1},q_{\mathbf{k}2},t)$. The equations
are the same as those for a nonrelativistic two-dimensional harmonic
oscillator with time-dependent mass $m=a^{3}$ and time-dependent angular
frequency $\omega=k/a$. Thus we may discuss relaxation for a single field mode
in terms of relaxation for such an oscillator \cite{AV07,AV08}. It has been
shown that the time evolution is mathematically equivalent to that of a
standard oscillator (with constant mass and constant angular frequency) but
with real time $t$ replaced by a `retarded time' $t_{\mathrm{ret}}(t)$ that
depends on the wavenumber $k$ \cite{CV13}. Thus, in effect, cosmological
relaxation for a single field mode may be discussed in terms of relaxation for
a standard oscillator.

Cosmological relaxation has been studied in detail for the case of a
radiation-dominated expansion, with $a\propto t^{1/2}$ \cite{CV13,CV15}. In
the short-wavelength or sub-Hubble limit, it is found that $t_{\mathrm{ret}%
}(t)\rightarrow t$ and so we obtain the time evolution of a field mode on
Minkowski spacetime, with rapid relaxation taking place for a superposition of
excited states. On the other hand, for long (super-Hubble) wavelengths it is
found that $t_{\mathrm{ret}}(t)<<t$ and so relaxation is
retarded.\footnote{Such retardation may also be described in terms of the mean
displacement of the trajectories \cite{AV08,AVbook}.} Thus, in a cosmology
with a radiation-dominated pre-inflationary era, at the onset of inflation we
may reasonably expect to find relic nonequilibrium at sufficiently large
wavelengths \cite{AV10,CV13,CV15}.

No further relaxation takes place during inflation itself. This has been shown
by calculating the de Broglie-Bohm trajectories of the inflaton field in the
Bunch-Davies vacuum \cite{AV07,AV10}. In terms of conformal time $\eta=-1/Ha$,
the wave functional is simply a product $\Psi\lbrack q_{\mathbf{k}r}%
,\eta]=\prod\limits_{\mathbf{k}r}\psi_{\mathbf{k}r}(q_{\mathbf{k}r},\eta)$ of
contracting Gaussian packets and the trajectories take the simple form
$q_{\mathbf{k}r}(\eta)=q_{\mathbf{k}r}(0)\sqrt{1+k^{2}\eta^{2}}$. The time
evolution of an arbitrary nonequilibrium distribution $\rho_{\mathbf{k}%
r}(q_{\mathbf{k}r},\eta)$ then amounts trivially to the same overall
contraction that occurs for the equilibrium distribution. It follows that the
width of the evolving nonequilibrium distribution remains in a constant ratio
with the width of the evolving equilibrium distribution. Thus the ratio%
\begin{align}
\xi(k)\equiv\frac{\left\langle |\phi_{\mathbf{k}}|^{2}\right\rangle
}{\left\langle |\phi_{\mathbf{k}}|^{2}\right\rangle _{\mathrm{QT}}}
\label{ksi}%
\end{align}
of the nonequilibrium variance $\left\langle |\phi_{\mathbf{k}}|^{2}%
\right\rangle $ to the quantum-theoretical variance $\left\langle
|\phi_{\mathbf{k}}|^{2}\right\rangle _{\mathrm{QT}}$ is preserved in time. Any
relic nonequilibrium ($\xi\neq1$) that exists at the beginning of inflation is
preserved during the inflationary era and is simply transferred to larger
lengthscales as physical wavelengths $\lambda_{\mathrm{phys}}=a\lambda
=a(2\pi/k)$ grow with time.

It follows that incomplete relaxation at long wavelengths during a
pre-inflationary era can change the spectrum of perturbations during inflation
and thus affect the primordial power spectrum for the curvature perturbations
that seed the temperature anisotropy in the CMB. An inflaton perturbation
$\phi_{\mathbf{k}}$ generates a curvature perturbation $\mathcal{R}%
_{\mathbf{k}}\propto\phi_{\mathbf{k}}$ (where $\phi_{\mathbf{k}}$ is evaluated
at a time a few e-folds after the mode exits the Hubble radius) \cite{LL00}.
This in turn generates the observed angular power spectrum%
\begin{align}
C_{l}=\frac{1}{2\pi^{2}}\int_{0}^{\infty}\frac{dk}{k}\ \mathcal{T}%
^{2}(k,l)\mathcal{P}_{\mathcal{R}}(k) \label{Cl2}%
\end{align}
for the CMB, where $\mathcal{T}(k,l)$ is the transfer function and%
\begin{align}
\mathcal{P}_{\mathcal{R}}(k)\equiv\frac{4\pi k^{3}}{V}\left\langle \left\vert
\mathcal{R}_{\mathbf{k}}\right\vert ^{2}\right\rangle \label{PPS}%
\end{align}
is the primordial power spectrum. From (\ref{ksi}) we have%
\begin{align}
\mathcal{P}_{\mathcal{R}}(k)=\mathcal{P}_{\mathcal{R}}^{\mathrm{QT}}(k)\xi(k)
\label{xi2}%
\end{align}
where $\mathcal{P}_{\mathcal{R}}^{\mathrm{QT}}(k)$ is the quantum-theoretical
or equilibrium power spectrum. Thus measurements of $C_{l}$ may be used to set
experimental limits on $\xi(k)$ \cite{AV10}.

The function $\xi(k)$ quantifies the degree of nonequilibrium as a function of
$k$. In a model with a pre-inflationary era, extensive numerical simulations
show that $\xi(k)$ is expected to take the form of an inverse-tangent -- with
$\xi<1$ for small $k$ and $\xi\simeq1$ at large $k$ \cite{CV15}. The extent to
which this prediction is supported by the data is currently under study
\cite{VPV19}.

Incomplete relaxation in the past is one means by which nonequilibrium could
exist in the inflationary era. Another possibility is that nonequilibrium is
\textit{generated} during the inflationary phase by exotic gravitational
effects at the Planck scale (ref.\ \cite{AV10}, section IVB). Trans-Planckian
modes -- that is, modes that originally had sub-Planckian physical wavelengths
-- may well contribute to the observable part of the inflationary spectrum
\cite{BM01,MB01}, in which case inflation provides an empirical window onto
physics at the Planck scale \cite{BM13}. It has been suggested that quantum
equilibrium might be gravitationally unstable \cite{AV04b,AV07}. In quantum
field theory the existence of an equilibrium state arguably requires a
background spacetime that is globally hyperbolic, in which case nonequilibrium
could be generated by the formation and evaporation of a black hole (a
proposal that is also motivated by the controversial question of information
loss) \cite{AV04b,AV07}. A heuristic picture of the formation and evaporation
of microscopic black holes then suggests that quantum nonequilibrium will be
generated at the Planck length $l_{\mathrm{P}}$. Such a process could be
modelled in terms of nonequilibrium field modes. Thus, a mode that begins with
a physical wavelength $\lambda_{\mathrm{phys}}<l_{\mathrm{P}}$ in the early
inflationary era may be assumed to be out of equilibrium upon exiting the
Planckian regime (that is, when $\lambda_{\mathrm{phys}}>l_{\mathrm{P}}$)
\cite{AV10}. If such processes exist, the inflaton field will carry quantum
nonequilibrium at \textit{short} wavelengths (below some comoving cutoff).

For our present purpose, the main conclusion to draw is that the inflaton
field may act as a carrier of primordial nonequilibrium -- whether it is relic
nonequilibrium from incomplete relaxation during a pre-inflationary era, or
nonequilibrium that was generated by Planck-scale effects during inflation
itself. This brings us to the question: in addition to leaving a macroscopic
imprint on the CMB, could primordial nonequilibrium survive all the way up to
the present and be found in microscopic relic systems today?

\subsection{Inflaton decay}
\label{IIB}

Post-inflation, the density of any relic particles (nonequilibrium or
otherwise) from a pre-inflationary era will be so diluted as to be completely
undetectable today. However, one may consider relic particles that were
created at the end of inflation by the decay of the inflaton field itself --
where in standard inflationary scenarios inflaton decay is in fact the source
of almost all the matter present in our Universe.

To discuss this, note that in pilot-wave theory it is standard to describe
bosonic fields in terms of evolving field configurations (as in our treatment
of the free scalar field in Section \ref{IIA}) whereas there are different
approaches for fermionic fields. Arguably the most straightforward pilot-wave
theory of fermions utilises a Dirac sea picture of particle trajectories
determined by a pilot wave that obeys the many-body Dirac equation
\cite{BH93,C03,CS07}. Alternatively, a formal field theory based on
anticommuting Grassmann fields may be written down \cite{AV92,AV96} but its
interpretation presents problems that remain to be addressed \cite{S10}. For
our purposes we will assume the Dirac sea model for fermions.

During the inflationary era the inflaton field $\varphi$ is approximately
homogeneous and may be written as%
\begin{align}
\varphi(\mathbf{x},t)=\phi_{0}(t)+\phi(\mathbf{x},t)\ ,
\end{align}
where $\phi_{0}(t)$ is a homogeneous part and $\phi(\mathbf{x},t)$ (or
$\phi_{\mathbf{k}}(t)$) is a small perturbation. As we have noted, during the
inflationary expansion perturbations $\phi_{\mathbf{k}}$ do not relax to
quantum equilibrium and in fact the exponential expansion of space transfers
any nonequilibrium that may exist from microscopic to macroscopic
lengthscales. The inflaton field is then a natural candidate for a carrier of
primordial quantum nonequilibrium (whatever its ultimate origin).

The process of `preheating' is driven by the homogeneous and essentially
classical part $\phi_{0}(t)$ (that is, by the $k=0$ mode of the inflaton
field) \cite{BTW06}. The inflaton is treated as a classical external field,
acting on other (quantum) fields which become excited by parametric resonance.
Because of the classicality of the relevant part of the inflaton field, this
process is unlikely to result in a transference of nonequilibrium from the
inflaton to the created particles. During `reheating', however, perturbative
decay of the inflaton can occur, and we expect that nonequilibrium in the
inflaton field will be at least to some extent transferred to its decay products.

Note that we follow the standard procedure of treating the large homogeneous
part $\phi_{0}(t)$ as a classical field and the small perturbation
$\phi(\mathbf{x},t)$ as a quantised field. This deserves some comment. In the
context of preheating, it has been argued that $\phi_{0}(t)$ arises from a
coherent state with a space-independent quantum expectation value \cite{TB90}. It
is also common to argue that the large amplitude and large occupation number
of the `zero mode' at the end of inflation justifies it being treated as a
classical field (see for example refs. \cite{BTW06} and \cite{ABCM10}).
Here we assume the standard formalism, albeit
rewritten in de Broglie-Bohm form. By construction, then, there is no
probability distribution for $\phi_{0}(t)$ (which has a classical `known'
value at all times). Whereas $\phi(\mathbf{x},t)$ has a probability
distribution, which in the standard theory is given by the Born rule and which
in de Broglie-Bohm theory can be more general. The probability distribution
for $\phi(\mathbf{x},t)$ is used to calculate the power spectrum emerging from
the inflationary vacuum. By allowing this distribution to be out of
equilibrium, new physical effects can occur in the CMB \cite{AV10}.
In contrast, because $\phi_{0}(t)$ is
treated as a classical background field with no probability distribution there
is no question of ascribing equilibrium or nonequilibrium to this part of the
field (at least not at the level of the effective description which we adopt here).\footnote{
Note that, in the standard formalism being assumed here, even at very long
wavelengths there remains a formal distinction between the large classical
homogeneous field $\phi_{0}(t)$ and modes of the small quantised field
$\phi(\mathbf{x},t)$.}

The perturbative decay of the inflaton occurs through local interactions. For
example, reheating can occur if the inflaton field $\varphi$ is coupled to a
bosonic field $\Phi$ and a fermionic field $\psi$ via an interaction
Hamiltonian density of the form%
\begin{align}
\mathcal{H}_{\mathrm{int}}=a\varphi\Phi^{2}+b\varphi\bar{\psi}\psi\ ,
\label{intn}%
\end{align}
where $a$, $b$ are constants (ref.\ \cite{PU09}, pp. 507--510). In actual
calculations, it is usual to consider only the dominant homogeneous part
$\phi_{0}$ of the field $\varphi=\phi_{0}+\phi$, and to ignore contributions
from the small perturbation $\phi$. Because the dominant homogeneous part
$\phi_{0}$ is treated essentially classically, inflaton decay bears some
resemblance to the process of pair creation by a strong classical electric
field. 

The decay particles will have physical wavelengths no greater than the
instantaneous Hubble radius, $\lambda_{\mathrm{phys}}\lesssim H^{-1}$, since
local processes cannot significantly excite super-Hubble modes (for which the
particle concept is in any case ill-defined). This standard argument -- that
super-Hubble modes are shielded from the
effects of local interactions -- is still valid in the de Broglie-Bohm
formulation since we are speaking of the time evolution of the wave functional
$\Psi$ (and of its mode decomposition) which still satisfies the usual
Schr\"{o}dinger equation. We have a nonlocal dynamical equation \eqref{the_first} for the
evolving configuration $q(t)$, but the Schr\"{o}dinger equation for $\Psi$
takes the usual form and therefore has the usual properties. Local Hamiltonian
terms in the Schr\"{o}dinger equation will be unable to excite super-Hubble
modes just as in standard quantum field theory.

How could quantum nonequilibrium exist in the decay products? There seem to be
two possible mechanisms.

First, note that the inflaton perturbation $\phi$ will also appear in the
interaction Hamiltonian (\ref{intn}). The dominant processes of particle
creation by the homogeneous part $\phi_{0}$ will necessarily be subject to
corrections from the perturbation $\phi$. If the perturbation is out of
equilibrium, the induced corrections will carry signatures of nonequilibrium
-- as will be illustrated by a simple model of field couplings in Section \ref{particle_decay}.

Second, as in any de Broglie-Bohm account of a quantum process, the final
probability distribution for the created particles will originate from the
initial probability distribution for the \textit{complete} hidden-variable
state.\footnote{In pilot-wave theory the outcome of a single quantum
measurement is determined by the complete initial configuration (together with
the initial wave function and total Hamiltonian). Over an ensemble, the
distribution of outcomes is then determined by the distribution of initial
configurations.} In this case the initial hidden-variable state will include
vacuum bosonic field configurations together with vacuum fermionic particle
configurations for the created species (assuming a Dirac-sea account of
fermions). Thus, if the relevant vacuum variables for the created species are
out of equilibrium at the beginning of inflaton decay, the created particles
will in general violate the Born rule. As we have discussed, inflaton
perturbations do not relax to equilibrium during the inflationary phase. One
may expect that the other degrees of freedom in the vacuum will show a
comparable behaviour -- in which case they could indeed be out of equilibrium
at the onset of inflaton decay, resulting in nonequilibrium for the decay products.

At least in principle, then, the particles created by inflaton decay could
show deviations from quantum equilibrium. However, subsequent relaxation can
be avoided only if the particles are created at a time after their
corresponding decoupling time $t_{\mathrm{dec}}$ (when the mean free time
$t_{\operatorname{col}}$ between collisions is larger than the expansion
timescale $t_{\exp}\equiv a/\dot{a}$) or equivalently at a temperature below
their decoupling temperature $T_{\mathrm{dec}}$. Otherwise the interactions
with other particles are likely to cause rapid relaxation.

A natural candidate to consider is the gravitino $\tilde{G}$, which arises in
supersymmetric theories of high-energy physics. In some models, gravitinos are
copiously produced by inflaton decay \cite{EHT06,KTY06,ETY07} and could make
up a significant component of dark matter \cite{T08}. (For recent reviews of
gravitinos as dark matter candidates see for example refs. \cite{EO13,C14}.)
Gravitinos are very weakly interacting and therefore in practice could not be
detected directly, but in many models they are unstable and decay into
particles that are more readily detectable. Again, as we shall see, in general
we expect any decay process to transfer quantum nonequilibrium from the
initial decaying field to the decay products. Thus, at least in principle, one
could search for deviations from the Born rule in (say) photons that are
generated by gravitino decay. However, the decay would have to take place
after the time $(t_{\mathrm{dec}})_{\gamma}$ of photon decoupling -- so that
the decay photons may in turn avoid relaxation.

It may then seem unlikely that primordial nonequilibrium could ever survive
until the present, since several stages may be required. But simple estimates
suggest that at least in principle the required constraints could be satisfied
for some models.

The unstable gravitino $\tilde{G}$ has been estimated to decouple at a
temperature $(T_{\mathrm{dec}})_{\tilde{G}}$ given by \cite{FY02}%
\begin{align}
k_{\mathrm{B}}(T_{\mathrm{dec}})_{\tilde{G}}&\equiv x_{\tilde{G}}%
(k_{\mathrm{B}}T_{\mathrm{P}})\\&\approx(1\ \mathrm{TeV})\left(  \frac{g_{\ast}%
}{230}\right)  ^{1/2}\left(  \frac{m_{\tilde{G}}}{10\ \mathrm{keV}}\right)
^{2}\left(  \frac{1\ \mathrm{TeV}}{m_{gl}}\right)  ^{2}\ ,
\end{align}
where $T_{\mathrm{P}}$ is the Planck temperature, $g_{\ast}$ is the number of
spin degrees of freedom (for the effectively massless particles) at the
temperature $(T_{\mathrm{dec}})_{\tilde{G}}$, $m_{gl}$ is the gluino mass, and
$m_{\tilde{G}}$ is the gravitino mass. For the purpose of illustration, if we
take $\left(  g_{\ast}/230\right)  ^{1/2}\sim1$ and $\left(  1\ \mathrm{TeV/}%
m_{gl}\right)  ^{2}\sim1$, then%
\begin{align}
x_{\tilde{G}}\approx\left(  \frac{m_{\tilde{G}}}{10^{3}\ \mathrm{GeV}}\right)
^{2}\ . \label{xG}%
\end{align}
If for example we take $m_{\tilde{G}}\approx100\ \mathrm{GeV}$, then
$x_{\tilde{G}}\approx10^{-2}$. Gravitinos produced by inflaton decay at
temperatures below $(T_{\mathrm{dec}})_{\tilde{G}}\equiv x_{\tilde{G}%
}T_{\mathrm{P}}$ could potentially avoid quantum relaxation. Any
nonequilibrium which they carry could then be transferred to their decay
products. If the gravitino is not the lightest supersymmetric particle, then
it will indeed be unstable. For large $m_{\tilde{G}}$ the total decay rate is
estimated to be \cite{NY06}%
\begin{align}
\Gamma_{\tilde{G}}=(193/48)(m_{\tilde{G}}^{3}/M_{\mathrm{P}}^{2})\ ,
\end{align}
where $M_{\mathrm{P}}\simeq1.2\times10^{19}\ \mathrm{GeV}$ is the Planck mass.
The time $(t_{\mathrm{decay}})_{\tilde{G}}$ at which the gravitino decays is
of order the lifetime $1/\Gamma_{\tilde{G}}$. Using the standard
temperature-time relation%
\begin{align}
t\sim(1\ \mathrm{s})\left(  \frac{1\ \mathrm{MeV}}{k_{\mathrm{B}}T}\right)
^{2}\ , \label{tT}%
\end{align}
the corresponding temperature is then%
\begin{align}
k_{\mathrm{B}}(T_{\mathrm{decay}})_{\tilde{G}}\sim(m_{\tilde{G}}%
/1\ \mathrm{GeV})^{3/2}\ \mathrm{eV}\ .
\end{align}
For example, again for the case $m_{\tilde{G}}\approx100\ \mathrm{GeV}$, the
relic gravitinos will decay when $k_{\mathrm{B}}(T_{\mathrm{decay}}%
)_{\tilde{G}}\sim1\ \mathrm{keV}$. This is prior to photon decoupling, so that
any (potentially nonequilibrium) photons produced by the decaying gravitinos
would interact strongly with matter and quickly relax to quantum equilibrium.
To obtain gravitino decay after photon decoupling, we would need
$k_{\mathrm{B}}(T_{\mathrm{decay}})_{\tilde{G}}\lesssim k_{\mathrm{B}%
}(T_{\mathrm{dec}})_{\gamma}\sim0.3\ \mathrm{eV}$, or $m_{\tilde{G}}%
\lesssim0.5\ \mathrm{GeV}$. For such small gravitino masses, decoupling occurs
at (roughly)%
\begin{align}
(T_{\mathrm{dec}})_{\tilde{G}}=x_{\tilde{G}}T_{\mathrm{P}}\approx\left(
m_{\tilde{G}}/10^{3}\ \mathrm{GeV}\right)  ^{2}T_{\mathrm{P}}\lesssim
10^{-7}T_{\mathrm{P}}\ .
\end{align}
In such a scenario, to have a hope of finding relic nonequilibrium in photons
from gravitino decay, we would need to restrict ourselves to those gravitinos
that were produced by inflaton decay at temperatures $\lesssim10^{-7}%
T_{\mathrm{P}}$.

Our considerations here are intended to be illustrative only. It may prove
more favourable to consider other gravitino decay products -- or to apply
similar reasoning to other relics from the Planck era besides the gravitino\footnote{Colin \cite{SC13} has developed the pilot-wave theory of (first-quantised) Majorana fermions and suggests that quantum nonequilibrium might survive at sub-Compton lengthscales for these systems.}.
And of course one could also consider photons that are generated by the
annihilation of relic particles as well as by their decay.

While definite conclusions must await the development of detailed and specific
models, in principle the required constraints do not seem insuperable. There
is, however, a further question we have yet to address: whether or not
relaxation will still occur even for decay particles that are decoupled.
Decoupling is necessary but not sufficient to avoid relaxation.
We may discuss this for decay particles whose physical wavelengths are sufficiently
sub-Hubble ($\lambda_{\mathrm{phys}}<<H^{-1}$) that the Minkowski limit
applies, since extensive numerical studies of relaxation have already been
carried out in this limit.
If the decay
particles are free but in quantum states that are superpositions of even
modest numbers of energy eigenstates, then rapid relaxation will occur (on
timescales comparable to those over which the wave function itself evolves)
\cite{AV92,AV01,VW05,EC06,TRV12,SC12,ACV14}. On the other hand, if the number
of energy states in the superposition is small then it is likely that
relaxation will not take place completely. It was shown in ref.\ \cite{ACV14}
that, if the relative phases in the initial superposition are chosen randomly,
then for small numbers of energy states it is likely that the trajectories
will not fully explore the configuration space, resulting in a small but
significant non-zero `residue' in the coarse-grained $H$-function --
corresponding to a small deviation from quantum equilibrium -- even in the
long-time limit. It appears that such behaviour can occur for larger numbers
of energy states as well, but will be increasingly rare the more energy states
are present in the superposition (see ref.\ \cite{ACV14} for a detailed
discussion). Decay particles will be generated with a range of effective
quantum states. For that fraction of particles whose wave functions have a
small number of superposed energy states, there is likely to be a small
residual nonequilibrium even in the long-time limit. Therefore, again, at
least in principle there seems to be no insuperable obstacle to primordial
nonequilibrium surviving to some (perhaps small) degree until the present day.

\subsection{Relic conformal vacua}
\label{IIC}

While inflaton decay will certainly create nonequilibrium particles from an
initially nonequilibrium vacuum, we have seen that there are practical
obstacles to such nonequilibrium surviving until the present day. The
obstacles do not seem insurmountable in principle, but whether a scenario of
the kind we have sketched will be realised in practice is at present unknown.
There is, however, an alternative and rather simpler scenario which appears to
be free of such obstacles. This involves considering relic nonequilibrium
field modes for the vacuum only. This has the advantage that vacuum wave
functions are so simple that no further relaxation can be generated -- any
relic nonequilibrium from earlier times will be frozen and preserved.

But how could primordial field modes remain unexcited in the post-inflationary
era? For a given field there are three mechanisms that can cause excitation:
(i) inflaton decay, (ii) interactions with other fields, and (iii) spatial
expansion. It is, however, possible to avoid each of these. Firstly, while a
field mode is in the super-Hubble regime it will in effect be shielded from
the effects of local physics and will not be subject to excitation from
perturbative interactions (with the inflaton or with other fields).\footnote{
Again, this standard argument is still valid in the de Broglie-Bohm
formulation since we are referring to the time evolution of the wave
functional only.}
 Secondly,
if during the post-inflationary radiation-dominated phase the field mode
enters the Hubble radius at a time $t_{\mathrm{enter}}$ that is later than the
decoupling time $t_{\mathrm{dec}}$ for the corresponding particle species, the
mode will remain free of interactions and continue to be unexcited (see figure
1). Finally, the effects of spatial expansion may be avoided altogether by
restricting our attention to fields that are conformally-coupled to the
spacetime metric. For example, for a massless scalar field $\phi$ with
Lagrangian density%
\begin{align}
\mathcal{L}=\frac{1}{2}\sqrt{-\,g}\left(  \,g_{\mu\nu}\partial^{\mu}%
\phi\partial^{\nu}\phi-\frac{1}{6}\,R\phi^{2}\right)
\end{align}
(where $R$ is the curvature scalar), the dynamics is invariant under a
conformal transformation $\,g_{\mu\nu}(x)\rightarrow\,\tilde{g}_{\mu\nu
}(x)=\Omega^{2}(x)\,g_{\mu\nu}(x)$, $\phi(x)\rightarrow\tilde{\phi}%
(x)=\Omega^{-1}(x)\phi(x)$, where $\Omega(x)$ is an arbitrary spacetime
function \cite{BD82,PT09}. Because a Friedmann--Lema\^{\i}tre spacetime is
conformally related to a section of Minkowski spacetime, the spatial expansion
will not create particles for a (free) conformally-coupled field. The natural
or conformal vacuum state is stable, just as in Minkowski spacetime
\cite{BD82,PT09}. Conformal invariance is however possible only for massless
fields, whether bosonic or fermionic. As examples of conformally-coupled
particle species, we may consider photons and (if they exist) massless
neutrinos and massless gravitinos.%

\begin{figure}
\centering
\footnotesize{\import{Chapter_4_figures/}{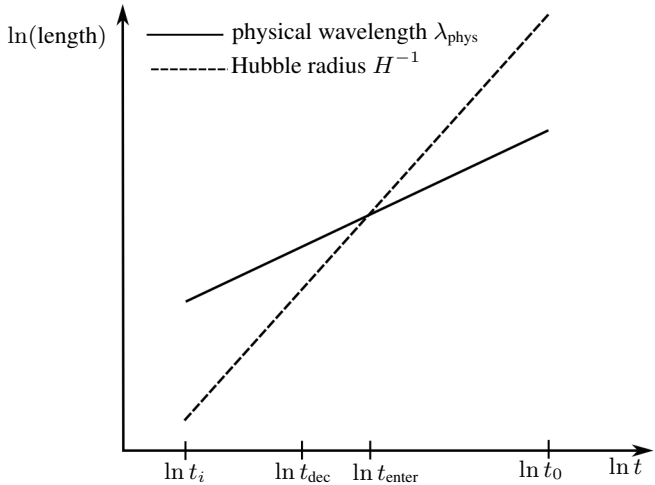}}
\caption[Lengthscales for a radiation-dominated expansion]{Lengthscales for a radiation-dominated expansion. The solid line
shows the time evolution of the physical wavelength $\lambda_{\mathrm{phys}%
}=a\lambda\propto t^{1/2}$. The dashed line shows the time evolution of the
Hubble radius $H^{-1}=2t$. The mode enters the Hubble radius after the
decoupling time $t_{\mathrm{dec}}$.}%
\end{figure}

Because ground-state wave functions and the associated de Broglie velocity
fields are so simple (indeed trivial), relic vacuum modes will not relax to
equilibrium and could therefore survive as carriers of nonequilibrium until
the present day. As we shall see, nonequilibrium vacuum modes would in
principle generate corrections to particle-physics processes.

At what lengthscale might relic nonequilibrium exist in the vacuum today? This
may be estimated by requiring that the modes enter the Hubble radius at times
$t_{\mathrm{enter}}>t_{\mathrm{dec}}$ (so as to avoid excitation and hence
likely relaxation). Thus we require that at the time $t_{\mathrm{dec}}$ the
vacuum modes have an instantaneous physical wavelength $\lambda_{\mathrm{phys}%
}^{\mathrm{vac}}(t_{\mathrm{dec}})$ that is super-Hubble,%
\begin{align}
\lambda_{\mathrm{phys}}^{\mathrm{vac}}(t_{\mathrm{dec}})\gtrsim
H_{\mathrm{dec}}^{-1}\ , \label{lam-con4}%
\end{align}
where $H_{\mathrm{dec}}^{-1}$ is the Hubble radius at time $t_{\mathrm{dec}}$
(as shown in figure 1). Now $\lambda_{\mathrm{phys}}^{\mathrm{vac}%
}(t_{\mathrm{dec}})=a_{\mathrm{dec}}\lambda^{\mathrm{vac}}$ (where
$a_{\mathrm{dec}}=T_{0}/T_{\mathrm{dec}}$ and $T_{0}\simeq2.7\ \mathrm{K}$)
and $H_{\mathrm{dec}}^{-1}=2t_{\mathrm{dec}}$ with $t_{\mathrm{dec}}$
expressed in terms of $T_{\mathrm{dec}}$ by the approximate formula
(\ref{tT}). The lower bound (\ref{lam-con4}) then becomes (inserting $c$)%
\begin{align}
\lambda^{\mathrm{vac}}\gtrsim2c(1\ \mathrm{s})\left(  \frac{1\ \mathrm{MeV}%
}{k_{\mathrm{B}}T_{\mathrm{dec}}}\right)  \left(  \frac{1\ \mathrm{MeV}%
}{k_{\mathrm{B}}T_{0}}\right)
\end{align}
or%
\begin{align}
\lambda^{\mathrm{vac}}\gtrsim(3\times10^{20}\ \mathrm{cm})\left(
\frac{1\ \mathrm{MeV}}{k_{\mathrm{B}}T_{\mathrm{dec}}}\right)  \,. \label{lb2}%
\end{align}
This is a lower bound on the comoving wavelength $\lambda^{\mathrm{vac}}$ at
which nonequilibrium could be found for conformally-coupled vacuum modes.

The lower bound (\ref{lb2}) becomes prohibitively large unless we focus on
fields that decouple around the Planck temperature or soon after. For photons
$k_{\mathrm{B}}(T_{\mathrm{dec}})_{\gamma}\sim0.3\ \mathrm{eV}$, and so for
the electromagnetic vacuum (\ref{lb2}) implies $\lambda_{\gamma}%
^{\mathrm{vac}}\gtrsim10^{27}\ \mathrm{cm}$. For massless, conformally-coupled
neutrinos (if such exist), $k_{\mathrm{B}}(T_{\mathrm{dec}})_{\nu}%
\sim1\ \mathrm{MeV}$ and $\lambda_{\nu}^{\mathrm{vac}}\gtrsim10^{20}%
\ \mathrm{cm}\simeq30\ \mathrm{pc}$ (or $\sim10^{2}$ light years). Relic
nonequilibrium for these vacua could plausibly exist today only at such huge
wavelengths and any induced effects would be far beyond any range of detection
in the foreseeable future.

We must therefore consider fields that decoupled close to the Planck
temperature. Gravitons are expected to be minimally-coupled and so would not
have a stable vacuum state under the spatial expansion. However, a massless
gravitino field should be conformally-coupled, in which case it would be a
candidate for our scenario. For massless gravitinos we have a lower bound%
\begin{align}
\lambda_{\tilde{G}}^{\mathrm{vac}}\gtrsim(10^{-2}\ \mathrm{cm})(1/x_{\tilde
{G}}) \label{lbGvac}%
\end{align}
(again writing $k_{\mathrm{B}}(T_{\mathrm{dec}})_{\tilde{G}}\equiv
x_{\tilde{G}}(k_{\mathrm{B}}T_{\mathrm{P}})\simeq x_{\tilde{G}}(10^{19}%
\ \mathrm{GeV})$ and with $x_{\tilde{G}}\lesssim1$). If, for example, we take
$x_{\tilde{G}}\approx10^{-2}$ then $\lambda_{\tilde{G}}^{\mathrm{vac}}%
\gtrsim1\ \mathrm{cm}$. According to this crude and illustrative estimate,
relic nonequilibrium for a massless gravitino vacuum today appears to be
possible for modes of wavelength $\gtrsim1\ \mathrm{cm}$.

\subsection{Particle physics in a nonequilibrium vacuum}
\label{IID}

If nonequilibrium vacuum modes do exist today, how could they manifest
experimentally? In principle they would induce nonequilibrium corrections to
particle creation from the vacuum (as already noted for inflaton decay) or to
other perturbative processes such as particle decay.

Consider for example a free scalar field $\Phi(\mathbf{x},t)$ that is massive
and charged. Let us again write the Fourier components as $\Phi_{\mathbf{k}%
}(t)=\left(  \sqrt{V}/(2\pi)^{3/2}\right)  \left(  Q_{\mathbf{k}%
1}(t)+iQ_{\mathbf{k}2}(t)\right)  $ with real $Q_{\mathbf{k}r}$ ($r=1,2$). In
Minkowski spacetime -- which is suitable for a description of local laboratory
physics -- the vacuum wave functional takes the form%
\begin{align}
\Psi_{0}[Q_{\mathbf{k}r},t]\propto%
{\displaystyle\prod\limits_{\mathbf{k}r}}
\exp\left(  {-}\frac{1}{2}{{\omega}}Q_{\mathbf{k}r}^{2}\right)  \exp\left(
{-i}\frac{1}{2}{{\omega t}}\right)  \label{PsiVac}%
\end{align}
where $\omega=(m^{2}+k^{2})^{1/2}$ and $m$ is the mass associated with the
field. (On expanding space the vacuum wave functional will reduce to this form
in the short-wavelength limit.)

Let us assume that the quantum state of the field is indeed the vacuum state
(\ref{PsiVac}). Assuming for simplicity that the (putative) long-wavelength
nonequilibrium modes are uncorrelated, we may then consider a hypothetical
nonequilibrium vacuum with a distribution of the form%
\begin{align}
P_{0}[Q_{\mathbf{k}r}]\propto%
{\displaystyle\prod\limits_{\substack{\mathbf{k}r\\(k>k_{\mathrm{c}})}}}
\exp\left(  {-{\omega}}Q_{\mathbf{k}r}^{2}\right)  .%
{\displaystyle\prod\limits_{\substack{\mathbf{k}r\\(k<k_{\mathrm{c}})}}}
\rho_{\mathbf{k}r}(Q_{\mathbf{k}r})\ , \label{VacNoneq}%
\end{align}
where $\rho_{\mathbf{k}r}(Q_{\mathbf{k}r})$ is a general nonequilibrium
distribution for the mode $\mathbf{k}r$ and the wavelength cutoff
$2\pi/k_{\mathrm{c}}$ is at least as large as the relevant lower bound
(\ref{lb2}) on $\lambda^{\mathrm{vac}}$. The short wavelength modes
($k>k_{\mathrm{c}}$) are in equilibrium while the long wavelength modes
($k<k_{\mathrm{c}}$) are out of equilibrium. (The vacuum distribution $P_{0}$
is time independent because the de Broglie velocity field generated by
(\ref{PsiVac}) vanishes, $\dot{Q}_{\mathbf{k}r}=0$, since the phase of the
wave functional depends on $t$ only.)

If the field $\Phi$ is now coupled to an external and classical
electromagnetic field $\mathbf{A}_{\mathrm{ext}}$, corresponding to a
replacement $\mathbf{\nabla}\Phi\rightarrow\mathbf{\nabla}\Phi+ie\mathbf{A}%
_{\mathrm{ext}}\Phi$ in the Hamiltonian, pairs of oppositely-charged bosons
will be created from the vacuum.\footnote{The de Broglie-Bohm theory of a
charged scalar field interacting with the electromagnetic field is discussed
in refs. \cite{AV92,AVbook}.} As in our discussion of inflaton decay, the
probability distribution for the created particles originates from the initial
probability distribution $P_{0}[Q_{\mathbf{k}r}]$ for the vacuum field $\Phi$.
(There are no other degrees of freedom varying over the ensemble, since the
given classical field $\mathbf{A}_{\mathrm{ext}}$ is the same across the
ensemble.) Clearly, if $P_{0}\neq\left\vert \Psi_{0}\right\vert ^{2}$ for
long-wavelength modes, the final probability distribution for the created
particles will necessarily carry traces of the initial nonequilibrium that was
present in the vacuum. We could for example consider an interaction
Hamiltonian $e^{2}A_{\mathrm{ext}}^{2}\Phi^{\ast}\Phi$ and calculate the final
particle distribution arising from a given initial nonequilibrium vacuum
distribution of the form (\ref{VacNoneq}).

Similarly, processes of particle decay will be affected by the nonequilibrium
vacuum. Consider, for example, the decay of a particle associated with a
(bosonic or fermionic) field $\psi$ that is coupled to $\Phi$ and to a third
field $\chi$. (For bosonic fields, the decay might be induced by an
interaction Hamiltonian of the form $a\chi\Phi^{2}\psi$ where $a$ is a
coupling constant.) An initial state $\left\vert \mathbf{p}\right\rangle
_{\psi}\otimes\left\vert 0\right\rangle _{\Phi}\otimes\left\vert
0\right\rangle _{\chi}$ -- where $\left\vert \mathbf{p}\right\rangle _{\psi}$
is a single-particle state of momentum $\mathbf{p}$ for the field $\psi$ and
$\left\vert 0\right\rangle _{\Phi}$, $\left\vert 0\right\rangle _{\chi}$ are
respective vacua for the fields $\Phi$ and $\chi$ -- may have a non-zero
amplitude to make a transition\footnote{
In a de Broglie-Bohm account, the apparent `collapse' of the quantum state as
indicated by equation \eqref{transition} is only an effective description. During a standard
quantum process -- such as a measurement, a scattering experiment, or general
transition between eigenstates -- an initial packet $\psi(q,0)$ on
configuration space evolves into a superposition $\psi(q,t)=\sum_{n}c_{n}%
\psi_{n}(q,t)$ of non-overlapping packets $\psi_{n}(q,t)$. The final
configuration $q(t)$ can occupy only one `branch' -- say $\psi_{i}(q,t)$,
corresponding to the $i$th `outcome'. The motion of $q(t)$ will subsequently
be affected by $\psi_{i}(q,t)$ alone, resulting in an effective `collapse' of
the wave function. The `empty' branches still exist but no longer affect the
trajectory $q(t)$. (See, for example, chapter 8 of ref. \cite{Holl93}.)}
\begin{align}\label{transition}
\left\vert \mathbf{p}\right\rangle _{\psi}\otimes\left\vert 0\right\rangle
_{\Phi}\otimes\left\vert 0\right\rangle _{\chi}\rightarrow\left\vert
0\right\rangle _{\psi}\otimes\left\vert \mathbf{k}_{1}\mathbf{k}%
_{2}\right\rangle _{\Phi}\otimes\left\vert \mathbf{p}^{\prime}\right\rangle
_{\chi}%
\end{align}
to a final state containing two excitations of the field $\Phi$ and one
excitation of $\chi$. The final probability distribution for the outgoing
particles will originate from the initial probability distribution for all the
relevant (hidden-variable) degrees of freedom -- which in this case consist of
the relevant vacuum variables for $\Phi$ and $\chi$ together with the
variables for the field $\psi$. (Again, if $\psi$ is fermionic the associated
hidden variables may consist of particle positions in the Dirac sea
\cite{BH93,C03,CS07}.) Because all these variables are coupled by the
interaction, an initial nonequilibrium distribution (\ref{VacNoneq}) for a
subset of them (that is, for the $Q_{\mathbf{k}r}$) will generally induce
corrections to the Born rule in the final joint distribution for the
collective variables and hence for the outgoing particles. Thus, for example,
for gravitinos decaying in a nonequilibrium vacuum we would expect the decay
photons to carry traces of nonequilibrium in the probability distributions for
their outgoing momenta and polarisations.

\section{Perturbative transfer of nonequilibrium}
\label{particle_decay}
We now turn to some simple but illustrative field-theoretical models of the behaviour of nonequilibrium systems. The first question that needs to be addressed is the perturbative transfer of nonequilibrium from one field to another. In this section we present a simple (bosonic) field-theoretical model that illustrates this process.

Suppose we have two Klein-Gordon fields $\phi_1$ and $\phi_2$, confined inside a box of volume $V$ with dimensions $l_x$, $l_y$, and $l_z$ such that the fields are necessarily zero valued on the boundaries of the box. In consideration of these boundary conditions, we expand and quantise the fields in a set of standing waves as $(i=1,2)$
\begin{align}
\phi_i(\mathbf{x})&=\sum_{\mathbf{k}}\frac{2^{3/2}q_{i\mathbf{k}}}{\sqrt{V}}\sin(k_x x)\sin(k_y y)\sin(k_z z),
\end{align}
with annihilation operators
\begin{align}
a_{i\mathbf{k}}&=\sqrt{\frac{\omega_{i\mathbf{k}}}{2}}\left(q_{i\mathbf{k}}+\frac{i}{\omega_{i\mathbf{k}}}p_{i\mathbf{k}}\right),
\end{align}
and a total Hamiltonian
\begin{align}
H_0&=\sum_{\mathbf{k}}\left(\omega_{1\mathbf{k}}a^\dagger_{1\mathbf{k}}a_{1\mathbf{k}}+\omega_{2\mathbf{k}}a^\dagger_{2\mathbf{k}}a_{2\mathbf{k}}\right).
\end{align}
We have dropped the zero point energy, and $k_x=n\pi/l_x$ and similarly for $y$ and $z$.  The two fields are coupled by the interaction Hamiltonian
\begin{align}
H_{\text{I}}&=g\int_V\mathrm{d}^3x\phi_1(\mathbf{x})\phi_2(\mathbf{x})\label{quad_int_ham}\\
&=\frac{g}{2}\sum_\mathbf{k}\frac{1}{\sqrt{\omega_{1\mathbf{k}}\omega_{2\mathbf{k}}}}(a_{1\mathbf{k}}+a^\dagger_{1\mathbf{k}})(a_{2\mathbf{k}}+a^\dagger_{2\mathbf{k}}),
\end{align}
where $g$ is a coupling constant.
If we suppose that at time $t=0$ the system is in the free (unperturbed) eigenstate $\ket{E_i}$, the first order perturbative amplitude to transition to the state $\ket{E_f}$ is 
\begin{align}
d_f^{(1)}(t)=\bra{f}H_{\text{I}}\ket{i}\frac{e^{-iE_ft}-e^{-iE_it}}{E_f-E_i}.\label{amp}
\end{align}
This will be damped for any $E_f$ significantly different from $E_i$. We may exploit this fact by further insisting that
\begin{itemize}
\item $l_x\gg l_y \gg l_z$, so that the lowest mode of each field is significantly lower than all others, and
\item the limit $m_2\rightarrow m_1$ is taken, so that the lowest modes of $\phi_1$ and $\phi_2$ have the same unperturbed energy.
\end{itemize}
These conditions ensure that the system state in which field $\phi_1$ has one particle occupying its lowest mode and the field $\phi_2$ is a vacuum has identical unperturbed energy to the system state in which the individual field states are reversed. We shall denote these states $\ket{1,0}$ and $\ket{0,1}$ respectively. Since these states have identical unperturbed energies, the first order perturbative amplitudes \eqref{amp} between the states is significantly amplified, whereas all others damped. This is the justification of the rotating wave approximation, familiar from quantum optics and cavity QED (see for instance ref.\ \cite{Knight}). Put simply, the states $\ket{1,0}$ and $\ket{0,1}$ are strongly coupled to each other and only very weakly coupled to any other state. 

We make the rotating wave approximation by removing all terms in the Hamiltonian that would effect an evolution to states other than $\ket{1,0}$ and $\ket{0,1}$. This allows us to employ the effective Hamiltonian,
\begin{align}
H_{\text{eff}}=\omega(a_1^\dagger a_1 + a_2^\dagger a_2)+\frac{g}{2\omega}(a_1a_2^\dagger + a_2a_1^\dagger).
\end{align}
We have suppressed the mode subscripts for simplicity. The approximate Schr\"{o}dinger equation $H_{\text{eff}}\ket{\psi}=i\partial_t\ket{\psi}$, along with the initial condition $\left.\ket{\psi}\right|_{t=0}=\ket{1,0}$, yields the solution 
\begin{align}
\ket{\psi}=e^{-i\omega t}\left(\cos\left(\frac{gt}{2\omega}\right)\ket{1,0}-i\sin\left(\frac{gt}{2\omega}\right)\ket{0,1}\right)\label{part_decay_state}.
\end{align}
The sine and cosine in Eq.\ \eqref{part_decay_state} describe an oscillatory decay process in which the first type of particle is seen to decay into the second type, which promptly decays back. This type of flip-flopping between one type of particle and the other is functionally equivalent to vacuum-field Rabi oscillations in the Jaynes-Cummings model \cite{Knight,jc} of quantum optics and cavity QED wherein an exchange of energy occurs between an atom and a cavity mode of the electromagnetic field, perpetually creating a photon, then destroying it only to create it once more.\footnote{From a field-theoretical viewpoint the quadratic interaction \eqref{quad_int_ham}
may seem too trivial an example since the interaction may be removed by a
linear transformation of the field variables. Such a transformation would not,
however, negate the physical meaning of the original system. Our aim is to
illustrate with a simple example how nonequilibrium may be passed from one
type of field to another. We expect a similar passing of quantum
nonequilibrium between fields to be caused by any reasonable interaction term.
Our example is based on a model -- widely used in quantum optics to study the
interaction between a two-level atom and a single mode of the quantised
electromagnetic field inside a cavity -- that is simple enough to be tractable
while at the same time providing a genuine field-theoretical account of energy
transfer to and from a quantised field.}

\begin{figure}
\centering
\footnotesize{\import{Chapter_4_figures/}{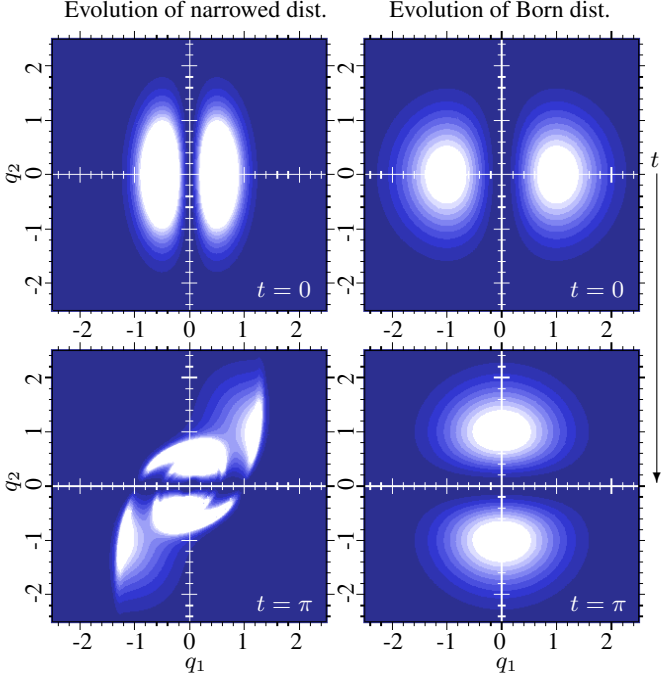}}
\caption[Quantum nonequilibrium in a particle decay process]{\label{part_decay_non-eq}
The evolution of quantum equilibrium and nonequilibrium through the particle decay process described by state \eqref{part_decay_state} and guidance equations \eqref{part_guidance_part}. Initially state \eqref{part_decay_state} is a product between an excited (one particle) state in $q_1$ and a ground (vacuum) state in $q_2$. This is shown in the top right graph. As time passes, $t=0\rightarrow\pi$, the excited state in $q_1$ decays into exactly the same excited state in $q_2$. At time $t=\pi$ the state \eqref{part_decay_state} exists in another product state, except this time with excited and ground states switched between fields. This is shown in the bottom right graph. The evolution of a quantum nonequilibrium distribution is shown in the left column. Before the interaction takes place, quantum nonequilibrium exists only in the one particle state of the first field; it has been narrowed with respect to the equilibrium distribution. As time passes, the first field generates nonequilibrium in the second. At $t=\pi$, by standard quantum mechanics, the decay process is complete and there exists another product state. In contrast, the introduction of quantum nonequilibrium has created a distribution at $t=\pi$ that is correlated between $q_1$ and $q_2$. The marginal distributions for the fields are shown in figure \ref{part_marginals}. (This figure takes $\omega=g=1$.)}
\end{figure}

\begin{figure}
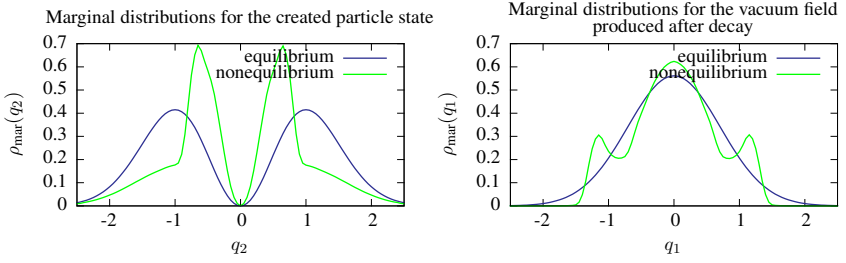

\resizebox{\linewidth}{!}{
\import{Chapter_4_figures/}{marginals_particle_created}
\import{Chapter_4_figures/}{marginals_particle_destroyed}}
\caption[Marginal distributions for the fields in figure \ref{part_decay_non-eq}]{\label{part_marginals}The ensemble marginal distributions of the particle and vacuum state created at $t=\pi$ in the decay process shown in figure \ref{part_decay_non-eq}. The top graph shows the marginal distribution for the excited field $\rho_{\text{mar}}(q_2)=\left.\int \mathrm{d}q_1\rho(q_1,q_2,t)\right|_{t=\pi}$. The bottom graph shows the nonequilibrium marginal distribution of the vacuum field $\rho_{\text{mar}}(q_1)=\left.\int \mathrm{d}q_2\rho(q_1,q_2,t)\right|_{t=\pi}$, obtained after the original particle state has decayed.}
\end{figure}

To develop a de Broglie-Bohm description of this particle decay process, one needs to specify the configuration of the system. For bosonic fields the canonical approach \cite{B52b} is to use the Schr\"{o}dinger representation with mode amplitudes as the configuration. In our case this is particularly simple; the state of any one system in an ensemble is described by the coordinates $q_1$ and $q_2$, proportional to the amplitudes of the lowest (standing) mode of each field. In this representation the Hamiltonian is
\begin{align}
H_{\text{eff}}=&-\frac{1}{2}\left(\partial_{q_1}^2+\partial_{q_2}^2\right)+\frac12\omega^2\left(q_1^2+q_2^2\right)\nonumber\\
&-\omega+\frac{g}{2}\left(q_1q_2-\frac{1}{\omega^2}\partial_{q_1}\partial_{q_2}\right).\label{coord_rwa_ham}
\end{align}
In the rotating wave approximation there are derivative terms in the interaction Hamiltonian. The de Broglie velocity fields associated with the Hamiltonian \eqref{coord_rwa_ham} may be derived in the standard way, and by using 
\begin{align}
&\psi^*\partial_{q_1}\partial_{q_2}\psi-\psi\partial_{q_1}\partial_{q_2}\psi^*\nonumber\\
&=i\partial_{q_1}\left(|\psi|^2\partial_{q_2}\text{Im}\ln\psi\right)+i\partial_{q_2}\left(|\psi|^2\partial_{q_1}\text{Im}\ln\psi\right)
\end{align}
(a special case of the general identity 2 of ref.\ \cite{SV08}). Writing $\psi=|\psi|e^{iS}$, the guidance equations may be expressed as
\begin{equation}\label{part_guidance}
\begin{aligned}
\dot{q_1}&=\partial_{q_1}S+\frac{g}{2\omega^2}\partial_{q_2}S,\\
\dot{q_2}&=\partial_{q_2}S+\frac{g}{2\omega^2}\partial_{q_1}S.
\end{aligned}
\end{equation}
For the particular state \eqref{part_decay_state}, these yield
\begin{equation}\label{part_guidance_part}
\begin{aligned}
\dot{q_1}&=\frac{\frac12\left(q_2-\frac{g}{2\omega^2}q_1\right)\sin\left(\frac{gt}{\omega}\right)}{q_1^2\cos^2\left(\frac{gt}{2\omega}\right)+ q_2^2\sin^2\left(\frac{gt}{2\omega}\right)},\\
\dot{q_2}&=\frac{\frac12\left(-q_1+\frac{g}{2\omega^2}q_2\right)\sin\left(\frac{gt}{\omega}\right)}{q_1^2\cos^2\left(\frac{gt}{2\omega}\right)+ q_2^2\sin^2\left(\frac{gt}{2\omega}\right)}.
\end{aligned}
\end{equation}
The configuration $q(t)=(q_1(t),q_2(t))$ and velocity $\dot{q}(t)=(\dot{q_1}(t),\dot{q_2}(t))$ of a particular member of an ensemble evolving along a trajectory described by Eqs.\ \eqref{part_guidance_part} has the properties  
$q(t)=q(t+2\pi\omega/g)$, $\dot{q}(t)=\dot{q}(t+2\pi\omega/g)$, $q(t)=q(-t)$, and $\dot{q}(t)=-\dot{q}(-t)$.
The trajectories $q(t)$ are periodic, and halfway through their period backtrack along their original path.

Given the velocity field \eqref{part_guidance_part}, we may integrate the continuity equation \eqref{ant_cont} to obtain the time evolution of an arbitrary distribution $\rho$. (Our numerical method is described in the appendix.)

In figure \ref{part_decay_non-eq} we compare the evolution of quantum nonequilibrium with that of equilibrium for the case $\omega=g=1$. The decay from an initial product state $\ket{1,0}$ to a final product state $\ket{0,1}$ is seen in the (product) equilibrium distributions on the right-hand side of figure \ref{part_decay_non-eq}. We illustrate the transfer of nonequilibrium in the left-hand side of figure \ref{part_decay_non-eq} for the case of an initial nonequilibrium that has simply been narrowed in $q_1$ (with respect to equilibrium). Hence only the first field is initially out of equilibrium. As time passes the distribution becomes correlated in $q_1$ and $q_2$. At $t=\pi$, when according to standard quantum mechanics we should find another product state (corresponding to $\ket{0,1}$), there exists a complicated overall nonequilibrium in $(q_1,q_2)$. The marginal distributions are shown in figure \ref{part_marginals}. 

The evolution of nonequilibrium depends strongly on the particular values of $\omega$ and $g$, although in general we see two important properties of this evolution. Firstly it is apparent from figures \ref{part_decay_non-eq} and \ref{part_marginals} that nonequilibrium in the marginal distribution of the original particle state (or excited field) will generate nonequilibrium in its decay product. Secondly, although the initial product state $\left.\ket{\psi}\right|_{t=0}=\ket{1,0}$ evolves into the product state $\ket{0,1}$ at $t=\pi\omega/g$, the nonequilibrium distribution is correlated between the two fields. Such correlation could not exist in standard quantum mechanics.

\section{Energy measurements and nonequilibrium spectra}
\label{IV}
In this section we focus on quantum-mechanical measurements of energy for elementary field-theoretical systems in nonequilibrium.
As we have discussed, in this paper we restrict ourselves to simple models that may be taken to illustrate some of the basic phenomena that could occur. 

The following analysis is presented for the electromagnetic field, partly because it provides a convenient  illustrative model and partly because (as explained in Section \ref{II}) we envisage the possibility of detecting decay photons produced by particles in nonequilibrium rather than the parent particles themselves. However, the analysis should apply equally well to other field theories. 
\subsection{Setup and effective wave function}\label{setup}
\begin{figure*}[t]
\resizebox{.98\linewidth}{!}{
\import{Chapter_4_figures/}{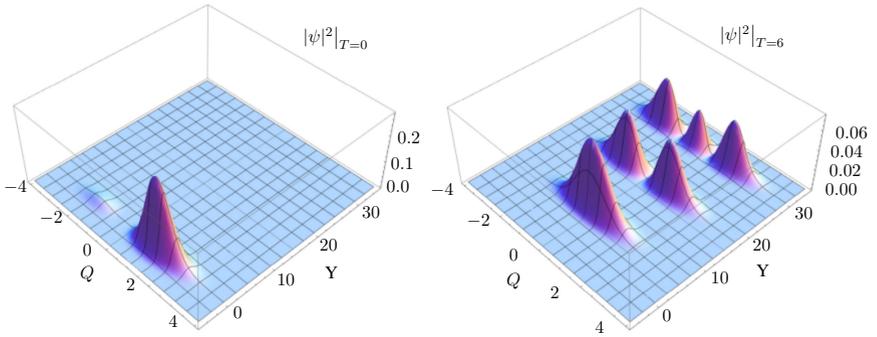}
}
\caption[Illustration of the energy measurement process]{\label{fig_1}Illustration of the energy measurement process, showing the evolution of the Born distribution into disjoint packets (for the case $c_0=c_1=c_2=1/\sqrt{3}$).
The variables $q$, $y$ and $t$ have been replaced by the rescaled variables $Q$, $Y$ and $T$ defined in section \ref{non_dim}.
The initial pointer wave function is chosen to be a Gaussian centred on Y=0.
Initially the components of the total wave function overlap and interfere. As time passes each component moves in the $Y$ direction with speed $2n+1$. After some time the components no longer overlap and the experimenter may unambiguously read off the energy eigenvalue from the position of the pointer ($Y$ coordinate).}
\end{figure*}
We work in the Coulomb gauge, $\nabla.\mathbf{A}(\mathbf{x},t)=0$, with the field expansion
\begin{align}
\mathbf{A}(\mathbf{x},t)=\sum_{\mathbf{k'}s'}\left[A_{\mathbf{k'}s'}(t)\mathbf{u}_{\mathbf{k'}s'}(\mathbf{x})+A_{\mathbf{k'}s'}^*(t)\mathbf{u}_{\mathbf{k'}s'}^*(\mathbf{x})\right],\label{u_expansion}
\end{align}
where the functions
\begin{align}
\mathbf{u}_{\mathbf{k'}s'}(\mathbf{x})=\frac{\bm{\varepsilon}_{\mathbf{k'}s'}}{\sqrt{2\varepsilon_0 V}}e^{i\mathbf{k'.x}}
\end{align}
and their complex conjugates define a basis for the function space. 
In expansion \eqref{u_expansion} and henceforth, summations over wave vectors are understood to extend over half the possible values of $\mathbf{k}'$. This is to avoid duplication of bases $\mathbf{u}^*_{\mathbf{k'}s'}$ with $\mathbf{u}_{\mathbf{-k'}s'}$. See for instance reference \cite{schiff}. The primes are included for later convenience. This expansion allows one to write the energy of the electromagnetic field as
\begin{align}
U&=\frac12 \int_V\mathrm{d}^3x\left(\varepsilon_0 \mathbf{E.E}+\frac{1}{\mu_0}\mathbf{B.B}\right)\\
&=\sum_{\mathbf{k'}s'}\frac{1}{2}\left(\dot{A}_{\mathbf{k'}s'}\dot{A}_{\mathbf{k'}s'}^*+\omega_\mathbf{k'}^2A_{\mathbf{k'}s'}A_{\mathbf{k'}s'}^*\right),\label{comp_HOs}
\end{align}
where $\omega_\mathbf{k'}=c|\mathbf{k'}|$.
Equation \ref{comp_HOs} defines a decoupled set of complex harmonic oscillators of unit mass. We shall prefer instead to work with real variables and so we decompose $A_{\mathbf{k}'s'}$ into its real and imaginary parts 
\begin{align}
A_{\mathbf{k'}s'}=q_{\mathbf{k'}s'1} +iq_{\mathbf{k'}s'2}.
\end{align}
One may then write the free field Hamiltonian as
\begin{align}
H_0=\sum_{\mathbf{k'}s'r'}H_{\mathbf{k'}s'r'}
\end{align}
with $r'=1,2$, where
\begin{align}
H_{\mathbf{k'}s'r'}=\frac{1}{2}\left(p_{\mathbf{k'}s'r'}^2+\omega_{\mathbf{k'}}^2q_{\mathbf{k'}s'r'}^2\right),
\end{align}
and where $p_{\mathbf{k'}s'r'}$ is the momentum conjugate of $q_{\mathbf{k'}s'r'}$.

Suppose we wish to perform a quantum energy measurement for a single mode of the field. We may follow the pilot-wave theory of ideal measurements described in ref.\ \cite{Holl93}. The system is coupled to an apparatus pointer with position variable $y$. The interaction Hamiltonian $H_{\text{I}}$ is taken to be of the form $g \hat{\omega}p_y$, where again $g$ is a coupling constant and $\hat{\omega}$ is the operator corresponding to the observable to be measured. In our case we have
\begin{align}\label{coupling}
H_{\text{I}}=g H_{\mathbf{k}sr}p_y,
\end{align}
where $p_y$ is the momentum conjugate to the pointer position $y$
and where $\mathbf{k}$, $s$ and $r$ refer specifically to the field mode that is being measured. Including the free Hamiltonian $H_{\text{app}}$ for the apparatus, the total Hamiltonian is
\begin{align}
H_\text{tot}=H_0+H_\text{app}+H_{\text{I}}.\label{hamiltonian}
\end{align}
We assume an initial product state 
\begin{align}
\psi(0)=\psi_{\mathbf{k}sr}(q_{\mathbf{k}sr},0)\phi(y,0)\chi(\mathcal{Q},0),
\end{align}
where $\psi_{\mathbf{k}sr}$ is the wave function for the mode in question, $\phi$ is the apparatus wave function and $\chi$ is a function of the rest of the field variables $\mathcal{Q}=\{q_{\mathbf{k'}s'r'}|(\mathbf{k'},s',r')\neq (\mathbf{k},s,r)\}$. The function $\chi$ is left unspecified as there is no need to make assumptions concerning the state of the rest of the field. Now, since $H_{\text{I}}$ and $H_\text{app}$ commute with all the terms in $H_0$ that include $\mathcal{Q}$, under time evolution the $\chi$ function remains unentangled with the rest of the system while the apparatus and the mode being measured become entangled. We may then write
\begin{align}
\psi(t)&=\Psi(q_{\mathbf{k}sr},y,t)\chi(\mathcal{Q},t),
\end{align}
where
\begin{equation}
\begin{aligned}
\Psi(q_{\mathbf{k}sr},y,t)&=\exp\left[-i(H_{\mathbf{k}sr}+H_\text{app}+gH_{\mathbf{k}sr}p_y)t\right]\\
&\times\psi_{\mathbf{k}sr}(q_{\mathbf{k}sr},0)\phi(y,0),\\
\chi(\mathcal{Q},t)&=\left[\prod_{(\mathbf{k'}s'r')\neq(\mathbf{k}sr)}\exp\left(-iH_{\mathbf{k'}s'r'}t\right)\right]\chi(\mathcal{Q},0).
\end{aligned}
\end{equation}
Since the system and apparatus remain unentangled with $\chi$, the dynamics remain completely separate. We may concern ourselves only with $\Psi(q_{\mathbf{k}sr},y,t)$ as an effective wave function. The velocity field in the ($q_{\mathbf{k}sr},y)$ plane depends on the position in that plane but is independent of the position in $\mathcal{Q}$. We may then omit the $\mathbf{k}sr$ labels in $q_{\mathbf{k}sr}$ and $H_{\mathbf{k}sr}$, and the $\mathbf{k}$ label in $\omega_\mathbf{k}$.

Let the measurement process begin at $t=0$ when the coupling is switched on. 
As usual in the description of an ideal von Neumann measurement (see for example ref.\ \cite{Holl93}), we take $g$ to be so large that the free parts of the Hamiltonian may be neglected during the measurement.
The system will then evolve according to the Schr\"{o}dinger equation
\begin{align}
\left(\partial_t+gH\partial_y\right)\Psi=0.\label{sch1}
\end{align}
Expanding $\Psi$ in a basis $\psi_n(q)$ of energy states for the field mode, we have the solution
\begin{align}
\Psi(q,y,t)=\sum_nc_n\phi(y-gE_nt)\psi_n(q).\label{key}
\end{align}
where we choose $\phi$ and $\psi_n$ to be real and $\sum_n|c_n|^2=1$.
Equation \eqref{key} describes the measurement process. If the initial system state is an energy eigenstate ($c_n=\delta_{mn}$ for some $m$), the pointer packet is translated with a speed proportional to the energy $E_m$ of the eigenstate. By observing the displacement of the pointer after a time $t$, an experimenter may infer the energy of the field mode. If instead the field mode is initially in a superposition of energy states, the different components of the superposition will be translated at different speeds until such a time when they no longer overlap and thus do not interfere. An example of this evolution into non-overlapping, non-interfering packets is shown in figure \ref{fig_1}. At this time an experimenter could unambiguously read off an energy eigenvalue. The weightings $|c_n|^2$ in the superposition could be determined by readings over an ensemble.

\subsection{Pointer packet and rescaling}\label{non_dim}
For simplicity we choose the initial pointer wave function $\phi$ in Eq.\ \eqref{key} to be a Gaussian centred on $y=0$, 
\begin{align}
\phi(y)&=\sigma^{-\frac12}(2\pi)^{-\frac14}e^{-y^2/4\sigma^2},
\end{align}
where $\sigma^2$ is the variance of $|\phi(y)|^2$.

It is convenient to introduce the rescaled parameters
\begin{align}\label{coords}
Q=\sqrt{\omega}q,\quad Y=\frac{y}{\sigma}, \quad T=\frac{g \omega t}{2\sigma}.
\end{align}
The evolution of the wave function is then determined by the Schr\"{o}dinger equation,
\begin{align}
\partial_T\Psi=\left(\partial_Q^2-Q^2\right)\partial_Y\Psi.\label{schro}
\end{align}
The general solution is
\begin{align}\label{wvfn}
&\Psi(Q,Y,T)=\sum_n\frac{c_n}{\sqrt{\pi 2^{n+1/2}n!}}\nonumber\\
&\times\exp\left[-\frac14\left(Y-(2n+1)T\right)^2\right]e^{-Q^2/2}H_n(Q),
\end{align}
where $H_n(Q)$ are Hermite polynomials.
(Equation \eqref{wvfn} differs from Eq.\ \eqref{key} by a factor $\sigma^{1/2}\omega^{-1/4}$, to normalise the wave function in the rescaled configuration space.)

\subsection{Continuity equation and guidance equations}
\label{sec:cont_eq}
From the Schr\"{o}dinger equation \eqref{schro}, it is simple to arrive at
\begin{align}
\partial_T|\Psi|^2&=\Psi^*\partial^2_Q\partial_Y\Psi+\Psi\partial^2_Q\partial_Y\Psi^*
-\partial_Y\left(Q^2|\Psi|^2\right).
\end{align}
From here we use the identity
\begin{align}
&\Psi^*\partial_Q^2\partial_Y\Psi+\Psi\partial_Q^2\partial_Y\Psi^*\nonumber\\
\equiv&\frac13\partial_Q\left(2\Psi\partial_Q\partial_Y\Psi^*-\partial_Y\Psi\partial_Q\Psi^*\right.\nonumber\\
&\left.-\partial_Q\Psi\partial_Y\Psi^*+2\Psi^*\partial_Q\partial_Y\Psi\right)\nonumber\\
&+\frac13\partial_Y\left(\Psi\partial^2_Q\Psi^*-\partial_Q\Psi\partial_Q\Psi^*+\Psi^*\partial_Q^2\Psi\right).
\end{align}
This, again, is a special case of the general identity 2 of \cite{SV08}. The continuity equation is found to be
\begin{align}
\partial_T|\Psi|^2+\partial_Q\text{Re}\left(\frac23\partial_Y\Psi\partial_Q\Psi^*-\frac43\Psi^*\partial_Q\partial_Y\Psi \right)&\nonumber\\
+\partial_Y\text{Re}\left(\frac13|\partial_Q\Psi|^2-\frac23\Psi^*\partial^2_Q\Psi+Q^2\right)&=0,
\end{align}
from which we may deduce the de Broglie guidance equations
\begin{align}
\partial_TQ&=\text{Re}\left(-\frac43\frac{\partial_Q\partial_Y\Psi}{\Psi}+\frac23\frac{\partial_Y\Psi\partial_Q\Psi^*}{|\Psi|^2}  \right),\label{guidance_a}\\
\partial_TY&=\text{Re}\left(-\frac{2}{3}\frac{\partial^2_Q\Psi}{\Psi}+\frac{1}{3}\frac{\partial_Q\Psi\partial_Q\Psi^*}{|\Psi|^2}\right)+Q^2.\label{guidance_b}
\end{align}
The factor $Q^2$ in Eq.\ \eqref{guidance_b} will turn out to have the most predictable effect on the evolution of quantum nonequilibrium in section \ref{results}. Any individual system in which $|Q|$ is abnormally large will, at least to begin with, have an abnormally large pointer velocity. The $Q^2$ term originates from the potential term in $H_{\mathbf{k}sr}=\frac{1}{2}p_{\mathbf{k}sr}^2+\frac12 \omega_{\mathbf{k}}^2q_{\mathbf{k}sr}^2$. 

\begin{figure}
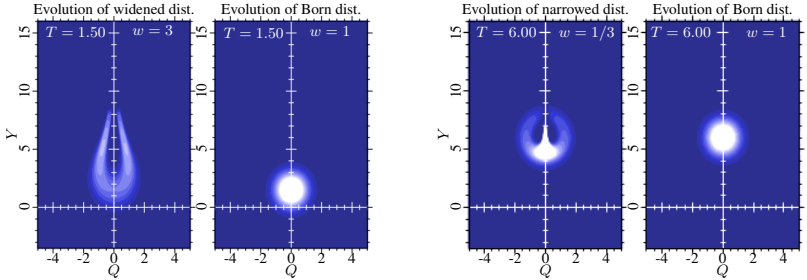

\footnotesize{\resizebox{\linewidth}{!}{
\import{Chapter_4_figures/}{gs_widened.eps_tex}
\import{Chapter_4_figures/}{gs_narrowed.eps_tex}
}}
\caption[Vacuum nonequilibria under an energy measurement process]{\label{ground_sims}The evolution of vacuum nonequilibria under an energy measurement process (as simulated by the code discussed in the appendix). On the left is a snapshot of the evolution of a widened initial $\rho$ with $w=3$, taken at $T=1.50$. The tails of $\rho$ evolve quickly to large $Y$ and small $Q$. These tails are evident in the marginal distribution for $Y$ shown in figure \ref{marginals}. On the right is the same simulation except narrowed by a factor $w=1/3$, and taken at $T=6.00$. In this case $\rho$ remains in what might loosely be deemed the support of $|\Psi|^2$, though displaying internal structure. The narrowed $\rho$ initially lags behind the Born distribution, before getting swept outwards and upwards, creating a double-bump in the pointer marginal distribution. Note that the equilibrium pointer distribution undergoes an upward displacement to indicate the zero-point energy of the vacuum mode.}
\end{figure}

In contrast with section \ref{particle_decay}, here we have chosen to retain the zero-point energy of the $\mathbf{k}sr$ mode. Since the pointer is coupled to the total energy of the $\mathbf{k}sr$ mode, this does affect the dynamics though only in a minor respect. Had we normal ordered Eq.\ \eqref{hamiltonian}, the pointer velocity Eq.\ \eqref{guidance_b} would have an extra additive term of $-1$. In the state-specific expressions of section \ref{state_specific}, normal ordering is equivalent to switching to a coordinate system moving in the $+Y$ direction at a (rescaled) velocity of $1$, the velocity of the vacuum component in Eq.\ \eqref{wvfn}. Equivalently, one may use the coordinate transformation $Y\rightarrow Y'=Y-T$, which we shall indeed do in section \ref{long_time}.

\subsection{Expressions for three examples}\label{state_specific}
\subsubsection{Vacuum}
If the field mode being measured is in its vacuum state ($c_n=\delta_{n0}$), the evolution of the total wave function \eqref{wvfn} and the associated velocity fields \eqref{guidance_a} and \eqref{guidance_b} are given by 
\begin{align}
\Psi&=2^{-\frac14}\pi^{-\frac12}\exp\left(-\frac12 Q^2 -\frac14(Y-T)^2 \right),\\
\partial_TQ&=\frac13 Q(T-Y),\label{gs_q_vel}\\
\partial_TY&=\frac{2}{3}(1+Q^2).\label{gs_y_vel}
\end{align}
\subsubsection{One particle state}
If instead the field mode being measured contains one particle or excitation ($c_n=\delta_{n1}$), the relevant expressions are
\begin{align}
\Psi&=2^{\frac14}\pi^{-\frac12}Q\exp\left(-\frac12 Q^2-\frac14 (Y-3T)^2\right),\label{sp_measurement}\\
\partial_TQ&=\frac13 (Y-3T)\left(\frac1Q -Q\right)\label{e1},\\
\partial_TY&=\frac13\frac{1}{Q^2}+\frac43 +\frac23 Q^2\label{e2}.
\end{align}
\subsubsection{Initial superposition of vacuum and one particle state}
For a superposition, the relative phases in the $c_n$'s will contribute to the dynamics. For a superposition of initial vacuum and one particle states, we take $c_0=e^{i\theta}/\sqrt{2}$ and $c_1=1/\sqrt{2}$. Our expressions then become
\begin{align}
\Psi&=\left(\frac{e^{i\theta}}{\sqrt{2}}+Qe^{T(Y-2T)}\right)\label{super1}
2^{-\frac14}\pi^{-\frac12}e^{-\frac14\left(Y-T\right)^2}
\exp\left(-\frac12 Q^2\right),\\
\partial_TQ&=
\resizebox{0.89\textwidth}{!}{$
\text{Re}\left(\frac{-\frac53T+\frac23Q^2T+\frac13Y}{\frac{e^{i\theta}}{\sqrt{2}}e^{T(2T-Y)}+Q}\right)+\frac{\frac23 QT}{\left|\frac{e^{i\theta}}{\sqrt{2}}e^{T(2T-Y)}+Q\right|^2}-\frac13(Y-T)Q\label{super2},$}\\
\partial_TY&=\text{Re}\left[\frac{\frac23Q}{\frac{e^{i\theta}}{\sqrt{2}}e^{T(2T-Y)}+Q}\right]+\frac{\frac13}{\left|\frac{e^{i\theta}}{\sqrt{2}}e^{T(2T-Y)}+Q\right|^2}+\frac23(Q^2+1)\label{super3}.
\end{align}

\subsection{Results for nonequilibrium energy measurements}\label{results}
We will now consider outcomes of quantum energy measurements for nonequilibrium field modes. Like many features of quantum mechanics, the usual statistical energy conservation law emerges in equilibrium. But for nonequilibrium states there is no generally useful notion of energy conservation\footnote{The fundamental dynamical equation \eqref{the_first} is first-order in time and has no naturally conserved energy. When rewritten in second order form there appears a time-dependent `quantum potential' that acts as an effective external energy source \cite{Holl93}.}.

We may consider a parameterisation of nonequilibrium that simply varies the width of the Born distribution (as discussed in Section IIA for primordial perturbations). Our initial $\rho$ is written
\begin{align}
\rho(Q,Y,0)=\frac{1}{w}\left|\Psi(Q/w,Y,0)\right|^2,
\end{align}
where $w$ is a widening parameter equal to the initial standard deviation of $\rho$ relative to $|\Psi|^2$,
\begin{align}
w=\left.\frac{\sigma_\rho}{\sigma_{|\Psi|^2}}\right|_{t=0}.
\end{align}
(Comparing with eqn.\ \eqref{ksi}, we would have $w = \sqrt{\xi}$ for primordial perturbations.)

\subsubsection{Short-time measurement of vacuum modes}\label{short_time}
\begin{figure}
\centering
\import{Chapter_4_figures/}{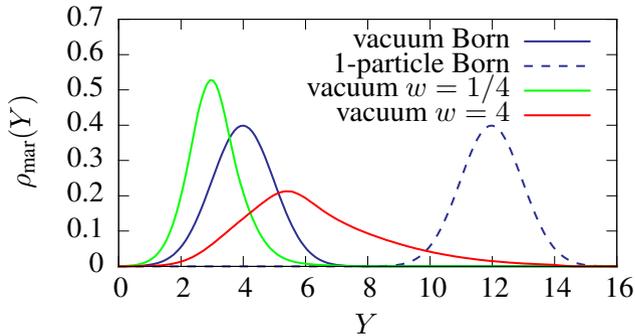}
\caption[Pointer distributions for nonequilibrium vacuum measurement]{\label{marginals}Marginal pointer distributions $\rho_{\text{mar}}(Y)$ under the energy measurement of vacuum mode nonequilibria at $T=4$. (For comparison we also show the Born pointer marginals for the vacuum and 1-particle cases.) For the widened vacuum mode ($w=4$), there is a significant probability of `detecting a particle' (that is, an excited state) in the vacuum mode. For this case there also exists a significant probability of finding the pointer around Y = 8 (which for all practical purposes would be impossible without nonequilibrium for any initial superposition). For the narrowed nonequilibrium ($w=1/4$), the pointer distribution lags behind the Born pointer distribution initially. As time progresses, $\rho$ will get swept outwards and upwards (cf.\ the right-hand side of figure \ref{ground_sims}), creating a double-bump in the pointer distribution. }
\vspace{-00mm}
\end{figure}

In figure \ref{ground_sims} we show the short-time behaviour of widened and narrowed nonequilibrium distributions $\rho$ under the energy measurement of a vacuum mode. As the $Q^2$ term in the $Y$ velocity \eqref{gs_y_vel} dominates for any $|Q|_{t=0}>1$, widened distributions show more initial movement of the pointer. The tails of widened distributions `flick' forwards and inwards, and then seem to linger. It is at this time that the pointer position could indicate the detection of an excited state (or ‘particle’) for the vacuum, or even occupy a position disallowed by standard quantum mechanics for any initial superposition of energy states (see figure \ref{marginals}). The closer the tails get to the $Q$-axis, the slower the pointer travels. Once inside $|Q|< 1/\sqrt{3}$, the tails move slower than the Born distribution (which eventually catches up). So although the widened distribution may produce the most dramatic deviations from standard quantum mechanics, the deviations are short-lived and any measuring device would need to make its measurement before the tails recede.

In contrast, the narrowed distribution shows less dramatic behaviour. It recedes slowly to the back of the Born distribution, and then some is swept out, up and around the Born distribution (see the right-hand side of figure \ref{ground_sims}). The pointer stays roughly where one would expect it to from standard quantum mechanics. 

If one were to perform an ensemble of similar preparations and measurements, recording the position of the pointer in each, one would find the marginal distribution $\rho_{\text{mar}}(Y)$. The marginal distributions for $w=1/4,1$ and $4$ are shown at $T=4$ in figure \ref{marginals}. Any deviation that this distribution shows from the marginal Born distribution would of course be indicative of quantum nonequilibrium.

\subsubsection{Long-time/large $g$ measurement of vacuum modes}\label{long_time}
\begin{figure}
\centering
\import{Chapter_4_figures/}{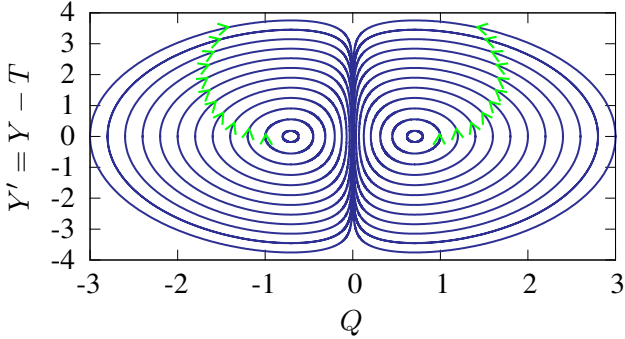}
\caption[Trajectories for vacuum energy measurement]{\label{groundstate_trajectories}Above, a selection of the trajectories for the measurement of a vacuum mode (with normal ordering). The velocity field is time independent, resulting in periodic orbits around $(\pm\sqrt{1/2},0)$. Numerical simulations show that the pointer marginals converge to stationary nonequilibrium distributions characteristic of the initial nonequilibrium state (see figure \ref{long_time_marginals}).}
\end{figure}
Let us discuss a second measurement regime, which may be thought of as valid for large $T$ and/or (since $T=g  \omega t/(2\sigma)$) large $g$. 

\begin{figure}
\centering
\resizebox{0.5\linewidth}{!}{
\import{Chapter_4_figures/}{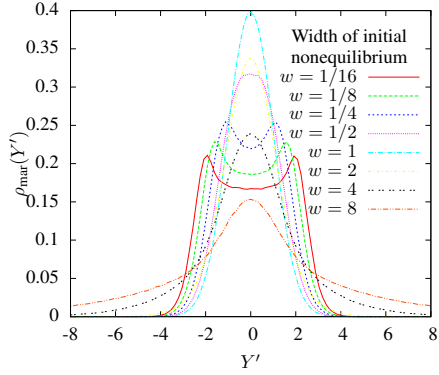}
}
\caption[Long-time marginals (incomplete relaxation)]{\label{long_time_marginals}Characteristic stationary pointer marginals $\rho_{\text{mar}}(Y')$ for energy measurement of nonequilibrium vacuum modes in the large $T$ or large $g$ approximation. In this regime, initial nonequilibrium in the field mode will produce a corresponding stationary nonequilibrium for the pointer. Field modes with larger spread produce pointer marginals with larger spread. Field modes with smaller spread form pointer marginals with central depressions.}
\end{figure}
To aid analysis, we shall continue as if we had normal ordered the Hamiltonian \eqref{hamiltonian}. This, as mentioned in section \ref{sec:cont_eq}, is equivalent to switching to the `reference frame' of the Born distribution with $Y\rightarrow Y'= Y-T$. Under normal ordering the wave function and guidance equations become
\begin{align}
\Psi&=2^{-\frac14}\pi^{-\frac12}\exp\left(-\frac12 Q^2 -\frac14Y'^2 \right),\\
\partial_TQ&=-\frac13 QY', \label{gs_q_vel_y'}\\
\partial_TY'&=\frac{2}{3}Q^2-\frac{1}{3}.\label{gs_y_vel_y'}
\end{align}
The guidance equations are now time-independent and conserve a stationary Born distribution.
The trajectories are periodic.
A selection of the trajectories produced by equations \eqref{gs_q_vel_y'} and \eqref{gs_y_vel_y'} are shown in figure \ref{groundstate_trajectories}.
The trajectories do not pass the line $Q=0$, and so we cannot find relaxation to the Born distribution for any initial $\rho$ asymmetric in $Q$.

Our numerical simulations indicate that any nonequilibrium in the vacuum mode will, in the large $T$ or large $g$ limit, produce a corresponding stationary nonequilibrium in the pointer distribution. Furthermore, from this pointer distribution, numerical simulations could deduce the initial nonequilibrium in the vacuum mode. Our simulations show that this limit will be reached at $T\sim 120$ for $1/8<w<8$. 

Eight such stationary pointer marginals are displayed in figure \ref{long_time_marginals}. These are found under the measurement of nonequilibrium vacuum modes described by width parameters ranging from $w=1/16$ to $8$. Nonequilibrium modes that are wider than equilibrium make the spread in the pointer position correspondingly wider. In contrast, for the measurement of nonequilibrium vacuum modes that are narrower than equilibrium, the pointer marginal forms a central depression whilst staying in the same region. Measurements of the pointer over an ensemble would be enough to deduce the character of the initial nonequilibrium for each case.

\subsubsection{Measurement of a single particle state}
\begin{figure}
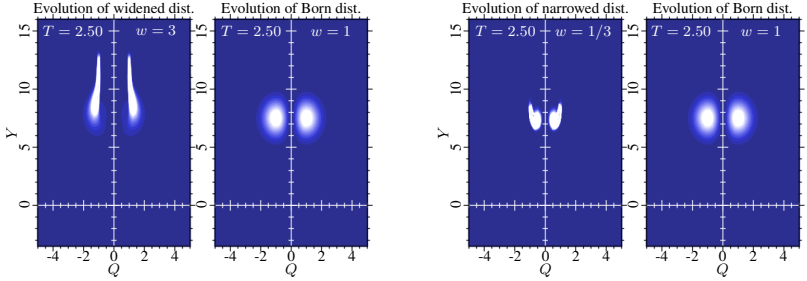

\footnotesize{\resizebox{\linewidth}{!}{
\import{Chapter_4_figures/}{excited_widened.eps_tex}
\import{Chapter_4_figures/}{excited_narrowed.eps_tex}
}}
\caption[One-particle nonequilibria under an energy measurement process]{\label{excited_sims}The evolution of nonequilibria under energy measurement of single-particle states. On the left, a widened ($w=3$) nonequilibrium distribution; on the right, a narrowed ($w=1/3$) nonequilibrium distribution. The Born distribution, shown for comparison in each case, moves at a rescaled speed of $dY/dT=3$ (although individual de Broglie trajectories have variable speeds). Pointer marginal distributions for this process are shown in figure \ref{excited_marginal}.}
\end{figure}
\begin{figure}
\centering
\hspace{0mm}
\import{Chapter_4_figures/}{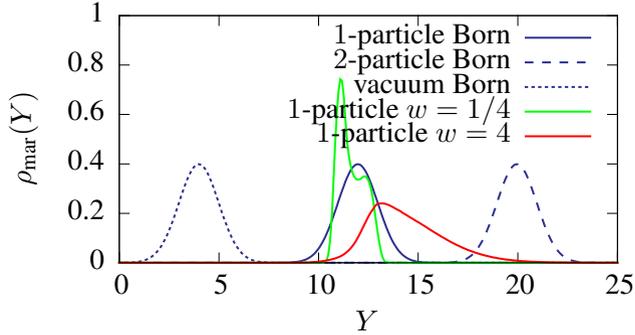}

\caption[Pointer distributions for nonequilibrium in one-particle measurement]{\label{excited_marginal}Marginal pointer distributions $\rho_{\text{mar}}(Y)$ under energy measurement of one particle state nonequilibria at time $T=4$. (For comparison we also show the Born pointer marginals for the vacuum, 1-particle and 2-particle cases.) The widened nonequilibrium ($w=4$) shows a significant probability of detecting two excitations (or `particles') instead of one, and again there is a significant probability of finding the pointer around $Y= 16$ (a position disallowed by standard quantum mechanics for any initial superposition). As in the case of the vacuum mode measurement, the narrowed nonequilibrium ($w=1/4$) will be distinguished only by its internal structure. The tendency to form a double-bump in the pointer distribution is also seen in this case.}
\end{figure}
\begin{figure}
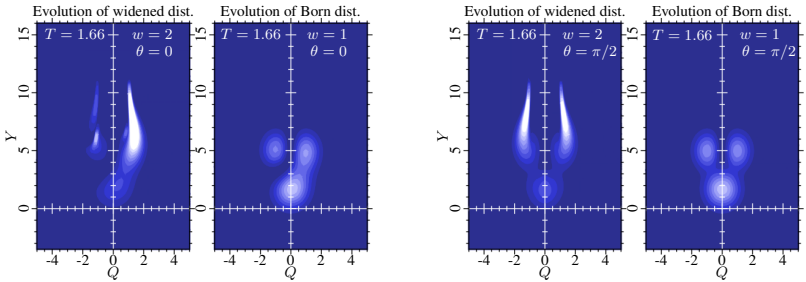

\footnotesize{\resizebox{\linewidth}{!}{
\import{Chapter_4_figures/}{super_th0.eps_tex}
\import{Chapter_4_figures/}{super_thpi2.eps_tex}
}}
\caption[Measurement of a superposition]{\label{super_sims}Evolution of joint distributions $\rho(Q,Y,T)$ under energy measurements of a nonequilibrium field mode in a superposition of a vacuum and a one-particle state with $c_0=e^{i\theta}/\sqrt{2}$ and $c_1=1/\sqrt{2}$ (Eq.\ \eqref{super1}). On the left we have taken $\theta=0$. On the right we have taken $\theta=\pi/2$. Both cases have widened distributions with $w=2$, and snapshots are taken at $T=1.66$. (For comparison, Born distributions are also shown in both cases.)}
\end{figure}
\begin{figure}
\centering
\hspace{0mm}
\import{Chapter_4_figures/}{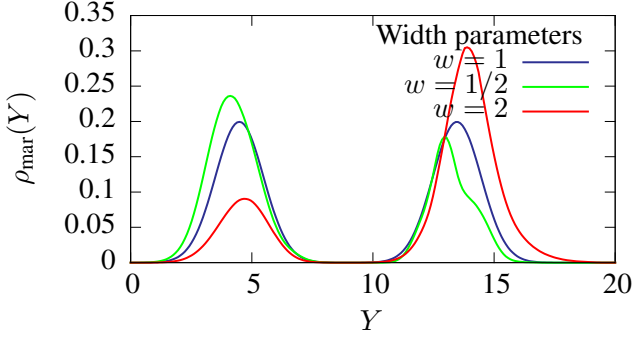}
\caption[Pointer distributions for measurement of superposition (Nonequilibrium alteration of discrete spectrum)]{\label{super_marginal}Marginal pointer distributions $\rho_{\text{mar}}(Y)$ for $c_0=c_1=1/\sqrt{2}$ and $w=1/2,1,2$, taken at $T=4.5$. Nonequilibrium is seen to cause anomalous spectra as observed by an experimenter. Similar results are obtained for other relative phases.}
\end{figure}
Under the energy measurement process, the effective wave function becomes Eq.\ \eqref{sp_measurement} and the trajectories satisfy the guidance equations \eqref{e1} and \eqref{e2}. The Born distribution evolves in the $Y$ direction at a rescaled velocity $dY/dT=3$. Since now the $Y$ velocity (Eq.\ \eqref{e2}) has terms proportional to $Q^2$ and $1/Q^2$, we might expect some increased pointer movement both for the widened and narrowed nonequilibrium cases.
In fact, our simulations show that a narrowed distribution yields relatively less pointer movement than the widened distribution (as we had for the case of the vacuum). Plots of the evolution of $\rho(Q,Y,T)$ are shown in figure \ref{excited_sims}, and marginal pointer distributions are shown in figure \ref{excited_marginal}. As in the case of the vacuum mode measurement, there is a significant probability of detecting an extra excitation or of finding the pointer in a position disallowed by standard quantum mechanics for any superposition being measured.

\subsubsection{Measurement of a superposition}
\begin{figure}
\resizebox{\linewidth}{!}{
\import{Chapter_4_figures/}{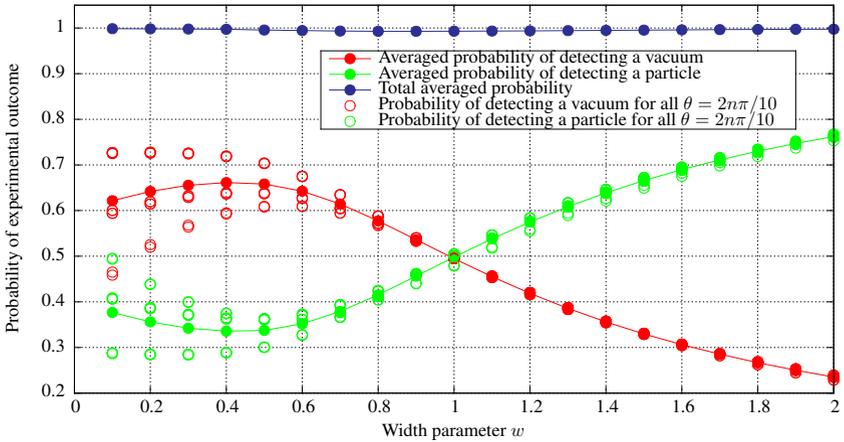}
}
\caption[Probability of nonequilibrium altering outcome of measurement]{\label{full_results}Ensemble probabilities of energy measurements for an equal superposition of particle and vacuum states as affected by quantum nonequilibrium of varying width $w$ (with results averaged over the relative phase $\theta$ in the superposition). As $|c_0|=|c_1|=1/\sqrt{2}$, there should be a 50\% probability of detecting a particle. However, widened nonequilibria give probabilities larger than 50\% for particle detection, while narrowed non-equilibria give probabilities less than 50\% for particle detection when averaged over $\theta$. (Hollow markers represent results for individual relative phases $\theta$, whilst solid markers represent averages over $\theta=2n\pi/10$, $n=1,2,\dots,10$. Dependence on the relative phase is seen to affect the outcomes only for $w\lesssim 1$.)}
\end{figure}

Quantum nonequilibrium would in general cause anomalous results for the spectra of energy measurements. To illustrate this, we take the simple example of an equal superposition of vacuum and one-particle states. Quantum mechanically, an experimenter would observe a $50\%$ probability of detecting a particle. We take $c_0=e^{i\theta}/\sqrt{2}$ and $c_1=1/\sqrt{2}$, with the wave function and velocity fields specified in Eqs.\ (\ref{super1}-\ref{super3}). The dynamics of the measurement depends strongly on the initial relative phase $\theta$ of the superposition. This is seen in figure \ref{super_sims}, where we show the time evolution of joint distributions $\rho(Q,Y,T)$. Examples of the marginal pointer distributions produced in the energy measurement process are shown in figure \ref{super_marginal}. After about $T=3.5$, all marginal pointer distributions display two distinct areas of support, meaning that an experimenter would unambiguously obtain either $\frac12\omega$ or $\frac32\omega$ in each individual energy measurement, regardless of whether nonequilibrium is present or not. However, a widened nonequilibrium distribution would cause a larger probability of obtaining the outcome $\frac32\omega_\mathbf{k}$ (`detecting a particle'), while a narrowed nonequilibrium distribution would cause the opposite effect. Although the trajectories are strongly dependent on the initial phase, the marginal pointer distributions are only weakly dependent on this. 

In practice, one might not know the initial relative phase of the superposition. To make contact with what an experimenter might actually measure (albeit in the context of our simplified field-theoretical model), we have taken an average over $10$ phases: $\theta=2\pi n/10,\,n=0,1,\dots, 9$. We run each simulation up to time $T = 4.5$ and calculate the proportion of the distribution $\rho$ that lies beyond $Y=9.0$. This is the probability of observing an excitation, whilst the proportion of $\rho$ before $Y=9.0$ is the probability of observing the vacuum. (These numbers are clear from figure \ref{super_marginal}.) 
Figure \ref{full_results} illustrates the results of this averaging process for 20 separate width parameters $w$. We find a remarkable correlation. For example, for nonequilibrium close to the Born distribution, widening the distribution will proportionally increase the ensemble probability of `detecting a particle'. Clearly, nonequilibrium would generate an incorrect energy spectrum.

\section{Conclusion}\label{Chapter4_conclusion}
We have considered the possibility that our Universe contains quantum nonequilibrium systems -- in effect a new form or phase of matter (including the vacuum) that violates the Born probability rule and which is theoretically possible in the de Broglie-Bohm formulation of quantum theory. While the practical likelihood of detecting such systems remains difficult to evaluate, we have argued that at least in principle they could exist today as relics from the very early Universe. We have provided simple field-theoretical models illustrating the effects of quantum nonequilibrium in a particle-physics context. In particular, we have seen that quantum nonequilibrium would generate anomalous spectra for standard measurements of energy, as well as generating corrections to particle-physics processes generally.

The possibility of detecting relic nonequilibrium systems today depends on uncertain features of high-energy physics and cosmology. Dark matter, which is thought to make up approximately 25\% of the mass-energy of the Universe, may consist of relic particles (such as gravitinos) that were created in the very early Universe and which have propagated essentially freely ever since. (For reviews see, for example, refs. \cite{BHS05,EO13}.) As we have seen, such particles are plausible candidates for carriers of primordial quantum nonequilibrium and we expect that particle-physics processes involving them -- for example, decay or annihilation -- would display energetic anomalies.

On the experimental front, an especially promising development would be the detection of photons from dark matter decay or annihilation. These are expected to form a sharp spectral line, probably in the gamma-ray region. Recent interest has focussed on reports of a sharp line from the Galactic centre at $\sim130\ \mathrm{GeV}$ in data from the \textit{Fermi} Large Area Telescope (LAT) \cite{B12,W12}. While the line might be a dark matter signal, its significance (and even its existence)\ is controversial. The line could be caused by a number of scenarios involving dark matter annihilations \cite{BH12}. It might also be due to decaying dark matter \cite{ITW13}, for example the decay of relic gravitinos \cite{L13,Aetal14}. (In a supersymmetric extension of the Standard Model with violation of R-parity, the gravitino is unstable and can decay into a photon and a neutrino \cite{TY00}.) On the other hand, a recent analysis of the data by the Fermi-LAT team casts doubt on the interpretation of the line as a real dark matter signal \cite{FLAT13}.

Should dark matter consist (if only partially) of relic nonequilibrium systems, we may expect to find energetic anomalies for decay and annihilation processes. However, to distinguish these from more conventional effects would require more detailed modelling than we have provided here. There is also the question of whether the anomalies are likely to be large enough to observe in practice. These are matters for future work.

In principle, it would be of interest to test dark matter decay photons for possible deviations from the Born rule (perhaps via their polarisation probabilities \cite{AV04a}). We have seen that simple perturbative couplings will transfer nonequilibrium from one field to another, leading us to expect that in general a decaying nonequilibrium particle will transfer nonequilibrium to its decay products. Another open question, however, is the degree to which the nonequilibrium might be degraded during this process. In a realistic model of a particle decay we might expect some degree of relaxation. It would be useful to study this in pilot-wave models of specific decay processes.

As a general point of principle, one might also be concerned that in the
scenario discussed in this paper the probability distribution for delocalised
field modes in the early Universe -- where the probability distribution is
presumably defined for a theoretical ensemble -- appears to have measurable
implications for decay particles in our one Universe. How can this be? A
similar point arises in the standard account of how the power spectrum for
primordial perturbations has measurable implications for our one CMB sky. In
inflationary theory, the probability distribution for a single mode
$\phi_{\mathbf{k}}$ of the inflaton field does have measurable implications in
our single Universe. As we discussed in Section \ref{IIA}, the variance
$\left\langle |\phi_{\mathbf{k}}|^{2}\right\rangle $ of the primordial
inflaton distribution appears in the power spectrum $\mathcal{P}_{\mathcal{R}%
}(k)\propto\left\langle |\phi_{\mathbf{k}}|^{2}\right\rangle $ for primordial
curvature perturbations $\mathcal{R}_{\mathbf{k}}\propto\phi_{\mathbf{k}}$ at
wave number $k$. The power spectrum $\mathcal{P}_{\mathcal{R}}(k)$ in turn
appears in the angular power spectrum $C_{l}$ (equation \eqref{Cl2}), which may be
accurately measured for our single CMB sky provided $l$ is not too small. In
the standard analysis it is assumed that the underlying `theoretical ensemble'
of Universes is statistically isotropic, which implies that the ensemble
variance $C_{l}\equiv\left\langle \left\vert a_{lm}\right\vert ^{2}%
\right\rangle $ is independent of $m$ -- where $a_{lm}$ are the harmonic
coefficients for the observed temperature anisotropy. We then in effect have
$2l+1$ measured quantities $a_{lm}$ with the same theoretical variance.
Provided $l$ is sufficiently large, one can perform meaningful statistical
tests for our single CMB sky and compare with theoretical predictions for
$C_{l}$. Statistical homogeneity also plays a role in relating the $C_{l}$'s
for a single sky to the power spectrum $\mathcal{P}_{\mathcal{R}}(k)$ for the
theoretical ensemble \cite{AV10,Muk05}. To understand how the
theoretical ensemble probability has measurable implications in a single
Universe, it is common to speak of the CMB sky as divided up into patches --
thereby providing an effective ensemble in one sky. This works if $l$ is
sufficiently large, so that the patches are sufficiently small in angular
scale and therefore sufficiently numerous. Similar reasoning applies to
particles (or field excitations) generated by inflaton decay. In this context
it is important to note that realistic particle states, as observed for
example in the laboratory, are represented by field modes defined with respect
to finite spatial volumes $V$. Almost all of the particles in our Universe
were created by inflaton decay, and in practice their states are in effect
defined with respect to finite spatial regions. By measuring particle
excitations in different spatial regions, it is possible to gather statistics
for outcomes of (for example)\ energy measurements. (One might also consider a
time ensemble in one region, but a space ensemble seems more relevant in the
case of relic decay particles.) The resulting statistical distribution of
outcomes for the decay particles will depend on the original probability
distribution for the decaying inflaton field -- just as the statistics for
patches of the CMB sky depend on the probability distribution for the inflaton
during the inflationary era. A full account would require an analysis of
inflaton decay more precise than is currently available. In particular, one
would like to understand how this process yields particle states that are
confined to finite spatial regions. It is generally understood that the decay
products form as excitations of sub-Hubble modes, with wave functions confined
to sub-Hubble distances. Depending on the details, this can correspond to
relatively small spatial distances today.
Of course,
particle wave packets will also spread out since their creation, but still we
may expect them to occupy finite spatial regions. Further elaboration of this
point lies outside the scope of this paper.

Even if there exist localised sources or spatial regions containing particles in a state of quantum nonequilibrium, it might be difficult in practice to locate those regions. In particular, if a given detector registers particles belonging to different regions without distinguishing between them, then it is possible that even if nonequilibrium is present in the individual regions it will not be visible in the data because of averaging effects. How one might guard against this in practice remains to be studied.

Finally, we have seen that the likelihood of nonequilibrium surviving until today for relic particles depends on the fact that a nonequilibrium residue can exist in the long-time limit for systems containing a small number of superposed energy states \cite{ACV14}. While this may certainly occur in principle, its detailed implementation for realistic scenarios requires further study. On the other hand, no such question arises in our scenario for relic nonequilibrium vacuum modes, since the simplicity of vacuum wave functionals guarantees that further relaxation will not occur at late times. Long-wavelength vacuum modes may be carriers of primordial quantum nonequilibrium, untouched by the violent astrophysical history that (according to our hypotheses) long ago drove the matter we see to the quantum equilibrium state that we observe today. It remains to be seen if, in realistic scenarios, the effects on particle-physics processes taking place in a nonequilibrium vacuum could be large enough to be detectable.

\section{Numerical methodology}\label{code}
Most studies of relaxation in de Broglie-Bohm theory have used the back-tracking method of ref.\ \cite{VW05} (see for instance \cite{VW05,TRV12,SC12,CV13}).
This method uses the fact that the ratio $f=\rho(\mathbf{x},t)/|\psi(\mathbf{x},t)|^2$ is conserved along trajectories. A uniform grid of final positions is evolved backwards from the final time $t_f$ to the initial time $t_i$. The final distribution is constructed from the conserved function $f$. Although this method has been successful in producing accurate results, it has the disadvantage that backtracking to $t_i$ must be carried out for each desired final time $t_f$.

We have instead chosen to integrate the continuity equation \eqref{ant_cont} directly using a finite-volume method. The method used is a variant of the corner transport upwind method detailed in sections 20.5 and 20.6 of \cite{leve02}, modified so as to apply to the conservative form of the advection equation. This algorithm has the advantage that different `high resolution limiters' may be switched off and on with ease, so that one may compare results. (We use a monotonised central (MC) limiter throughout.) The main disadvantage of this approach is a consequence of the velocity field \eqref{guidance_a} and \eqref{guidance_b} diverging at nodes (where $|\psi|\rightarrow 0$). Since such an algorithm is required to satisfy a Courant-Friedrichs-Lewy condition to maintain stability, without velocity field smoothing the algorithm is inherently unstable.
We have found that a simple way to implement a smoothing is to impose a maximum absolute value on the velocities. The maximum is taken throughout to be $1/10^{\text{th}}$ of the ratio of grid spacing to time step.

We have found that the finite-volume method is less efficient than the backtracking method over larger time scales. 
In fact, the long-time simulations shown in figure \ref{long_time_marginals} were produced using a fifth-order Runge-Kutta algorithm to evolve trajectories directly.
 However for short time scales -- the prime focus of this work -- the finite-volume method is a useful tool.

\chapter*{PREFACE TO CHAPTER 5}\addcontentsline{toc}{chapter}{PREFACE TO CHAPTER 5}
Chapter \ref{5} continues the focus on relic quantum nonequilibrium in particles, but takes a different line of approach.
The chief concern of chapter \ref{4} was the possibility that relic quantum nonequilibrium could exist in species of cosmological particles. 
It is an intriguing possibility, but a reliable estimate of whether this may be the case is too dependent upon various uncertain factors in contemporary physics.
The exact details of cosmology in the very early Universe.
Beyond the standard model particle physics (in particular that part which accounts for dark matter). 
And of course the speed and generality of reaching equilibrium is still an open topic in de Broglie-Bohm, especially when considered upon cosmological scales. 
So it seems unlikely that a reliable estimate of the likelihood of nonequilibrium persisting in relic particles can be made until at least some of these factors are ironed out.
So why not take a different approach?
The purpose of chapter \ref{5} is to consider observable consequences of quantum nonequilibrium in the context of ongoing experiment.
If the dark matter does indeed possess nonequilibrium statistics, then there is a distinct possibility that this could turn up in experiment.
But at present, there is little idea of what kind of signatures could betray the presence of nonequilibrium.
And without a phenomenology of quantum nonequilibrium in experiment, it is likely that such signatures would be overlooked or misinterpreted. 

In order to consider nonequilibrium in experiment, a context must first be selected.
And ideally such a context would satisfy two criteria. 
Firstly, it must concern favorable conditions for nonequilibrium to persist.
Secondly, the expected output in the case of equilibrium should be relatively `clean' and understood, and without too much scope for systematic or instrumental error, so that the nonequilibrium signatures may be clearly distinguished from the expected output.
Fortunately, the indirect search for dark matter provides such a favorable context.
In many dark matter models, particles may annihilate or decay into mono-energetic photons \cite{BS88,Rudaz89,B97,B04,Creview16,DW94,Abazajian01}. 
If detected by a telescope, these hypothetical mono-energetic photons would produce a line-spectrum that is commonly prefixed with the words `smoking-gun', for the reason that it would stand out clearly against the background, making it difficult to explain through other means.
A significant part of the indirect search for dark matter concerns the detection of such smoking-gun lines with space telescopes \emph{capable of single photon detection}. To date, many line searches have taken place \cite{fermi3.7,fermi5.8,Pull07,Mack08,W12,SF12-1,Albert14,HESS16,Bulbul14,Boyarsky14,A15,Hitomi16}.
Moreover, as discussed in section \ref{5.4}, the field is not unaccustomed to contentious and controversial claims concerning anomalous lines. 

For the present purposes, this is a particularly appealing scenario for the detection of quantum nonequilibrium.
For if the dark matter particles were in a state of quantum nonequilibrium prior to their decay/annihilation, then by the arguments of chapter \ref{4} this nonequilibrium would be transferred onto the mono-energetic photons.
And if these photons were sufficiently free streaming prior to their arrival at the telescope, then the nonequilibrium could be preserved until the photons arrive at the telescope. 
Hence, these telescopes have the potential to observe a nonequilibrium photon signal directly. 

That the signal has a very predictable shape in the absence of nonequilibrium is key to the results for the following reason. 
Each time a telescope measures the energy $E_\gamma$ of a photon, it produces a value $E$ according to the telescope's energy dispersion function, $D(E|E_\gamma)$.
If the telescope were subject to a true spectrum $\rho(E_\gamma)$, it would expect to record the spectrum
\begin{align*}
\rho_\text{obs}(E)=\int D(E|E_\gamma)\rho_\text{true}(E_\gamma)\mathrm{d}E_\gamma.
\end{align*}
So there are two key ingredients to the spectrum observed by the telescope.
First there is the true spectrum $\rho(E_\gamma)$, which for the present purposes may be taken to be some sort of localized bump around the line energy $E_\text{line}$.
Conventional sources of spectral effects, for instance Doppler broadening, affect the energy of each photon and hence $\rho(E_\gamma)$.
Second there is the energy dispersion function, $D(E|E_\gamma)$, which may be thought to be a function of the interaction between the photon and the telescope. 
It is this that quantum nonequilibrium affects. 
A key point to the argument is that for the current generation of telescopes, the width of $D(E|E_\gamma)$ is significantly larger than the Doppler broadened $\rho_\text{true}(E_\gamma)$.
In other words, current telescopes are incapable of resolving the expected width of the hypothetical line.
Counter-intuitively perhaps, this produces the ideal conditions to observe nonequilibrium, for it means that the true spectrum may be approximated $\rho(E_\gamma)\approx \delta(E_\gamma-E_\text{line})$, and so the observed spectrum reduces to  
\begin{align*}
\rho_\text{obs}(E)\approx D(E|E_\text{line}).
\end{align*}
Hence, under such conditions, telescopes are expected to observe their own energy dispersion function, which is directly affected by quantum nonequilibrium. 
So in principle at least, the effects of quantum nonequilibrium have the potential to be very conspicuous in the observed spectrum.

The actual spectrum observed is contextual, in that it is not just a function of the initial nonequilibrium, but also of the manner in which the telescope interacts with it. 
And as different instruments may be expected to react to the nonequilibrium in different ways, it is reasonable to expect different telescopes to disagree on the observed spectrum.
At first glance, this fact may seem to preclude any hope of experimental reproducibility. 
But it could be this very fact that distinguishes signatures of quantum nonequilibrium from other more conventional effects. 
Other remarkable signatures include the potential to record lines that appear narrower than the telescope should be capable of resolving. 
The chapter is concluded in section \ref{5.4}, with a speculative discussion on how a discovery of nonequilibrium may play out, with reference to some recent anomalous lines in the X-ray and $\gamma$-ray range.

\chapter{ANOMALOUS SPECTRAL LINES AND RELIC QUANTUM NONEQUILIBRIUM}\label{5}
\vspace{-15mm}
\begin{center}
\textit{Nicolas G. Underwood\hyperlink{address_chap_4}{$^\dagger$} and Antony Valentini\hyperlink{address_chap_4}{$^\dagger$}}
\end{center}
\begin{center}
\textit{Submitted to Phys. Rev. D} \cite{UV16}
\end{center}
\begin{center}
\hypertarget{address_chap_4}{$^\dagger$}Kinard Laboratory, Clemson University, Clemson, 29634, SC, United States of America
\end{center}

\section*{Abstract}
We describe features that could be observed in the line spectra of relic cosmological particles should quantum nonequilibrium be preserved in their statistics. According to our arguments, these features would represent a significant departure from those of a conventional origin. Among other features, we find a possible spectral broadening that is proportional to the energy resolution of the recording telescope (and so could be much larger than any conventional broadening). Notably, for a range of possible initial conditions we find the possibility of spectral line `narrowing', whereby a telescope could observe a line that is narrower than it is conventionally able to resolve. We discuss implications for the indirect search for dark matter, with particular reference to some recent controversial spectral lines.

\section{Introduction}\label{5.1}
In the de Broglie-Bohm pilot-wave interpretation of quantum theory \cite{deB28,BV09,B52a,B52b,Holl93}, the Born probability rule has been shown to arise dynamically in much the same way as thermal equilibrium arises in classical physics \cite{AV91a,AV92,AV01,VW05,SC12,EC06,TRV12,ACV14}.
States that obey the Born rule are called `quantum equilibrium', states that violate the Born rule are called `quantum nonequilibrium', and the process by which the former emerge from the latter is called `quantum relaxation'.
Since quantum nonequilibrium is by definition observably distinct from conventional quantum theory, and so provides a means by which the de Broglie-Bohm theory could be discriminated from other interpretations of quantum theory, attempts have been made in recent years to predict how quantum nonequilibrium might reveal itself in contemporary experimentation \cite{AV07,AV08,AV09,AV10,CV13,UV15,AV01,CV15,VPV19}.
A discovery of quantum nonequilibrium would not only demonstrate the need to re-evaluate the canonical quantum formalism, but also generate new phenomena \cite{AV91b,AV04a,AV02,PV06}, potentially opening up a large field of investigation.

It has been conjectured that the Universe could have begun in a state of quantum nonequilibrium \cite{AV91a,AV92,AV91b,AV96,AV01,AV02A}, or that exotic gravitational effects may even generate nonequilibrium \cite{AV10,AV07,AV04b,AV14}.
However, since relaxation to quantum equilibrium occurs remarkably quickly for sufficiently interacting systems \cite{VW05,TRV12,CS10}, it is expected that in all but the most exceptional circumstances, any quantum nonequilibrium that was present in the early Universe will have subsequently decayed \cite{AV91a,AV91b,AV92,AV96}. It may be said that, if de Broglie-Bohm theory is correct, then the Universe has already undergone a sub-quantum analogue of the classical heat death \cite{AV92,AV01,AV91b}. Nevertheless, data from the cosmic microwave background (CMB) do provide a possible hint for the past existence of quantum nonequilibrium, in the form of a primordial power deficit at large scales (as initially reported \cite{PlanckXV} and subsequently confirmed \cite{PlanckXI} by the Planck team). It has been argued that such a deficit is a natural prediction of de Broglie-Bohm theory \cite{AV10}, though of course the observed deficit may be caused by something else (or be a mere statistical fluctuation as some argue). Recently, predictions regarding the shape of the power deficit have been made \cite{CV15} and are currently being compared with CMB data \cite{VPV19}. Primordial quantum nonequilibrium also offers a single mechanism that could account for both the CMB power deficit and the CMB statistical anisotropies \cite{AV15}. At present these are of course only hints.

Potentially it is also possible that quantum nonequilibrium may have been preserved for some species of relic particle \cite{AV01,AV07,AV08,UV15}.
If, for instance, a relic particle species decoupled sufficiently early in the primordial Universe, and if it were sufficiently minimally interacting thereafter, there is a possibility that it may still retain nonequilibrium statistics to this day. 
A previous article \cite{UV15} explored various different means by which quantum nonequilibrium may be preserved for relic particles. An assessment of the likelihood of such scenarios requires, however, knowledge of assorted unknown contributing factors--the properties of the specific relic particle species, the correct primordial cosmology, and the extent of the speculated initial nonequilibrium.
Even so, an actual detection of relic nonequilibrium could occur if it had clear and unmistakable experimental signatures.
The purpose of this paper is to describe such signatures in an astrophysical context.

As discussed in \cite{UV15}, it is natural to consider signatures of relic nonequilibrium in the context of the indirect search for dark matter. For the reasons described in section \ref{5.2} we focus on the search for a `smoking-gun' spectral line. 
We present a field-theoretical model of spectral measurement intended to act as an analogue of a telescopic photon detector (for example a calorimeter or CCD). Whilst the model is admittedly simple, and is certainly not a realistic representation of an actual instrument, it does capture some key characteristics and demonstrates the essential difference between conventional spectral effects and those caused by quantum nonequilibrium. 

Due to the exotic nature of quantum nonequilibrium, the spectral effects we describe are something of a departure from those of a classical origin. For instance, according to our arguments the amount of spectral line broadening depends on the energy resolution of the telescope's photon detector.
Lines may acquire double or triple bumps, or as we shall discuss, more exotic profiles.
Notably, there also exists the possibility of spectral line `narrowing'--the spectral line appears narrower than the telescope should conventionally be capable of resolving.
Aside from experimental error, we are unaware of any other possible cause of this final signature, which could (if observed) constitute strong evidence for quantum nonequilibrium and the de Broglie-Bohm theory. If a source of such a nonequilibrium signal were detected and could be reliably measured, a final definitive proof could be arrived at by subjecting the signal to a specifically quantum-mechanical experiment as, for example, is described in Ref.\ \cite{AV04a}.

Our paper is organised as follows. In section \ref{5.2} we explain why we have chosen to focus on spectral lines. We summarise what appears to be the essential difference between conventional spectral effects and those caused by quantum nonequilibrium, and why this is particularly relevant for spectral lines. We also provide some background helpful to our analysis. Finally, we present an idealised and parameter-free field-theoretical model of a spectral measurement of the electromagnetic field, which is used to represent a telescope photon detector. In section \ref{5.3} we present the pilot-wave description of the model and provide explicit calculations showing the result of introducing quantum nonequilibrium.
In our concluding section \ref{5.4} we outline the phenomena that we judge the most likely to betray the presence of quantum nonequilibrium in spectral lines, and we discuss possible implications for the indirect search for dark matter with reference to three recent controversial spectral features.

\section{Modelling a telescope X/$\gamma$-ray photon detector}\label{5.2}
Relaxation to quantum equilibrium is thought to proceed efficiently for systems with sufficiently complicated quantum states \cite{VW05,TRV12,CS10}. 
For everyday matter therefore, with its long and violent astrophysical history, it is expected that quantum nonequilibrium will have long since decayed away. 
As discussed in Ref.\ \cite{UV15}, however, for more exotic particle species that decouple at very early times there exist windows of opportunity whereby quantum nonequilibrium could have been preserved. 
It is then not inconceivable that dark matter may still exhibit nonequilibrium statistics today, should it take the form of particles that indeed decoupled very early.
The search for dark matter is therefore a natural choice of context for the discussion of relic quantum nonequilibrium. For the reasons detailed below, we focus our discussion on the search for a `smoking-gun' spectral line by X/$\gamma$-ray space telescopes.
Such a line could, for example, be produced in the $\gamma$-ray range by the $XX\rightarrow\gamma\gamma$ annihilation of various WIMP dark matter candidates \cite{BS88,Rudaz89,B97,B04,Creview16} or in the X-ray range by the decay of sterile neutrinos \cite{DW94,Abazajian01}.  Recent searches have been carried out in the $\gamma$-ray range by Refs.\ \cite{fermi3.7,fermi5.8,Pull07,Mack08,W12,SF12-1,Albert14,HESS16} and in the X-ray range by Refs.\ \cite{Bulbul14,Boyarsky14,A15,Hitomi16}.
 Although dark matter does not interact directly with the electromagnetic field and so these processes are typically suppressed, it has long been argued that the detection of such a line may be among the most promising methods available to discover the nature of dark matter \cite{BS88}. 
Primarily this is because WIMP $XX\rightarrow\gamma\gamma$ annihilation would create two photons of energy $E_\gamma=m_\text{WIMP}$ and single-photon sterile neutrino decay would create photons of energy $E_\gamma=m_s/2$. The resulting spectral lines would hence yield the mass of the dark matter particle in addition to its spatial location. The hypothetical lines would furthermore appear with only very minimal (mostly Doppler) broadening ($\sim$0.1\% of $E_\gamma$) helping them to be distinguished from the background \cite{Berg12,Abazajian01}. 

The search for such dark matter lines is also, arguably, a promising context in which to consider possible signatures of relic quantum nonequilibrium.
As discussed in Ref.\ \cite{UV15}, field interactions have the effect of transferring nonequilibrium from one quantum field to another.
This means that, if a nonequilibrium ensemble of dark matter particles annihilates or decays in the manner described, we may reasonably expect some of the nonequilibrium to be transferred to the created photons.
If these photons subsequently travel to a telescope without scattering significantly, they could retain the nonequilibrium until their arrival at the telescope.  
As a result, we may focus on modelling the interaction of the detector with nonequilibrium photons, rather than considering interactions of (as yet unknown) dark matter particles. 
That the putative photons are in the X/$\gamma$-ray range is relevant as modern X/$\gamma$-ray telescopes are capable of single photon detection--an inherently quantum process.
(In particular, $\gamma$-ray telescope calorimeters are designed more like particle physics experiments than traditional telescopes.)
A model of such a detector need only describe the measurement of individual photons, affording a useful simplification.

\subsection{Role of the energy dispersion function}\label{role_of_D}
The hypothetical photon signals are close to mono-energetic, suffering only $\sim$0.1\% broadening from conventional sources \cite{Berg12,Abazajian01}. This is significant for our discussion, as we shall now explain.
When a photon of energy $E_\gamma$ arrives at a real telescope, the telescope photon detector may record a range of possible energies. 
These possible energy readings are distributed according to the detector's energy dispersion function, commonly denoted $D(E|E_\gamma)$ (see for example Refs. \cite{LAT12,XMM_MOS01}). 
This is approximately Gaussian, centred on the true energy, and with a spread quantified by the detector's energy resolution $\Delta E/E_\gamma$. (See for instance section 7 and figure 67 in Ref.\ \cite{LAT12}.) If a telescope receives photons distributed with a true spectrum $\rho_\text{true}(E_\gamma)$, it will observe a spectrum
\begin{align}
\rho_\text{obs}(E)=\int D(E|E_\gamma)\rho_\text{true}(E_\gamma)dE_\gamma\label{eq1}
\end{align}
that is convolved by $D(E|E_\gamma)$.
Conventional spectral effects, such as the $\sim$0.1\% Doppler broadening expected in the annihilation/decay lines, alter the energy of the signal photons and hence the true spectrum $\rho_\text{true}(E_\gamma)$.
This is not true of quantum nonequilibrium however.
In de Broglie-Bohm theory, the system configuration does not affect the standard Schr\"{o}dinger evolution of the quantum state \cite{Holl93}. Photons will hence arrive at the telescope with the same quantum states (and the same associated energy eigenvalues) as they would have in the absence of quantum nonequilibrium. 
Instead, as we shall see, quantum nonequilibrium affects the statistical outcomes of quantum interactions between the photons and the telescope. Thus, quantum nonequilibrium alters the detector's energy dispersion $D(E|E_\gamma)$ and not $\rho_\text{true}(E_\gamma)$. This is the essential difference between conventional spectral effects and those caused by quantum nonequilibrium. Of course, since both $\rho_\text{true}(E_\gamma)$ and $D(E|E_\gamma)$ enter into the integrand of Eq.\ \eqref{eq1}, both kinds of effects contribute to the observed spectrum $\rho_\text{obs}(E)$. The relative size of these contributions may, however, depend strongly on the context.

To illustrate this last point, consider the observation of a spectral line $\rho_\text{true}(E_\gamma)=\rho_\text{line}(E_\gamma)$ according to Eq.\ \eqref{eq1} in two separate regimes. First, consider a `high' resolution instrument in which $\Delta E/E_\gamma$ (the width of $D(E|E_\gamma)$) is significantly smaller than the width of the signal line $\rho_\text{line}(E_\gamma)$. In this case it is appropriate to make the approximation $D(E|E_\gamma)\approx \delta(E-E_\gamma)$, and so the observed spectrum \eqref{eq1} closely approximates the true signal spectrum, $\rho_\text{obs}(E)\approx \rho_\text{line}(E)$. Thus a high resolution telescope may resolve the profile of the signal line. In this regime, any moderate (order unity) alterations that quantum nonequilibrium makes to $D(E|E_\gamma)$ will be sub-dominant in the observed spectrum. 
Second, consider a `low' resolution telescope for which the width $\Delta E/E_\gamma$ of the energy dispersion function $D(E|E_\gamma)$ is significantly larger than the width of the signal line.
For this case the appropriate approximation is instead $\rho_\text{line}(E_\gamma)\approx\delta(E_\gamma-E_\text{line})$, and so the observed spectrum \eqref{eq1} closely approximates the instrument's own energy dispersion distribution, $\rho_\text{obs}(E)\approx D(E|E_\text{line})$.
In this second regime conventional broadening is not resolved, whereas moderate changes in $D(E|E_\text{line})$ (perhaps caused by quantum nonequilibrium) would be directly observed.

We may draw the remarkable conclusion that quantum nonequilibrium will be more evident in telescopes of low energy resolution.
With regards to the hypothetical WIMP and sterile neutrino lines with $\sim$0.1\% conventional (Doppler) broadening mentioned above, currently operational telescopes are within the low resolution range. 
For example, the Fermi Large Area Telescope (LAT) has a resolution of $\Delta E/E_\gamma\sim 10\%$ \cite{LAT12}\footnote{The earlier (1990s) EGRET $\gamma$-ray telescope had a resolution of $\sim$20\% \cite{EGRET93} and the 2015-operational CALET and DAMPE $\gamma$-ray telescopes achieve $\sim$2\% \cite{CALET,DAMPE}. These are also in the low resolution regime for a possible WIMP line with 0.1\% broadening.}. 
Indeed, if quantum nonequilibrium produced moderate, order unity changes in the Fermi-LAT energy dispersion distribution $D(E|E_\gamma)$, these changes would appear $\sim$100 times larger than the expected Doppler broadening of the hypothetical annihilation line in the observed spectrum.

\subsection{Idealised model of a photon detector}

To explore the potential consequences of quantum nonequilibrium, we now introduce an idealised, field-theoretical, and parameter-free model of a telescope photon detector.  
We base our model on the standard de Broglie-Bohm pilot-wave description of von Neumann measurements \cite{B52b}, which is more commonly applied to discrete spectra. 
As we shall explain, when applied to a continuous (energy) spectrum, a dispersion distribution $D(E|E_\gamma)$ and a resolution $\Delta E/E_\gamma$ naturally arise.
To model a photon detector, we model a measurement of the electromagnetic field, in which we assume only one photon is present at a time. 
To avoid complications associated with the localisability of photons, we take electromagnetic field to be quantised within a region that loosely corresponds to the body of the instrument.
For each photon, a single energy is recorded.
Over the course of the experiment, many such readings are taken and the resulting set of values may be compared with model spectra for the purposes of hypothesis testing. 

For the measurement of an observable $\mathcal{A}$ with a discrete and non-degenerate spectrum, a standard de Broglie-Bohm measurement proceeds as follows. A system with wave function $\psi(q)$ is coupled to a pointer with wave function $\phi(y)$. A commonly used interaction Hamiltonian is 
\begin{align}\label{basic_interaction}
H_\text{I}=g\mathcal{A}p_y,
\end{align}
where $p_y$ is the conjugate momentum operator of the pointer and $g$ is a coupling constant. The measurement process is taken to begin at $t=0$ and, prior to this, the coupling constant $g$ is taken to be zero.
Thereafter, $g$ is taken to be large enough to ensure that the subsequent evolution is dominated by the interaction Hamiltonian.
With this stipulation, the Schr\"{o}dinger equation takes the simple form
\begin{align}
\partial_t\Psi=-g\mathcal{A}\partial_y\Psi
\end{align}
for the duration of the measurement.
Since the spectrum of $\mathcal{A}$ is assumed discrete and non-degenerate, the system wave function may be decomposed as $\psi(q)=\sum_{n}c_n\psi_n(q)$, where $c_n$ are arbitrary coefficients and $\psi_n(q)$ are eigenstates of $\mathcal{A}$ with the eigenvalues $a_n$. The system evolves as
\begin{align}
\psi(q)\phi(y)\rightarrow \sum_n c_n \psi_n(q)\phi(y-ga_nt).\label{eq:evo}
\end{align}
The outcome probability of the experiment is determined by the effective distribution of the pointer $y$--the marginal Born distribution (hereby called the measured distribution), $\rho_{\text{meas}}(y):=\int |\Psi(q,y)|^2dq$.
Initially the different terms in the sum \eqref{eq:evo} overlap in configuration space, producing interference in the measured distribution. 
Over time each component moves at a speed proportional to its eigenvalue $a_n$. 
If the pointer is prepared in a localised state (perhaps a Gaussian), then after a sufficient time (deemed the duration of the measurement) the measured distribution will become a sum of non-overlapping terms,
\begin{align}\label{disjoint}
\rho_{\text{meas}}(y)=\sum_n|c_n|^2|\phi(y-ga_nt)|^2,
\end{align}
and will no longer exhibit interference.
Hence, at the end of each measurement, the pointer is found in one of a number of disjoint regions that correspond to the non-overlapping terms in Eq.\ \eqref{disjoint}.
In the absence of quantum nonequilibrium, there is a $|c_m|^2$ chance of finding the pointer in the $m$th region.
If a single measurement concludes with the pointer in the $m$th region, this implies a measurement outcome of the corresponding eigenvalue, $a_m$.
The discrete spectrum, $|c_n|^2$, may then be reconstructed by repeated measurements over an ensemble.

According to textbook quantum mechanics, quantum state collapse occurs at the end of each measurement in order to ensure that the pointer is found in a single one of the disjoint regions. 
In the de Broglie-Bohm account, the system (which always occupies a definite position in configuration space) is simply found in one of the regions, with no need for any non-unitary evolution.
Instead an `effective collapse' occurs as, once the components of the wave function \eqref{eq:evo} have properly separated, subsequent evolution of the system configuration is determined solely by the component that contains the configuration. 

This formulation accounts only for the measurement of observables with discrete spectra. One cannot associate disjoint regions in $y$ to eigenvalues on a continuous (energy) scale. Instead, given a particular pointer position, the energy of the incident photon must be estimated. The model photon detector measures the total (normal-ordered) Hamiltonian of the free-space electromagnetic field $\mathopen{:}H_{\text{EM}}\mathclose{:}$, so that in place of Eq.\ \eqref{basic_interaction} the interaction Hamiltonian is
\begin{align}
H_\text{I}=g\mathopen{:}H_{\text{EM}}\mathclose{:} p_y.
\end{align}
For an initial single-photon state $\ket{E_\gamma}$, the state evolution becomes simply
\begin{align}
\ket{E_\gamma}\ket{\phi(y)}\rightarrow \ket{E_\gamma}\ket{\phi(y-gE_\gamma t)}.\label{eq:state_vector}
\end{align}
Although the photon has an exact energy, the quantum uncertainty in the initial position of the pointer produces uncertainty in the pointer position at any later time.
The probability density of finding the pointer at a position $y$ is given by
\begin{align}
\rho_\text{meas}(y,t)=|\phi(y-gE_\gamma t)|^2.\label{eq:meas_point}
\end{align}
If this `measured distribution' were known, then one could correctly infer the true photon energy $E_\gamma$. 
In a single measurement, however, the pointer is found at a single position that is distributed as $\rho_\text{meas}(y,t)$.
The best estimate of the true energy is that which was most likely to have caused the observed pointer position. 
Taking the initial pointer wave packet $|\phi(y)|^2$ to be Gaussian and centred at $y=0$, this amounts to assigning the energy
\begin{align}
E=y/gt.\label{y_to_E}
\end{align}
The energy dispersion function $D(E|E_\gamma)$ is the distribution of possible energy values recorded by the instrument given the true energy $E_\gamma$. If the pointer packet has initial variance $\sigma_y^2$, the distribution of measured pointer positions \eqref{eq:meas_point} may be translated into the distribution of recorded energies \eqref{y_to_E}, giving
\begin{align}
D(E|E_\gamma) = \frac{1}{\sqrt{2\pi}}\frac{gt}{\sigma_y}e^{-\frac{1}{2}\left(\frac{gt}{\sigma_y}\right)^2(E-E_\gamma)^2}.\label{eq:e_dis_PDF}
\end{align}
Since this is a Gaussian, the energy resolution (the half-width of the fractional 68\% containment window) is simply the fractional standard deviation,
\begin{align}
\frac{\Delta E}{E_\gamma}=\frac{\sigma_y}{gtE_\gamma}.\label{eq:e_dis}
\end{align}
For hypothesis testing, we need to know the spectrum we expect to observe for each potential true spectrum $\rho_\text{true}(E_\gamma)$ . This is given by the convolution \eqref{eq1}, that for our model becomes
\begin{align}
\rho_\text{obs}(E)=\int_0^\infty\rho_\text{true}(E_\gamma)\frac{1}{\sqrt{2\pi}}\frac{gt}{\sigma_y}e^{-\frac{1}{2}\left(\frac{gt}{\sigma_y}\right)^2(E-E_\gamma)^2}\,dE_\gamma,\label{recon_simple}
\end{align}
which is a simple `Gaussian blur' (or Weierstrass transform) of the true spectrum.

The duration $t$ of the measurement appears in the denominator of Eq.\ \eqref{eq:e_dis}. Thus, in a sense the precision of the energy reading improves with the run time.
But $t$ appears only as a factor in the quantity $gt$, so a larger coupling constant would also improve the precision.
In section \ref{5.3} it will be useful to rescale the pointer variable $y$ in terms of its initial standard deviation $\sigma_y$, where it turns out that $\sigma_y$ appears only in the quantity $gt/\sigma_y$.
Thus a narrower pointer packet will also improve the precision.
With this in mind, we define the rescaled time variable
\begin{align} 
T= \frac{gtE_\gamma}{\sigma_y}=\left(\frac{\Delta E}{E_\gamma}\right)^{-1},\label{eq:T}
\end{align}
where $\Delta E/E_\gamma=1/T$ is the resolution of the telescope.

The variables $g$, $\sigma_y$ and $t$ are the only free parameters in this model. The definition \eqref{eq:T} allows their replacement with the single easily interpreted quantity $T$.
Thus, for instance, the model may reproduce a roughly EGRET resolution of $20\%$ at $T=5$, a roughly Fermi-LAT resolution of $\sim$10\% at $T=10$ or a roughly CALET/DAMPE resolution of $\sim$2\% at $T=50$.

The true energy $E_\gamma$ is included in the definition \eqref{eq:T} so that the rescaled time $T$ is exactly the reciprocal of the energy resolution \eqref{eq:e_dis}--no matter the true energy of the incident photon. 
Consequently, the effects of quantum nonequilibrium we describe in section \ref{5.3} are independent of the energy of the spectral line.

\section{Nonequilibrium spectral lines}\label{5.3}
As discussed in section \ref{5.2}, we consider a `low' resolution telescope in which the detector energy resolution $\Delta E/E_\gamma$ is significantly larger than the width of the spectral line. 
The signal photons may then be taken to be approximately mono-energetic, $\rho_\text{true}(E_\gamma)=\delta(E_\gamma-E_\text{line})$, and hence the telescope is expected to record photon energies distributed according to the detector energy dispersion function at the line energy, $\rho_\text{obs}(E)=D(E|E_\text{line})$.
In the presence of quantum nonequilibrium, we expect to observe deviations from $D(E|E_\text{line})$.
\subsection{De Broglie-Bohm description of model}
To calculate how these deviations may appear in practice, we now provide a de Broglie-Bohm description of the photon detector model introduced in section \ref{5.2}. For this we require a coordinate representation of the electromagnetic field. We work in the Coulomb gauge, $\nabla\cdot \mathbf{A}(\mathbf{x},t)=0$, with the field expansion
\begin{align}
\mathbf{A}(\mathbf{x},t)=\sum_{\mathbf{k}s}\left[A_{\mathbf{k}s}(t)\mathbf{u}_{\mathbf{k}s}(\mathbf{x})+A_{\mathbf{k}s}^*(t)\mathbf{u}_{\mathbf{k}s}^*(\mathbf{x})\right],\label{u_expansion}
\end{align}
where the mode functions
\begin{align}
\mathbf{u}_{\mathbf{k}s}(\mathbf{x})=\frac{\bm{\varepsilon}_{\mathbf{k}s}}{\sqrt{2\varepsilon_0 V}}e^{i\mathbf{k.x}}
\end{align}
and their complex conjugates define a basis, and $V$ is a normalisation volume that may be thought to correspond loosely to the volume of the instrument.
To avoid duplication of basis elements $\mathbf{u}^*_{\mathbf{k}s}$ with $\mathbf{u}_{\mathbf{-k}s}$, the summation \eqref{u_expansion} should be understood to extend over only half the possible wave vectors $\mathbf{k}$. (See for instance Ref.\ \cite{schiff}.)
This expansion allows the energy of the electromagnetic field to be written as
\begin{align}
U&=\frac12 \int_V\mathrm{d}^3x\left(\varepsilon_0 \mathbf{E}^2+\frac{1}{\mu_0}\mathbf{B}^2\right)\\
&=\sum_{\mathbf{k}s}\frac{1}{2}\left(\dot{A}_{\mathbf{k}s}\dot{A}_{\mathbf{k}s}^*+\omega_\mathbf{k}^2A_{\mathbf{k}s}A_{\mathbf{k}s}^*\right),\label{comp_HOs}
\end{align}
where $\omega_\mathbf{k}=c|\mathbf{k}|$.
Eq.\ \eqref{comp_HOs} corresponds to a decoupled set of complex harmonic oscillators of unit mass. We prefer instead to work with real variables and so decompose $A_{\mathbf{k}s}$ into its real and imaginary parts 
\begin{align}
A_{\mathbf{k}s}=q_{\mathbf{k}s1} +iq_{\mathbf{k}s2}.
\end{align}
The free field Hamiltonian may then be written as
\begin{align}
H_0=\sum_{\mathbf{k}sr}H_{\mathbf{k}sr}\label{ham_decoupled}
\end{align}
with $r=1,2$, where
\begin{align}
H_{\mathbf{k}sr}=\frac{1}{2}\left(p_{\mathbf{k}sr}^2+\omega_{\mathbf{k}}^2q_{\mathbf{k}sr}^2\right)
\end{align}
and where $p_{\mathbf{k}sr}$ is the momentum conjugate to $q_{\mathbf{k}sr}$.

The variables $q_{\mathbf{k}sr}$ and y are the configuration-space `beables'. Together they specify the configuration of the field and pointer. By rescaling the beable coordinates,
\begin{align}
Q_{\mathbf{k}sr}&=\sqrt{\frac{\omega_{\mathbf{k}}}{\hbar}}q_{\mathbf{k}sr},\quad Y=\frac{y}{\sigma_y},
\end{align}
and using the rescaled time parameter \eqref{eq:T}, the Schr\"{o}dinger equation may be written as\footnote{In Eqs.\ \eqref{schro}--\eqref{gen_guidance_2}, $E_\gamma$ should be understood to be a reference energy that will later refer to the energy of the incident photon.}
\begin{align}
\partial_T\Psi +\frac12 \sum_{\mathbf{k}sr}\frac{E_\mathbf{k}}{E_\gamma}\left(-\partial_{Q_{\mathbf{k}sr}}^2 + Q_{\mathbf{k}sr}^2-1\right)\partial_Y\Psi=0.\label{schro}
\end{align}
The following de Broglie-Bohm guidance equations may then be derived by using a similar method to that used in \cite{UV15} (based on general expressions derived in \cite{SV08}):
\begin{align}
\partial_T Q_{\mathbf{k}sr}=&\frac{E_\mathbf{k}}{E_\gamma}\left(-\frac13 \Psi\partial_{Q_{\mathbf{k}sr}}\partial_Y \Psi^* +\frac16 \partial_Y\Psi\partial_{Q_{\mathbf{k}sr}}\Psi^*\right.\nonumber\\
&\left.+\frac16 \partial_{Q_{\mathbf{k}sr}}\Psi\partial_Y\Psi^* -\frac13 \Psi^*\partial_{Q_{\mathbf{k}sr}}^2\Psi \right)/|\Psi|^2,\label{gen_guidance_1}\\
\partial_TY=&\sum_{\mathbf{k}sr}\frac{E_\mathbf{k}}{E_\gamma}\left(-\frac16 \Psi\partial_{Q_{\mathbf{k}sr}}^2\Psi^*+\frac16 \partial_{Q_{\mathbf{k}sr}}\Psi\partial_{Q_{\mathbf{k}sr}}\Psi^*\right.\nonumber\\
&\left.-\frac16 \Psi^*\partial_{Q_{\mathbf{k}sr}}^2\Psi + \frac12 \left(Q_{\mathbf{k}sr}^2 -1\right)|\Psi|^2\right)/|\Psi|^2.\label{gen_guidance_2}
\end{align}
These are the equations of motion for the $(\{Q_{\mathbf{k}sr}\},Y)$ configuration under a general quantum state. 
Prior to the spectral measurement, one mode of the field `contains' a nonequilibrium photon of energy $E_\gamma$. The beable associated with this photon-carrying mode will be referred to as $Q$.
Henceforth, all summations or products over $\mathbf{k}sr$ should be understood to exclude the mode that contains the photon.
With this in mind, the wave function(al) of the pointer-field system may be written as
\begin{align}
\Psi=&\underbrace{(2\pi)^{-\frac14}\exp\left[-\frac14\left(Y-T\right)^2\right]}_{\phi}\nonumber\\
&\times\underbrace{2^{\frac12}\pi^{-\frac14}Q\exp\left[-\frac12 Q^2\right]}_{\chi_1}
 \prod_{\mathbf{k}sr}\underbrace{\pi^{-\frac14} \exp\left[-\frac12 Q_{\mathbf{k}sr}^2\right]}_{\chi_0}.\label{wvfn_all_modes}
\end{align}
Here, $\phi$ is the pointer packet while $\chi_0$ and $\chi_1$ refer to harmonic oscillator ground and first excited states respectively. For this specific state the guidance Eqs.\ \eqref{gen_guidance_1} and \eqref{gen_guidance_2} become
\begin{align}
\partial_TQ&=\frac16\left(\frac1Q-Q\right)(Y-T),\nonumber\\
\partial_TQ_{\mathbf{k}sr}&=-\frac16 \frac{E_{\mathbf{k}}}{E_\gamma}Q_{\mathbf{k}sr}(Y-T),\nonumber\\
\partial_TY&=\frac{1}{6Q^2}+\frac13 Q^2 +\frac16 +\sum_{\mathbf{k}sr}\frac{E_\mathbf{k}}{E_\gamma}\left(\frac13 Q_{\mathbf{k}sr}^2 -\frac16\right)\label{spec_guidance_full}.
\end{align}

The pointer is coupled to the total energy of the field.
Quantum-mechanically the vacuum modes are effectively uncoupled from the pointer, as is evident from the simple Schr\"{o}dinger evolution \eqref{wvfn_all_modes}.
But in the de Broglie-Bohm treatment the guidance Eqs.\ \eqref{spec_guidance_full} describe a system in which the beables of each vacuum mode, $Q_{\mathbf{k}sr}$, are coupled directly to the pointer, and through their interaction with the pointer they are coupled indirectly to each other.
(For more details on the energy measurement of a vacuum mode see \cite{UV15}.)
This is accordingly a very complex high-dimensional system. Since our purpose is not to provide an accurate description of a real telescope photon detector, but merely to provide illustrative, qualitative examples of possible phenomena, we now truncate the model.

As a first approximation to the full de Broglie-Bohm model of Eqs.\ \eqref{spec_guidance_full}, we consider a system in which the pointer beable is decoupled from the vacuum mode beables. In this reduced system, only the beable $Q$ of the excited mode affects the evolution of the pointer, and so the system is effectively two dimensional. 

\subsection{Lengthscale of nonequilibrium spectral anomalies}
\begin{figure}
\centering
\hspace{0mm}\includegraphics{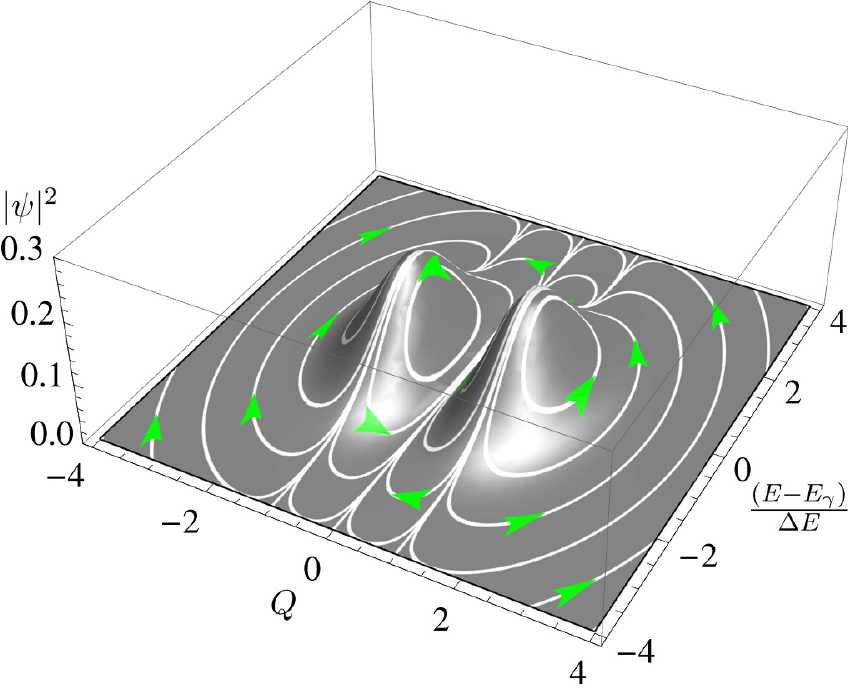}
\caption[Periodic field trajectories]{\label{fig:trajectories}Periodic orbits in the configuration (Eq.\ \eqref{paths}) produced by the guidance Eqs.\ \eqref{eq:reduced_guidance} contrasted with the quantum equilibrium distribution \eqref{reduced_equilibrium}.}
\end{figure}
Rather than using the variable $Y$ to track the evolution of the pointer, it is convenient to translate this directly into an energy reading. To this end, define the variable $\text{dev}E:=\left(E-E_\gamma\right)/\Delta E$. This is the deviation from a perfect energy reading in units of the detector energy resolution $\Delta E$, and is related to $Y$ by $\text{dev}E=Y-T$. The truncated configuration space may then be spanned by coordinates $(Q,\text{dev}E)$. In terms of these coordinates, the quantum equilibrium distribution
\begin{align}\label{reduced_equilibrium}
|\Psi|^2= \frac{1}{\sqrt{2\pi}}\exp\left(-\frac12 \text{dev}E^2\right)\frac{2}{\sqrt{\pi}}Q^2\exp\left(-Q^2\right),
\end{align}
and the guidance Eqs.\ 
\begin{align}
\partial_TQ=&\frac{1}{6}\text{dev}E\left(\frac1Q -Q\right),\nonumber\\
\partial_T\text{dev}E=&\frac{1}{6Q^2}+\frac{1}{3}Q^2-\frac{5}{6},\label{eq:reduced_guidance}
\end{align}
are both time-independent. 
The guidance equations specify stationary orbits around points at \\\mbox{$(\pm\sqrt{5\pm\sqrt{17}}/2,0)$}, with paths given by
\begin{align}
\text{constant}=\text{dev}E^{2}/2+Q^2-\ln|Q|-\ln|Q^2-1|.\label{paths}
\end{align}
These orbits are contrasted with quantum equilibrium in figure \ref{fig:trajectories}. 
Each of the orbits corresponds to an energy reading that oscillates between an overestimation and an underestimation of the true photon energy $E_\gamma$.  
The extent of these oscillations is stationary with respect to the energy scale $\Delta E$ (the detector energy resolution), rather than any fixed energy scale. 
If considered on a fixed energy scale, the oscillations would appear to shrink as the system evolves and the model, in effect, describes the outcome of an increasingly high resolution telescope (according to Eq. \eqref{eq:T}).  
As a result, the model reflects the discussion in section \ref{role_of_D}--spectral anomalies caused by quantum nonequilibrium should be expected to be observed on the scale of the telescope energy resolution, rather than any scale independent of the telescope.

\begin{figure}
\includegraphics[width=0.33\linewidth]{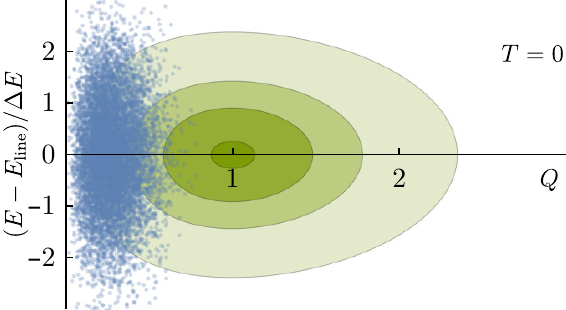}
\includegraphics[width=0.33\linewidth]{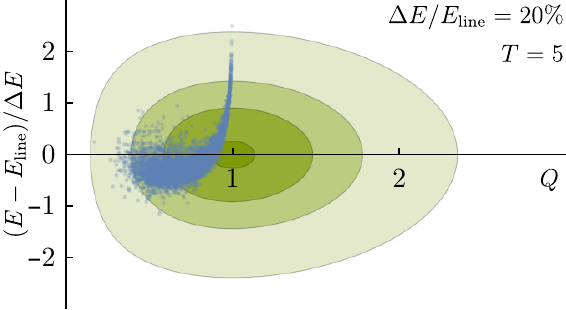}
\includegraphics[width=0.33\linewidth]{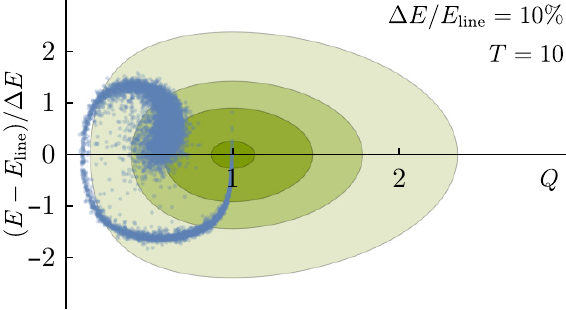}%

\includegraphics[width=0.33\linewidth]{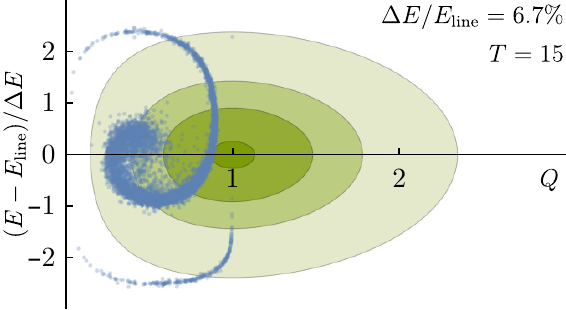}
\includegraphics[width=0.33\linewidth]{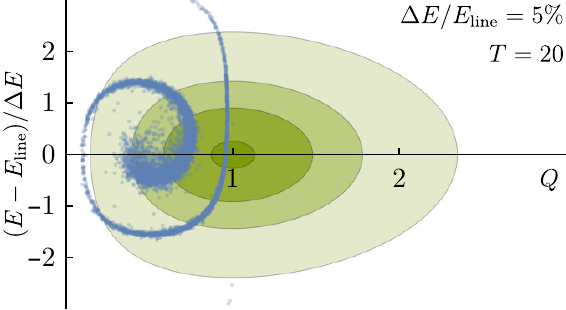}
\includegraphics[width=0.33\linewidth]{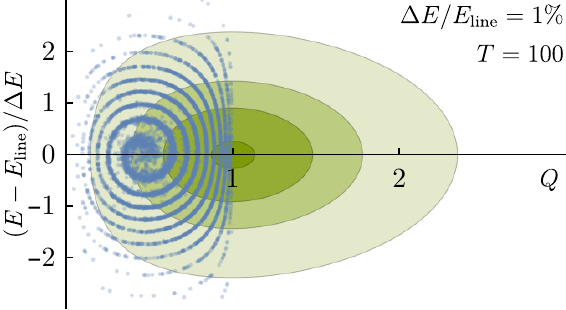}%
\caption[Spectral measurement of $w=1/4$ nonequilibrium photon]{\label{w_quarter_relaxation}The evolution of a nonequilibrium ensemble of model spectral measurements (represented by 10,000 points). The individual trajectories that compose the nonequilibrium distribution are displayed in blue and the equilibrium distribution is displayed in green. Only $Q>0$ is shown as the behaviour is identical for $Q<0$. Each frame displays the state at particular time $T$ in the evolution. Since $\Delta E/E_\text{line}=T^{-1}$, each frame is also the final system state produced at a particular energy resolution.
The corresponding spectral line profiles (at the corresponding telescope energy resolutions) are shown in figure \ref{w_quarter_marginal}.  
The initial nonequilibrium distribution for the field mode used in this figure is equal to the equilibrium distribution narrowed by a factor of 4 ($w=1/4$). 
In the model, much of the support of such $w<1$ nonequilibrium distributions is confined to the region $|Q|<1$ with the result of producing sharp, narrowed spectral lines (see Fig.\ \ref{w_quarter_marginal}). Frame 6 shows the formation of the fine structure that is a hallmark of quantum relaxation.}
\end{figure}
\begin{figure}
\includegraphics[width=0.33\linewidth]{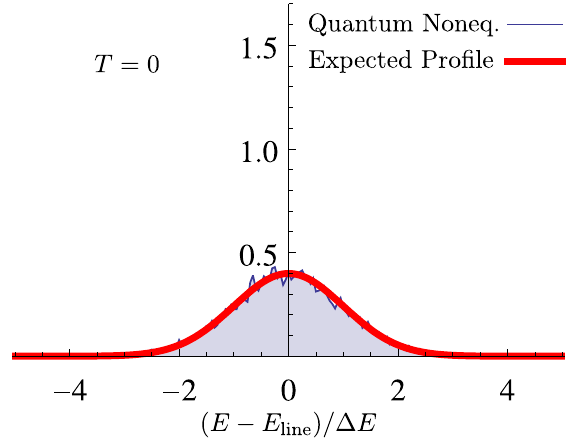}
\includegraphics[width=0.33\linewidth]{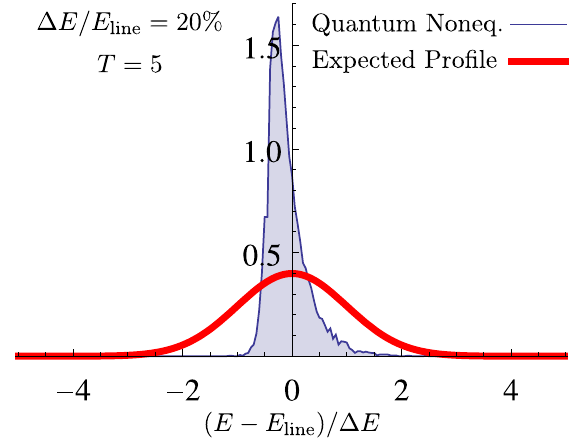}
\includegraphics[width=0.33\linewidth]{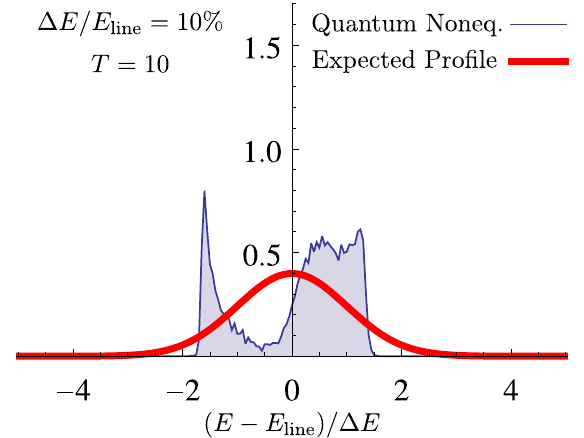}%

\includegraphics[width=0.33\linewidth]{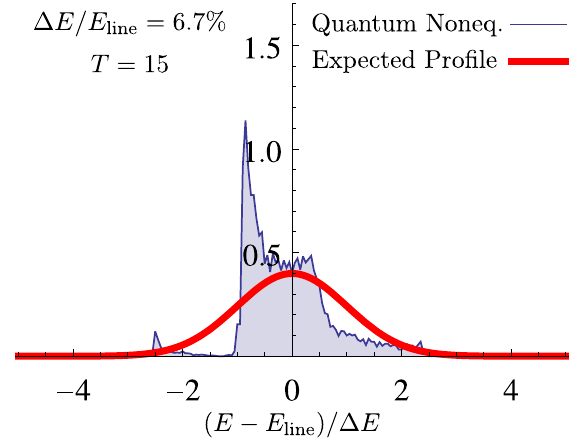}
\includegraphics[width=0.33\linewidth]{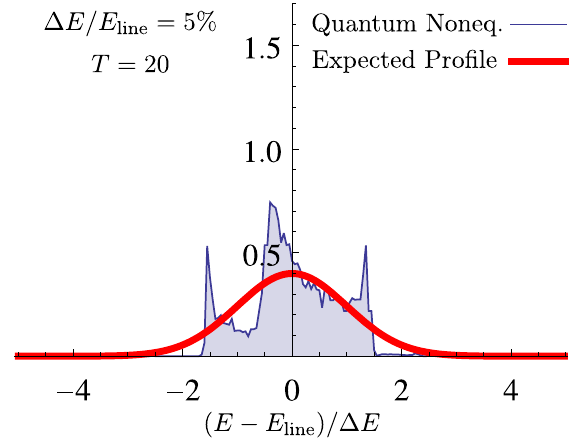}
\includegraphics[width=0.33\linewidth]{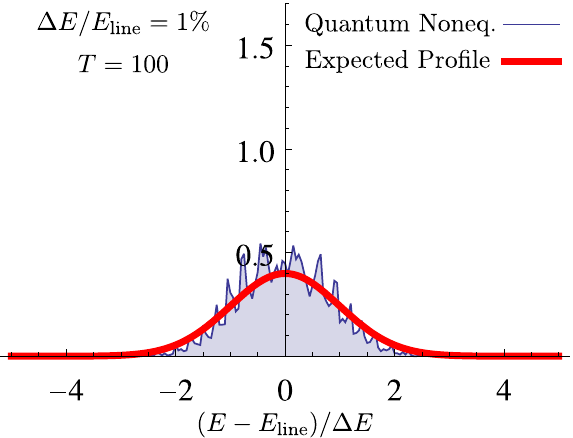}%
\caption[Outcome of measurement of $w=1/4$ nonequilibrium photon]{\label{w_quarter_marginal}Model spectral line profiles produced by a $w=1/4$ quantum nonequilibrium (corresponding to the frames shown in figure \ref{w_quarter_relaxation}), contrasted with the expected line profile, $D(E|E_\text{line})$. Each frame corresponds to a particular telescope energy resolution $\Delta E/E_\text{line}$. (The third frame, for instance, shows a resolution of $\Delta E/E_\text{line}=10\%$, approximately that of the Fermi-LAT.) The lines are given by the marginal distribution in variable $\text{dev}E:=(E-E_\text{line})/\Delta E$ of the frames in figure \ref{w_quarter_relaxation}. The plots are histograms that have been normalised to represent a probability distribution (plotted on the vertical axis) and hence there is a small amount of statistical fluctuation due to the finite sample size of 10,000. 
Frames 2-4 display distinct signatures of quantum nonequilibrium that could be searched for in experimental data. 
These show features that are too narrow to be conventionally resolved at the corresponding instrument resolution.
Frame 3 also shows a clear double bump that could not be resolved conventionally. These features are commonly produced by the model for $w<1$ nonequilibria.}
\end{figure}

\begin{figure}
\includegraphics[width=0.33\linewidth]{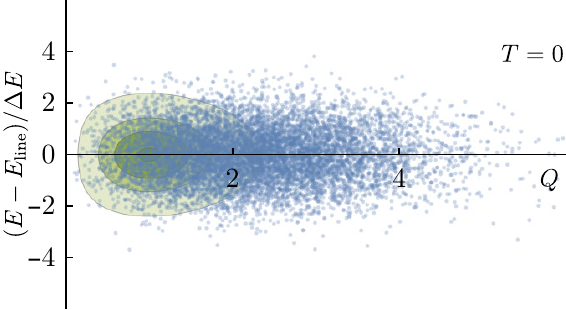}
\includegraphics[width=0.33\linewidth]{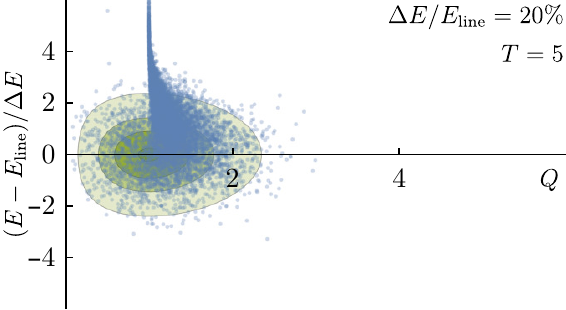}
\includegraphics[width=0.33\linewidth]{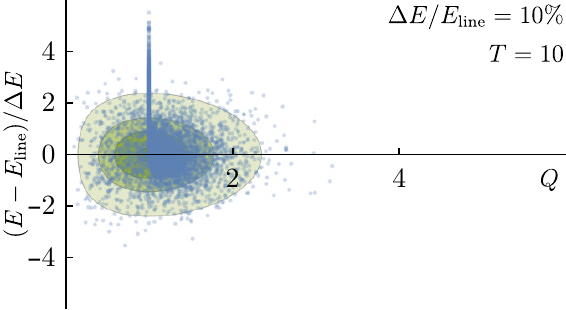}%

\includegraphics[width=0.33\linewidth]{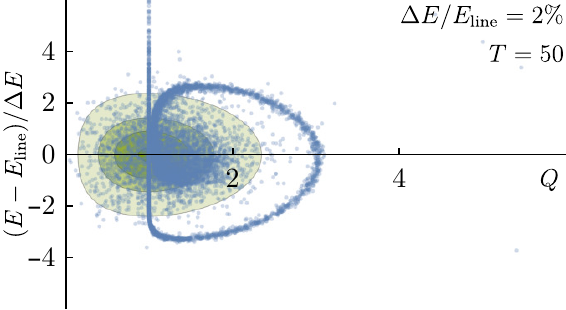}
\includegraphics[width=0.33\linewidth]{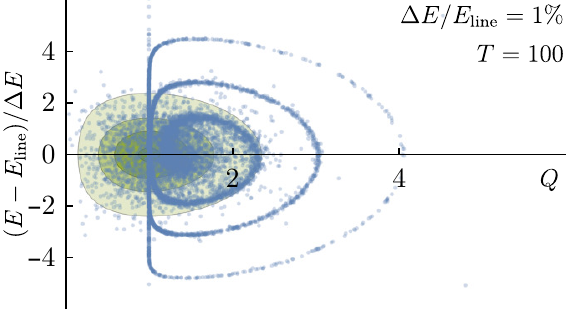}
\includegraphics[width=0.33\linewidth]{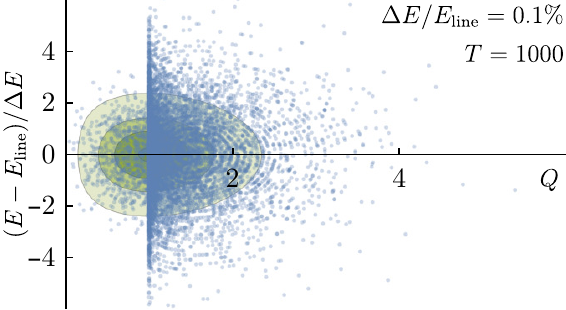}%
\caption[Spectral measurement of $w=2$ nonequilibrium photon]{\label{fig:w=2_relaxation} The evolution of an ensemble of model spectral measurements (represented by 10,000 points) subject to a $w=2$ quantum nonequilibrium. The individual beable configurations that compose the nonequilibrium ensemble are displayed in blue and the equilibrium distribution is displayed in green. Each frame displays the ensemble at a particular time $T$ in the evolution. Since $\Delta E/E_\text{line}=T^{-1}$, these are also the ensembles found upon the completion of a measurement at the corresponding energy resolution. Only $Q>0$ is shown as the behaviour is identical for $Q<0$. In the model, $w>1$ quantum nonequilibrium distributions find more of their bulk confined to the regions $|Q|>1$ (see figure \ref{fig:trajectories}). For lower resolution measurements (frames 2 and 3) this results in an overestimation of the energy and a broadened line. Towards higher resolutions, fine structure develops, leading to a partial quantum relaxation. Spectral line profiles at the corresponding energy resolutions are shown for $w>1$ nonequilibrium in figure \ref{w4_marginal} (we have used $w=4$ for clarity).} 
\end{figure}

\subsection{Introducing quantum nonequilibrium}
As the quantum nonequilibrium evolves (and in effect the system models an increasingly high resolution instrument) we expect to see a dynamic relaxation that will contribute transients.
(For an introduction to the mechanics of quantum relaxation, see Ref.\ \cite{AV91a} and chapter \ref{2} of this thesis.)
In order to provide explicit calculations displaying the line profiles produced by quantum nonequilibrium in the photon detector model, we must first specify the initial nonequilibrium photon distributions.
There are, however, as yet no a priori indications on the nature of the nonequilibrium distributions that could be present in the photon statistics--the possible or likely shape and extent of the deviations from the Born distribution remain an open question.
With this in mind, in order to provide a simple parameterisation of sample nonequilibrium distributions we use a `widening parameter' $w$ that acts to specify the initial distribution as  
\begin{align}
\rho_0(Q,\text{dev}E)=|\psi_0(Q/w,\text{dev}E)|^2/w.
\end{align}
To investigate the consequences of introducing such photon quantum nonequilibrium into the model detector, ensembles of 10,000 configurations were evolved for each of the widening parameters $w=1/16,1/8,1/4,1/2,2,4,8$ using a standard Runge-Kutta (Cash-Karp) algorithm. 
Representative snapshots of the resulting nonequilibrium distributions and spectral lines are illustrated in figures \ref{w_quarter_relaxation} through \ref{w4_marginal}.

Figure \ref{w_quarter_relaxation} shows snapshots in the evolution of the quantum nonequilibrium for $w=1/4$. (Only the $Q>0$ half of the configuration space is shown as the behaviour is identical for $Q<0$.) 
Figure \ref{w_quarter_marginal} contrasts the resulting spectral lines (at the corresponding detector energy resolutions) with the expected line profile, $D(E|E_\text{line})$.

In the model, trajectories are confined to fixed orbits (displayed in figure \eqref{fig:trajectories}) that oscillate between $\text{dev}E>0$ where the photon energy is overestimated and $\text{dev}E<0$ where it is underestimated. Since larger orbits have a larger period, trajectories that begin close to each other (and so initially oscillate in phase) gradually desynchronise. This produces swirling patterns in the configuration space distributions that grow into fine structure, a hallmark of the process of relaxation in classical mechanics and also in de Broglie-Bohm quantum theory. 

\begin{figure}
\includegraphics[width=0.33\linewidth]{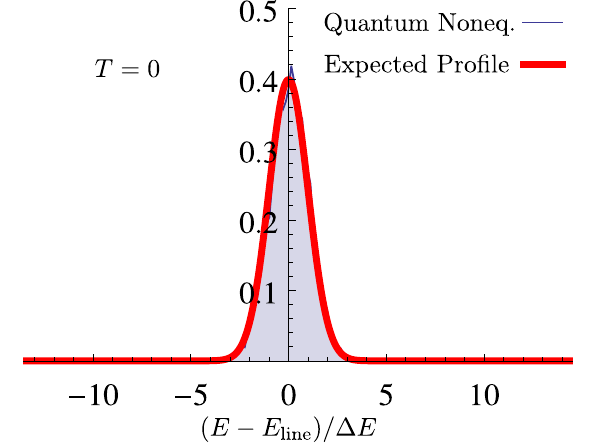}
\includegraphics[width=0.33\linewidth]{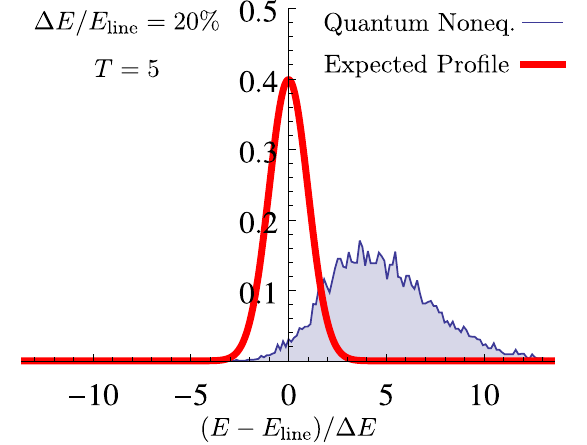}
\includegraphics[width=0.33\linewidth]{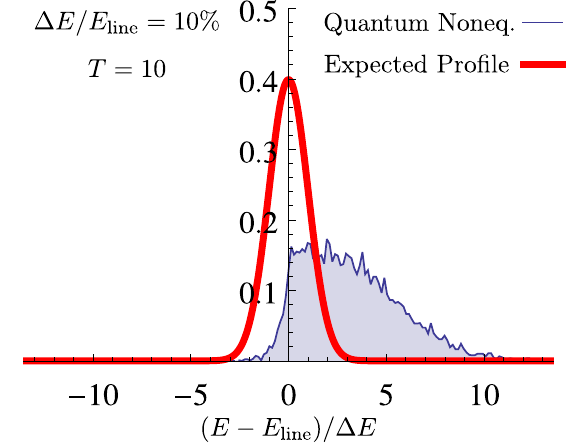}%

\includegraphics[width=0.33\linewidth]{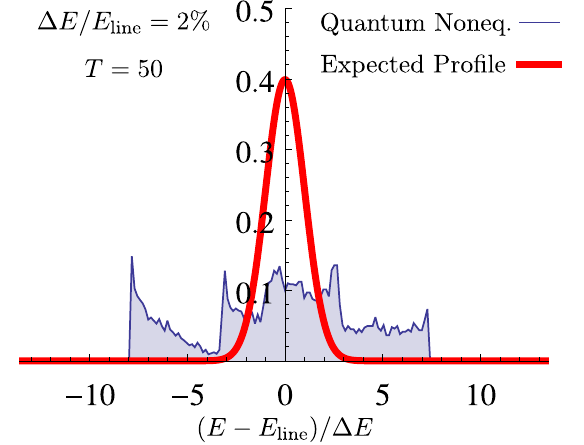}
\includegraphics[width=0.33\linewidth]{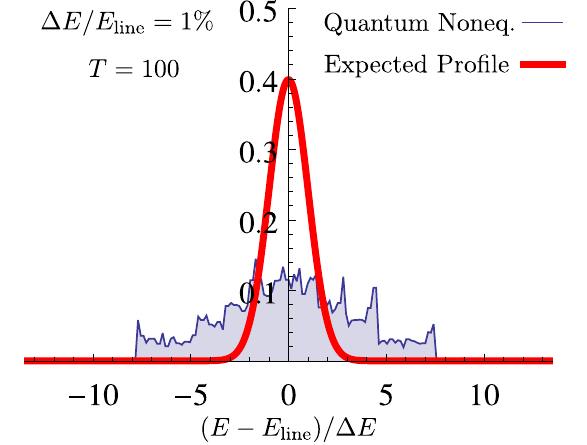}
\includegraphics[width=0.33\linewidth]{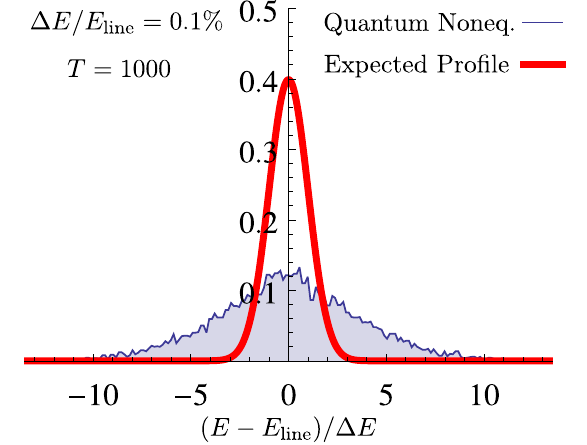}%
\caption[Outcome of measurement of $w=4$ nonequilibrium photon measurement]{\label{w4_marginal}Spectral line profiles produced by a $w=4$ quantum nonequilibrium, contrasted with the expected line profile, $D(E|E_\text{line})$. Each frame corresponds to a model telescope of a particular energy resolution. (The third frame, for instance, shows a line produced by a model telescope with a roughly Fermi-LAT resolution of $\Delta E/E_\text{line}=10\%$.) The horizontal axis denotes the deviation from the true line energy in units of the telescope energy resolution $\Delta E$. (For example, in absolute energy units the expected profile of the spectral line in frame 3 is 100 times the width of that in frame 6.) The plots are histograms that have been normalised to represent a probability distribution (plotted on the vertical axis) and hence there is a small amount of statistical fluctuation due to the finite sample size of 10,000. }
\end{figure}

At relatively low detector energy resolutions ($\Delta E/E_\gamma\gtrsim5\%$) trajectories are still mostly in phase and so mostly grouped together. The corresponding spectral line profiles appear narrower and sharper than could conventionally occur. Since the instrument should not be capable of resolving such a narrow line, lines of this type, if observed, could represent strong evidence for the presence of quantum nonequilibrium. See for instance the second frame of figure \ref{w_quarter_marginal}, which displays such a narrowed line at an energy resolution that approximately matches that of the EGRET instrument. At an energy resolution of $\Delta E/E_\gamma=10\%$, which approximately matches that of the Fermi-LAT, the distribution begins to desynchronise creating a double-line. (See frame 3 of figure \ref{w_quarter_marginal}.) These narrow and double line profiles are common to all initial nonequilibrium distributions that are sufficiently narrow in $Q$. At resolutions beyond about $\Delta E/E_\gamma=5\%$ the model detector produces a line profile that increasingly resembles the standard profile. In principle, observation of the fine structure (that is for instance shown frame 6 of figure \ref{w_quarter_marginal}) could betray the presence of quantum nonequilibrium, though it is unlikely that this fine structure will be evident in data without a very large sample of readings (signal strength).

Figure \ref{fig:w=2_relaxation} shows snapshots in the evolution when a widened ($w=2$) nonequilibrium photon distribution is introduced. Again, only the $Q>0$ half of the configuration is shown.
Representative spectral line profiles generated by a widened ($w=4$ for clarity) nonequilibrium distribution are shown in figure \ref{w4_marginal}. 
For the $w>1$ cases, more of the bulk of the ensemble density is situated at $|Q|\gtrsim2$. In these regions, the guidance equations \eqref{eq:reduced_guidance} cause early pointer movement in the positive $\text{dev}E$ direction (see figure \ref{fig:trajectories}). Accordingly, `low' resolution telescope photon detectors produce an overestimated energy. This is shown in frames 2 and 3 of figure \ref{w4_marginal}, although the extent of the overestimations may not be immediately evident. To illustrate, in frame 2 the nonequilibrium line profile is centred on approximately $\text{dev}E=5$ at an approximately EGRET energy resolution of $\Delta E/E_\gamma=20\%$. Consequently, if such a line profile were discovered, it would be likely attributed an energy of approximately double the correct energy. 

At increasing resolutions (as in the narrowed case) the widened distributions begin to display structure. Double and triple lines appear until at `high' resolutions the structure becomes fine enough that the line profile approaches an approximately stationary state with respect to the instrument energy resolution. Frame 6 of figure \ref{w4_marginal} displays the profile of the $w=4$ spectral line once it has reached this stage. Note that the amount of line broadening may appear comparable at all resolutions in figure \ref{w4_marginal}, though this is of course only with respect to the energy resolution. For instance, in frame 6 the energy resolution is $0.1\%$, which is equal to the expected conventional broadening so that the effects of quantum nonequilibrium and of conventional broadening would appear approximately the same size. In frame 3, on the other hand, the energy resolution is $10\%$ so that the broadening caused by quantum nonequilibrium is approximately 2 orders of magnitude larger than the conventional broadening.

\section{Discussion and Conclusions}
\label{5.4}
In this paper we have described possible experimental signatures of quantum nonequilibrium in the context of hypothetical dark matter line spectra. 
Quantum nonequilibrium (violation of the Born probability rule) is a prediction of the de Broglie-Bohm pilot-wave interpretation of quantum theory. The signatures we describe are not possible in any of the other major interpretations.
Among the major interpretations of quantum mechanics, de Broglie-Bohm is the only one that allows (in principle) arbitrary deviations from the Born probability rule. In fact, most interpretations are observationally equivalent to textbook quantum mechanics. The only other exception is collapse models. These allow for anomalous energy measurements, but for microscopic systems the effects are tiny. The potential effect of collapse models on spectral lines has in fact been studied \cite{BBU14}. It is found that lines can be broadened and shifted. However, the effects scale with the mass of the system and are far too small to be observed for microscopic systems (such as a hydrogen atom or an elementary particle, though the effects might be measurable for macromolecules). In contrast, our effects have no particular dependence on mass and in principle can be of order unity even for microscopic systems. 
Thus, a confirmed discovery would serve to distinguish de Broglie-Bohm theory from the other interpretations (including textbook quantum theory).

An accessible source of quantum nonequilibrium could also lead to the discovery of other new phenomena \cite{AV07,AV91b,AV04a,AV02,PV06}, potentially opening a large field of investigation.
We have argued in section \ref{5.2} that the indirect search for dark matter through spectral lines is an experimental field that is particularly well placed to observe experimental signatures of quantum nonequilibrium (should it exist). 
To support and illustrate our arguments, we have introduced an idealised model of spectral measurement intended to represent an X/$\gamma$-ray telescope photon detector.
Of course, this model has limited and debatable applicability to actual experiment.
The inner workings of contemporary X/$\gamma$-ray telescopes are very complicated, typically involving a great many interactions before an estimate of the photon energy may be made.
Our intention, however, is not to provide an accurate description of such telescopes, but only to illustrate spectral features that we argue are to be generally expected in the presence of quantum nonequilibrium. Moreover, since the notion of quantum nonequilibrium is commonly regarded as highly speculative, our intention is to highlight features that are sufficiently exceptional that they might be difficult to account for with conventional physics.

The two complementary spectral effects that we regard as the primary outcome of this work may be summarised as follows.
Firstly, as described in section \ref{5.2} and illustrated in figures \ref{w_quarter_marginal} and \ref{w4_marginal}, the effect of quantum nonequilibrium is not to alter the true energy of the signal photons, and hence the true energy spectrum $\rho_\text{true}(E_\gamma)$, but rather to alter the statistics of the interaction of the detector with the photons, and hence the telescope energy dispersion function $D(E|E_\gamma)$. 
The spectral effects of quantum nonequilibrium therefore appear on the lengthscale of the resolution of the detector ($\Delta E/E_\gamma$) rather than any fixed energy scale as is the case with standard spectral effects.
We regard this as the essential difference between conventional spectral effects and those caused by quantum nonequilibrium.
Telescopes that cannot resolve conventional sources of spectral broadening instead effectively observe their own dispersion function at the line ($\rho_\text{obs}(E)=D(E|E_\text{line})$), and so quantum nonequilibrium will produce the largest anomalies in these cases.
(Indeed it is interesting to note that the majority of currently operational X/$\gamma$-ray telescopes fall into this category when considering the hypothetical 0.1\% Doppler broadened WIMP and sterile neutrino lines that we have used as examples.)
Secondly, as illustrated in section \ref{5.3}, in the presence of quantum nonequilibrium a dynamical evolution is expected during the measurement process and this will contribute deviations from the expected line profile $D(E|E_\text{line})$. (See figures \ref{w_quarter_marginal} and \ref{w4_marginal}.) Such deviations will be maximal in telescopes that do not disturb the nonequilibrium too much. (In the model this corresponded to `lower' energy resolution.) It may be the case that real telescopes are sufficiently complicated that they degrade nonequilibrium entirely, making it effectively unobservable, although to discount this possibility would require a much more complicated (and potentially intractable) model.

Taken together, these two effects mean that a number of noteworthy profiles are possible. Lines may appear narrower than the instrument is canonically capable of resolving (see frames 2 and 3 of figure \ref{w_quarter_marginal}). Such an occurrence could only otherwise be due to instrumental error or statistical aberration. It should in practice be possible to exclude these two alternate explanations and so such a signal could represent strong evidence of quantum nonequilibrium. Lines may appear split into two or three or otherwise display structure (see frames 3-5 of figure \ref{w_quarter_marginal} and frames 3 and 4 of figure \ref{w4_marginal}). Should it occur, line broadening will appear on the lengthscale of the telescope energy resolution, which is commonly orders of magnitude larger than that expected in the physical energy spectrum (see frames 2 and 3 of figure \ref{w4_marginal}). For this reason, narrow lines could be mistaken for broad sources, confused with other nearby features or simply lost in the background. Finally, it is important to note that the line profile observed is as much a property of the telescope photon detector as it is a property of the quantum nonequilibrium. As such, two telescopes with large sets of otherwise reliable data may disagree entirely on the profile (or existence) of a line.

In the absence of any nonequilibrium distributions that are more extreme than those we have considered, it seems highly likely that the confirmed discovery of a dark matter spectral line would need to precede any investigation into the cause of an anomalous line shape. 
After all, signal statistics that are significant enough to prove a spectral line has an anomalous shape would surely be significant enough to prove the existence of the line in the first place. 
Nevertheless, the presence of quantum nonequilibrium could have important consequences for the indirect each for dark matter, and this is particularly true for telescopes of lower spectral resolution, where nonequilibrium could obfuscate the detection process, producing ambiguous and anomalous results, making the discovery of a spectral line more elusive. 

In recent years there have been a number of claims of line detections (all controversial), and these may provide context for this last point.
One such controversial line was reported in the galactic dark matter halo at $\sim$133 GeV in Fermi-LAT data in 2012 by Refs.\ \cite{Bring12,W12}. This was originally thought to be a possible WIMP annihilation line \cite{W12}, although its existence is now widely discredited. Indications of instrumental error have been found \cite{Creview16,fermi3.7}, searches with access to more LAT data have reported reduced significance \cite{fermi3.7,fermi5.8}, and a recent search with the ground-based H.E.S.S.\ II Cherenkov telescope ruled out the line with 95\% confidence \cite{HESS16}.
It is interesting to note, however, that one of the original reasons why this feature was discredited as an actual signal was that it seemed too narrow.
In 2013 the Fermi collaboration disputed the existence of the line, arguing that `the feature is narrower than the LAT energy resolution at the level of 2 to 3 standard deviations, which somewhat disfavours the interpretation of the 133 GeV feature as a real WIMP signal' \cite{fermi3.7}.
As we have seen, quantum nonequilibrium could result in features which are narrower than the telescope could canonically be capable of resolving.
Indeed, in our model this is commonly the case for telescope energy resolutions near 10\% (frames 2-4 figure \ref{w_quarter_marginal}).
For such cases, our model also tends to produce multiple lines at these resolutions (frame 3 figure \ref{w_quarter_marginal}) and, remarkably, two prominent early proponents of the feature found it to be marginally better fit by two lines at $\sim$111 GeV and $\sim$129 GeV (Refs.\ \cite{SF12-1} and \cite{SF12-2}).

A more recent (and yet to be resolved) controversy surrounds a 3.5 keV line-like feature in data from X-ray telescopes. 
The bulk of the original study that reported the line (Ref.\ \cite{Bulbul14}) concerned the stacked spectrum of a large number of nearby galaxies and galaxy clusters\footnote{The advantages of a stacked study are two-fold.
The available sample size is enlarged and, as data from many different redshifts are combined in the rest frame of the source, spectral features with an instrumental origin are smeared over whilst features with a physical origin are superposed; instrumental error is suppressed and physical features are emphasised.} with the CCD instruments aboard the XXM-Newton satellite.
In this stacked study, Ref.\ \cite{Bulbul14} found a faint line (a 1\% bump above the background continuum) with an unresolved profile, consistent with the decay of a 7 keV sterile neutrino. 
Like $\gamma$-ray calorimeters, the CCD instruments employed are capable of single photon energy measurements, albeit at a relatively low $\sim$100 eV full-width-half-maximum (FWHM) resolution in this range.
This is considerably larger than the $\sim$$0.1\%\times3.5$ keV width of the hypothetical sterile neutrino line, and so the effects of quantum nonequilibrium could also be significant in this context.
Whilst the authors reported the discovery with relatively large ($>3\sigma$) significance, they also noted some apparent anomalies in their data.
Chief amongst these was the fact that the two separate XXM-Newton CCD instruments used for the stacked study (the MOS and the pn) found best-fit energies at $3.57\pm0.02$ keV and $3.51\pm0.03$ keV respectively, a disagreement at a 2.8$\sigma$ level \cite{Bulbul14}.
Also, the best fit flux from the Perseus Cluster appeared anomalously large in data sets from both XMM-Newton and Chandra instruments.
(This high Perseus flux was confirmed by an independent study using different XMM-Newton data \cite{Boyarsky14}.)
Both of these anomalies were however mitigated if the flux of nearby K and Ar atomic transition lines were allowed to vary from their theoretical values, albeit by at least an order of magnitude.
Since then, a number of related studies have taken place using the CCD instruments aboard the XMM-Newton, Chandra and Suzaku satellites, and the Cadmium-Zinc-Telluride instruments aboard the NuStar satellite.
These paint a confusing picture, with varying levels of detection certainty\footnote{Recently, Ref.\ \cite{NME16} reported an 11$\sigma$ detection in NuStar data from the COSMOS and ECDFS empty sky fields, where a signal from the Milky-Way dark matter halo is expected.} and notable non-detections as for instance in Refs.\ \cite{A15,MNE14,JP16,R16}.
For comprehensive reviews of these studies see Refs.\ \cite{I16,White_Paper17}.
Much of the uncertainty stems from two nearby K XVIII lines at 3.47 keV and 3.51 keV \cite{White_Paper17}.
These lines are difficult to distinguish from the putative sterile neutrino line with $\sim$100 eV resolution instruments.
This matter should have been put to rest with the launch of the Hitomi satellite in February 2016, which carried aboard it the high-resolution SXS instrument, capable of a resolution of $\Delta E\simeq 5$ eV FWHM \cite{Hitomi16}, close to that of the theoretical $0.1\%\times 3.5$ keV width of the sterile neutrino line. 
Unfortunately, the Hitomi satellite was lost a month into its mission. Before this, however, it did manage to collect some preliminary data on the Perseus cluster \cite{Hitomi_Perseus,Hitomi16}. Hitomi did not detect the 3.5 keV line at the anomalously high flux level detected by Refs.\ \cite{Bulbul14,Boyarsky14}, ruling it out with $>99\%$ confidence \cite{Hitomi16}.
The Hitomi study did not rule out the lower flux level of the stacked sample in Ref. \cite{Bulbul14} however, so it is still possible this exists. It also gave no explanation of the reason behind the high flux observations in Refs. \cite{Bulbul14,Boyarsky14}.
The next generation of high spectral resolution telescopes (XRISM, Lynx, Athena) are due to become operational from the early 2020s onwards.

For our purposes, the discourse that has surrounded the 3.5 keV feature raises a number of relevant points. 
For example, since stacked studies combine the notional line at a range of redshifts, and telescope response is dependent upon photon energy, these studies experience an averaging effect in the observable consequences of quantum nonequilibrium (similar to the smearing of instrumental error). 
Furthermore, since quantum nonequilibrium is an ensemble property, it is dependent upon how the ensemble is defined. 
There is no a priori reason, for instance, why two galaxy clusters would produce the same line profile. 
If galactic nonequilibrium distributions did indeed differ, then a stacked study would also average over these, further suppressing nonequilibrium signatures. 
In principle then, quantum nonequilibrium could explain why the single Perseus source produced an anomalous signal that was not observed in the stacked sample. 
Since the observed nonequilibrium signatures are dependent upon the instrument used, it could also explain why the MOS and pn instruments disagreed about the line energy in the stacked spectrum and also why the higher resolution Hitomi instrument did not observe the anomalous Perseus signal.

Indeed, since quantum nonequilibrium is consistent such a wide range of different anomalies\footnote{Even the history of observations of the 511 keV electron-positron annihilation line in the galactic bulge (whose existence is not controversial but which has at times been argued to be of dark matter origin \cite{BHSC03,HW04}) can be argued to be consistent with quantum nonequilibrium. Although high resolution ($\Delta E\sim2$ keV FWHM) INTEGRAL-SPI observations now place the line at energy $510.954\pm0.075$ keV with a width of $2.37\pm0.25$ keV FWHM \cite{Churazov04}, early observations of the line were conflicting. Low resolution balloon observations of the galactic centre taken in the 1970s placed the line at $476\pm24$ keV \cite{Johnson73} and at $530\pm11$ keV \cite{Haymes75}, and contrary reports of the line flux produced suggestions of time-variation \cite{Leventhal82}--all in principle possible as a result quantum nonequilibrium affecting low resolution spectral measurements.}, it may never be as credible as competing explanations with more restricted predictions. 
Even the strongest evidence we have described could be explained in terms of instrument error, and would differ from telescope to telescope. 
This is not an insurmountable barrier to the discovery of particle quantum nonequilibrium, however. 
As we have noted, we would expect the discovery of a line to be confirmed before any significant inquiry into the cause of an anomalous profile.
Indeed, for the reasons we have outlined, it may be the case that a confirmed line discovery will be delayed until a telescope that is capable of resolving the line width is available (in order to minimise the effects of nonequilibrium).
Before such a time, if consistently anomalous signals were detected in lower resolution telescopes, the presence of quantum nonequilibrium could begin to be speculated upon and this might inspire a more direct approach to its detection. 
A definitive proof of the existence of quantum nonequilibrium could be arrived at by subjecting the signal photons to a specifically quantum-mechanical experiment.
One suggestion along these lines is to look for deviations from Malus' law in the polarization probabilities of the signal photons \cite{AV04a}.
There are in fact many possible such experiments that could work. Even a simple double slit experiment, for instance, would show anomalous results--a blurring of the interference pattern--in the presence of quantum nonequilibrium. 
If the incoming photons show anomalies under such tests then there could be little doubt that the Born rule has been violated and that other interpretations of quantum mechanics have failed.

There are clearly many ways in which our effects could manifest in the search for dark matter, and there are many practical reasons why our effects could turn out to be obscured even if they exist. More work remains to be done on these matters.
Only time will tell whether any of the scenarios we have outlined prove to be informative.


\chapter{SUGGESTED FUTURE RESEARCH DIRECTIONS}\label{6}
Conclusions for chapters \ref{2} through \ref{5} are drawn in sections \ref{2.6}, \ref{sec:conclusion}, \ref{Chapter4_conclusion}, and \ref{5.4} respectively.
The purpose of this chapter is to briefly highlight promising future research projects. 

\section{For Chapter \ref{2}}
Chapter \ref{2} develops classical and quantum relax theories from a principle of (information) entropy conservation. 
Entropy conservation is shown to guarantee that arbitrary ensembles of trajectories relax towards an equilibrium distribution.
This equilibrium distribution is shown to correspond to a density of states within the state space. 
The imposition of entropy conservation is shown to lead to a theoretical framework for physical theories dubbed the iRelax framework.
Classical mechanics and de Broglie-Bohm quantum mechanics are arrived at through separate considerations of the iRelax framework.
To the author's mind, the issues covered in chapter \ref{2} point to three main topics for further research.\\
{\bf\color{red!50!black} i.....}
A discovery of quantum nonequilibrium would be unquestionably momentous. So predictions of ways in which quantum nonequilibrium could realistically be observed would be of great value. This task is taken up by chapters \ref{3}, \ref{4}, and \ref{5}.\\
{\bf\color{red!50!black} ii.....}
Chapter \ref{2} shows how entropy conservation ensures the thermodynamic relaxation of trajectories in the state space. It also, \emph{almost}, shows that entropy conservation implies the law of evolution must specify trajectories in the first place. Indeed it does so for the discrete case. 
But for continuous state spaces there is a missing link in the proof that appears to arise as a result of the divergence of entropy for $\rho(x)$s with Dirac deltas. See sections \hyperref[sec:state_propagators]{2.3.2} and \hyperref[sec:modified_state_propagators]{2.4.2} for discussion. A clarification of this point be helpful.\\
{\bf\color{red!50!black} iii.....}
De Broglie-Bohm theory is only a partial fit for the iRelax framework. In order to fully determine future evolution in a de Broglie-Bohm theory, both the configuration and the quantum state must be specified.
But at present, relaxation is only understood on the level of the configuration. 
The quantum state is generally assumed to be shared by all ensemble members.
This suggests that a unified approach to de Broglie-Bohm may be in order, in which the quantum state and the configuration are treated on the same footing.

\section{For Chapter \ref{3}}
Chapter \ref{3} considers whether full relaxation may be impeded or prevented for sufficiently `extreme' forms of quantum nonequilibrium.
It introduces the method of the `drift field', which is tailored to exploit the regular behavior of trajectories far away from the chaos-inducing nodal vortices.
It also contains a systematic treatment of nodes and how their properties may be deduced from the quantum state parameters.
It defines categories of quantum states, including one in particular that inhibits (or possibly even prevents) total relaxation.
Chapter \ref{3} prompts several relatively easy topics to investigate.\\
{\bf\color{red!50!black} i.....}
The relaxation-retarding type-0 drift fields certainly slow relaxation, but it is not yet clear whether relaxation will eventually complete, or whether it is truly prevented.
If relaxation was fully prevented, this would lend weight to the prospect of the discovery of quantum nonequilibrium.\\
{\bf\color{red!50!black} ii.....}
The drift field is a tool that may be used to analyze a great many quantum states without resorting to high performance computing. It is hoped that researchers may use the method to further probe quantum systems of their choosing. The author would be happy to supply the Fortran and Mathematica codes developed for chapter \ref{3} to those interested in pursuing this line of research.\\
{\bf\color{red!50!black} iii.....}
In order to restrict the field of investigation, it was helpful to attach a condition of `no fine-tuning' to those quantum states under investigation. 
However most quantum states so far studied in the literature do not satisfy this condition.
In the course of investigation it was found that quantum states featuring fine-tuning along the lines described can produce highly `unusual' or `pathological' drift fields. (The four-state superposition used in reference \cite{ACV14} produces such behavior for instance.) It would be informative to relax or replace the no fine-tuning condition in order to study such states. It may also be informative to study systems with a suitable truncated thermal spectrum, $\sim e^{E_i/kT}$, so that the temperature dependence of relaxation may be studied.  \\
{\bf\color{red!50!black} iv.....}
Chapter \ref{2} lists and describes a number of properties of nodes of the wave function(al), including some that are previously known. But it does so under the dual assumptions of no fine-tuning and the system in question being the two-dimensional isotropic oscillator. It would be interesting and likely useful to attempt to generalize these results. 

\section{For Chapter \ref{4}}
Chapter \ref{4} is the most speculative, but also the most intriguing chapter, as it considers possible ways in which quantum nonequilibrium could be preserved in the statistics of relic cosmological particles. 
A reliable estimate, however, relies on several uncertain areas in contemporary physics--beyond the standard model particle physics, early Universe cosmology, and of course uncertain factors in the generality of quantum relaxation (especially on cosmological scales).
Some possible ways to develop these ideas may include the following.\\
{\bf\color{red!50!black} i.....}
Since for the time being uncertain factors make a prediction of nonequilibrium surviving in relic particles difficult, it may be wise instead to develop a realistic phenomenology of quantum nonequilibrium in experiments. This way nonequilibrium may be easier to recognize if it does show up in experiment. This is the route taken by chapter \ref{5}.\\
{\bf\color{red!50!black} ii.....}
Chapter \ref{4} considers both nonequilibrium particle production by inflaton decay during reheating and the possibility of conformal vacua retaining nonequilibrium. But there may be other means by which systems could remain un-relaxed in the early Universe.
During preheating, for example, the classical part of the inflaton field creates matter through parametric resonance.
If particles are created in a field that is initially in a nonequilibrium vacuum, then these particles may `pick up' some nonequilibrium from the vacuum. Of course, such particles would have to satisfy various other criteria so that they avoid subsequent relaxation, are numerous enough to be observable, and are stable enough to avoid decay before recombination etc.\\
{\bf\color{red!50!black} iii.....}
It would be interesting to see how a single complicated system (the Universe) may reproduce nonequilibrium and equilibrium statistics in the sub-systems from which it is composed.
To the author's knowledge there is yet to be a single study to consider this topic, and so even a toy-model may highlight some unanticipated behavior. 
Such a study would also help to pin-down how nonequilibrium in the early Universe could be frozen-in, a potentiality that is yet to be fully explored.

\vspace{-5mm}
\section{For Chapter \ref{5}}
\vspace{-3mm}
Chapter \ref{5} describes how the (hypothetical) smoking-gun line spectra of dark matter annihilation/decay may be altered if the dark matter exists in a state of quantum nonequilibrium.
This work is intended to inform the indirect search for dark matter. 
If a nonequilibrium signal were detected, then a number of things could happen. 
Lines could appear widened, shifted, or feature structure and shape on a scale that should not be resolvable. 
More to the point, telescopes would likely disagree with each other on the position and shape of the line. 
A point that is related to contextuality in quantum measurements.
If systematic error could be ruled out, then such a disagreement between telescopes could be statistically significant enough to warrant further investigation.
Fortunately there are specifically quantum mechanical experiments that could be performed to settle the matter unequivocally.
These should be relatively simple to perform once a suspected source of nonequilibrium is identified.
Another nonequilibrium signature that could prompt such suspicion is an anomalous width. 
There appears to be a distinct possibility to record the width of a line to be smaller than the telescope should be capable of resolving.  
This possibility of nonequilibrium, again, has no competing explanation apart from systematic error, and so has the potential to raise significant suspicions.

\bibliography{../main_citations}

\begin{thebibliography}{100}

\bibitem{NU_In_preparation}
N.~G. Underwood.
\newblock In preparation.

\bibitem{NU18}
Nicolas~G. Underwood.
\newblock {Extreme quantum nonequilibrium, nodes, vorticity, drift and
  relaxation retarding states}.
\newblock {\em J. Phys.},
  \href{http://dx.doi.org/10.1088/1751-8121/aa9e97}{A51(5):055301}, 2018,
  \href{https://arxiv.org/abs/1705.06757}{1705.06757}.

\bibitem{UV15}
N.~G. Underwood and A.~Valentini.
\newblock {Quantum field theory of relic nonequilibrium systems}.
\newblock {\em Phys. Rev. D},
  \href{http://dx.doi.org/10.1103/PhysRevD.92.063531}{92(6):063531}, 2015,
  \href{https://arxiv.org/abs/1409.6817}{1409.6817}.

\bibitem{UV16}
N.~G. Underwood and A.~Valentini.
\newblock {Anomalous spectral lines and relic quantum nonequilibrium}.
\newblock 2016, \href{https://arxiv.org/abs/1609.04576}{1609.04576}.

\bibitem{H67}
D.T. Haar.
\newblock {\em The Old Quantum Theory}.
\newblock Pergamon Press, 1967.

\bibitem{Planck}
M.~Planck.
\newblock {Zur Theorie des Gesetzes der Energieverteilung im Normalspectrum}.
\newblock {\em Verhand. Dtsch. Phys. Ges.}, page 237, 1900.
\newblock English translation in ref. \cite{H67}.

\bibitem{E1905_light_quanta}
A.~Einstein.
\newblock \"{U}ber einen die erzeugung und verwandlung des lichtes betreffenden
  heuristischen gesichtspunkt.
\newblock {\em Annal. der Phys.},
  \href{http://dx.doi.org/10.1002/andp.19053220607}{322(6):132--148}, 1905.
\newblock English translation in ref. \cite{H67}.

\bibitem{B1913}
N.~Bohr.
\newblock {LXXIII. On the constitution of atoms and molecules}.
\newblock {\em Philos. Mag.},
  \href{http://dx.doi.org/10.1080/14786441308635031}{26(155):857--875}, 1913.

\bibitem{H25}
W.~Heisenberg.
\newblock {{\"U}ber quantentheoretische Umdeutung kinematischer und
  mechanischer Beziehungen.}
\newblock {\em {Zeitschrift f{\"u}r Physik}},
  \href{http://dx.doi.org/10.1007/BF01328377}{33(1):879--893}, 1925.
\newblock (English translation in ref.\ \cite{english_translation_of_german}).

\bibitem{Dirac_biography}
H.~Kragh.
\newblock {Dirac, Paul Adrien Maurice (1902--1984), theoretical physicist}.
\newblock \href{http://dx.doi.org/10.1093/ref:odnb/31032}{{\em {Oxford
  Dictionary of National Biography}}}, 2006.

\bibitem{D25}
P.~A.~M. Dirac.
\newblock The fundamental equations of quantum mechanics.
\newblock \href{http://dx.doi.org/10.1098/rspa.1925.0150}{{\em Proc. R. Soc.
  Lond. Series A}}, 1925.

\bibitem{BV09}
G.~Bacciagaluppi and A.~Valentini.
\newblock {\em {Quantum Theory at the Crossroads: Reconsidering the 1927 Solvay
  Conference}}.
\newblock Cambridge University Press, Cambridge, 2009,
  \href{https://arxiv.org/abs/quant-ph/0609184}{quant-ph/0609184}.

\bibitem{Beller}
M.~Beller.
\newblock {\em {Quantum Dialogue: The Making of a Revolution}}.
\newblock Science and Its Conceptual Foundations S. University of Chicago
  Press, 1999.

\bibitem{Dirac}
P.~A.~M. Dirac.
\newblock {\em {The Principles of Quantum Mechanics}}.
\newblock Clarendon, 1930.

\bibitem{BHJ26}
M.~Born, W.~Heisenberg, and P.~Jordan.
\newblock {Zur Quantenmechanik. II.}
\newblock {\em Zeitschrift f{\"u}r Physik},
  \href{http://dx.doi.org/10.1007/BF01379806}{35(8):557--615}, 1926.

\bibitem{D27a}
P.~A.~M. Dirac.
\newblock The quantum theory of the emission and absorption of radiation.
\newblock {\em Proc. Roy. Soc. Lond. Series A},
  \href{http://dx.doi.org/10.1098/rspa.1927.0039}{114}, 1927.

\bibitem{W77}
Steven Weinberg.
\newblock The search for unity: Notes for a history of quantum field theory.
\newblock {\em Daedalus}, 106(4):17--35, 1977.

\bibitem{G26}
W.~Gordon.
\newblock {Der Comptoneffekt nach der Schr{\"o}dingerschen Theorie}.
\newblock {\em Zeitschrift f{\"u}r Physik},
  \href{http://dx.doi.org/10.1007/BF01390840}{40(1):117--133}, 1926.

\bibitem{K27}
O.~Klein.
\newblock Elektrodynamik und wellenmechanik vom standpunkt des
  korrespondenzprinzips.
\newblock {\em Zeitschrift f{\"u}r Physik A Hadrons and nuclei},
  \href{http://dx.doi.org/10.1007/BF01400205}{41(6):407--442}, 1927.

\bibitem{D28a}
P.~A.~M. Dirac.
\newblock The quantum theory of the electron.
\newblock {\em Proceedings of the Royal Society of London. Series A, Containing
  Papers of a Mathematical and Physical Character},
  \href{http://dx.doi.org/10.1098/rspa.1928.0023}{117(778):610--624}, 1928.

\bibitem{D28b}
P.~A.~M. Dirac.
\newblock {The quantum theory of the electron. Part II}.
\newblock {\em Proceedings of the Royal Society of London. Series A, Containing
  Papers of a Mathematical and Physical Character},
  \href{http://dx.doi.org/10.1098/rspa.1928.0056}{118(779):351--361}, 1928.

\bibitem{D30}
Paul Adrien~Maurice Dirac and Ralph~Howard Fowler.
\newblock A theory of electrons and protons.
\newblock {\em Proceedings of the Royal Society of London. Series A, Containing
  Papers of a Mathematical and Physical Character},
  \href{http://dx.doi.org/10.1098/rspa.1930.0013}{126(801):360--365}, 1930.

\bibitem{A32}
C.~D. Anderson.
\newblock The apparent existence of easily deflectable positives.
\newblock {\em Science},
  \href{http://dx.doi.org/10.1126/science.76.1967.238}{76(1967):238--239},
  1932.

\bibitem{JW28}
P.~Jordan and E.~Wigner.
\newblock {{\"U}ber das Paulische {\"A}quivalenzverbot}.
\newblock {\em Zeitschrift f{\"u}r Physik},
  \href{http://dx.doi.org/10.1007/BF01331938}{47(9):631--651}, 1928.

\bibitem{HP29}
W.~Heisenberg and W.~Pauli.
\newblock {Zur Quantendynamik der Wellenfelder}.
\newblock {\em Zeitschrift f{\"u}r Physik},
  \href{http://dx.doi.org/10.1007/BF01340129}{56(1):1--61}, 1929.

\bibitem{HP30}
W.~Heisenberg and W.~Pauli.
\newblock {Zur Quantentheorie der Wellenfelder. II}.
\newblock {\em Zeitschrift f{\"u}r Physik},
  \href{http://dx.doi.org/10.1007/BF01341423}{59(3):168--190}, 1930.

\bibitem{F29}
E.~Fermi.
\newblock Sopra l'elettrodinamica quantistica.
\newblock {\em Rend. Lincei}, 5:881--997, 1929.
\newblock (may be found on page 305 of ref. \cite{Fermi_collected}).

\bibitem{FO34}
W.~H. Furry and J.~R. Oppenheimer.
\newblock On the theory of the electron and positive.
\newblock {\em Phys. Rev.},
  \href{http://dx.doi.org/10.1103/PhysRev.45.245}{45:245--262}, 1934.

\bibitem{PW34}
W.~Pauli and V.~Weisskopf.
\newblock {\"{U}ber die Quantisierung der skalaren relativistischen
  Wellengleichung}.
\newblock {\em Helvetica Physica Acta},
  \href{http://dx.doi.org/10.5169/seals-110395}{7:709}, 1934.

\bibitem{dB25}
L.~V. P.~R. de~Broglie.
\newblock {Recherches sur la théorie des quanta}.
\newblock {\em Annals Phys.}, 2:22--128, 1925.

\bibitem{S35}
E.~Schr{\"o}dinger.
\newblock {Die gegenw{\"a}rtige Situation in der Quantenmechanik}.
\newblock {\em Naturwissenschaften},
  \href{http://dx.doi.org/10.1007/BF01491891}{23(48):807--812}, Nov 1935.

\bibitem{L_quote}
M.~S. Leifer.
\newblock {Foundations of Quantum Mechanics: Lecture 2}.
\newblock \href{http://pirsa.org/17010034}{PIRSA:17010034}.

\bibitem{WZ82}
W.~K. Wootters and W.~H. Zurek.
\newblock {A single quantum cannot be cloned}.
\newblock {\em Nature},
  \href{http://dx.doi.org/10.1038/299802a0}{299:802--803}, 1982.

\bibitem{D82}
D.~Dieks.
\newblock {Communication by EPR Devices}.
\newblock {\em Phys. Lett.},
  \href{http://dx.doi.org/10.1016/0375-9601(82)90084-6}{A92:271--272}, 1982.

\bibitem{quantum_teleportation}
C.~H. Bennett, G.~Brassard, C.~Cr\'epeau, R.~Jozsa, A.~Peres, and W.~K.
  Wootters.
\newblock Teleporting an unknown quantum state via dual classical and
  einstein-podolsky-rosen channels.
\newblock {\em Phys. Rev. Lett.},
  \href{http://dx.doi.org/10.1103/PhysRevLett.70.1895}{70:1895--1899}, Mar
  1993.

\bibitem{first_experimental_quantum_teleportation}
D.~{Bouwmeester}, J.-W. {Pan}, K.~{Mattle}, M.~{Eibl}, H.~{Weinfurter}, and
  A.~{Zeilinger}.
\newblock {Experimental quantum teleportation}.
\newblock {\em Nature}, \href{http://dx.doi.org/10.1038/37539}{390:575--579},
  1997, \href{https://arxiv.org/abs/1901.11004}{1901.11004}.

\bibitem{NAS_quantum_computing_progress_report}
Consensus~Study Report.
\newblock Quantum computing: Progress and prospects.
\newblock \href{http://dx.doi.org/10.17226/25196}{{\em National Academies
  Press}}, 2018.

\bibitem{vN32}
J.~von Neumann.
\newblock {\em Mathematische Grundlagen der Quantenmechanik}.
\newblock Springer, 1932.
\newblock (translated into English in ref. \cite{vN55}).

\bibitem{B66}
J.~S. Bell.
\newblock On the problem of hidden variables in quantum mechanics.
\newblock {\em Rev. Mod. Phys.},
  \href{http://dx.doi.org/10.1103/RevModPhys.38.447}{38:447--452}, Jul 1966.

\bibitem{KS67}
S.~Kochen and E.~P. Specker.
\newblock The problem of hidden variables in quantum mechanics.
\newblock {\em Indiana Univ. Math. J.}, 17:59--87, 1967.

\bibitem{B52a}
D.~Bohm.
\newblock {A Suggested Interpretation of the Quantum Theory in Terms of
  ``Hidden'' Variables. I}.
\newblock {\em Phys. Rev.},
  \href{http://dx.doi.org/10.1103/PhysRev.85.166}{85:166--179}, 1952.

\bibitem{B52b}
D.~Bohm.
\newblock {A Suggested Interpretation of the Quantum Theory in Terms of
  ``Hidden'' Variables. {II}}.
\newblock {\em Phys. Rev.},
  \href{http://dx.doi.org/10.1103/PhysRev.85.180}{85:180--193}, 1952.

\bibitem{B64}
J.~S. Bell.
\newblock {On the Einstein Podolsky Rosen paradox}.
\newblock {\em Physics Physique Fizika},
  \href{http://dx.doi.org/10.1103/PhysicsPhysiqueFizika.1.195}{1:195--200},
  1964.

\bibitem{CHSH69}
J.~F. Clauser, M.~A. Horne, A.~Shimony, and R.~A. Holt.
\newblock Proposed experiment to test local hidden-variable theories.
\newblock {\em Phys. Rev. Lett.},
  \href{http://dx.doi.org/10.1103/PhysRevLett.23.880}{23:880--884}, 1969.

\bibitem{S83}
A.~Shimony.
\newblock Controllable and uncontrollable non-locality.
\newblock In S.~Kamefuchi, editor, {\em Foundations of quantum mechanics in the
  light of new technology}. 1983.

\bibitem{W05}
H.~M. {Wiseman}.
\newblock {From Einstein's theorem to Bell's theorem: a history of quantum
  non-locality}.
\newblock {\em Contemporary Physics},
  \href{http://dx.doi.org/10.1080/00107510600581011}{47:79--88}, Apr 2006,
  \href{https://arxiv.org/abs/quant-ph/0509061}{quant-ph/0509061}.

\bibitem{BCT99}
G.~Brassard, R.~Cleve, and A.~Tapp.
\newblock {The cost of exactly simulating quantum entanglement with classical
  communication}.
\newblock {\em Phys. Rev. Lett.},
  \href{http://dx.doi.org/10.1103/PhysRevLett.83.1874}{83:1874--1877}, 1999,
  \href{https://arxiv.org/abs/quant-ph/9901035}{quant-ph/9901035}.

\bibitem{BBT05}
G.~{Brassard}, A.~{Broadbent}, and A.~{Tapp}.
\newblock {Quantum Pseudo-Telepathy}.
\newblock {\em Foundations of Physics},
  \href{http://dx.doi.org/10.1007/s10701-005-7353-4}{35:1877--1907}, Nov 2005,
  \href{https://arxiv.org/abs/quant-ph/0407221}{quant-ph/0407221}.

\bibitem{JW05}
Kurt {Jacobs} and Howard~M. {Wiseman}.
\newblock {An entangled web of crime: Bell's theorem as a short story}.
\newblock {\em American Journal of Physics},
  \href{http://dx.doi.org/10.1119/1.1979499}{73:932--937}, Oct 2005,
  \href{https://arxiv.org/abs/quant-ph/0504192}{quant-ph/0504192}.

\bibitem{AV92}
A.~Valentini.
\newblock {\em {On the Pilot-Wave Theory of Classical, Quantum and Subquantum
  Physics}}.
\newblock PhD thesis, International School for Advanced Studies, Trieste,
  Italy, 1992.

\bibitem{NU_straw_men_and_covert_assumptions}
N.~G. Underwood.
\newblock In preparation.

\bibitem{ADR82}
A.~Aspect, J.~Dalibard, and G.~Roger.
\newblock {Experimental Test of Bell's Inequalities Using Time-Varying
  Analyzers}.
\newblock {\em Phys. Rev. Lett.},
  \href{http://dx.doi.org/10.1103/PhysRevLett.49.1804}{49:1804--1807}, Dec
  1982.

\bibitem{BCP14}
Nicolas {Brunner}, Daniel {Cavalcanti}, Stefano {Pironio}, Valerio {Scarani},
  and Stephanie {Wehner}.
\newblock {Bell nonlocality}.
\newblock {\em Reviews of Modern Physics},
  \href{http://dx.doi.org/10.1103/RevModPhys.86.419}{86:419--478}, Apr 2014,
  \href{https://arxiv.org/abs/1303.2849}{1303.2849}.

\bibitem{Jaynesbook}
E.T. Jaynes.
\newblock {\em {Probability Theory: The Logic of Science}}.
\newblock Cambridge University Press, 2003.

\bibitem{P53}
W.~Pauli.
\newblock Les savants du monde / dirig{\'e}e par Andr{\'e} George. Albin
  Michel, 1953.

\bibitem{K53}
J.~B. Keller.
\newblock {Bohm's Interpretation of the Quantum Theory in Terms of ``Hidden''
  Variables}.
\newblock {\em Phys. Rev.},
  \href{http://dx.doi.org/10.1103/PhysRev.89.1040}{89:1040--1041}, 1953.

\bibitem{B53}
D.~Bohm.
\newblock {Proof That Probability Density Approaches
  ${|\ensuremath{\psi}|}^{2}$ in Causal Interpretation of the Quantum Theory}.
\newblock {\em Phys. Rev.},
  \href{http://dx.doi.org/10.1103/PhysRev.89.458}{89:458--466}, 1953.

\bibitem{BV54}
D.~Bohm and J.~P. Vigier.
\newblock {Model of the Causal Interpretation of Quantum Theory in Terms of a
  Fluid with Irregular Fluctuations}.
\newblock {\em Phys. Rev.},
  \href{http://dx.doi.org/10.1103/PhysRev.96.208}{96:208--216}, 1954.

\bibitem{Holl93}
P.~R. Holland.
\newblock {\em {The Quantum Theory of Motion: an Account of the de Broglie-Bohm
  Causal Interpretation of Quantum Mechanics}}.
\newblock Cambridge University Press, Cambridge, 1993.

\bibitem{AV91a}
A.~Valentini.
\newblock {Signal-locality, uncertainty, and the subquantum {H}-theorem. {I}}.
\newblock {\em Phys. Lett. A},
  \href{http://dx.doi.org/10.1016/0375-9601(91)90116-P}{156(1–2):5--11},
  1991.

\bibitem{AV91b}
A.~Valentini.
\newblock {Signal-locality, uncertainty, and the subquantum {H}-theorem. {II}}.
\newblock {\em Phys. Lett. A},
  \href{http://dx.doi.org/10.1016/0375-9601(91)90330-B}{158(1–2):1 -- 8},
  1991.

\bibitem{G18}
G.~Ghirardi.
\newblock Collapse theories.
\newblock In Edward~N. Zalta, editor, {\em The Stanford Encyclopedia of
  Philosophy}. Metaphysics Research Lab, Stanford University, fall 2018
  edition, 2018.

\bibitem{GRW86}
G.~C. Ghirardi, A.~Rimini, and T.~Weber.
\newblock Unified dynamics for microscopic and macroscopic systems.
\newblock {\em Phys. Rev. D},
  \href{http://dx.doi.org/10.1103/PhysRevD.34.470}{34:470--491}, Jul 1986.

\bibitem{P96}
R.~Penrose.
\newblock On gravity's role in quantum state reduction.
\newblock {\em General Relativity and Gravitation},
  \href{http://dx.doi.org/10.1007/BF02105068}{28(5):581--600}, May 1996.

\bibitem{deB28}
L.~de~Broglie.
\newblock In {\em \'{E}lectrons et Photons: Rapports et Discussions du
  Cinqui\`{e}me Conseil de Physique}. Gauthier-Villars, Paris, 1928.
\newblock [{E}nglish translation in ref. \cite{BV09}].

\bibitem{VW05}
A.~Valentini and H.~Westman.
\newblock {Dynamical origin of quantum probabilities}.
\newblock {\em Proc. R. Soc. A},
  \href{http://dx.doi.org/10.1098/rspa.2004.1394}{461(2053):253--272}, 2005,
  \href{https://arxiv.org/abs/quant-ph/0403034}{quant-ph/0403034}.

\bibitem{EC06}
C.~{Efthymiopoulos} and G.~{Contopoulos}.
\newblock {Chaos in Bohmian quantum mechanics}.
\newblock {\em J. Phys. A},
  \href{http://dx.doi.org/10.1088/0305-4470/39/8/004}{39:1819--1852}, 2006.

\bibitem{CS10}
S.~{Colin} and W.~{Struyve}.
\newblock {Quantum non-equilibrium and relaxation to equilibrium for a class of
  de Broglie-Bohm-type theories}.
\newblock {\em New J. Phys.},
  \href{http://dx.doi.org/10.1088/1367-2630/12/4/043008}{12(4):043008}, 2010,
  \href{https://arxiv.org/abs/0911.2823}{0911.2823}.

\bibitem{SC12}
S.~Colin.
\newblock {Relaxation to quantum equilibrium for {D}irac fermions in the de
  {Broglie-Bohm pilot-wave} theory}.
\newblock {\em Proc. R. Soc. A},
  \href{http://dx.doi.org/10.1098/rspa.2011.0549}{468(2140):1116--1135}, 2012,
  \href{https://arxiv.org/abs/1108.5496}{1108.5496}.

\bibitem{TRV12}
M.~D. Towler, N.~J. Russell, and A.~Valentini.
\newblock {Time scales for dynamical relaxation to the {B}orn rule}.
\newblock {\em Proc. R. Soc. A},
  \href{http://dx.doi.org/10.1098/rspa.2011.0598}{468(2140):990--1013}, 2012,
  \href{https://arxiv.org/abs/1103.1589}{1103.1589}.

\bibitem{ACV14}
E.~Abraham, S.~Colin, and A.~Valentini.
\newblock {Long-time relaxation in pilot-wave theory}.
\newblock {\em J. Phys. A},
  \href{http://dx.doi.org/10.1088/1751-8113/47/39/395306}{47(39):395306}, 2014,
  \href{https://arxiv.org/abs/1310.1899}{1310.1899}.

\bibitem{AV01}
A.~Valentini.
\newblock In {\em {Chance in Physics: Foundations and Perspectives}}. Springer,
  Berlin, 2001,
  \href{https://arxiv.org/abs/quant-ph/0104067}{quant-ph/0104067}.

\bibitem{KV16}
A.~Kandhadai and A.~Valentini.
\newblock {Perturbations and quantum relaxation}.
\newblock {\em Found. Phys.},
  \href{http://dx.doi.org/10.1007/s10701-018-0227-3}{49(1):1--23}, 2019,
  \href{https://arxiv.org/abs/1609.04485}{1609.04485}.

\bibitem{AV07}
A.~Valentini.
\newblock {Astrophysical and cosmological tests of quantum theory}.
\newblock {\em J. Phys. A},
  \href{http://dx.doi.org/10.1088/1751-8113/40/12/S24}{40:3285--3303}, 2007,
  \href{https://arxiv.org/abs/hep-th/0610032}{hep-th/0610032}.

\bibitem{AV08}
A.~Valentini.
\newblock {De Broglie-Bohm prediction of quantum violations for cosmological
  super-Hubble modes}.
\newblock 2008, \href{https://arxiv.org/abs/0804.4656}{0804.4656}.

\bibitem{CV15}
S.~Colin and A.~Valentini.
\newblock {Primordial quantum nonequilibrium and large-scale cosmic anomalies}.
\newblock {\em Phys. Rev. D},
  \href{http://dx.doi.org/10.1103/PhysRevD.92.043520}{92(4):043520}, 2015,
  \href{https://arxiv.org/abs/1407.8262}{1407.8262}.

\bibitem{EKC07}
C.~{Efthymiopoulos}, C.~{Kalapotharakos}, and G.~{Contopoulos}.
\newblock {Nodal points and the transition from ordered to chaotic Bohmian
  trajectories}.
\newblock {\em J. Phys. A},
  \href{http://dx.doi.org/10.1088/1751-8113/40/43/008}{40:12945--12972}, 2007,
  \href{https://arxiv.org/abs/0709.2038}{0709.2038}.

\bibitem{ECT17}
C.~{Efthymiopoulos}, G.~{Contopoulos}, and A.C. {Tzemos}.
\newblock {Chaos in de Broglie-Bohm quantum mechanics and the dynamics of
  quantum relaxation}.
\newblock {\em Annales de la Fondation Louis de Broglie}, 42(133), 2017,
  \href{https://arxiv.org/abs/1703.09810}{1703.09810}.

\bibitem{AV10}
A.~Valentini.
\newblock {Inflationary cosmology as a probe of primordial quantum mechanics}.
\newblock {\em Phys. Rev. D},
  \href{http://dx.doi.org/10.1103/PhysRevD.82.063513}{82:063513}, 2010,
  \href{https://arxiv.org/abs/0805.0163}{0805.0163}.

\bibitem{CV13}
S.~Colin and A.~Valentini.
\newblock {Mechanism for the suppression of quantum noise at large scales on
  expanding space}.
\newblock {\em Phys. Rev. D},
  \href{http://dx.doi.org/10.1103/PhysRevD.88.103515}{88:103515}, 2013,
  \href{https://arxiv.org/abs/1306.1579}{1306.1579}.

\bibitem{AVbook}
A.~Valentini.
\newblock {\em {Hidden Variables in Modern Physics and Beyond}}.
\newblock Cambridge University Press, Cambridge, forthcoming.

\bibitem{VF82}
A.~Vilenkin and L.~H. Ford.
\newblock {Gravitational effects upon cosmological phase transitions}.
\newblock {\em Phys. Rev. D},
  \href{http://dx.doi.org/10.1103/PhysRevD.26.1231}{26:1231--1241}, 1982.

\bibitem{L82}
A.~D. Linde.
\newblock {Scalar field fluctuations in the expanding universe and the new
  inflationary universe scenario}.
\newblock {\em Phys. Lett. B},
  \href{http://dx.doi.org/10.1016/0370-2693(82)90293-3}{116(5):335 -- 339},
  1982.

\bibitem{S82}
A.~A. Starobinsky.
\newblock {Dynamics of phase transition in the new inflationary universe
  scenario and generation of perturbations }.
\newblock {\em Phys. Lett. B},
  \href{http://dx.doi.org/10.1016/0370-2693(82)90541-X}{117(3–4):175 -- 178},
  1982.

\bibitem{PK07}
B.~A. Powell and W.~H. Kinney.
\newblock {Pre-inflationary vacuum in the cosmic microwave background}.
\newblock {\em Phys. Rev. D},
  \href{http://dx.doi.org/10.1103/PhysRevD.76.063512}{76:063512}, 2007,
  \href{https://arxiv.org/abs/astro-ph/0612006}{astro-ph/0612006}.

\bibitem{WN08}
I.-C. Wang and K.-W. Ng.
\newblock {Effects of a preinflation radiation-dominated epoch to CMB
  anisotropy}.
\newblock {\em Phys. Rev. D},
  \href{http://dx.doi.org/10.1103/PhysRevD.77.083501}{77:083501}, 2008,
  \href{https://arxiv.org/abs/0704.2095}{0704.2095}.

\bibitem{VPV19}
S.~D.~P. Vitenti, P.~Peter, and A.~Valentini.
\newblock {Modeling the large-scale power deficit}.
\newblock 2019, \href{https://arxiv.org/abs/1901.08885}{1901.08885}.

\bibitem{PlanckXV}
P.~A.~R. Ade et~al.
\newblock {Planck 2013 results. XV. CMB power spectra and likelihood}.
\newblock {\em Astron. Astrophys.},
  \href{http://dx.doi.org/10.1051/0004-6361/201321573}{571:A15}, 2014,
  \href{https://arxiv.org/abs/1303.5075}{1303.5075}.

\bibitem{PlanckXI}
N.~Aghanim et~al.
\newblock {Planck 2015 results. XI. CMB power spectra, likelihoods, and
  robustness of parameters}.
\newblock {\em Astron. Astrophys.},
  \href{http://dx.doi.org/10.1051/0004-6361/201526926}{594:A11}, 2016,
  \href{https://arxiv.org/abs/1507.02704}{1507.02704}.

\bibitem{AV15}
A.~Valentini.
\newblock {Statistical anisotropy and cosmological quantum relaxation}.
\newblock 2015, \href{https://arxiv.org/abs/1510.02523}{1510.02523}.

\bibitem{EHT06}
M.~Endo, K.~Hamaguchi, and F.~Takahashi.
\newblock {Moduli-Induced Gravitino Problem}.
\newblock {\em Phys. Rev. Lett.},
  \href{http://dx.doi.org/10.1103/PhysRevLett.96.211301}{96:211301}, 2006,
  \href{https://arxiv.org/abs/hep-ph/0602061}{hep-ph/0602061}.

\bibitem{KTY06}
M.~Kawasaki, F.~Takahashi, and T.~T. Yanagida.
\newblock {Gravitino-overproduction problem in an inflationary universe}.
\newblock {\em Phys. Rev. D},
  \href{http://dx.doi.org/10.1103/PhysRevD.74.043519}{74:043519}, 2006,
  \href{https://arxiv.org/abs/hep-ph/0605297}{hep-ph/0605297}.

\bibitem{ETY07}
M.~Endo, F.~Takahashi, and T.~T. Yanagida.
\newblock {Inflaton decay in supergravity}.
\newblock {\em Phys. Rev. D},
  \href{http://dx.doi.org/10.1103/PhysRevD.76.083509}{76:083509}, 2007,
  \href{https://arxiv.org/abs/0706.0986}{0706.0986}.

\bibitem{T08}
F.~Takahashi.
\newblock {Gravitino dark matter from inflaton decay }.
\newblock {\em Phys. Lett. B},
  \href{http://dx.doi.org/10.1016/j.physletb.2007.12.048}{660(3):100 -- 106},
  2008, \href{https://arxiv.org/abs/0705.0579}{0705.0579}.

\bibitem{EO13}
J.~Ellis and K.~A. Olive.
\newblock {Supersymmetric DM candidates}.
\newblock In Gianfranco Bertone, editor, {\em Particle Dark Matter:
  Observations, Models and Searches}. Cambridge University Press, Cambridge,
  2013, \href{https://arxiv.org/abs/1001.3651}{1001.3651}.

\bibitem{C14}
L.~Covi.
\newblock {Gravitino Dark Matter confronts LHC}.
\newblock {\em J. Phys: Conf. Series}, 485(1):012002, 2014.

\bibitem{BS88}
L.~Bergstr\"{o}m and H.~Snellman.
\newblock {Observable monochromatic photons from cosmic photino annihilation}.
\newblock {\em Phys. Rev. D},
  \href{http://dx.doi.org/10.1103/PhysRevD.37.3737}{37:3737--3741}, 1988.

\bibitem{Rudaz89}
S.~Rudaz.
\newblock {On the Annihilation of Heavy Neutral Fermion Pairs Into
  Monochromatic gamma-rays and Its Astrophysical Implications}.
\newblock {\em Phys. Rev.},
  \href{http://dx.doi.org/10.1103/PhysRevD.39.3549}{D39:3549}, 1989.

\bibitem{B97}
L.~Bergstr\"{o}m and P.~Ullio.
\newblock {Full one-loop calculation of neutralino annihilation into two
  photons}.
\newblock {\em Nucl. Phys.},
  \href{http://dx.doi.org/http://dx.doi.org/10.1016/S0550-3213(97)00530-0}{504(1):27
  -- 44}, 1997.

\bibitem{B04}
L.~Bergstr\"{o}m, T.~Bringmann, M.~Eriksson, and M.~Gustafsson.
\newblock {Two photon annihilation of Kaluza-Klein dark matter}.
\newblock {\em J. Cosmol. Astropart. Phys.},
  \href{http://dx.doi.org/10.1088/1475-7516/2005/04/004}{0504:004}, 2005,
  \href{https://arxiv.org/abs/hep-ph/0412001}{hep-ph/0412001}.

\bibitem{Creview16}
J.~{Conrad}, J.~{Cohen-Tanugi}, and L.~E. {Strigari}.
\newblock {Wimp searches with gamma rays in the Fermi era: Challenges, methods
  and results}.
\newblock {\em J. Exp. Theor. Phys.},
  \href{http://dx.doi.org/10.1134/S1063776115130099}{121:1104--1135}, 2015,
  \href{https://arxiv.org/abs/1503.06348}{1503.06348}.

\bibitem{DW94}
S.~{Dodelson} and L.~M. {Widrow}.
\newblock {Sterile neutrinos as dark matter}.
\newblock {\em Phys. Rev. Lett.},
  \href{http://dx.doi.org/10.1103/PhysRevLett.72.17}{72:17--20}, 1994,
  \href{https://arxiv.org/abs/hep-ph/9303287}{hep-ph/9303287}.

\bibitem{Abazajian01}
K.~{Abazajian}, G.M. {Fuller}, and W.H. {Tucker}.
\newblock {Direct Detection of Warm Dark Matter in the X-Ray}.
\newblock {\em Astrophys. J.},
  \href{http://dx.doi.org/10.1086/323867}{562:593--604}, 2001,
  \href{https://arxiv.org/abs/astro-ph/0106002}{astro-ph/0106002}.

\bibitem{fermi3.7}
M.~Ackermann et~al.
\newblock {Search for gamma-ray spectral lines with the Fermi Large Area
  Telescope and dark matter implications}.
\newblock {\em Phys. Rev. D},
  \href{http://dx.doi.org/10.1103/PhysRevD.88.082002}{88:082002}, 2013,
  \href{https://arxiv.org/abs/1305.5597}{1305.5597}.

\bibitem{fermi5.8}
M.~{Ackermann} et~al.
\newblock {Updated search for spectral lines from Galactic dark matter
  interactions with pass 8 data from the Fermi Large Area Telescope}.
\newblock {\em Phys. Rev. D},
  \href{http://dx.doi.org/10.1103/PhysRevD.91.122002}{91(12):122002}, 2015,
  \href{https://arxiv.org/abs/1506.00013}{1506.00013}.

\bibitem{Pull07}
A.R. {Pullen}, R.-R. {Chary}, and M.~{Kamionkowski}.
\newblock {Search with EGRET for a gamma ray line from the Galactic center}.
\newblock {\em Phys. Rev. D},
  \href{http://dx.doi.org/10.1103/PhysRevD.76.063006}{76(6):063006}, 2007,
  \href{https://arxiv.org/abs/astro-ph/0610295}{astro-ph/0610295}.

\bibitem{Mack08}
G.D. {Mack}, T.D. {Jacques}, J.F. {Beacom}, N.F. {Bell}, and H.~{Y{\"u}ksel}.
\newblock {Conservative constraints on dark matter annihilation into gamma
  rays}.
\newblock {\em Phys. Rev. D},
  \href{http://dx.doi.org/10.1103/PhysRevD.78.063542}{78(6):063542}, 2008,
  \href{https://arxiv.org/abs/0803.0157}{0803.0157}.

\bibitem{W12}
C.~Weniger.
\newblock {A tentative gamma-ray line from Dark Matter annihilation at the
  Fermi Large Area Telescope}.
\newblock {\em J. Cosmol. Astropart. Phys.}, 1208(08):007, 2012,
  \href{https://arxiv.org/abs/1204.2797}{1204.2797}.

\bibitem{SF12-1}
M.~Su and D.~P. Finkbeiner.
\newblock {Strong Evidence for Gamma-ray Line Emission from the Inner Galaxy}.
\newblock 2012, \href{https://arxiv.org/abs/1206.1616}{1206.1616}.

\bibitem{Albert14}
A.~{Albert}, G.A. {G{\'o}mez-Vargas}, M.~{Grefe}, C.~{Mu{\~n}oz}, C.~{Weniger},
  E.D. {Bloom}, E.~{Charles}, M.N. {Mazziotta}, and A.~{Morselli}.
\newblock {Search for 100 MeV to 10 GeV {$\gamma$}-ray lines in the Fermi-LAT
  data and implications for gravitino dark matter in the {$\mu$}{$\nu$}SSM}.
\newblock {\em J. Cosmol. Astropart. Phys.},
  \href{http://dx.doi.org/10.1088/1475-7516/2014/10/023}{10:023}, 2014,
  \href{https://arxiv.org/abs/1406.3430}{1406.3430}.

\bibitem{HESS16}
H.~{Abdalla} et~al.
\newblock {H.E.S.S. Limits on Linelike Dark Matter Signatures in the 100 GeV to
  2 TeV Energy Range Close to the Galactic Center}.
\newblock {\em Phys. Rev. Lett.},
  \href{http://dx.doi.org/10.1103/PhysRevLett.117.151302}{117(15):151302},
  2016, \href{https://arxiv.org/abs/1609.08091}{1609.08091}.

\bibitem{Bulbul14}
E.~{Bulbul}, M.~{Markevitch}, A.~{Foster}, R.~K. {Smith}, M.~{Loewenstein}, and
  S.~W. {Randall}.
\newblock {Detection of an Unidentified Emission Line in the Stacked X-Ray
  Spectrum of Galaxy Clusters}.
\newblock {\em Astrophys. J.},
  \href{http://dx.doi.org/10.1088/0004-637X/789/1/13}{789:13}, 2014,
  \href{https://arxiv.org/abs/1402.2301}{1402.2301}.

\bibitem{Boyarsky14}
A.~Boyarsky, O.~Ruchayskiy, D.~Iakubovskyi, and J.~Franse.
\newblock {Unidentified Line in X-Ray Spectra of the Andromeda Galaxy and
  Perseus Galaxy Cluster}.
\newblock {\em Phys. Rev. Lett.},
  \href{http://dx.doi.org/10.1103/PhysRevLett.113.251301}{113:251301}, 2014,
  \href{https://arxiv.org/abs/1402.4119}{1402.4119}.

\bibitem{A15}
M.~E. {Anderson}, E.~{Churazov}, and J.~N. {Bregman}.
\newblock {Non-detection of X-ray emission from sterile neutrinos in stacked
  galaxy spectra}.
\newblock {\em Mon. Notices Royal Astron. Soc.},
  \href{http://dx.doi.org/10.1093/mnras/stv1559}{452:3905--3923}, 2015,
  \href{https://arxiv.org/abs/1408.4115}{1408.4115}.

\bibitem{Hitomi16}
F.~A. Aharonian et~al.
\newblock {$Hitomi$ constraints on the 3.5 keV line in the Perseus galaxy
  cluster}.
\newblock {\em Astrophys. J.},
  \href{http://dx.doi.org/10.3847/2041-8213/aa61fa}{837(1):L15}, 2017,
  \href{https://arxiv.org/abs/1607.07420}{1607.07420}.

\bibitem{Ehrenfests}
P.~Ehrenfest and T.~Ehrenfest.
\newblock {\em {The conceptual foundations of the statistical approach in
  mechanics}}.
\newblock Cornell University Press, 1959.

\bibitem{Daviesbook}
P.~C.~W. Davies.
\newblock {\em {The Physics of Time Asymmetry}}.
\newblock University of California Press, 1977.

\bibitem{BB}
S.J. Blundell and K.M. Blundell.
\newblock {\em {Concepts in Thermal Physics}}.
\newblock Oxford University Press, 2006.

\bibitem{B1889}
J.~Bertrand.
\newblock {Calcul des probabilit\'es}.
\newblock {\em Gauthier-Villars}, 1889.

\bibitem{J73}
E.~T. Jaynes.
\newblock {The well-posed problem}.
\newblock {\em Found. Phys.},
  \href{http://dx.doi.org/10.1007/BF00709116}{156:5--11}, 1973.

\bibitem{BH89}
D.~{Bohm} and B.~J. {Hiley}.
\newblock {Non-locality and locality in the stochastic interpretation of
  quantum mechanics}.
\newblock {\em Phys. Rep.},
  \href{http://dx.doi.org/10.1016/0370-1573(89)90160-9}{172:93--122}, 1989.

\bibitem{Shannon}
C.~E. Shannon.
\newblock {A mathematical theory of communication}.
\newblock {\em The Bell System Technical Journal},
  \href{http://dx.doi.org/10.1002/j.1538-7305.1948.tb01338.x}{27(3):379--423},
  1948.

\bibitem{Gibbsbook}
J.~W. Gibbs.
\newblock {\em {Elementary Principles in Statistical Mechanics}}.
\newblock Charles Scriber's Sons, 1902.

\bibitem{Nakahara}
M.~Nakahara.
\newblock {\em {Geometry, Topology and Physics}}.
\newblock IOP Publishing, Bristol, UK, 2003.

\bibitem{HHD}
H.~Bhatia, G.~Norgard, V.~Pascucci, and P.~T. Bremer.
\newblock {The Helmholtz-Hodge Decomposition--A Survey}.
\newblock {\em IEEE Transactions on Visualization and Computer Graphics},
  \href{http://dx.doi.org/10.1109/TVCG.2012.316}{19(8):1386--1404}, 2013.

\bibitem{Penrosebook}
R.~Penrose.
\newblock {\em {The Road to Reality: A Complete Guide to the Laws of the
  Universe}}.
\newblock Random House, 2016.

\bibitem{R61}
L.~F. Richardson.
\newblock {\em {The problem of contiguity: an appendix of statistics of deadly
  quarrels}}.
\newblock 1961.

\bibitem{M67}
B.~Mandelbrot.
\newblock {How Long Is the Coast of Britain? Statistical Self-Similarity and
  Fractional Dimension}.
\newblock {\em Science}, 156(3775):636--638, 1967.

\bibitem{Goldstein}
H.~Goldstein, C.P. Poole, and J.L. Safko.
\newblock {\em {Classical Mechanics}}.
\newblock Addison Wesley, 2002.

\bibitem{Mandl_Shaw}
F.~Mandl and G.~Shaw.
\newblock {\em {Quantum Field Theory}}.
\newblock Wiley, 2010.

\bibitem{Peskin_Schroeder}
M.~E. Peskin and D.~V. Schroeder.
\newblock {\em {An Introduction to quantum field theory}}.
\newblock Addison-Wesley, Reading, USA, 1995.

\bibitem{Bertrand}
J.~Bertrand.
\newblock {\em {Calcul Des Probabilit\'{e}s}}.
\newblock Paris, Gauthier-Villars, 1889.

\bibitem{encyclopedia_of_mathematics}
G.~S. Hall.
\newblock Weyl tensor.
\newblock Encyclopedia of Mathematics.
\newblock Springer.

\bibitem{R15}
P.~{Roser}.
\newblock {Quantum mechanics as the dynamical geometry of trajectories}.
\newblock 2015, \href{https://arxiv.org/abs/1507.08975}{1507.08975}.

\bibitem{CS07}
S.~Colin and W.~Struyve.
\newblock {A Dirac sea pilot-wave model for quantum field theory}.
\newblock {\em J. Phys. A},
  \href{http://dx.doi.org/10.1088/1751-8113/40/26/015}{40:7309--7342}, 2007,
  \href{https://arxiv.org/abs/quant-ph/0701085}{quant-ph/0701085}.

\bibitem{CW11}
S.~Colin and H.~M. Wiseman.
\newblock {The Zig-zag road to reality}.
\newblock {\em J. Phys. A},
  \href{http://dx.doi.org/10.1088/1751-8113/44/34/345304}{44:345304}, 2011,
  \href{https://arxiv.org/abs/1107.4909}{1107.4909}.

\bibitem{Hatfield}
B.~Hatfield.
\newblock {\em {Quantum Field Theory Of Point Particles And Strings}}.
\newblock Frontiers in Physics. Avalon Publishing, 1998.

\bibitem{deP95}
Gonzalo Garc\'{\i}a~de Polavieja.
\newblock Exponential divergence of neighboring quantal trajectories.
\newblock {\em Phys. Rev. A},
  \href{http://dx.doi.org/10.1103/PhysRevA.53.2059}{53:2059--2061}, Apr 1996.

\bibitem{FS95}
F.~H.~M. Faisal and U.~Schwengelbeck.
\newblock Unified theory of lyapunov exponents and a positive example of
  deterministic quantum chaos.
\newblock {\em Physics Letters A},
  \href{http://dx.doi.org/https://doi.org/10.1016/0375-9601(95)00645-J}{207(1):31
  -- 36}, 1995.

\bibitem{PV95}
R.~H. {Parmenter} and R.~W. {Valentine}.
\newblock {Deterministic chaos and the causal interpretation of quantum
  mechanics (Physics Letters A 201 (1995) 1)}.
\newblock {\em Physics Letters A},
  \href{http://dx.doi.org/10.1016/0375-9601(96)00096-5}{213:319--319}, February
  1996.

\bibitem{DM96}
R.~H. {Parmenter} and R.~W. {Valentine}.
\newblock {Deterministic chaos and the causal interpretation of quantum
  mechanics (Physics Letters A 201 (1995) 1)}.
\newblock {\em Physics Letters A},
  \href{http://dx.doi.org/10.1016/0375-9601(96)00096-5}{213:319--319}, February
  1996.

\bibitem{SC96}
S.~Sengupta and P.~K. Chattaraj.
\newblock The quantum theory of motion and signatures of chaos in the quantum
  behaviour of a classically chaotic system.
\newblock {\em Physics Letters A},
  \href{http://dx.doi.org/https://doi.org/10.1016/0375-9601(96)00240-X}{215(3):119
  -- 127}, 1996.

\bibitem{C00}
J.~T.~Cushing.
\newblock Bohmian insights into quantum chaos.
\newblock {\em Philosophy of Science},
  \href{http://dx.doi.org/10.1086/392836}{67}, 09 2000.

\bibitem{MPD00}
A.~J. Makowski, P.~Pepłowski, and S.~T. Dembiński.
\newblock Chaotic causal trajectories: the role of the phase of stationary
  states.
\newblock {\em Physics Letters A},
  \href{http://dx.doi.org/https://doi.org/10.1016/S0375-9601(00)00047-5}{266(4):241
  -- 248}, 2000.

\bibitem{SF08}
K.~G. {Schlegel} and S.~{F{\"o}rster}.
\newblock {Deterministic chaos in entangled eigenstates}.
\newblock {\em Physics Letters A},
  \href{http://dx.doi.org/10.1016/j.physleta.2008.02.044}{372:3620--3631}, May
  2008.

\bibitem{F97}
H.~Frisk.
\newblock {Properties of the trajectories in {B}ohmian mechanics}.
\newblock {\em Phys. Lett. A},
  \href{http://dx.doi.org/10.1016/S0375-9601(97)00044-3}{227(3):139--142},
  1997.

\bibitem{WS99}
H.~Wu and D.~W.~L. Sprung.
\newblock {Quantum chaos in terms of {B}ohm trajectories}.
\newblock {\em Phys. Lett. A},
  \href{http://dx.doi.org/10.1016/S0375-9601(99)00629-5}{261(3–4):150 --
  157}, 1999.

\bibitem{FF03}
P.~{Falsaperla} and G.~{Fonte}.
\newblock {On the motion of a single particle near a nodal line in the de
  Broglie-Bohm interpretation of quantum mechanics}.
\newblock {\em Physics Letters A},
  \href{http://dx.doi.org/10.1016/j.physleta.2003.08.010}{316:382--390},
  October 2003.

\bibitem{WP05}
D.~A. {Wisniacki} and E.~R. {Pujals}.
\newblock {Motion of vortices implies chaos in Bohmian mechanics}.
\newblock {\em EPL},
  \href{http://dx.doi.org/10.1209/epl/i2005-10085-3}{71:159--165}, 2005,
  \href{https://arxiv.org/abs/quant-ph/0502108}{quant-ph/0502108}.

\bibitem{WPB06}
D.~A. {Wisniacki}, E.~R. {Pujals}, and F.~{Borondo}.
\newblock {Vortex interaction, chaos and quantum probabilities}.
\newblock {\em EPL},
  \href{http://dx.doi.org/10.1209/epl/i2005-10467-5}{73:671--676}, 2006,
  \href{https://arxiv.org/abs/nlin/0507015}{nlin/0507015}.

\bibitem{CE08}
G.~Contopoulos and C.~Efthymiopoulos.
\newblock Ordered and chaotic bohmian trajectories.
\newblock {\em Celestial Mechanics and Dynamical Astronomy},
  \href{http://dx.doi.org/10.1007/s10569-008-9127-8}{102(1):219--239}, Sep
  2008.

\bibitem{EKC09}
C.~{Efthymiopoulos}, C.~{Kalapotharakos}, and G.~{Contopoulos}.
\newblock {Origin of chaos near critical points of quantum flow}.
\newblock {\em Phys. Rev. E},
  \href{http://dx.doi.org/10.1103/PhysRevE.79.036203}{79:036203}, Mar 2009,
  \href{https://arxiv.org/abs/0903.2655}{0903.2655}.

\bibitem{CMS16}
A.~{Cesa}, J.~{Martin}, and W.~{Struyve}.
\newblock {Chaotic Bohmian trajectories for stationary states}.
\newblock {\em Journal of Physics A Mathematical General},
  \href{http://dx.doi.org/10.1088/1751-8113/49/39/395301}{49:395301}, Sep 2016,
  \href{https://arxiv.org/abs/1603.01387}{1603.01387}.

\bibitem{TCE16}
A.~C. Tzemos, G.~Contopoulos, and C.~Efthymiopoulos.
\newblock {Origin of chaos in 3-d Bohmian trajectories }.
\newblock {\em Phys. Lett. A},
  \href{http://dx.doi.org/10.1016/j.physleta.2016.09.016}{380(45):3796 --
  3802}, 2016, \href{https://arxiv.org/abs/1609.07069}{1609.07069}.

\bibitem{B10}
A.~F. Bennett.
\newblock Relative dispersion and quantum thermal equilibrium in de
  broglie{\textendash}bohm mechanics.
\newblock {\em Journal of Physics A: Mathematical and Theoretical},
  \href{http://dx.doi.org/10.1088/1751-8113/43/19/195304}{43(19):195304}, apr
  2010.

\bibitem{CDE12}
G.~{Contopoulos}, N.~{Delis}, and C.~{Efthymiopoulos}.
\newblock {Order in de Broglie-Bohm quantum mechanics}.
\newblock {\em Journal of Physics A Mathematical General},
  \href{http://dx.doi.org/10.1088/1751-8113/45/16/165301}{45:165301}, Apr 2012,
  \href{https://arxiv.org/abs/1203.5598}{1203.5598}.

\bibitem{W07}
H.~M. {Wiseman}.
\newblock {Grounding Bohmian mechanics in weak values and Bayesianism}.
\newblock {\em New J. Phys.},
  \href{http://dx.doi.org/10.1088/1367-2630/9/6/165}{9:165}, 2007,
  \href{https://arxiv.org/abs/0706.2522}{0706.2522}.

\bibitem{DG97}
E.~Deotto and G.~C. Ghirardi.
\newblock {Bohmian mechanics revisited}.
\newblock {\em Found. Phys.},
  \href{http://dx.doi.org/10.1023/A:1018752202576}{28:1--30}, 1998,
  \href{https://arxiv.org/abs/quant-ph/9704021}{quant-ph/9704021}.

\bibitem{DGZ92}
D.~{D{\"u}rr}, S.~{Goldstein}, and N.~{Zangh{\'{\i}}}.
\newblock {Quantum equilibrium and the origin of absolute uncertainty}.
\newblock {\em J. Stat. Phys.},
  \href{http://dx.doi.org/10.1007/BF01049004}{67:843--907}, 1992,
  \href{https://arxiv.org/abs/quant-ph/0308039}{quant-ph/0308039}.

\bibitem{DT09}
D.~D{\"u}rr and S.~Teufel.
\newblock {\em {Bohmian Mechanics: The Physics and Mathematics of Quantum
  Theory}}.
\newblock Springer, Berlin, 2009.

\bibitem{AV96}
A.~Valentini.
\newblock In {\em Bohmian Mechanics and Quantum Theory: An Appraisal}. Kluwer,
  Dordrecht, 1996.

\bibitem{AV04b}
A.~Valentini.
\newblock {Black Holes, Information Loss, and Hidden Variables}.
\newblock 2004, \href{https://arxiv.org/abs/hep-th/0407032}{hep-th/0407032}.

\bibitem{AV02}
A.~Valentini.
\newblock {Subquantum information and computation}.
\newblock {\em Pramana J. Phys.},
  \href{http://dx.doi.org/10.1007/s12043-002-0117-1}{59(2):269--277}, 2002,
  \href{https://arxiv.org/abs/quant-ph/0203049}{quant-ph/0203049}.

\bibitem{PV06}
P.~Pearle and A.~Valentini.
\newblock In J.-P. Fran\c{c}oise et~al., editors, {\em Encyclopaedia of
  Mathematical Physics}. Elsevier, North-Holland, 2006,
  \href{https://arxiv.org/abs/quant-ph/0506115}{quant-ph/0506115}.

\bibitem{CV16}
S.~Colin and A.~Valentini.
\newblock {Robust predictions for the large-scale cosmological power deficit
  from primordial quantum nonequilibrium}.
\newblock {\em Int. J. Mod. Phys. D},
  \href{http://dx.doi.org/10.1142/S0218271816500681}{25(06):1650068}, 2016,
  \href{https://arxiv.org/abs/1510.03508}{1510.03508}.

\bibitem{CT}
C.~Cohen-Tannoudji, B.~Diu, and F.~Lalo\"{e}.
\newblock {\em {Quantum Mechanics}}.
\newblock Wiley, 1977.

\bibitem{Dirac31}
P.~A.~M. Dirac.
\newblock {Quantised singularities in the electromagnetic field}.
\newblock {\em Proc R. Soc. A},
  \href{http://dx.doi.org/10.1098/rspa.1931.0130}{133(821):60--72}, 1931.

\bibitem{HCP74}
J.~O. Hirschfelder, A.~C. Christoph, and W.~E. Palke.
\newblock {Quantum mechanical streamlines. I. Square potential barrier}.
\newblock {\em J. Chem. Phys.},
  \href{http://dx.doi.org/10.1063/1.1681899}{61(12):5435--5455}, 1974.

\bibitem{HGB74}
J.~O. Hirschfelder, C.~J. Goebel, and L.~W. Bruch.
\newblock {Quantized vortices around wavefunction nodes. II}.
\newblock {\em J. Chem. Phys.},
  \href{http://dx.doi.org/10.1063/1.1681900}{61(12):5456--5459}, 1974.

\bibitem{W06}
R.E. Wyatt.
\newblock {\em {Quantum Dynamics with Trajectories: Introduction to Quantum
  Hydrodynamics}}.
\newblock Springer, New York, 2006.

\bibitem{H77}
J.~O. Hirschfelder.
\newblock {The angular momentum, creation, and significance of quantized
  vortices}.
\newblock {\em J. Chem. Phys.},
  \href{http://dx.doi.org/10.1063/1.434769}{67(12):5477--5483}, 1977.

\bibitem{Conway}
J.~B. Conway.
\newblock {\em {Functions of One Complex Variable}}.
\newblock Springer-Verlag, Berlin, 1973.

\bibitem{AV09}
A.~Valentini.
\newblock {Beyond the Quantum}.
\newblock {\em Phys. World}, 22N11:32--37, 2009,
  \href{https://arxiv.org/abs/1001.2758}{1001.2758}.

\bibitem{AVPwtMw}
A.~Valentini.
\newblock In Simon Saunders et~al., editors, {\em {Many Worlds? {E}verett,
  Quantum Theory, \& Reality}}. Oxford University Press, Oxford, 2010,
  \href{https://arxiv.org/abs/0811.0810}{0811.0810}.

\bibitem{LL00}
A.~R. Liddle and D.~H. Lyth.
\newblock {\em {Cosmological Inflation and Large-Scale Structure}}.
\newblock Cambridge University Press, Cambridge, 2000.

\bibitem{Muk05}
V.~Mukhanov.
\newblock {\em {Physical Foundations of Cosmology}}.
\newblock Cambridge University Press, Cambridge, 2005.

\bibitem{W08}
S.~Weinberg.
\newblock {\em {Cosmology}}.
\newblock Oxford University Press, Oxford, 2008.

\bibitem{PU09}
P.~Peter and J.-P. Uzan.
\newblock {\em {Primordial Cosmology}}.
\newblock Oxford University Press, Oxford, 2009.

\bibitem{AV04a}
A.~Valentini.
\newblock {Universal signature of non-quantum systems}.
\newblock {\em Phys. Lett. A},
  \href{http://dx.doi.org/10.1016/j.physleta.2004.10.002}{332(3–4):187 --
  193}, 2004, \href{https://arxiv.org/abs/quant-ph/0309107}{quant-ph/0309107}.

\bibitem{SV08}
W.~Struyve and A.~Valentini.
\newblock {De {B}roglie-{B}ohm guidance equations for arbitrary
  {H}amiltonians}.
\newblock {\em J. Phys. A: Math. Theor.},
  \href{http://dx.doi.org/10.1088/1751-8113/42/3/035301}{42(3):035301}, 2009,
  \href{https://arxiv.org/abs/0808.0290}{0808.0290}.

\bibitem{AV08a}
A.~Valentini.
\newblock {Hidden variables and the large-scale structure of space-time}.
\newblock In W.~L. Craig and Q.~Smith, editors, {\em Einstein, Relativity and
  Absolute Simultaneity}. Routledge, London, 2008,
  \href{https://arxiv.org/abs/quant-ph/0504011}{quant-ph/0504011}.

\bibitem{BM01}
R.~H. Brandenberger and J.~Martin.
\newblock {The robustness of inflation to changes in super-{P}lanck-scale
  physics}.
\newblock {\em Mod. Phys. Lett. A},
  \href{http://dx.doi.org/10.1142/S0217732301004170}{16(15):999--1006}, 2001,
  \href{https://arxiv.org/abs/astro-ph/0005432}{astro-ph/0005432}.

\bibitem{MB01}
J.~Martin and R.~H. Brandenberger.
\newblock {Trans-Planckian problem of inflationary cosmology}.
\newblock {\em Phys. Rev. D},
  \href{http://dx.doi.org/10.1103/PhysRevD.63.123501}{63:123501}, 2001,
  \href{https://arxiv.org/abs/hep-th/0005209}{hep-th/0005209}.

\bibitem{BM13}
R.~H. Brandenberger and J.~Martin.
\newblock {Trans-Planckian Issues for Inflationary Cosmology}.
\newblock {\em Class. Quant. Grav.},
  \href{http://dx.doi.org/10.1088/0264-9381/30/11/113001}{30:113001}, 2013,
  \href{https://arxiv.org/abs/1211.6753}{1211.6753}.

\bibitem{BH93}
D.~Bohm and B.~J. Hiley.
\newblock {\em {The Undivided Universe: an Ontological Interpretation of
  Quantum Theory}}.
\newblock Routledge, London, 1993.

\bibitem{C03}
S.~Colin.
\newblock {A deterministic Bell model}.
\newblock {\em Phys. Lett. A},
  \href{http://dx.doi.org/10.1016/j.physleta.2003.09.006}{317(5–6):349 --
  358}, 2003, \href{https://arxiv.org/abs/quant-ph/0310055}{quant-ph/0310055}.

\bibitem{S10}
W.~Struyve.
\newblock {Pilot-wave theory and quantum fields}.
\newblock {\em Rep. Prog. Phys.}, 73(10):106001, 2010,
  \href{https://arxiv.org/abs/0707.3685}{0707.3685}.

\bibitem{BTW06}
B.~A. Bassett, S.~Tsujikawa, and D.~Wands.
\newblock {Inflation dynamics and reheating}.
\newblock {\em Rev. Mod. Phys.},
  \href{http://dx.doi.org/10.1103/RevModPhys.78.537}{78:537--589}, 2006,
  \href{https://arxiv.org/abs/astro-ph/0507632}{astro-ph/0507632}.

\bibitem{TB90}
J.~H. Traschen and R.~H. Brandenberger.
\newblock {Particle production during out-of-equilibrium phase transitions}.
\newblock {\em Phys. Rev. D},
  \href{http://dx.doi.org/10.1103/PhysRevD.42.2491}{42:2491--2504}, 1990.

\bibitem{ABCM10}
R.~Allahverdi, R.~Brandenberger, F.-Y. Cyr-Racine, and A.~Mazumdar.
\newblock {Reheating in Inflationary Cosmology: Theory and Applications}.
\newblock {\em Ann. Rev. Nucl. Part. Sci.},
  \href{http://dx.doi.org/10.1146/annurev.nucl.012809.104511}{60:27--51}, 2010,
  \href{https://arxiv.org/abs/1001.2600}{1001.2600}.

\bibitem{FY02}
M.~Fujii and T.~Yanagida.
\newblock {Natural gravitino dark matter and thermal leptogenesis in
  gauge-mediated supersymmetry-breaking models }.
\newblock {\em Phys. Lett. B},
  \href{http://dx.doi.org/10.1016/S0370-2693(02)02958-1}{549(3–4):273 --
  283}, 2002, \href{https://arxiv.org/abs/hep-ph/0208191}{hep-ph/0208191}.

\bibitem{NY06}
S.~Nakamura and M.~Yamaguchi.
\newblock {Gravitino production from heavy moduli decay and cosmological moduli
  problem revived }.
\newblock {\em Phys. Lett. B},
  \href{http://dx.doi.org/10.1016/j.physletb.2006.05.078}{638(5–6):389 --
  395}, 2006.

\bibitem{SC13}
S.~Colin.
\newblock {Sub-Compton quantum non-equilibrium and Majorana systems}.
\newblock 2013, \href{https://arxiv.org/abs/1306.0967}{1306.0967}.

\bibitem{BD82}
N.~D. Birrell and P.~C.~W. Davies.
\newblock {\em {Quantum Fields in Curved Space}}.
\newblock Cambridge University Press, Cambridge, 1982.

\bibitem{PT09}
L.~Parker and D.~Toms.
\newblock {\em {Quantum Field Theory in Curved Spacetime: Quantized Fields and
  Gravity}}.
\newblock Cambrdge University Press, Cambridge, 2009.

\bibitem{Knight}
C.~Gerry and P.~Knight.
\newblock {\em {Introductory Quantum Optics}}.
\newblock Cambridge University Press, 2004.

\bibitem{jc}
E.T. Jaynes and F.W. Cummings.
\newblock {Comparison of quantum and semiclassical radiation theories with
  application to the beam maser}.
\newblock {\em Proc. IEEE},
  \href{http://dx.doi.org/10.1109/PROC.1963.1664}{51(1):89--109}, 1963.

\bibitem{schiff}
L.~I. Schiff.
\newblock {\em {Quantum Mechanics}}.
\newblock McGraw-Hill, 1968.

\bibitem{BHS05}
Gianfranco B., Dan Hooper, and J.~S.
\newblock {Particle dark matter: evidence, candidates and constraints}.
\newblock {\em Phys. Rep.},
  \href{http://dx.doi.org/10.1016/j.physrep.2004.08.031}{405(5–6):279 --
  390}, 2005, \href{https://arxiv.org/abs/hep-ph/0404175}{hep-ph/0404175}.

\bibitem{B12}
T.~Bringmann, X.~Huang, A.~Ibarra, S.~Vogl, and C.~Weniger.
\newblock {{Fermi LAT} Search for Internal Bremsstrahlung Signatures from Dark
  Matter Annihilation}.
\newblock {\em J. Cosmol. Astropart. Phys.},
  \href{http://dx.doi.org/10.1088/1475-7516/2012/07/054}{1207:054}, 2012,
  \href{https://arxiv.org/abs/1203.1312}{1203.1312}.

\bibitem{BH12}
M.~R. Buckley and D.~Hooper.
\newblock {Implications of a 130 GeV gamma-ray line for dark matter}.
\newblock {\em Phys. Rev. D},
  \href{http://dx.doi.org/10.1103/PhysRevD.86.043524}{86:043524}, 2012,
  \href{https://arxiv.org/abs/1205.6811}{1205.6811}.

\bibitem{ITW13}
A.~Ibarra, D.~Tran, and C.~Weniger.
\newblock {Indirect seaches for decaying dark matter}.
\newblock {\em Int. J. Mod. Phys. A},
  \href{http://dx.doi.org/10.1142/S0217751X13300408}{28(27):1330040}, 2013,
  \href{https://arxiv.org/abs/1307.6434}{1307.6434}.

\bibitem{L13}
S.P. Liew.
\newblock {Gamma-ray line from radiative decay of gravitino dark matter}.
\newblock {\em Phys. Lett. B},
  \href{http://dx.doi.org/10.1016/j.physletb.2013.06.006}{724(1–3):88 -- 91},
  2013, \href{https://arxiv.org/abs/1304.1992}{1304.1992}.

\bibitem{Aetal14}
A.~{Albert}, G.~A. {G{\'o}mez-Vargas}, M.~{Grefe}, C.~{Mu{\~n}oz},
  C.~{Weniger}, E.~D. {Bloom}, E.~{Charles}, M.~N. {Mazziotta}, and
  A.~{Morselli}.
\newblock {Search for 100 MeV to 10 GeV {$\gamma$}-ray lines in the Fermi-LAT
  data and implications for gravitino dark matter in the {$\mu$}{$\nu$}SSM}.
\newblock {\em J. Cosmol. Astropart. Phys.},
  \href{http://dx.doi.org/10.1088/1475-7516/2014/10/023}{10:023}, 2014,
  \href{https://arxiv.org/abs/1406.3430}{1406.3430}.

\bibitem{TY00}
F.~Takayama and M.~Yamaguchi.
\newblock {Gravitino dark matter without R-parity}.
\newblock {\em Phys. Lett. B},
  \href{http://dx.doi.org/10.1016/S0370-2693(00)00726-7}{485(4):388 -- 392},
  2000, \href{https://arxiv.org/abs/hep-ph/0005214}{hep-ph/0005214}.

\bibitem{FLAT13}
M.~Ackermann et~al.
\newblock {Search for Gamma-ray Spectral Lines with the Fermi Large Area
  Telescope and Dark Matter Implications}.
\newblock {\em Phys. Rev.},
  \href{http://dx.doi.org/10.1103/PhysRevD.88.082002}{D88:082002}, 2013,
  \href{https://arxiv.org/abs/1305.5597}{1305.5597}.

\bibitem{leve02}
R.~J. LeVeque.
\newblock {\em {Finite Volume Methods for Hyperbolic Problems}}.
\newblock Cambridge University Press, Cambridge, 2002.

\bibitem{AV02A}
A.~{Valentini}.
\newblock {Signal-locality in hidden-variables theories}.
\newblock {\em Physics Letters A},
  \href{http://dx.doi.org/10.1016/S0375-9601(02)00438-3}{297:273--278}, 2002,
  \href{https://arxiv.org/abs/quant-ph/0106098}{quant-ph/0106098}.

\bibitem{AV14}
A.~Valentini.
\newblock {Trans-Planckian fluctuations and the stability of quantum
  mechanics}.
\newblock 2014, \href{https://arxiv.org/abs/1409.7467}{1409.7467}.

\bibitem{Berg12}
L.~Bergstr\"{o}m.
\newblock {Dark Matter Evidence, Particle Physics Candidates and Detection
  Methods}.
\newblock {\em Annalen Phys.},
  \href{http://dx.doi.org/10.1002/andp.201200116}{524:479--496}, 2012,
  \href{https://arxiv.org/abs/1205.4882}{1205.4882}.

\bibitem{LAT12}
M.~{Ackermann} et~al.
\newblock {The Fermi Large Area Telescope on Orbit: Event Classification,
  Instrument Response Functions, and Calibration}.
\newblock {\em Astrophys. J. Suppl.Ser.},
  \href{http://dx.doi.org/10.1088/0067-0049/203/1/4}{203:4}, 2012,
  \href{https://arxiv.org/abs/1206.1896}{1206.1896}.

\bibitem{XMM_MOS01}
M.J.L. {Turner} et~al.
\newblock {The European Photon Imaging Camera on XMM-Newton: The MOS cameras :
  The MOS cameras}.
\newblock {\em Astron. Astrophys.},
  \href{http://dx.doi.org/10.1051/0004-6361:20000087}{365:L27--L35}, 2001,
  \href{https://arxiv.org/abs/astro-ph/0011498}{astro-ph/0011498}.

\bibitem{EGRET93}
D.~J. Thompson et~al.
\newblock {Calibration of the Energetic Gamma-Ray Experiment Telescope (EGRET)
  for the Compton Gamma-Ray Observatory}.
\newblock {\em Astrophys. J. Suppl. Ser.},
  \href{http://dx.doi.org/10.1086/191793}{86:629--656}, 1993.

\bibitem{CALET}
J.~{Krizmanic}.
\newblock {The CALorimetric Electron Telescope (CALET): a High-Energy
  Astroparticle Physics Observatory on the International Space Station}.
\newblock In {\em APS Meeting Abstracts}, 2013.

\bibitem{DAMPE}
J.~Chang.
\newblock {{Dark Matter Particle Explorer}: The First {C}hinese Cosmic Ray and
  Hard $\gamma$-ray Detector in Space}.
\newblock {\em {Chin. J. Space Sci.}},
  \href{http://dx.doi.org/10.11728/cjss2014.05.550}{34(5):550}, 2014.

\bibitem{BBU14}
M.~{Bahrami}, A.~{Bassi}, and H.~{Ulbricht}.
\newblock {Testing the quantum superposition principle in the frequency
  domain}.
\newblock {\em Phys. Rev. A},
  \href{http://dx.doi.org/10.1103/PhysRevA.89.032127}{89(3):032127}, 2014,
  \href{https://arxiv.org/abs/1309.5889}{1309.5889}.

\bibitem{Bring12}
T.~Bringmann, X.~Huang, A.~Ibarra, S.~Vogl, and C.~Weniger.
\newblock {{Fermi LAT} Search for Internal Bremsstrahlung Signatures from Dark
  Matter Annihilation}.
\newblock {\em J. Cosmol. Astropart. Phys.},
  \href{http://dx.doi.org/10.1088/1475-7516/2012/07/054}{1207:054}, 2012,
  \href{https://arxiv.org/abs/1203.1312}{1203.1312}.

\bibitem{SF12-2}
M.~Su and D.~P. Finkbeiner.
\newblock {Double Gamma-ray Lines from Unassociated {F}ermi-{LAT} Sources}.
\newblock 2012, \href{https://arxiv.org/abs/1207.7060}{1207.7060}.

\bibitem{NME16}
A.~Neronov, D.~Malyshev, and D.~Eckert.
\newblock {Decaying dark matter search with NuSTAR deep sky observations}.
\newblock {\em Phys. Rev. D},
  \href{http://dx.doi.org/10.1103/PhysRevD.94.123504}{D94(12):123504}, 2016,
  \href{https://arxiv.org/abs/1607.07328}{1607.07328}.

\bibitem{MNE14}
D.~Malyshev, A.~Neronov, and D.~Eckert.
\newblock {Constraints on 3.55 keV line emission from stacked observations of
  dwarf spheroidal galaxies}.
\newblock {\em Phys. Rev. D},
  \href{http://dx.doi.org/10.1103/PhysRevD.90.103506}{D90:103506}, 2014,
  \href{https://arxiv.org/abs/1408.3531}{1408.3531}.

\bibitem{JP16}
T.E. Jeltema and S.~Profumo.
\newblock {Deep XMM Observations of Draco rule out at the 99\% Confidence Level
  a Dark Matter Decay Origin for the 3.5 keV Line}.
\newblock {\em Mon. Not. Roy. Astron. Soc.},
  \href{http://dx.doi.org/10.1093/mnras/stw578}{458(4):3592--3596}, 2016,
  \href{https://arxiv.org/abs/1512.01239}{1512.01239}.

\bibitem{R16}
O.~Ruchayskiy, A.~Boyarsky, D.~Iakubovskyi, E.~Bulbul, D.~Eckert, J.~Franse,
  D.~Malyshev, M.~Markevitch, and A.~Neronov.
\newblock {Searching for decaying dark matter in deep XMM-Newton observation of
  the Draco dwarf spheroidal}.
\newblock {\em Mon. Not. Roy. Astron. Soc.},
  \href{http://dx.doi.org/10.1093/mnras/stw1026}{460(2):1390--1398}, 2016,
  \href{https://arxiv.org/abs/1512.07217}{1512.07217}.

\bibitem{I16}
D.~Iakubovskyi.
\newblock {Observation of the new emission line at 3.5 keV in X-ray spectra of
  galaxies and galaxy clusters}.
\newblock {\em Adv. Astron. Space Phys.},
  \href{http://dx.doi.org/10.17721/2227-1481.6.3-15}{6(1):3--15}, 2016,
  \href{https://arxiv.org/abs/1510.00358}{1510.00358}.

\bibitem{White_Paper17}
M.~Drewes et~al.
\newblock {A White Paper on keV Sterile Neutrino Dark Matter}.
\newblock {\em J. Cosmol. Astropart. Phys.},
  \href{http://dx.doi.org/10.1088/1475-7516/2017/01/025}{1701(01):025}, 2017,
  \href{https://arxiv.org/abs/1602.04816}{1602.04816}.

\bibitem{Hitomi_Perseus}
F.~Aharonian et~al.
\newblock {The Quiescent Intracluster Medium in the Core of the Perseus
  Cluster}.
\newblock {\em Nature},
  \href{http://dx.doi.org/10.1038/nature18627}{535:117--121}, 2016,
  \href{https://arxiv.org/abs/1607.04487}{1607.04487}.

\bibitem{BHSC03}
C.~Boehm, D.~Hooper, J.~Silk, M.~Casse, and J.~Paul.
\newblock {MeV dark matter: Has it been detected?}
\newblock {\em Phys. Rev. Lett.},
  \href{http://dx.doi.org/10.1103/PhysRevLett.92.101301}{92:101301}, 2004,
  \href{https://arxiv.org/abs/astro-ph/0309686}{astro-ph/0309686}.

\bibitem{HW04}
D.~Hooper and L.-T. Wang.
\newblock {Possible evidence for axino dark matter in the galactic bulge}.
\newblock {\em Phys. Rev.},
  \href{http://dx.doi.org/10.1103/PhysRevD.70.063506}{D70:063506}, 2004,
  \href{https://arxiv.org/abs/hep-ph/0402220}{hep-ph/0402220}.

\bibitem{Churazov04}
E.~Churazov, R.~Sunyaev, S.~Sazonov, M.~Revnivtsev, and D.~Varshalovich.
\newblock {Positron annihilation spectrum from the Galactic Center region
  observed by SPI/INTEGRAL}.
\newblock {\em Mon. Not. Roy. Astron. Soc.},
  \href{http://dx.doi.org/10.1111/j.1365-2966.2005.08757.x}{357:1377--1386},
  2005, \href{https://arxiv.org/abs/astro-ph/0411351}{astro-ph/0411351}.

\bibitem{Johnson73}
W.N. {Johnson}, III and R.C. {Haymes}.
\newblock {Detection of a Gamma-Ray Spectral Line from the Galactic-Center
  Region}.
\newblock {\em Astrophys. J.},
  \href{http://dx.doi.org/10.1086/152309}{184:103--126}, 1973.

\bibitem{Haymes75}
R.C. {Haymes}, G.D. {Walraven}, C.A. {Meegan}, R.D. {Hall}, F.T. {Djuth}, and
  D.H. {Shelton}.
\newblock {Detection of nuclear gamma rays from the galactic center region}.
\newblock {\em Astrophys. J.},
  \href{http://dx.doi.org/10.1086/153925}{201:593--602}, 1975.

\bibitem{Leventhal82}
M.~{Leventhal}, C.J. {MacCallum}, A.F. {Huters}, and P.D. {Stang}.
\newblock {Time-variable positron annihilation radiation from the galactic
  center direction}.
\newblock {\em Astrophys. J. Lett.},
  \href{http://dx.doi.org/10.1086/183858}{260:L1--L5}, 1982.

\bibitem{english_translation_of_german}
B.L. van~der Waerden.
\newblock Sources of quantum mechanics.
\newblock \href{http://n2t.net/ark:/13960/t18k8gw97}{ark:13960/t18k8gw97}.

\bibitem{Fermi_collected}
Emilo Segr\`{e}, editor.
\newblock {\em The Collected Papers of Enrico Fermi: Vol. 1 Italy 1921-38}.
\newblock The University of Chicago Press and Accademia Nazionale dei Lincei,
  Roma, 1962.

\bibitem{vN55}
J.~von Neumann.
\newblock {\em {Mathematical Foundations of Quantum Mechanics}}.
\newblock 1955.
\newblock (translated into English by R. T. Beyer).

\end{thebibliography}
\end{document}